\documentclass[times, 3p, authoryear]{elsarticle}
\usepackage[UKenglish]{babel}
\usepackage[utf8]{inputenc}
\usepackage{amsmath, amssymb, amsthm}
\usepackage{mathtools}
\usepackage{mathrsfs}
\usepackage{nicefrac}
\usepackage[inline]{enumitem}
\usepackage{color}
\usepackage{tikz}
\usepackage{hyperref}
\usetikzlibrary{arrows.meta, decorations, decorations.markings, decorations.pathreplacing}
\usetikzlibrary{trees, shapes.geometric, backgrounds, calc, positioning}

\usepackage{array}
\usepackage{booktabs}
\newcommand{\otoprule}{\midrule[\heavyrulewidth]}
\newcolumntype{L}{>{$}l<{$}}

\numberwithin{equation}{section}

\newtheorem{thm}{Theorem}
\newtheorem{lem}[thm]{Lemma}
\newtheorem{cor}[thm]{Corollary}
\newtheorem{prop}[thm]{Proposition}
\newlist{theonum}{enumerate}{2}
\setlist[theonum]{topsep=0pt, itemsep=0pt, partopsep=0pt, parsep=0pt, label=\roman*), ref=\thethm.\roman*}
\newdefinition{defn}{Definition}

\newcommand*{\mat}[1]{\boldsymbol{#1}}

\newcommand*{\crea}[1]{\hat{#1}^{\dagger}}
\newcommand*{\anni}[1]{\hat{#1}^{\vphantom{\dagger}}}

\newcommand*{\coord}[1]{\mathbf{#1}}
\newcommand*{\vecr}{\coord{r}}
\newcommand*{\vecx}{\coord{x}}

\newcommand*{\binteg}[3]{\int^{\mathrlap{#3}}_{\mathrlap{#2}}\ud{#1}\,}
\newcommand*{\integ}[1]{\!\int\!\ud{#1}\:}
\newcommand*{\iinteg}[2]{\integ{#1}\!\!\integ{#2}}

\DeclareMathOperator{\arctanh}{arctanh}
\DeclareMathOperator{\diag}{diag}
\DeclareMathOperator{\trace}{tr}
\DeclareMathOperator{\Trace}{Tr}

\DeclarePairedDelimiter{\abs}{\lvert}{\rvert}
\DeclarePairedDelimiter{\av}{\langle}{\rangle}
\DeclarePairedDelimiter{\norm}{\lVert}{\rVert}
\DeclarePairedDelimiterX\braket[2]{\langle}{\rangle}{#1\delimsize\vert#2}
\DeclarePairedDelimiterX\brakket[3]{\langle}{\rangle}{#1\delimsize\vert#2\delimsize\vert#3}
\DeclarePairedDelimiterX\ket[1]{\lvert}{\rangle}{#1}
\DeclarePairedDelimiterX\bra[1]{\langle}{\rvert}{#1}
\DeclarePairedDelimiterX\set[2]{\lbrace}{\rbrace}{#1 : #2}

\newcommand*{\closedBall}{\overline{B}}			
\newcommand*{\CompactOperators}{\mathfrak{K}}	
\newcommand*{\Complex}{\mathbb{C}}			
\newcommand*{\FiniteNpot}{\mathcal{V}}			
\newcommand*{\Fock}{\mathcal{F}}				
\newcommand*{\preFock}{\mathfrak{F}}			
\newcommand*{\HermitanMat}{\mathbb{H}}		
\newcommand*{\Nats}{\mathbb{N}}				
\newcommand*{\Nbas}{\Nats_b}				
\newcommand*{\oneH}{\mathcal{H}}				
\newcommand*{\NTdensMat}{\mathcal{P}}		
\newcommand*{\NdensMat}{\overline{\NTdensMat}}	
\newcommand*{\ToneMat}{\mathscr{N}}			
\newcommand*{\NoneMat}{\overline{\ToneMat}}	
\newcommand*{\VoneMat}{\mathscr{V}}			
\newcommand*{\TraceClass}{\mathfrak{T}}		

\newcommand*{\domain}{\textrm{dom}}			
\newcommand*{\epigraph}{\textrm{epi}}			
\newcommand*{\interior}{\textrm{int}}				
\newcommand*{\interiorOf}{\interior\,}			


\newcommand*{\du}{\partial}

\newcommand*{\e}{\mathrm{e}}
\newcommand*{\half}{\frac{1}{2}}
\newcommand*{\im}{\textrm{i}}
\newcommand*{\isDefinedAs}{\coloneqq}
\newcommand*{\nfrac}[2]{\nicefrac{#1}{#2}}
\newcommand*{\nhalf}{\nfrac{1}{2}}
\newcommand*{\nmax}{\mathfrak{n}} 
\newcommand*{\perm}{\wp}
\newcommand*{\Reals}{\mathbb{R}}
\newcommand*{\sAdenifeDsI}{\eqqcolon}

\newcommand*{\suchthat}{:}
\newcommand*{\thalf}{\tfrac{1}{2}}
\newcommand*{\ud}{\mathrm{d}}
\newcommand*{\unitMat}{\mat{1}}

\newcommand*{\weakstarly}{\stackrel{*}{\rightharpoonup}}

\journal{Physics Reports}

\begin{document}

\begin{frontmatter}
\title{One-body reduced density-matrix functional theory in finite basis sets at elevated temperatures}

\author{Klaas J. H. Giesbertz}
\ead{klaas.giesbertz@gmail.com}
\address{Theoretical Chemistry, Faculty of Exact Sciences, VU University, De Boelelaan 1083, 1081 HV Amsterdam, The Netherlands}

\author{Michael Ruggenthaler}
\ead{Michael.Ruggenthaler@mpsd.mpg.de}
\address{Theory Department, Max Planck Institute for the Structure, and Dynamics of Matter and Center for Free-Electron Laser Science, Luruper Chaussee 149, 22761 Hamburg, Germany}

\date{\today}

\begin{abstract}
In this review we provide a rigorous and self-contained presentation of one-body reduced density-matrix (1RDM) functional theory. We do so for the case of a finite basis set, where density-functional theory (DFT) implicitly becomes a 1RDM functional theory. To avoid non-uniqueness issues we consider the case of fermionic and bosonic systems at elevated temperature and variable particle number, i.e, a grand-canonical ensemble. For the fermionic case the Fock space is finite-dimensional due to the Pauli principle and we can provide a rigorous 1RDM functional theory relatively straightforwardly. For the bosonic case, where arbitrarily many particles can occupy a single state, the Fock space is infinite-dimensional and mathematical subtleties (not every hermitian Hamiltonian is self-adjoint, expectation values can become infinite, and not every self-adjoint Hamiltonian has a Gibbs state) make it necessary to impose restrictions on the allowed Hamiltonians and external non-local potentials. For simple conditions on the interaction of the bosons a rigorous 1RDM functional theory can be established, where we exploit the fact that due to the finite one-particle space all 1RDMs are finite-dimensional. We also discuss the problems arising from 1RDM functional theory as well as DFT formulated for an infinite-dimensional one-particle space.
\end{abstract}

\begin{keyword}
one-body reduced density matrix \sep v-representability \sep finite temperature \sep finite basis set DFT 
\end{keyword}
\end{frontmatter}

\newpage

\tableofcontents

\newpage

\fbox{\begin{tabular}{Ll}
$\textbf{List of symbols}$ 	& \\
					& \\
E					&energy~\eqref{def:Energy} \\
F[\gamma]			&universal functional~\eqref{def:universalFunction} \\
\nmax			&maximum order of the interaction in the Hamiltonian~\eqref{eq:ncHam} \\
N_b					&number of one-body states, i.e.\ dimension of $\oneH$ \\
S					&entropy~\eqref{def:entropy} \\
T					&temperature \\
v					&(non-local) one-body potential (matrix) \\
Z					&partition function~\eqref{eq:equiRho} \\
\beta					&inverse temperature, i.e.\ $1/T$ \\
\gamma				&one-body reduced density matrix (1RDM) \\
\phi, \psi				&state in one-body space $\oneH$ \\
\Phi, \Psi				&state in Fock space $\Fock$ \\
\Omega				&grand potential~\eqref{def:grandPotential} \\
					& \\
\textbf{Operators}		& \\
\diag(a_i)				&diagonal matrix with elements $a_i$ on its diagonal \\
\hat{H}				&Hamiltonian acting in Fock space \\
\hat{N}				&number operator acting in Fock space \\
\hat{V}				&(non-local) one-body potential acting in the Fock space \\
\hat{T}				&kinetic energy operator acting in the Fock space \\
\hat{\rho}				&density-matrix operator~\eqref{eq:DensityMatrixOp} \\
\hat{\rho}_v		& equilibrium density-matrix operator for a potential $v$~\eqref{eq:equiRho} \\
\trace\{\cdot\}			&trace of $N_b$-dimensional vector space \\
\Trace\{\cdot\}			&trace in the Fock space \\
\av{\cdot}				&expectation value \\
\braket{\cdot}{\cdot}		&inner product \\
\norm{\cdot}			&norm \\
					& \\
\textbf{Sets \& spaces}	& \\
\closedBall_{\epsilon}(x)	&closed ball of radius $\epsilon$ centred at point $x$ \\
\preFock_{\pm}			&pre-Fock space, containing only vectors of finite length \\
\Fock_{\pm}			&Fock space~\eqref{def:FockSpace}, i.e.\ completion of $\preFock_{\pm}$ \\
\HermitanMat(n)		&space of $n \times n$ hermitian matrices~\eqref{def:hermitianMatrix} \\
\oneH				&one-body Hilbert space \\
\Nbas				&index set of one-body states, typically $\{1, \dotsc, N_b\}$ \\
\NoneMat_{\pm}		&set of ensemble $N$-representable 1RMDs~\eqref{def:NoneMat} \\
\ToneMat_{\pm}			&interior of $\NoneMat_{\pm}$~\eqref{defs:ToneMats} \\
\NdensMat_{\pm}		&set of all density-matrix operators~\eqref{def:NdensMat} \\
\NTdensMat_{\pm}	&set of all finite temperature density-matrix operators~\eqref{def:NTdensMat} \\
\TraceClass			&space of trace-class operators~\eqref{def:TraceClass} \\
\FiniteNpot			&set of potentials yielding a proper Gibbs state \\
\VoneMat_{\pm}		&set of $v$-representable 1RDMs~\eqref{def:VoneMat} \\
\end{tabular}}

\newpage

\section{Introduction}
\label{sec:intro}

The main challenge in most areas of quantum physics and quantum chemistry is to solve equations that describe many interacting particles. This central challenge of modern physics is called the quantum many-body problem. It arises rather inconspicuous due to the way we construct a many-body quantum theory from a single-particle description. In quantum mechanics, for instance, this is usually done by starting from a single electron in real space. We describe this single electron by a normalised wave function $\varphi(\vecr)$ that solves a linear equation on the Hilbert space of square-integrable functions \citep{teschl2014}, e.g., the Schrödinger equation for the hydrogen atom.

For the description of a two-particle problem we want to ensure that the properties of the single-particle theory are kept intact. For this we just give every particle its own ``real space''. Taking into account indistinguishability and the fundamental property of spin leads for the two-electron problem to a wave function in its simplest form\footnote{\label{fn:norm}We employ an unconventional, yet more consistent normalisation, as explained in more detail in Sec.~\ref{sec:FockSpace}. Our normalisation guarantees the probability interpretation of (the modulus of) the underlying many-body wave function \citep{StefanucciLeeuwen2013}.}
\begin{equation}
\Phi(\vecr_1, s_1; \vecr_2, s_2) = \varphi(\vecr_1)\chi_1(s_{1})\,\varphi(\vecr_2)\chi_2(s_{2}) - \varphi(\vecr_2)\,\chi_1(s_{2})\varphi(\vecr_1)\chi_2(s_{1}) \, ,
\end{equation}
where $\chi_{1/2}(s)$ are spin wave functions for spin $s$. A general two-electron problem is then described by a wave function of the form $\Psi(\vecx_1,\vecx_2)$ where we denote $\vecx \isDefinedAs (\vecr, s)$. To determine the wave function of an interacting two-particle problem, e.g., the ground state of a neutral hydrogen molecule (H$_2$), we have to represent the problem on a computer. We can do so either by discretising real space, i.e., that we represent continuous real space by a grid of discrete points, or by some appropriate single-particle basis. For instance, to find a good basis for H$_{2}$ we could choose an $s$-type electronic orbital $s_a(\vecr)$ at the position of the first nucleus and one $s_b(\vecr)$ at the position of the second and then define symmetry-adapted basis functions $\sigma_{g/u}(\vecr) = \bigl(s_a(\vecr) \pm s_b(\vecr)\bigr) / \sqrt{2(1 \pm \braket{s_a}{s_b})}$. Then by Gram--Schmidt orthogonalisation we can construct further functions that all together constitute an orthonormal basis for the single-particle Hilbert space. Either way, an accurate representation of the wave function usually forces us to use many grid points or basis functions and thus if we need to store for each particle $M$ entries, the amount of data we have to handle is roughly $M^{2}$ bytes. If we have more than two particles this grows exponentially with the number of particles $N$, i.e., we need to handle $M^N$ bytes to work with many-body wave functions. Even with nowadays supercomputers we can only treat relatively small systems without further tricks. Therefore tremendous effort has been put into developing methods that make numerical calculations for complex many-body systems feasible. Many methods try to find efficient and accurate approximations to the many-body wave functions such as tensor-network approaches \citep{Schollwoeck2011, Orus2014}, coupled-cluster theory \citep{Bartlett2007} or quantum Monte-Carlo techniques \citep{gubernatis2016}.

A different route is to change from the exponentially-scaling many-body wave function as the fundamental description of the multi-particle problem to an equivalent, yet reduced quantity. This is the basic idea behind density-functional theories (DFT) \citep{DreizlerGross1990,Eschrig1996, Eschrig2003}, density-matrix theories \citep{Cioslowski2000, MazziottiBook2007, PernalGiesbertz2015, bonitz2015} and Green's function techniques \citep{FetterWalecka1971, StefanucciLeeuwen2013}. While it is numerically demanding to calculate Green's functions, this approach has the advantage that it is in principle easy to increase the accuracy of the calculated Green's function by including higher-order Feynman diagrams \citep{FetterWalecka1971, StefanucciLeeuwen2013}. On the other hand, in DFT it is relatively simple to numerically calculate the one-body density but it is demanding to systematically increase the accuracy \citep{burke2012}. This is due to the fact that the many-body energy, which is the central object in ground state DFT, is a very implicit functional of the density or the auxiliary Kohn-Sham (KS) wave functions. In this respect reduced density-matrix (RDM) functional theories are an interesting compromise. For one-body RDM (1RDM) functional theory the current and kinetic energy of the many-body energy becomes explicit and for two-body RDM (2RDM) even the two-body interaction energy becomes an explicit functional. The drawback of RDM theories, however, is that in contrast to DFT it is very hard to guarantee that some arbitrary RDM is connected to a specific many-body Hamiltonian or even just an arbitrary many-body wave function. These representability as well as other subtle mathematical problems \citep{Coleman1963, Coleman1987, Klyachko2006, AltunbulakKlyachko2008, AggelenVerstichel2010, Mazziotti2012} have hampered the development and applicability of RDM functional theories. 

To overcome these problems and provide a sound mathematical foundation for further developments of 1RDM functional theory, we present in this review a rigorous formulation in finite basis sets at elevated temperature and arbitrary particle numbers as well as statistics, i.e., for fermions and bosons.
The mathematical reason to work at elevated temperatures is that we avoid non-uniqueness problems which are present in a zero temperature formalism. It is obvious that this is also a very physical choice, as in many experiments temperature effects play a significant role. Important examples are metal-insulator transitions in transition metal oxides \citep{YooMaddoxKlepeis2005, RueffMattilaBadro2005, PattersonAracneJackson2004, MitaIzakiKobayashi2005, MitaSakaiIzaki2001, NoguchiKusabaFukuoka1996}, (high $T_c$) superconductors \citep{NagamatsuNakagawaMuranaka2001, BednorzMuller1986} and protein folding \citep{Anfinsen1972, TakaiNakamuraToki2008, NichollsSharpHonig1991}.
More extreme examples are rapid heating of solids via strong laser fields \citep{GavnholtRubioOlsen2009}, dynamo effect in giant planets \citep{RedmerMattssonNettelmann2011}, shock waves \citep{RootMagyarCarpenter2010, Militzer2006}, warm dense matter \citep{KietzmannRedmerDesjarlais2008} and hot plasmas \citep{Dharma-wardanaPerrot1982, PerrotDharma-wardana2000, Dharma-wardanaMurillo2008}.
 
The choice to formulate the theory in a finite basis set is mathematically motivated to make a rigorous treatment of the grand-potential relatively simple and to establish differentiability of the involved functionals. Apart from this mathematical convenience, the finite basis, which is the usual situation in any practical numerical calculation, has a more immediate consequence for the many users of DFT. In this case DFT implicitly becomes a 1RDM functional theory. Thus this review also provides the necessary foundation for approximate DFT calculations.

\section{Theoretical motivations for the setting}
\label{sec:theoMotive}

\subsection{1RDM functional theory in disguise: DFT in finite basis sets}

One of the problems which arises in practical DFT is that one often needs to use finite basis sets for calculations. Unfortunately DFT is not well defined for finite basis sets \citep[in the accompanying statements of][]{Nooijen1992},\footnote{We point out that in a grid basis this is not the case, since there one can rely on lattice DFT~\citep{ChayesChayesRuskai1985}.} so these calculations can lead to pathological problems as is well known in the optimised-effective potential approach to the KS potential \citep{Gorling1999, KollmarFilatov2008, Jacob2011, GidopoulosLathiotakis2012, BetzingerFriedrichGorling2012}. Let us demonstrate how a finite basis is typically problematic with a simple example. We consider the exact solution for the ground state of the neutral H$_2$ problem from above in the minimal basis $\{\sigma_g(\vecr), \sigma_u(\vecr)\}$. In this case the exact ground state becomes with $\sigma_{k, l}(\vecx) = \sigma_{k}(\vecr)\chi_{l}(s)$\footnote{See footnote~\ref{fn:norm}.}
\begin{equation}
\Psi(\vecx_1,\vecx_2) = c_g\bigl(\sigma_{g,1}(\vecx_1)\sigma_{g,2}(\vecx_2) - \sigma_{g,1}(\vecx_2)\sigma_{g,2}(\vecx_1)  \bigr) + c_u\bigl(\sigma_{u,1}(\vecx_1)\sigma_{u,2}(\vecx_2) - \sigma_{u,1}(\vecx_2)\sigma_{u,2}(\vecx_1)  \bigr) \, ,
\end{equation}
where $c_g^2 + c_u^2 = 2$. The density is readily evaluated as
\begin{equation}\label{eq:interactDens}
n(\vecr_1) = \sum_{s_1,s_2} \integ{\vecr_2} \abs{\Psi(\vecx_1,\vecx_2)}^2
= c_g^2\sigma_g(\vecr_1)^2 + c_u^2\sigma_u(\vecr_1)^2 \, . 
\end{equation}
The KS approach to DFT now aims at reproducing the very same density in the same single-particle basis set but with a non-interacting auxiliary system. The corresponding single-particle KS Hamiltonian then becomes a two-by-two matrix in the single-particle states $\ket{\sigma_k}$
\begin{equation}
 \hat{h}_{\text{KS}} = \sum_{k,l} \Bigl( \ket{\sigma_k}
 \underbrace{\brakket{\sigma_k}{{-\thalf \nabla^2}}{\sigma_l}}_{{}= t_{kl}}\bra{\sigma_l} +
 \ket{\sigma_k}\underbrace{\brakket{\sigma_k}{v_{\text{KS}}}{\sigma_l}}_{{}=v_{kl}} \bra{\sigma_l}\Bigr) \, .
\end{equation}
Here the $\ket{\sigma_k}$ are connected to the spin-space orbitals 
\begin{equation}
\ket{\sigma_{k,m}} = \underbrace{\sum_{s} \integ{\vecr} \sigma_{k,m}(\vecr s) \hat{\psi}^{\dagger}(\vecr s)}_{{}=\crea{a}_{k,m}} \ket{0}
\end{equation}
by $\ket{\sigma_k} = \sum_{m}\ket{\sigma_{k,m}}$ and we employ for notational convenience and to connect the real-space perspective with a spin-orbital basis representation the field operators\footnote{We note that we later avoid the use of field operators which have some undesirable mathematical properties \citep{thirring2013} and use the non-problematic creation and annihilation operators directly (see Sec.~\ref{sec:Hamiltonians}).} obeying the fermionic anti-commutation relations $\{\anni{\psi}(\vecx), \crea{\psi}(\vecx')\} = \delta(\vecx-\vecx')$. The resulting creation and annihilation operators $\crea{a}_{k,m}$ and $\anni{a}_{k,m}$ for the spin-orbitals consequently also obey anti-commutation relations. Further, $\ket{0}$ is the vacuum state (see Sec.~\ref{sec:Hamiltonians}).

One of the problems is that in a finite basis, we cannot determine anymore whether the KS potential is local or non-local. To be more precise, with a non-local potential, we mean a potential which acts in the following manner on a function $\varphi(\vecr)$
\begin{equation}
\hat{v}\varphi(\vecr) = \integ{\vecr'}v(\vecr,\vecr')\varphi(\vecr')
= \sum_{kl} \psi_k(\vecr)v_{kl}\braket{\psi_l}{\varphi} \, ,
\end{equation}
where the summation runs over a complete basis $\{\psi_k\}$.
A local potential is a special (non-local) potential in the sense that it is diagonal in the spatial representation
\begin{equation}
\hat{v}^{\text{loc}}\varphi(\vecr) = \integ{\vecr'}v^{\text{loc}}(\vecr)\delta(\vecr-\vecr')\varphi(\vecr')
= v^{\text{loc}}(\vecr)\varphi(\vecr) \, .
\end{equation}
If we only have the matrix elements of the potential, let us say only $v_{kl}$ for $1 \leq k,l \leq m$, then we can easily construct a truly non-local potential as
\begin{equation}
\hat{v}^{\text{nl}} = \sum_{k,l=1}^m\ket{\psi_k}v_{kl}\bra{\psi_l} \, .
\end{equation}
One readily sees by acting on any other basis state that this potential is indeed not local as
$\hat{v}^{\text{nl}}\psi_k(\vecr) = 0$ for $k > m$.

With slightly more effort, we can also construct a local potential corresponding to these matrix elements. To that end, partition the space into $m(m+1)/2$ regions $\mathbb{A}_i$, i.e.\ the number of unique pairs in the finite basis.
Denote the overlap between the basis functions within these regions as $\braket{\psi_k}{\psi_l}_i$, where $i$ enumerates the regions. Further, set the potential to be constant within each of these regions with a value $v^{\text{loc}}_i$. Now we require this local potential to be consistent with the specified matrix elements $v_{kl}$, so the $v^{\text{loc}}_i$ need to satisfy
\begin{equation}
\sum_i\braket{\psi_k}{\psi_l}_i\,v^{\text{loc}}_i = v_{kl} \, .
\end{equation}
This is just a set of linear equations in which $\braket{\psi_k}{\psi_l}_i$ is regarded as a matrix with $kl$-pairs on its column and the region index $i$ as its row index. This set of linear equations will typically always have a solution. If not, just subdivide some of the regions. An explicit expression for the local potential can be given as
\begin{equation}
v^{\text{loc}}(\vecr) = \sum_iv^{\text{loc}}_i \mathbf{1}_{\mathbb{A}_i}(\vecr) ,
\end{equation}
where we used indicator functions $\mathbf{1}_{\mathbb{A}_i}(\vecr)$ defined as
\begin{equation}
\mathbf{1}_{\mathbb{A}_i}(\vecr) = \begin{cases*}
1	&if $x \in \mathbb{A}_i$ \\
0	&if $x \notin \mathbb{A}_i$ \, .
\end{cases*}
\end{equation}
As we cannot decide anymore in a finite basis set, whether the potential corresponding to a set of matrix elements $v_{kl}$ is local or non-local, a functional theory which does not need this distinction anymore, will be clearly in advantage over DFT. The functional theory employing exactly this set of non-local one-body potentials is 1RDM functional theory.

Putting these difficulties with the locality of the potential aside for the moment, let us see how far we can get within the KS DFT framework.
Assuming non-degeneracy, the unique ground state of this one-particle problem then reads $\ket{\varphi_0} = a \ket{\sigma_g} + b \ket{\sigma_u}$. The resulting two-body KS wave function becomes $\ket{\Psi_s} = ( a \crea{a}_{g,1} + b \crea{a}_{u,1})( a \crea{a}_{g,2} + b \crea{a}_{u,2}) \ket{0}$, which yields the density
\begin{equation}
 n_{s}(\vecr) = 2 \left(a^2 \sigma_g(\vecr)^2 + 2 a b \, \sigma_g(\vecr)\sigma_u(\vecr) + b^2 \sigma_u(\vecr)^2 \right) \, .
\end{equation}
As the interacting density~\eqref{eq:interactDens} is symmetric, we need either $a = 0$ or $b = 0$. So either $n_s(\vecr) = 2\sigma_g(\vecr)^2$ or $n_s(\vecr) = 2\sigma_u(\vecr)^2$. Therefore, we have $n_s(\vecr) \neq n(\vecr)$ if both $c_g \neq 0$ and $c_u \neq 0$, which is the typical case.

Our assumption in the KS construction was that we could find a non-degenerate state and it is actually this assumption that prevented us from reproducing the exact density. If we choose the KS potential such that the KS orbitals become degenerate, the both determinants $\ket{\Phi_g} = \crea{a}_{g,2}\crea{a}_{g,1}\ket{0}$ and $\ket{\Phi_u} = \crea{a}_{u,2}\crea{a}_{u,1}\ket{0}$ are degenerate and any linear combination of them is also a ground state. In particular, we can make the linear combination
\begin{equation}
\ket{\Psi_s} = c_g\ket{\Phi_g} + c_u\ket{\Phi_u} = \ket{\Psi} \, ,
\end{equation}
which would be the exact wave function and hence, yield the exact density. This would be the type of solution one expects from the Levy constrained-search approach to DFT \citep{Levy1979} as one limits oneself to pure states.

As proposed by Valone in both the 1RDM and density-functional setting \citep{Valone1980a, Valone1980b} and also independently by Lieb in the DFT setting \citep{Lieb1983}, extending the search to density-matrix operators leads to improved mathematical properties. The extension implies that the KS wave function does not necessarily need to be equal to the interaction one. For instance, if we assume degeneracy as before we could use the density-matrix operator (introduced in more detail in Sec.~\ref{sec:grandCanonIntro})
\begin{equation}
 \hat{\rho}_s = c_g^2 \ket{\Phi_g}\bra{\Phi_g} + c_u^2 \ket{\Phi_u}\bra{\Phi_u} \, .
\end{equation}
Which ever way we choose, both approaches imply that the KS system reproduces the 1RDM of the interacting system. Indeed, using the later convention (see Sec.~\ref{sec:FockSpace}) that we employ combined spin-orbital indices $i \equiv (k,m)$ the 1RDM operator reads $\hat{\gamma}_{ij} = \crea{a}_j \anni{a}_i$ and leads in our case to the 1RDM $\gamma_{ij} = \brakket{\Psi}{\hat{\gamma}_{i,j}}{\Psi} = \Trace\{\hat{\rho}_s \hat{\gamma}_{ij} \} = c_g^2\brakket{\Phi_g}{\hat{\gamma}_{i,j}}{\Phi_g} + c_u^2\brakket{\Phi_u}{\hat{\gamma}_{i,j}}{\Phi_u} = (\gamma_s)_{ij}$, where we used the definition of the trace in~\eqref{eq:Trace}. The explicit 1RDM is now given as
\begin{equation}
  \gamma = \begin{pmatrix}
   c_g^2 & 0 & 0 & 0 \\
   0 & c_g^2 & 0 & 0 \\
   0 & 0 & c_u^2 & 0 \\
   0 & 0 & 0 & c_u^2
  \end{pmatrix} \, .
\end{equation}
This effectively means that due to a lack of flexibility in the basis set, the KS system is actually forced to reproduce at least the exact 1RDM. In a finite basis set KS-DFT therefore typically degenerates to 1RDM functional theory if one insists on having exactly
\begin{equation}
\norm{n - n_s}_1
\isDefinedAs \integ{\vecr}\abs{n_s(\vecr) - n(\vecr)} = 0 \, .
\end{equation}
This finite basis size effect is not limited to two electron systems, but is a general problem of finite basis set DFT. For example, the same effect has also been observed in attempts to reproduce the correlated density of CH$_2$ \citep{SchipperGritsenkoBaerends1998}. In the smaller aug-cc-pCVTZ basis an ensemble was needed to reproduce the density with the desired accuracy, whereas in the larger cc-pCVQZ a pure state was sufficient.

As in a finite basis set we effectively will require that the 1RDMs are identical, it is more natural to attempt to define a 1RDM functional theory for finite basis sets. Since in 1RDM functional theory we use $\gamma$ as basic functional variable in contrast to DFT, which only uses the diagonal of the 1RDM in a spatial representation, also its conjugate variable will change. In order to be able to control the full 1RDM and to set up a suitable one-to-one correspondence, 1RDM functional theory allows for \emph{non-local potentials} $v_{ij}$ that give rise to a corresponding non-local potential operator
\begin{equation}
\hat{V}_v \isDefinedAs \sum_{ij}v_{ij}\crea{a}_i\anni{a}_j \, .
\end{equation}
That a purely local potential $v_{i}\delta_{ij}$ is not the appropriate conjugate variable to $\gamma_{ij}$ is evident from the different dimensionalities.
Thus we need to find conditions under which we can establish a one-to-one correspondence between $v$ and the resulting $\gamma$. The set of non-local potentials for which this is possible we denote by $\FiniteNpot$ and the set of induced 1RDMs, the so-called $v$-representable 1RDMs, we denote by $\VoneMat$. In the following we will discuss the theoretical set up for which we want to establish rigorous foundations of 1RDM functional theory.

\subsection{Non-uniqueness in 1RDM functional theory}

It has been observed already some decades ago that the same ground state 1RDM $\gamma$ can come from different non-local potentials $v$ which differ by more than a simple constant as in DFT \citep{Gilbert1975, Pernal2005, Leeuwen2007, PhD-Baldsiefen2012}. Though there has been some progress by giving a full account of the non-uniqueness in 1RDM functional theory \citep{Giesbertz2015} in the non-degenerate case, it would be convenient to circumvent this difficulty. The difficulty of a non-unique non-local potential is readily avoided by working at a finite temperature \citep{Leeuwen2007, PhD-Baldsiefen2012}. Working at finite temperature means that all states in the Hilbert space are participating in the ensemble, which avoids the possibility of `blind spots' as in the zero temperature case \citep{Giesbertz2016}. Additionally, problems with degenerate states are avoided, as the Boltzmann factors always select the equi-ensemble \citep{Valone1980b, Eschrig2010}.

In fact, we will even work with the grand canonical ensemble, which allows one to vary the particle number with the constant of the potential (chemical potential). This eliminates even all degrees of freedom in the potential and a strict one-to-one relation is obtained between the equilibrium 1RDM and the non-local potential, similar to its finite temperature DFT counterpart \citep{Mermin1965}. In particular, the constant of the potential acts as minus the chemical potential and controls the number of particles. The number of particles does not need to be integer anymore, as the particle number is now an average over states with different particle number. Hence, this will be the setting in which we wish to establish 1RDM functional theory for both fermions and bosons.

\subsection{Problems in the full-space case}
\label{sec:grandCanonIntro}

In statistical quantum mechanics one needs to allow for the possibility that the quantum state of a system is not completely determined. Instead one can only attribute a certain probability $w_i$ to encounter the system in the quantum state $\ket{\Psi_i}$. This uncertainty in the quantum state can conveniently be described with the help of the density-matrix operator
\begin{equation}\label{eq:DensityMatrixOp}
\hat{\rho} \isDefinedAs \sum_iw_i\ket{\Psi_i}\bra{\Psi_i} \, ,
\end{equation}
where $w_i \geq 0$ as they are probabilities and $\sum_iw_i = 1$, since the probability to encounter the system in any of the quantum states should be one.

To be able to determine the expectation value of a physical observable from the density-matrix operator, we will define the trace of an operator. The trace of an operator, $\Trace\{\cdot\}$, is defined as summing the expectation values of any complete basis of the Hilbert space under consideration. So for an operator $\hat{A}$ we have
\begin{equation}\label{eq:Trace}
\Trace\{\hat{A}\} \isDefinedAs \sum_i\brakket{\Psi_i}{\hat{A} }{ \Psi_i} \, ,
\end{equation}
where $\ket{\Psi_i}$ is a complete basis for the Hilbert space. Expectation values of observables are now evaluated by taking the trace of the density-matrix operator and the corresponding operator.
\begin{equation}
O = \av{\hat{O}} = \Trace\{\hat{\rho}\,\hat{O}\} = \sum_i\brakket{\Psi_i}{\hat{\rho}\,\hat{O} }{ \Psi_i}
=\sum_{i,k}\brakket{\Psi_i}{w_k\ket{\Psi_k}\bra{\Psi_k}\hat{O} }{ \Psi_i}
= \sum_iw_i\brakket{\Psi_i}{\hat{O} }{ \Psi_i} \, ,
\end{equation}
where we have chosen the eigenstates of the density-matrix operator as the orthonormal basis, since this allowed us to exploit the diagonal representation of the density-matrix operator. As expected we simply got the weighted average of the expectation value of the operator for each state. For later convenience we also introduce the notation $\trace\{ \cdot \}$ to indicate traces of the 1RDM $\gamma$ and objects with the same dimensionality, e.g., the non-local potential $v$. This distinction is useful because it will more clearly highlight where we make explicit use of the fact that we work with finite dimensions.

Up to this point we did not specify which Hilbert space to consider for the state $\ket{\Psi_i}$. There are two important cases to distinguish. The first option is to use a Hilbert space $\oneH^N$ with a fixed number of particles, $N$. This Hilbert space would be suitable for the canonical ensemble, since the number of particles is fixed in this ensemble. As the grand-canonical ensemble allows for an arbitrary amount of particles, this Hilbert space does not offer sufficient flexibility. We therefore need to resort to the other option to describe a grand-canonical ensemble: a Hilbert space with an arbitrary amount of particles. Such a Hilbert space can be be constructed for any quantum system by adding all `fixed number' Hilbert spaces leading to a new Hilbert space which is called the Fock space, $\Fock$. The procedure to construct the Fock space will be described in more detail later in Sec.~\ref{sec:FockSpace}.

The quantum-mechanical grand potential is now defined analogously to the classical case as
\begin{equation}\label{def:grandPotential}
\Omega_v[\hat{\rho}] \isDefinedAs E_v[\hat{\rho}] - \beta^{-1}S[\hat{\rho}] \, ,
\end{equation}
where
\begin{equation}\label{def:Energy}
E_v[\hat{\rho}] \isDefinedAs \Trace\{\hat{\rho}\,\hat{H}_v\}
\end{equation}
is the energy of the system with the Hamiltonian $\hat{H}_v \isDefinedAs \hat{H}_0 + \hat{V}_v$. Note that $\mu = -\trace\{v\} = -\sum_{i} v_{ii}$ already serves as the chemical potential with $\mu \hat{N} = \mu \sum_{i}\crea{a}_i \anni{a}_i$, so there is no need to add this term separately.
In the last term we have the inverse temperature, $\beta = 1/T$, and the entropy \citep{Neumann1927}
\begin{equation}\label{def:entropy}
S[\hat{\rho}] \isDefinedAs -\Trace\{\hat{\rho}\ln(\hat{\rho})\} \, .
\end{equation}
The thermodynamic equilibrium state of the system is defined as the density-matrix operator which minimises the grand potential. We will call the minimiser $\hat{\rho}_v$ the canonical density-matrix operator, which also sometimes referred to as the Gibbs state. To find the minimum, we simply follow the standard procedure and make it stationary with respect to variations in the density-matrix operator \citep{Mermin1965, Leeuwen2007}
\begin{equation}
0 = \Trace\bigl\{\delta\hat{\rho}\bigl(\hat{H}_v + \beta^{-1}\ln(\hat{\rho}_v)\bigr)\bigr\} + \beta^{-1}\Trace\{\delta\hat{\rho}\} \, .
\end{equation}
The unit trace condition requires that we only consider variations for which $\Trace\{\delta\hat{\rho}\} = 0$, so we obtain the solution
\begin{equation}\label{eq:interimSol}
\frac{1}{\beta}\ln(\hat{\rho}_v) + \hat{H}_v = C \, ,
\end{equation}
where $C$ is a constant to be determined by the unit trace condition. This equation is readily worked out as
\begin{equation}\label{eq:equiRho}
\hat{\rho}_v = \e^{-\beta\hat{H}_v} \big/ Z[v] \, ,
\qquad \text{where} \qquad
Z[v]\isDefinedAs \Trace\bigl\{\e^{-\beta\hat{H}_v}\bigr\} \, .
\end{equation}
It is clear that this procedure yields only a proper solution when $0 < Z[v] < \infty$. As might be unexpected, the case $Z[v] = \infty$ is actually the typical case for the quantum systems considered in chemistry and physics in full space, i.e., the particles are considered in $\mathbb{R}^3$. For example, consider the hydrogen atom and let us try to calculate the contribution from only the bound states in the one-particle sector. As the bound states have energies $\epsilon_n = -1/(2n^2)$ (in atomic units) for $n=1,2,\dotsc$ and an $n^2$-fold degeneracy the contribution to the partition function becomes
\begin{equation}
Z^{\text{bounded}}_{N=1} = \sum_{n=1}^{\infty}\sum_{l=0}^{n-1}\;\sum_{\mathclap{m=-l}}^l
\brakket[\big]{nlm}{\e^{-\beta\hat{H}}}{nlm}
= \sum_{n=1}^{\infty} n^2\e^{\beta/(2n^2)}
\geq \sum_{n=1}^{\infty}n^2 = \infty \, .
\end{equation}
Since the partition function already does not converge when we only include the bound states in the one-particle sector, the full partition function will definitely not converge. If this is already the case for the hydrogen atom, one quickly realises that this implies that the partition function of any molecule or solid is infinite. The problem is that all these systems have a Rydberg series and\slash{}or a continuum of states which makes the partition function divergent. More generally we can state that any Hamiltonian with an accumulation point or continuous part in its spectrum will yield a divergent partition function. The argument is along the same lines as before. As we have an accumulation point, there exists an infinite sequence of eigenstates $\{\Psi_k\}$, such that their energies $E_k \leq L < \infty$. The contribution to the partition function from these states is readily estimated as
\begin{equation}
Z^{\text{acc.}} = \sum_k^{\infty}\e^{-\beta E_k} \geq \sum_k^{\infty}\e^{-\beta L} = \infty \, .
\end{equation}
A similar argument can be used also for the continuum case, where we can find arbitrarily many approximate eigenfunctions (distributional eigenfunctions integrated over an arbitrarily small but finite spectral interval) within the continuous spectrum \citep{derezinski2013, thirring2013}.

There are two major approaches in practice to deal with this problem. The first one is to treat the volume as an extensive quantity explicitly and enclose everything in a box or in an infinitely large confining potential like an harmonic oscillator. By placing the molecule in a box or harmonic potential, we get rid of the Rydberg series and continuum states. The infinite-space limit is now obtained (provided it exists \citep{thirring2013}) by taking the limit of an infinitely large box at the end of the calculation, or by taking the limit of a very shallow harmonic potential.

The other option is to assume that the relevant physics only occurs in a small part of the Fock space and the remainder is relatively unimportant. So the second procedure is to simply truncate the Fock space to a finite-dimensional space. In this case quantum physics becomes simple linear algebra and there are no accumulation points or continua in the spectrum, since the Hamiltonian will just reduce to a finite-dimensional matrix. Hence, for a finite-dimensional Fock space the partition function is always finite. The approach would now be to calculate the desired properties for an increasing dimension of the Fock space and to see whether the answers converge.

Here we will follow a route in between. The restriction to a finite one-particle basis will, for the fermionic case, result due to the Pauli exclusion principle in a finite-dimensional Fock space (see Sec.~\ref{sec:FockSpace}). Hence for the fermionic case the mathematics will be comparatively simple. The fermionic Hamiltonians we consider (see Sec.~\ref{sec:Hamiltonians} and \ref{sec:potentials}) are matrices, a fermionic ground state will always exist and the 1RDM is defined for any fermionic density-matrix operator (see Sec.~\ref{sec:1RDM}). Further, the necessary properties of the grand potential and the universal functional will be easily determined (see Sec.~\ref{sec:fermions}). 

For the bosonic case though, the restriction to a finite one-particle basis will not lead to a finite-dimensional Fock space, since infinitely many bosons can occupy the same quantum state (e.g.\ in a Bose--Einstein condensate). Consequently we will have to deal with unbounded operators, the hallmark of quantum physics.\footnote{Note that the most fundamental relation of quantum mechanics, i.e., $[\hat{x}, \hat{p}] = \im \hbar \hat{1}$, necessarily needs unbounded operators \citep{blanchard2003}.} And in this case we can encounter again the case $Z=\infty$. For example, consider only a single bosonic mode and a non-interacting Hamiltonian, $\hat{H} = \epsilon\,\crea{a}\anni{a} = \epsilon\hat{N}$. In that case, the partition function is readily worked out as
\begin{equation}
Z = \sum_{n=0}^{\infty}\e^{-\beta\epsilon n}
= \begin{cases}
\bigl(1 - \e^{-\beta\epsilon}\bigr)^{-1}	&\text{if $\epsilon > 0$} \\
\infty							&\text{if $\epsilon \leq 0$} \, .
\end{cases}
\end{equation}
So only when $\epsilon > 0$, we obtain a finite value for the partition function and otherwise $Z$ diverges. This is actually not so strange, since if $\epsilon < 0$ the Hamiltonian is unbounded from below, because we can make the energy arbitrarily low by adding more and more particles.
Therefore, we should at least require that the Hamiltonian is bounded from below, i.e., the energy expectation value on the domain of the Hamiltonian has a lower bound. The domain, i.e., for which states the Hamiltonian is well-defined, is usually not the full infinite-dimensional Hilbert space. Take for instance the state $\ket{\Psi} = \sum_{n=1}^{\infty} \ket{n}/n$ in the case above. It is normalised to $\braket{\Psi}{\Psi} = \pi^2/6$, but if we act with the above Hamiltonian for an $\epsilon \neq 0$ on it then $\braket{\hat{H}\Psi}{\hat{H}\Psi} \rightarrow \infty$. Thus, such a state will not be in the domain. A proper account of the domain of the bosonic Hamiltonians, their self-adjointness and existence of ground states will be given in Sec.~\ref{sec:Hamiltonians} and \ref{sec:potentials}. Further, in Sec.~\ref{sec:densityoperators} and \ref{sec:1RDM} we then provide the details of the bosonic density-matrix operators and 1RDMs. The necessary properties of the bosonic grand potential and universal functional will then be derived in Sec.~\ref{sec:bosons}. Unfortunately, the existence of a ground state is not sufficient to ensure a finite partition function and hence a well-behaved grand potential. Consider for example the following Hamiltonian for the case of one bosonic mode, $\hat{H} = 1/(\hat{N} + 1)$. Formally this Hamiltonian can also be written as $\hat{H} = \sum_{n=0}^{\infty}(-\hat{N})^n$. Though this Hamiltonian has a ground state, it has an accumulation point as well. So again we have $Z = \infty$. To avoid such accumulation points, one would expect that if we introduced a highest-order repulsive interaction  between the bosons (basically stop the above expansion at some finite order $2n$) that we can avoid this `infinite boson' catastrophe. This is indeed the case as we will proof later in Sec.~\ref{sec:bosonpartition}.

\subsection{General approach for 1RDM functional theory}
\label{sec:general}

The general approach in density-functional like theories is to partition the minimisation in the canonical grand potential (energy in the zero temperature case) as
\begin{equation}\label{eq:OmegaVpart}
\Omega[v] = \inf_{\gamma}\bigl(F[\gamma] + \trace\{v\,\gamma\}\bigr) \, ,
\end{equation}
where
\begin{equation}\label{def:universalFunction}
F[\gamma] \isDefinedAs \inf_{\crampedclap{\hat{\rho} \to \gamma}}\Omega_0[\hat{\rho}]
= \inf_{\crampedclap{\hat{\rho} \to \gamma}}
\Trace\bigl\{\hat{\rho}\bigl(\hat{H}_0 + \beta^{-1}\ln(\hat{\rho})\bigr)\bigr\}
\end{equation}
is called the universal functional. In case no $\hat{\rho} \to \gamma$ exists, we define $F[\gamma \nleftarrow \hat{\rho}] = \infty$. This functional is universal in the sense that for a given interaction (fixed $\hat{H}_0$) it can be used for any system with an extra one-body potential $v$ ($\hat{H}_v = \hat{H}_0 + \hat{V}_v$). The use of the universal functional is obvious. If we had a manageable expression for $F[\gamma]$, we do not need to calculate the canonical density-matrix operator in the full Fock space to evaluate $\Omega[v]$ and hence find the exact $\gamma$. The main objective of this work is therefore to study the properties of this universal function $F[\gamma]$ (see Sec.~\ref{sec:grandpotential}). To do so we will take full advantage of the fact that we work in a finite one-particle basis set which makes $\gamma$ a finite-dimensional matrix and the universal functional will thus have a finite-dimensional domain. Since it can be shown that $F[\gamma]$ is strictly convex (Theorem~\ref{thm:Fconvex} in Sec.~\ref{sec:grandpotential}) we can take advantage of well-known properties of such functions (for a finite-dimensional domain):
\begin{enumerate*}[label=\arabic*)]
\item local Lipschitz continuity,
\item the directional derivative exists in all directions,
\item the subdifferential is non-empty and
\item if the subdifferential contains only one element, the function is differentiable and the subgradient equals the gradient.
\end{enumerate*}
All of these concepts will be defined more precisely and explained in more detail in Sec.~\ref{sec:convex}. The most important consequence is that the universal functional will be differentiable, if we are able to show uniqueness of the subdifferential. Differentiability of $F[\gamma]$ allows to find the minimiser~\eqref{eq:OmegaVpart} via
\begin{equation}
\label{eq:minimiser}
\frac{\du F}{\du\gamma} = -v \, .
\end{equation}
Strict convexity implies that there is only one solution to~\eqref{eq:minimiser} and that it yields a global minimum. To demonstrate uniqueness of the subdifferential, it will first be shown that any $\gamma$ (as defined in Sec.~\ref{sec:grandpotential}) is $v$-representable. The subdifferential can now be identified with the potentials generating the particular $\gamma$. By repeating Mermin's proof, we will show that the potential generating a 1RDM is actually unique, which implies that the subdifferential contains one element, and hence, that $F[\gamma]$ is differentiable.

Differentiability is also important if one desires to setup a KS like construction to approximate $F[\gamma]$ with the one from a non-interacting system. For a non-interacting system ($\hat{H}_s =\sum_{ij} (h_s)_{ij}\crea{a}_i \anni{a}_j$) the grand potential as a functional of the 1RDM can be worked out as (see Appendix~\ref{ap:nonInteracting}),
\begin{equation}
\Omega^{\pm}_s[\gamma] = \mp\frac{1}{\beta}\trace\bigl\{\ln(1 \pm \gamma)\bigr\} \, ,
\end{equation}
where the upper and lower sign refer to bosons and fermions respectively.
Since we have the grand potential as an explicit functional of the 1RDM, we can also construct an explicit expression for the non-interacting universal functional
\begin{equation}
F_s^{\pm}[\gamma]
= E^{\pm}_{s,0} - \beta^{-1}S_s^{\pm}
= \trace\bigl\{\gamma \, h_{s,0}\bigr\} + \beta^{-1}\trace\bigl\{\gamma\ln(\gamma) -
(\gamma \pm 1)\ln(1 \pm \gamma)\bigr\} \, .
\end{equation}
Note how much simpler the 1RDM functional is for the non-interacting system compared to the density-functional version: it is even explicit! In the density-functional version one would first need to find a local potential such that the non-interacting system yields the required density. Only from its solution, one could then finally determine $F_s[n]$. 

Since the complete one-body part of the energy in 1RDM functional theory is already explicit, i.e., $E_{\text{s},0}[\gamma]= \trace\{\gamma \, h_{s,0}\bigr\} $, we only need to find approximations to the many-body parts. Following the notation of electronic-structure theory we call the missing many-body energy terms the grand-canonical Hartree-exchange-correlation energy functional $E_{\text{Hxc},0}[\gamma]$ which is defined by
\begin{equation}
	E_0[\gamma] = \bigl(E_0[\gamma] - E_{\text{s},0}[\gamma]\bigr) + E_{\text{s},0}[\gamma]
	\sAdenifeDsI E_{\text{Hxc},0}[\gamma] + E_{\text{s},0}[\gamma] \, .
\end{equation} 
We note that with respect to density-functional approximations, here the energy expression does \emph{not} contain kinetic-energy contributions, since they are contained exactly in the one-body part~\footnote{This holds provided we have the standard case that the internal single-particle part of the KS Hamiltonian is the same as the interacting one, i.e., $h_{\text{s},0} = h_{0}$. If for some reason the interacting free single-particle term would obey $\hat{h}_0 \leq 0$, then in the bosonic case it would become necessary to use a strictly positive $h_{\text{s},0}$ (see Sec.~\ref{sec:potentials}).}. But in the grand-canonical setting it is not only the interaction terms in the energy expression that need to be approximated but also the entropy is a many-body quantity. Here the KS system allows a first useful approximation
\begin{equation}
S[\gamma] = \bigl(S[\gamma] - S_{\text{s}}[\gamma]\bigr) + S_\text{s}[\gamma]
\sAdenifeDsI S_{\text{c}}[\gamma] + S_{\text{s}}[\gamma] \, ,
\end{equation}
where we call $S_{\text{c}}[\gamma]$ the correlation entropy functional. Together this leads to
\begin{equation}
F[\gamma] = \bigl(F[\gamma] - F_{\text{s}}[\gamma]\bigr)+ F_{\text{s}}[\gamma]   \sAdenifeDsI F_{\text{Hxc}}[\gamma] + F_{\text{s}}[\gamma] \, ,
\end{equation}
where we borrow the name for the remainder $F_{\text{Hxc}}[\gamma] = E_{\text{Hxc},0}[\gamma] + S_{\text{c}}[\gamma]$ from electronic-structure theory: the grand-canonical Hartree-exchange-correlation functional. It is this term that needs to be approximated in practice.

Depending on the particular form of the many-body terms in the Hamiltonian, the energetic part can be further decomposed. As a concrete example, consider the typical setting that only number conserving two-body interactions are additionally present, e.g., electrons that interact via the Coulomb interaction. The interacting Hamiltonian in this case reads
\begin{equation}
 \hat{H}_0 = \sum_{ij} h_{ij}\crea{a}_i \anni{a}_j + \frac{1}{2}\sum_{ijkl} w_{ijkl}\crea{a}_{i}\crea{a}_{j}\anni{a}_{k}\anni{a}_{l} \, .
\end{equation}
It is natural to choose the kinetic energy as the one-body part of the reference Hamiltonian
\begin{subequations}
\begin{equation}
h_{ij} = t_{ij} = -\half\sum_s\integ{\vecr}\psi_i^*(\vecr s)\nabla^2\psi_j(\vecr s) ,
\end{equation}
where $\psi_i(\vecr s)$ is the space-spin representation of the one-body basis. The interaction matrix elements for the Coulomb interaction are given by
\begin{equation}
w_{ijkl} = \sum_{s_1,s_2}\iinteg{\vecr_1}{\vecr_2}
\frac{\psi_i^*(\vecr_1 s_1)\psi_j^*(\vecr_2 s_2)\psi_k(\vecr_2 s_2)\psi_l(\vecr_1 s_1)}
{\abs{\vecr_1 - \vecr_2}} .
\end{equation}
\end{subequations}
The Hartree and the exchange part are then explicitly given in terms of the 1RDM as
\begin{subequations}
\begin{align}
E_{\text{H}}[\gamma] &\isDefinedAs \half\sum_{ijkl}w_{ijkl}\gamma_{li}\gamma_{kj} \, ,\\
E_{\text{x}}[\gamma] &\isDefinedAs \half\sum_{ijkl}w_{ijkl}\gamma_{lj}\gamma_{ki} \, .
\end{align}
The correlation part of the interaction energy is now simply defined as the remaining part
\begin{equation}
E_{\text{c}}[\gamma] \isDefinedAs E_0[\gamma] - E_{\text{H}}[\gamma] - E_{\text{x}}[\gamma] \, .
\end{equation}
\end{subequations}
For this specific form of the many-body term we are therefore left to find approximations to $F_{\text{c}}[\gamma] = E_{\text{c}}[\gamma] + S_{\text{c}}[\gamma]$.
The simplest approximation would be to simply neglect all the correlation parts, i.e.\ to set $F_c = 0$, in which case the formalism becomes identical to grand-canonical Hartree--Fock.

At this point we like to stress that no exact expressions for the universal functional $F[\gamma]$ for any interaction are (yet) known, which could serve as a guiding principle for the construction of approximate functional. This is in contrast to 1RDM functional theory at the zero-temperature, e.g.\ 2-electrons~\cite{LowdinShull1956, CioslowskiPernal2001, PhD-Giesbertz2010} and single impurity Anderson model~\cite{TowsPastor2011, TowsPastor2012}, which have inspired the majority of approximations at zero temperature. For an overview with detailed descriptions, consult the recent review~\cite{PernalGiesbertz2015}. The only approximate temperature-dependent 1RDM functionals so far have been developed by T.~Baldsiefen et al.~\cite{BaldsiefenEichGross2012, BaldsiefenGross2012b, PhD-Baldsiefen2012}. Part of the work is complicated by the fact that data for the homogeneous gas for general non-local potentials is currently not available.

\subsection{Outline}

To summarise, we will first introduce in detail the fermionic and bosonic Fock spaces in Sec.~\ref{sec:FockSpace}, then discuss the creation and annihilation operators as well as Hamiltonian operators in Sec.~\ref{sec:Hamiltonians}. In Sec.~\ref{sec:potentials} we provide the spaces of the non-local potentials and discuss the properties of the density-matrix operators and the 1RDMS in Sec.~\ref{sec:densityoperators} and \ref{sec:1RDM}, respectively. We recapitulate properties of finite-dimensional convex (concave) functions in Sec.~\ref{sec:convex}. If we then assume that the partition function is finite and the universal functional is strictly convex then in Sec.~\ref{sec:grandpotential} we show $v$-representability of the 1RDMs. These assumptions will be proved to hold for the simple fermionic case in Sec.~\ref{sec:fermions} and then for the more advanced infinite-dimensional bosonic case in Sec.~\ref{sec:bosons}. This will be done in three successive steps, where in Sec.~\ref{sec:minimumgrandpotential} we first show that the bosonic grand potential has a minimum, in Sec.~\ref{sec:bosonpartition} we show under which conditions the bosonic partition function is finite and then in Sec.~\ref{sec:bosonicfunctional} we show that the bosonic universal functional has a minimum and is strictly convex. In Sec.~\ref{sec:infiniteDiscus} we discuss issues that arise if we build our theory on an infinite-dimensional one-particle space. Finally, in Sec.~\ref{sec:conclusion} we give a concise recapitulation of the complete setting in which 1RDM functional theory can be made mathematically rigorous before we discuss implications and future perspectives. In the appendices we then provide further details of expressions and theorems employed for completeness.

We like to stress again that the mathematics to handle the bosonic case can be quite formidable and intimidating. We would therefore advice the less mathematically inclined reader to focus on the fermionic case on a first-time reading and to only glance over the details of dealing with infinite-dimensional spaces.

\section{Setting the stage}
\label{sec:stage}

\subsection{Many-particle spaces}
\label{sec:FockSpace}

We will consider a quantum many-body system, where the particles can only occupy a finite number of single particle states, $\ket{i}$, for $i \in \Nbas \isDefinedAs \{1, \dotsc, N_b\}$ and $N_b < \infty$. The single particle states are assumed to be orthonormal, so $\braket{i}{j} = \delta_{ij}$.
From these orthonormal single-particle states, we can construct a one-particle Hilbert space $\oneH \isDefinedAs \bigl\{\ket{1},\dotsc,\ket{N_b}\bigr\}$, whose elements are linear combinations of the basis states, i.e.\ the single particle states $\{\ket{i}\}$
\begin{equation}
\ket{\psi} = \sum_{i=1}^{N_b}\ket{i}\psi_i \, .
\end{equation}
Let us stress that we work with a spin-dependent basis, so the index $i$ also runs over the different spin states.

The orthonormality of the basis states induces the following inner product on the one-particle Hilbert space
\begin{equation}
\braket{\phi}{\psi} \isDefinedAs \sum_{i,j=1}^{N_b}\phi_i^*\braket{i}{j}\psi_j
= \sum_{i=1}^{N_b}\phi_i^*\psi_i\, ,
\end{equation}
so the Hilbert space is isomorphic (surjective isometry) to the $N_b$-dimensional sequence or Euclidean spaces $\oneH \cong l^2(N_b) \cong \mathbb{C}^{N_b}$. The inner product yields the usual square norm $\norm{\psi} = \sqrt{\braket{\psi}{\psi}}$.

To accommodate $N$ particles, we will use tensor products of the one particle Hilbert space
\begin{equation}
\oneH^N \isDefinedAs \bigotimes_{i=1}^N\oneH
= \bigl(\underbrace{\oneH \otimes \dotsb \otimes \oneH}_{\text{$N$ times}}\bigr) \, .
\end{equation}
The basis states are readily constructed from the one particle basis as tensor products
\begin{equation}
\ket{i_1}\ket{i_2}\dotsb\ket{i_N}
\isDefinedAs \ket{i_1}\otimes\ket{i_2}\otimes\dotsb\otimes\ket{i_N} \, ,
\end{equation}
where $N$ is the order of the tensor products, i.e.\ the number of particles we will be dealing with. There are different ways how we can define the inner product, but the Dirac bra-ket notation appeals to the following definition
\begin{align}\label{def:NpartInnerProd}
\bra{i_N}\dotsb\bra{i_2}\bra{i_1}\ket{j_1}\ket{j_2}\dotsb\ket{j_N}
\isDefinedAs{}& \bra{i_N}\dotsb\bra{i_2}\braket{i_1}{j_1}\ket{j_2}\dotsb\ket{j_N} \notag \\
={}& \braket{i_1}{j_1}\bra{i_N}\dotsb\braket{i_2}{j_2}\dotsb\ket{j_N}
= \prod_{k=1}^N\braket{i_k}{j_k} \, .
\end{align}
The norm of $\ket{\Psi_N} \in \oneH^N \cong l^2(N_b^N) \cong \mathbb{C}^{N_b^N}$ is defined in the same manner as in the one-particle Hilbert space via the inner product as $\norm{\Psi_N} \isDefinedAs \sqrt{\braket{\Psi_N}{\Psi_N}}$.

The Hilbert space $\oneH^N$ is suitable for the description of distinguishable quantum particles. The description of indistinguishable quantum particles, however, requires the states to be symmetric (bosons) or anti-symmetric (fermions). These states belong to one of the following subspaces of $\oneH^N$
\begin{equation}
\oneH^N_{\pm} \isDefinedAs S_{\pm}\bigotimes_{i=1}^N\oneH
= S_{\pm}\bigl(\underbrace{\oneH \otimes \dotsb \otimes \oneH}_{\text{$N$ times}}\bigr) \, .
\end{equation}
The operator $S_+$ is a symmetriser and $S_-$ an anti-symmetriser, depending if we are dealing with bosons or fermions respectively. Note that the anti-symmetry of the fermions implies that the $\oneH^{N_b}_-$ is the fermionic many-particle Hilbert space with the largest possible number of fermions. This can be equivalently stated as $\oneH^{N > N_b} = \emptyset$.

The basis states of these $N$-particle Hilbert spaces are not merely tensor products of the one-particle states, but should also exhibit \mbox{(anti-)}symmetry. There are several possibilities regarding normalisation and sign convention. We will use the following definition to define a basis for the subspaces $\oneH^N_{\pm}$ \citep{StefanucciLeeuwen2013}
\begin{equation}\label{eq:symBasisStates}
\ket{i_1\dotsc i_N}
\isDefinedAs \frac{1}{\sqrt{N!}}\sum_{\perm}(\pm)^{\perm}
\ket{\perm(i_1)} \dotsb \ket{\perm(i_N)} \, ,
\end{equation}
where $\perm$ is a permutation of $i_1,\dotsc,i_N$ and $(\pm)^{\perm} = 1$ for even permutations and $(\pm)^{\perm} = \pm$ for odd permutations. Note that these basis states are unique up to an arbitrary permutation of their indices which only might induce a change in their phase factor. From the definition it is clear that the basis states are orthogonal if their indices are distinct. To be more precise, from the inner product~\eqref{def:NpartInnerProd} it follows that the inner product of the basis states is
\begin{equation}\label{eq:symBasisStatesInner}
\braket{i_N \dotsc i_1}{j_1 \dotsc j_N} = \begin{vmatrix}
\delta_{i_1j_1}	&\ldots	&\delta_{i_1j_N} \\
\vdots		&\ddots	&\vdots \\
\delta_{i_Nj_1}	&\ldots	&\delta_{i_Nj_N}
\end{vmatrix}_{\mathrlap{\pm}} \, ,
\end{equation}
where $\abs{A}_+$ denotes the permanent (bosons) and $\abs{A}_-$ denotes the determinant (fermions).

We therefore find that the states~\eqref{eq:symBasisStates} constitute an orthonormal basis in the fermionic case if we take for example $i_1 < \dotsb < i_N$. (Allowing for arbitrary $i_1 , \dotsc , i_N$ would yield an overcomplete basis). For this particular choice, the fermionic unit operator can be expressed as
\begin{subequations}\label{eq:resNidentity}
\begin{equation}\label{eq:fermUnit}
\hat{1}_{N} = \sum_{i_1=1}^{N_b}\sum_{i_2 > i_1}^{N_b}\dotsi\sum_{\crampedclap{i_N > i_{N-1}}}^{N_b}
\;\ket{i_N\dotsc i_1}\bra{i_1\dotsc i_N}
= \frac{1}{N!}\sum_{i_1=1}^{N_b}\dotsb\sum_{i_N=1}^{N_b}
\ket{i_N\dotsc i_1}\bra{i_1\dotsc i_N} \, .
\end{equation}
To disentangle the summations, we used that fermionic states in which particles occupy the same one-particle state do not exist, so drop out of the summation. Further, the factor $1/N!$ compensates for summing over equivalent states.

In the bosonic case, we also have the possibility that particles may occupy the same one-particle state. For these states~\eqref{eq:symBasisStatesInner} reveals that those states are not normalised. The bosonic unit operator therefore becomes
\begin{equation}
\hat{1}_{N} = \sum_{i_1=1}^{N_b}\sum_{i_2 \geq i_1}^{N_b}\dotsi\sum_{\crampedclap{i_N \geq i_{N-1}}}^{N_b}
\frac{\ket{i_N\dotsc i_1}\bra{i_1\dotsc i_N}}{\braket{i_N \dotsc i_1}{i_1 \dotsc i_N}}
= \frac{1}{N!}\sum_{i_1=1}^{N_b}\dotsb\sum_{i_N=1}^{N_b}
\ket{i_N\dotsc i_1}\bra{i_1\dotsc i_N} \, .
\end{equation}
\end{subequations}
Note that very conveniently, the number of equivalent states times the norm of the bosonic basis states~\eqref{eq:symBasisStates} is exactly $N!$, so disentangling the summations on the right-hand side yields the same form as in the fermionic case~\eqref{eq:fermUnit}.

An $N$-body quantum state for indistinguishable particles can now be expanded in terms of the basis states as
\begin{equation}
\ket{\Psi_N} = \frac{1}{N!}\sum_{i_1=1}^{N_b}\dotsb\sum_{i_N=1}^{N_b}
\ket{i_1\dotsc i_N}\Psi_{i_N\dotsc i_1} \, ,
\end{equation}
where $\Psi_{i_N\dotsc i_1} = \braket{i_N \dotsc i_1}{\Psi_N}$.
A practical advantage of this construction is that even if we choose a sequence $\{\Psi_{i_N \dotsc i_1}\} \in l^2(N_b^N)$ that does not have the right (anti-)symmetry, the resulting $\ket{\Psi_N}$ does. From the resolution of the identity~\eqref{eq:resNidentity} it follows that the inner product for two state $\ket{\Phi_N},\ket{\Psi_N} \in \oneH^N_{\pm}$ can be evaluated as
\begin{equation}
\braket{\Phi_N}{\Psi_N}
= \frac{1}{N!}\sum_{i_1=1}^{N_b}\dotsb\sum_{i_N=1}^{N_b}
\Phi_{i_1\dotsc i_N}^{\dagger}\Psi^{\vphantom{\dagger}}_{i_N\dotsc i_1} \, ,
\end{equation}
where $\Phi^{\dagger}_{i_1\dotsc i_N} \isDefinedAs \Phi^*_{i_N\dotsc i_1}$. The norm is defined in the usual manner via the inner product as $\norm{\Psi_N} \isDefinedAs \sqrt{\braket{\Psi_N}{\Psi_N}}$.

The special case $\oneH^0 = \oneH^0_{\pm}$ is defined to have one state: the vacuum state $\ket{0}$. Since $\oneH^0$ contains only one state, it is isomorphic to the complex numbers, $\oneH^0 \cong \Complex$.

The bosonic\slash{}fermionic Fock space can now be constructed by adding all the $N$-particle Hilbert spaces
\begin{equation}\label{def:FockSpace}
\Fock_{\pm} \isDefinedAs \bigoplus_{n=0}^{\infty}\oneH^n_{\pm} \, .
\end{equation}
An important difference between bosons and fermions is that the Fock space allows for an arbitrary number of bosons, so $\Fock_+$ is infinite-dimensional. On the contrary, the highest number of fermions which can be accommodated in $N_b$ one-particle states is $N_b$, so the fermionic Fock space is finite-dimensional ($2^{N_b}$) and the sum in~\eqref{def:FockSpace} only needs to run up to $n = N_b$. This difference might not seem to be very important at this point, but we will see that the bosonic case requires much heavier mathematics than the fermionic case to properly set up 1RDM functional theory.

The inner product on the Fock space is defined by adding the inner product in each particle sector. So given states $\ket{\Phi},\ket{\Psi} \in \Fock_{\pm}$
\begin{equation}\label{eq:FockStates}
\begin{split}
\ket{\Phi} &= a_0\ket{0} \oplus a_1\ket{\Phi_1} \oplus \dotsb
\oplus a_n\ket{\Phi_n} \oplus \dotsb \, , \\
\ket{\Psi} &= b_0\ket{0} \oplus b_1\ket{\Psi_1} \oplus \dotsb
\oplus b_n\ket{\Psi_n} \oplus \dotsb \, ,
\end{split}
\end{equation}
where $\ket{\Phi_n},\ket{\Psi_n} \in \oneH^n$,
the inner product is naturally defined as
\begin{equation}
\braket{\Phi}{\Psi} \isDefinedAs \sum_{n=0}^{\infty}a^*_n\braket{\Phi_n}{\Psi_n}b_n \, .
\end{equation}
The norm induced by the inner product is the usual square norm, $\norm{\Psi} = \sqrt{\braket{\Psi}{\Psi}}$.

At this point we find the first real mathematical differences between fermions and bosons. Since the fermionic Fock space is finite-dimensional it is isomorphic to finite-dimensional sequence or Euclidean spaces $\Fock_{-} \cong l^{2}(2^{N_b}) \cong \mathbb{C}^{2^{N_b}}$. Hence every possible state can be represented by a sequence of complex numbers $\ket{\Psi} \cong \{a_{1},\dotsc,a_{2^{N_b}}\}$ and is guaranteed to have a finite square norm. Indeed, for the finite-dimensional case all norms are equivalent, as they agree on the ordering of the vectors by length. This can be seen most easily for the case of the so-called $l^{p}$-norms which are defined by
\begin{equation}
\norm{\Psi}_p \isDefinedAs \left(\sum_{i=1}^{d}\abs{a_i}^p\right)^{\mathrlap{\nfrac{1}{p}}} \, ,
\end{equation}
where $1<p<\infty$ and $d$ is the dimensionality of the sequence space. In the case of $p=\infty$ we choose $\norm{\Psi}_\infty \isDefinedAs \sup_i\abs{a_i}$. If $d<\infty$ we have for $q>p$ on the one hand $\norm{.}_{p} \geq \norm{.}_{q}$ and due to Hölder's inequality $\norm{.}_{p} \leq d^{1/p-1/q}\norm{.}_{q}$. Thus for any two norms $ \norm{\cdot}_q \leq \norm{\cdot}_p \leq d^{1/p-1/q} \norm{\cdot}_{q}$. Consequently, it does not really matter for the fermionic case which norm we use in our calculations. If the norm of $\Psi$ is finite in some norm, it will be finite in any norm.

On the contrary, for the bosonic case we have infinite dimensions and hence the square norm is no longer automatically finite. An obvious example is the state we generated in the case of a single bosonic mode in Sec.~\ref{sec:grandCanonIntro}, i.e., $\ket{\hat{H} \Psi} = \sum_{n=1}^{\infty} \epsilon \ket{n}$, where we identify $\ket{1_1 \dotsc 1_n} \equiv \ket{n}$ for the one-particle Hilbert space $\mathcal{H}=\{ \ket{1}\}$. By retaining only those $\ket{\Psi}$ for which $\norm{\Psi}_2 < \infty$, we can turn the bosonic Fock space into a proper Hilbert space $\Fock_{+} \cong l^2(\mathbb{N})$, i.e.\ it is complete (every Cauchy sequence converges) with respect to the norm induced by the inner product.
Further, for the bosonic case it matters which norm we choose. Obviously, while for the above state $\norm{\hat{H}\Psi}_2 \rightarrow \infty$ we clearly have $\norm{\hat{H}\Psi}_\infty = \epsilon$. Another example would be with $\braket{\Phi_n}{\Phi_n} = 1$ the state
\begin{equation}
\ket{\Phi} = \sum_{n=1}^{\infty}\frac{1}{n^k}\ket{\Phi_n} \, ,
\end{equation}
which only for $k > 1/2$ obeys $\norm{\Phi}_2 < \infty$. If we choose a different $l^p$-norm, then we select a different set of states since only for $k>1/p$ we have $\norm{\Phi}_p < \infty$. It also shows that for $q>p$ we still have the inequality $\norm{\cdot}_p \geq \norm{\cdot}_q$ but we do no longer have the second inequality to make the norms equivalent. Further, it implies that for $q>p$ we have $l^{p}(\Nats)\subset l^{q}(\Nats) \subseteq l^{\infty}(\Nats)$. So why don't we use a different norm that allows for more general states? The reason is that only for $p=2$ the sequence space is a Hilbert space, i.e., has an inner product. And this structure allows us to properly define self-adjoint operators in the infinite-dimensional case. That we have self-adjoint Hamiltonians will become especially important when we want to define the exponentiation of an operator, e.g., for the canonical density-matrix operators or Gibbs states.

\subsection{Hamiltonians}
\label{sec:Hamiltonians}

Now let us turn our attention to the Hamiltonians on the Fock spaces. We will use creation and annihilation operators to define the Hamiltonians and divide them into two categories: number conserving and number non-conserving Hamiltonians. Again, the fermionic case will be trivial, while the bosonic case needs some more details.

We define the (formally adjoint) creation and annihilation operators%
\footnote{Sometimes, the creation operator has a plus symbol instead of the dagger, so $\hat{a}^+$ rather than $\crea{a}$, to stress that $\hat{a}^+$ adds a particle.}
by \citep{StefanucciLeeuwen2013}
\begin{subequations}
\begin{align}
\crea{a}_i \ket{i_1 \dotsc i_N} &=  \ket{i_1 \dotsc i_N \, i} \, , \\
\anni{a}_i \ket{i_1 \dotsc i_N}
&= \sum_{k=1}^{N}(\pm)^{N+k}\delta_{i_ki} \ket{i_1 \dotsc i_{k-1} i_{k+1} \dotsc i_{N}} \, ,
\end{align}
\end{subequations}
where the upper\slash{}lower sign refers to the bosonic\slash{}fermionic case and they obey the commutation\slash{}anti-communication relations for bosons\slash{}fermions
\begin{equation}\label{eq:creaAnniCommut}
\bigl[\anni{a}_k,\crea{a}_l\bigr]_{\mp} = \delta_{kl}
\qquad \text{and} \qquad
\bigl[\anni{a}_k, \anni{a}_l\bigr]_{\mp} = 0
\end{equation}
on their common domain.
Since the fermionic Fock space is finite, the operators on this Hilbert space are bounded (and continuous) by construction and have the full Fock space as their domain
\begin{equation}
 \anni{a}_i,\crea{a}_{i} \colon \Fock_{-} \mapsto \Fock_{-}.
\end{equation}
This also makes the creation operator to be the adjoint of the annihilation operator, since no subtleties with respect to domains arise. Further, bounded operators form an algebra and hence multiplication of bounded operators is again a bounded operator. Therefore, in the fermionic case any combination of creation and annihilation operators will be a bounded operator and thus defined on the full Fock space.

In the bosonic case this is no longer true. As an examplification we will use the state $\ket{\Psi}= \sum_{n=1}^{\infty} \ket{n}/n$ already employed in Sec.~\ref{sec:grandCanonIntro}. We then have for $\norm{\anni{a}\Psi}_2^2 = \brakket{\Psi}{\crea{a} \anni{a}}{\Psi} = \sum_{n=1}^{\infty}1/n \rightarrow \infty$, and thus we cannot have the full infinite-dimensional bosonic Fock space as domain of $\anni{a}$. This also holds for combinations of creation and annihilation operators and thus for the bosonic Hamiltonians. Although in physics and chemistry often ignored, self-adjoint is not the same as hermitian. That an operator $\hat{H}$ is hermitian means that $\braket{\hat{H}\Phi}{\Psi} = \braket{\Phi}{\hat{H}\Psi}$ for $\Phi, \Psi \in \domain(\hat{H})$. An operator is self-adjoint if it has the additional property that $\domain(\hat{H}) = \domain(\hat{H}^{\dagger})$~\cite{blanchard2003}. The latter condition is always fulfilled for a bounded linear operator, so in particular for any operator in a finite dimensional case. Hence, in the \emph{finite} dimensional case no distinction is needed and both terms can be used interchangeably. On the contrary, on infinite dimensional spaces unbounded linear operators exist for which the condition $\domain(\hat{H}) = \domain(\hat{H}^{\dagger})$ cannot taken be for granted. Only self-adjointness guarantees that the spectrum is real and that it has (generalised) eigenfunctions. The existence of this eigendecomposition allows us to define the exponential needed in the evaluation of the partition function~\eqref{eq:equiRho}, so this property is important for the Hamiltonian. A well-known example of an operator that is hermitian but not self-adjoint is the momentum operator in a box \citep{RuggenthalerPenzLeeuwen2015}. A simple way to see this is that no eigenfunctions for the momentum operator with zero boundary conditions exist. However, self-adjointness for a separable Hilbert space is equivalent to the existence of a diagonal representation in terms of (possibly distributional) eigenfunctions with real eigenvalues. Therefore the momentum operator in a box cannot be self-adjoint.

To analyse the bosonic situation in more detail, we use the approach by Cook \citep{Cook1953, Emch1972}.
Consider the subspace of the Fock space, $\preFock \subseteq \Fock$, which only contains vectors of finite, though arbitrary length. So in $\preFock$ we only include Fock states of the following form, cf.~\eqref{eq:FockStates}
\begin{equation}
\ket{\Phi} = a_0\ket{0} \oplus a_1\ket{\Phi_1} \oplus a_2\ket{\Phi_2} \oplus \dotsb \oplus a_n\ket{\Phi_n} \, ,
\end{equation}
with $n < \infty$ and where $\ket{\Phi_n} \in \oneH^n$. For each of these $n$-particle components we have $\crea{a}_i \colon \oneH^n \to \oneH^{n+1}$ and $\anni{a}_i \colon \oneH^n \to \oneH^{n-1}$ respectively. So for Fock states of finite length this implies that acting with a creation or annihilation operator on them yields a new state also of finite length, so we have 
$\crea{a}_i \colon \preFock \to \preFock$
and
$\anni{a}_i \colon \preFock \to \preFock$. So $\crea{a}_i$ and $\anni{a}_i$ can be defined to have a common domain $\preFock$ on which they are each others adjoint and the commutation relations~\eqref{eq:creaAnniCommut} are well defined. As the ranges are also $\preFock$, this implies that for any string of a finite number of creation and annihilation operators we have
\begin{equation}
\crea{a}_{i_1}\crea{a}_{i_2}\dotsb\crea{a}_{i_n}\anni{a}_{j_1}\anni{a}_{j_2}\dotsb\anni{a}_{j_m} \colon \preFock \to \preFock \, .
\end{equation}
Though $\preFock$ is not complete (its completion is $\Fock$, which also includes states with $n=\infty$, though finite norm), it has the important property that it is dense in $\Fock$. Dense in $\Fock$ means that any $\ket{\Psi} \in \Fock$ can be arbitrarily closely approximated by states in $\preFock$. So we can always find a state $\ket{\Phi} \in \preFock$ such that $\norm{\Phi - \Psi} < \epsilon$ for any $\epsilon > 0$. The fact that we can define arbitrary strings of creation and annihilation operators on a dense subspace of $\Fock$ turns out to be very useful to guarantee that Hamiltonians defined as combinations of such strings are self-adjoint.
 
To guarantee that the considered Hamiltonians are self-adjoint, we will make use of the following. Provided the operator is bounded from below by some number $\lambda \in \Reals$, hermitian and has a dense domain, then there exists a self-adjoint extension of the operator called the Friedrichs extension \citep{blanchard2003}.
As an example consider the number operator defined as
\begin{equation}\label{eq:NumberOp}
\hat{N} \isDefinedAs \sum_{i=1}^{N_b} \crea{a}_i \anni{a}_i \, .
\end{equation}
Since the number operator is defined by a linear combination of creation-annihilation operator strings of finite length, $\hat{N} \colon \preFock \to \preFock$, it is defined on a dense domain. It is obviously hermitian on this domain. Further, since $\brakket{\Psi}{\hat{N}}{\Psi} = \sum_{i=1}^{N_b}\braket{\anni{a}_i\Psi}{\anni{a}_i\Psi} = \sum_{i=1}^{N_b} \norm{\anni{a}_i\Psi}^2_2 \geq \lambda \norm{\Psi}_2$ for $\lambda = 0$, $\hat{N}$ is bounded from below. On the other hand, it is also obvious that the operator is not bounded from above. As the number operator is bounded from below, is hermitian and has a dense domain, we know that it has a self-adjoint realisation which has a spectral representation with real spectrum. In fact, as the Fock space was constructed from the eigenstates of the number operator, we actually know its self-adjoint realisation in its spectral form
\begin{equation}\label{eq:NumberSpectral}
\hat{N} = \bigoplus_{n=0}^{\infty} n \, \hat{1}_n \, .
\end{equation}
By constructing the Hamiltonians as linear combinations of creation-annihilation operator strings of finite length, we immediately ensure that the Hamiltonians are defined on a dense domain $\preFock$. Including only hermitian combinations of strings immediately makes them hermitian. The only thing we still need to worry about in the bosonic case is whether the Hamiltonians are bounded from below.

Let us consider these Hamiltonians in some more detail, to understand which kind of physical situations they can describe. This will also allow us to give conditions which guarantee that the Hamiltonian is bounded from below, and hence, has a self-adjoint realisation. The first step will be to split the Hamiltonians in a number conserving part $\hat{H}^{\text{c}}$ and a non-conserving part $\hat{H}^{\text{nc}}$
\begin{equation}
\hat{H} = \hat{H}^{\text{c}} + \hat{H}^{\text{nc}}.
\end{equation}
The number conserving Hamiltonians have the general form
\begin{equation}\label{eq:ncHam}
\hat{H}^{\text{c}} = 
\sum_{n=0}^{\nmax}\sum_{\substack{i_1,\dotsc, i_n=1 \\ j_1,\dotsc, j_n=1}}^{N_b}\!\!\!\!\!
h^{(n)}_{i_1\dotsc i_n,j_n \dotsc j_1}\crea{a}_{i_1}\dotsb\crea{a}_{i_n}\anni{a}_{j_n}\dotsb\anni{a}_{j_1} \, ,
\end{equation}
where $0 < \nmax < \infty$ is the maximum order of the interactions and $h^{(n)} \in \Complex^{2nN_b}$ is hermitian in the sense that
\begin{equation}
\bigl(h^{(n)}_{j_1\dotsc j_n,i_n \dotsc i_1}\bigr)^{\mathrlap{*}}
= h^{(n)}_{i_1\dotsc i_n,j_n \dotsc i_1} \, .
\end{equation}
This ensures that the particle-conserving part is hermitian. In the bosonic case, we additionally require that the matrix elements of the maximum order of interaction obey
\begin{equation}
\sum_{\substack{i_1,\dotsc, i_n=1 \\ j_1,\dotsc, j_n=1}}^{N_b}\!\!\!\!\!
u^*_{i_1\dotsc i_n}h^{(\nmax)}_{i_1\dotsc i_n,j_n \dotsc j_1}u_{j_1\dotsc j_n} > 0
\end{equation}
for all $u \in \Complex^{nN_b}$. In other words, in the bosonic case we require that $h^{(\nmax)}$ is positive definite.
Due to Theorem~\ref{thm:eqDomHNn} we know that $\domain{(\hat{H}^{\text{c}})} = \domain{(\hat{N}^{\nmax})}$ is dense in $\Fock_{+}$, and according to corollary~\ref{thm:boundedfrombelow} we know that $\hat{H}^c$ will be bounded from below. That the highest-order interaction is supposed to be positive to assure boundedness from below in the bosonic case is physically quite intuitive. Since if the highest-order interaction would not be positive, the energy could be lowered indefinitely by adding more and more particles.

Now, what do the different orders in $\hat{H}^{\text{c}}$ correspond to? The term with $n=0$ corresponds to a constant in the Hamiltonian which only shifts the eigenvalue spectrum. This could be the repulsion between the nuclei in a molecule when we only describe the electrons quantum mechanically. The next order, $n=1$, contains the one-body part of the Hamiltonian. The one-body part comprises at least the kinetic energy (= hopping matrix elements) and can also contain effects due to a one-body potential, e.g.\ a dipole field or the electrostatic field generated by nuclei. As already mentioned before, in the grand canonical setting, the negative of the trace of the one-body potential acts as the chemical potential. This is easily understood as the constant in the potential sets the potential relative to infinity, which acts as the bath with which the particles can be exchanged.
The second term, $n = 2$, contains the two-body interactions, e.g.\ the Coulomb interaction between electrons or the Hubbard $U$ onsite interaction. The higher-order terms then correspond to more complicated many-body interactions.

Let us next turn to the number non-conserving parts of the Hamiltonian $\hat{H}^{\text{nc}}$. We want to allow for Hamiltonians which mix the states of different particle numbers. The major requirement for the number non-conserving terms is that they are hermitian. That is obviously enough for the fermionic case to guarantee that the total Hamiltonian is self-adjoint. For the bosonic case we again need to ensure that the full Hamiltonian is bounded from below. This roughly means that we need to ensure that the non-conserving parts in the Hamiltonian do not become too large compared to the conserving parts.

The lowest order non-conserving term is of the form of a source or $1/2$-body operator \citep{DominicisMartin1964a}
\begin{equation}\label{eq:sourceH}
\sum_{i}\Bigl(\bigl(h_i^{(\nhalf)}\bigr)^*\crea{a}_i + h_i^{(\nhalf)}\anni{a}_i\Bigr) \, .
\end{equation}
In the context of photons, for instance, this term corresponds to the coupling to an external current or dipole \citep{greiner2013, grynberg2010}.

The next higher order term is used in the Bardeen--Cooper--Schrieffer (BCS) Hamiltonian to model the formation of Cooper pairs to explain superconductivity, the (anomalous) pairing field
\begin{equation}\label{eq:pairingH}
\sum_{ij}\bigl(D^{\dagger}_{ij}\crea{a}_i\crea{a}_j + 
D^{\vphantom{\dagger}}_{ij}\anni{a}_i\anni{a}_j\bigr) \, .
\end{equation}
In Appendix~\ref{ap:non-interatingSolution}, we work out the solution of a non-interacting Hamiltonian of the most general form, i.e.\ with both a source term~\eqref{eq:sourceH} and a pairing field~\eqref{eq:pairingH}.
The source term only shifts the spectrum as a whole, so no additional restrictions on the source term are needed.

However, in the bosonic case, a too strong pairing field leads to an unbounded operator with a pure continuous spectrum \citep{Chruscinski2003}, so the pairing matrix $D$ cannot be chosen arbitrarily large for bosons. This is readily clarified by writing the pairing field in terms of position and momentum operators
\begin{equation}
\frac{\omega}{2}\bigl(\crea{a}\crea{a} + \anni{a}\anni{a}\bigr)
= \half m\omega^2x^2 - \frac{\hat{p}^2}{2m} \, .
\end{equation}
Adding this perturbation to the harmonic oscillator Hamiltonian with strength $d \in \Reals$, we find
\begin{equation}
\hat{H}
= \omega\,\crea{a}\anni{a} + d\,\frac{\omega}{2}\bigl(\crea{a}\crea{a} + \anni{a}\anni{a}\bigr)
= (1 - d)\frac{\hat{p}^2}{2m} + \frac{1+d}{2} m \omega^2 x^2
= \frac{\hat{p}^2}{2m_r} + \half m_r\omega_r^2x^2 \, 
\end{equation}
where
\begin{equation}
m_r = m/(1 - d)
\qquad \text{and} \qquad
\omega_r = \omega\sqrt{1 - d^2} \, .
\end{equation}
Hence, we get a renormalised version of the harmonic oscillator for $\abs{d} < 1$. For $d = -1$, we exactly eliminate the harmonic potential and the Hamiltonian of a free particle with half the original mass remains, $m_r = m/2$. For $d = 1$, the effective mass becomes infinite, $m_r = \infty$ and we are left with a Hamiltonian without any kinetic energy and only $m\omega^2x^2$ remains. The resulting operators for $\abs{d} = 1$ are still self-adjoint, and it is clear that we have a completely continuous spectrum. For $\abs{d} > 1$ the system corresponds to an inverted harmonic oscillator, which only serves as a scattering potential (see Fig.~\ref{fig:bosonFailure}). One therefore expects a purely continuous spectrum $(-\infty,\infty)$, which is indeed the case, as demonstrated in Ref.~\citep{Chruscinski2003}.

\begin{figure}[t]
\centering
\begin{tikzpicture}[domain=-2:2, scale=0.75]
  \node at (-3,2.5) {$\hat{H}_0 = \omega\,\crea{a}\anni{a}$};
  \node at (-3,-2.5) {$V(x) = \half m\omega x^2$};
  \draw[thick, smooth] plot(\x-3,\x*\x-2);
  \draw[thick, ->] (-0.5,0) -- (0.5,0);
  \draw[thick, smooth] plot(\x+3,2-\x*\x);
  \node at (3,2.5) {$\hat{H}_0 - \half m\omega(\crea{a}\crea{a} + \anni{a}\anni{a})$};
  \node at (3,-2.5) {$V(x) = -\half m\omega x^2$};
\end{tikzpicture}
\caption{Plot of the potential when adding a too strong pairing field to the (bosonic) harmonic oscillator Hamiltonian $\hat{H}_0$. The particles will be unbound and the spectrum of the perturbed Hamiltonian will be completely continuous \citep{Chruscinski2003}.}
\label{fig:bosonFailure}
\end{figure}
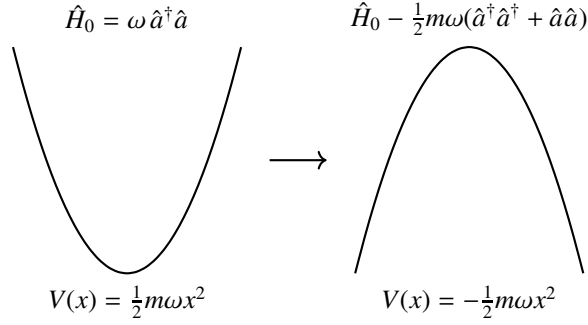

Higher order non-conserving terms can be devised in a similar manner as the lowest order terms to ensure that the Hamiltonian is hermitian.
An example would be the term
\begin{equation}
\sum_{\mathclap{ijklm}}\bigl(
d^{\vphantom{*}}_{ijklm}\crea{a}_i\crea{a}_j\crea{a}_k\anni{a}_l\anni{a}_m +
d^*_{ijklm}\crea{a}_m\crea{a}_l\anni{a}_k\anni{a}_j\anni{a}_i\bigr) \, .
\end{equation}
From the discussion on the bosonic pairing field it is clear that additional constraints on the strength of these general non-conserving parts are needed in the bosonic case to ensure that the Hamiltonian is bounded from below and has a discrete spectrum without accumulation points. Sufficient bounds are discussed in the specialised Section~\ref{sec:bosonpartition} and the relevant inequalities are presented in Table~\ref{tab:inequalities}.

\subsection{One-body potentials}
\label{sec:potentials}

Since we expect a one-to-one correspondence between the 1RDM and (non-local) one-body potentials, we will consider perturbations of the Hamiltonian by a one-body potential
\begin{equation}\label{eq:potHam}
\hat{H}_v \isDefinedAs \hat{H} + \hat{V}_v
\isDefinedAs \hat{H} +  \sum_{ij} v_{ij} \crea{a}_i\anni{a}_j \, ,
\end{equation}
where $v = v^{\dagger}$ to keep the full Hamiltonian $\hat{H}_v$ hermitian. To have a properly defined canonical density-matrix operator, we need that the partition function is finite. We will therefore only use potentials in the following set
\begin{equation}
\FiniteNpot \isDefinedAs \set*{v \in \HermitanMat(N_b)}{Z[v] < \infty} \, ,
\end{equation}
where $\HermitanMat(N_b)$ denotes the set of hermitian $N_b \times N_b$ matrices, i.e.
\begin{equation}\label{def:hermitianMatrix}
\HermitanMat(N_b)
\isDefinedAs \set*{h \in \Nbas \times \Nbas \to \Complex}{h = h^{\dagger}} \, .
\end{equation}
Later we will show in theorem~\ref{thm:finiteZ} that if the Hamiltonian has a highest-order interaction, i.e.\ $\nmax < \infty$, and is bounded from below, then $Z[v] < \infty$. Let us therefore discuss when we can expect the perturbed Hamiltonian $\hat{H}_v$ to be bounded from below. This will also guarantee that the resulting Hamiltonian is self-adjoint as discussed in Sec.~\ref{sec:Hamiltonians}.

First note, that since the fermionic Fock space is finite-dimensional, $\hat{H}_v$ will always be bounded from below (and above). Thus the fermionic space of non-local potentials is just the full space of hermitian matrices, $\FiniteNpot_{-} = \HermitanMat(N_b)$.

In the bosonic case, however, even if $\hat{H}$ is bounded from below, $\hat{H}_v$ might not be bounded from below for general $v \in \HermitanMat(N_b)$. Take, for instance, the non-interacting bosonic case where we have $\hat{H}_v = \sum_{ij}(h^{(1)}_{ij} + v_{ij}) \crea{a}_i\anni{a}_j$. By choosing $v$ such that $h^{(1)} + v$ has a negative eigenvalue, we can lower the energy by an arbitrary amount by putting more and more bosons in this negative energy state. This is an important difference to the usual zero-temperature and fixed-number-of-particles case, where we can shift the single-particle energy spectrum by an arbitrary amount and not influence the physics. In the non-interacting grand-canonical case we always have to choose this constant such that the single-particle ground-state energy is positive, and the choice of this constant influences the physical result. To put it differently: since the constant that shifts the spectrum influences directly the chemical potential $-\trace\{v\}$, we need to choose a constant that makes it energetically unfavourable to add an infinite amount of particles to a non-interacting system in the grand-canonical situation. Also a zero eigenvalue should be avoided, since this leads to an infinite number of many-particle states with the same energy and prevents the partition function from being finite (see Sec.~\ref{sec:grandCanonIntro}). So for non-interacting bosons, we readily find that
\begin{equation}\label{eq:FiniteNPot}
\FiniteNpot_{+}^{\text{nonint}} = \set*{v \in \HermitanMat(N_b)}{h^{(1)} + v >0} \, ,
\end{equation}
which makes $\hat{H}_v > 0$. To have a properly defined reference system, we need $v = 0$ to be contained in $\FiniteNpot_{+}^{\text{nonint}}$, so one would need $h^{(1)} > 0$. The most natural choice is to use the kinetic energy operator for $h^{(1)}$, since that is a part we usually cannot manipulate in the experiment and is strictly positive definite, i.e.
\begin{equation}
\sum_{ij} h^{(1)}_{ij}\crea{a}_i\anni{a}_j = \hat{T} = \sum_{ij} t_{ij}\crea{a}_i\anni{a}_j > 0 \, .
\end{equation}
It should be clear that other choices for $h^{(1)}$ are definitely possible under the aforementioned conditions.

In the interacting case we have due to the assumptions $\infty > \nmax > 1$ and $h^{(\nmax)}>0$ from Sec.~\ref{sec:Hamiltonians} that a perturbation in the first-order terms does not make $\hat{H}_{v}$ unbounded from below. Thus in the interacting bosonic case we have again $\FiniteNpot_{+} = \HermitanMat(N_b)$.

\subsection{Density-matrix operators}
\label{sec:densityoperators}

We have already introduced the density-matrix operators in Sec.~\ref{sec:grandCanonIntro}, but we will also need a norm (distance) between them. The set of density-matrix operators on a Fock space $\Fock_{\pm}$ space can be defined as
\begin{equation}\label{def:NdensMat}
\NdensMat_{\pm} \isDefinedAs \set*{ \hat{\rho} \colon \Fock_{\pm} \to \Fock_{\pm} }{ 
\hat{\rho} = \hat{\rho}^{\dagger}, \hat{\rho} \geq 0, \Trace\{\hat{\rho}\} = 1 } \, .
\end{equation}
The condition $\hat{\rho} = \hat{\rho}^{\dagger}$ means that the operator is self-adjoint, the condition $\hat{\rho} \geq 0$ means that the density-matrix operators are positive semidefinite ($w_i \geq 0$) and the last condition, $\Trace\{\hat{\rho}\} = 1$, means that the weights should sum to one. Further, we will assume that the weights are arranged in decreasing order.

There are now many possibilities to define a norm on $\NdensMat_{\pm}$. We will use the observation that the set of density-matrix operators can be regarded as a subspace of a larger space, the space of trace-class operators
\begin{equation}\label{def:TraceClass}
\TraceClass_{\pm} \isDefinedAs \set[\big]{\hat{A} \colon \Fock_{\pm} \to \Fock_{\pm} }{
\norm{\hat{A}}_1 < \infty } \, ,
\end{equation}
where the trace norm $\norm{\cdot}_1$ is a special case of the following norms for $p \geq 1$
\begin{subequations}
\begin{equation}
\norm{\hat{A}}_p = \Bigl(\Trace\{\abs{\hat{A}}\}\Bigr)^{\mathrlap{\nfrac{1}{p}}}
\isDefinedAs \biggl(\sum_i\brakket{\Psi_i}{(\hat{A}^{\dagger}\hat{A})^{\nfrac{p}{2}}}{\Psi_i}\biggr)^{\mathrlap{\nfrac{1}{p}}} \, ,
\end{equation}
and $p = \infty$ we define to be the operator norm (maximum possible amplification)
\begin{equation}
\norm{\hat{A}}_\infty \isDefinedAs \sup\set[\big]{\norm{\hat{A}\Psi}}{\text{$\Psi \in \Fock_{\pm}$ with $\norm{\Psi} \leq 1$}}\, .
\end{equation}
\end{subequations}
These norms are known as Schatten norms and are a generalisation of the $l^p$ norms (Sec.~\ref{sec:FockSpace}) to operators. 
The Schatten norms obey the same sequence of inequalities as the $l^p$ norms
$\norm{\hat{A}}_p \geq \norm{\hat{A}}_q$ for any $1 \leq p \leq q \leq \infty$. Thus the unit trace condition immediately implies that the density-matrix operators are bounded, as $1 = \norm{\hat{\rho}}_1 \geq \norm{\hat{\rho}}_{\infty}$. Thus the density-matrix operator is defined on the whole Fock space and self-adjoint in both the fermionic and the bosonic case.

The inequality $\norm{\hat{A}}_1 \geq \norm{\hat{A}}_p$ also implies that $\TraceClass_{\pm}$ is a subspace of all the other spaces induced by the according $p$-norms, so we could use any of those norms. But the trace norm is the natural choice for $\NdensMat_{\pm}$, as it corresponds to the unit trace condition on the density-matrix operators. Since $\norm{\cdot}_1$ is a proper norm on $\TraceClass_{\pm}$, we can also use it on the subspace $\NdensMat_{\pm} \subset \TraceClass_{\pm}$. The set of density-matrix operators $\NdensMat_{\pm}$ can then be classified as positive trace-class operators with $\norm{\hat{A}}_1 = 1$. In other words, the density-matrix operators live on the positive orthant of the surface of a ball with radius 1 in $\TraceClass_{\pm}$.

The norms with $p > 1$ can be used to separate the pure states from the mixed states. A pure state is a density-matrix operator for which only one state is needed, i.e., which can be expressed as
\begin{equation}\label{eq:pureState}
\hat{\rho}_{\text{pure}} = \ket{\Psi}\bra{\Psi} \, .
\end{equation}
Mixed states are simply all other density-matrix operators. Only the pure states possess the property
\begin{subequations}
\begin{equation}
\norm{\hat{\rho}_{\text{pure}}}_p = 1 \quad \text{for any $1\leq p \leq \infty$.} 
\end{equation}
On the other hand, for mixed states we always have
\begin{equation}
\norm{\hat{\rho}_{\text{mixed}}}_p < 1 \quad \text{for any $1 < p \leq \infty$.} 
\end{equation}
\end{subequations}
The difference between the pure and mixed states also becomes apparent in their value of the entropy \citep{Wehrl1978}
\begin{equation}
S[\hat{\rho}_{\text{pure}}] = 0
\qquad \text{and} \qquad
S[\hat{\rho}_{\text{mixed}}] > 0 \, .
\end{equation}
As already discussed at the end of Sec.~\ref{sec:FockSpace}, it does not really matter which norm we use in the case of a finite-dimensional space. Thus in the fermionic case all the norms are equivalent. In the bosonic case the choice does matter. We further note, that not for all $\hat{\rho} \in \NdensMat_{+}$ the operators $\hat{H}_v \hat{\rho}$ and $\hat{\rho}\ln(\hat{\rho})$ are again trace-class. Thus, the domain of $\Omega_v$, i.e., $\Omega_v[\rho] < \infty$, will be a subset of $\NdensMat_{+}$. Indeed, an explicit characterisation is not so simple and we will see in Theorem~\ref{thm:domainSbosonic} that for any $\hat{\rho}$ in the domain there is another density-matrix operator arbitrarily close that is not in the domain. However, we will still be able to show that $\Omega_v[\hat{\rho}]$ is strictly convex. This property, briefly recapitulated in Sec.~\ref{sec:convex}, allows us to further characterize $\Omega_v[\hat{\rho}]$ in Sec.~\ref{sec:grandpotential}.

\subsection{The 1RDM}
\label{sec:1RDM}

The 1RDM operator is defined as an $N_b \times N_b$ matrix of operators $\hat{\gamma}_{ij} \isDefinedAs \crea{a}_j\anni{a}_i \colon \domain{(\hat{\gamma}_{ij})} \to \Fock$. The 1RDM operator is hermitian in the sense that $\hat{\gamma}^{\dagger} = \hat{\gamma}$, or worked out in components
\begin{equation}
\hat{\gamma}_{ij}^{\dagger} = \hat{\gamma}_{ji}^* = \bigl(\crea{a}_i\anni{a}_j\bigr)^*
= \bigl(\anni{a}_j\bigr)^*\bigl(\crea{a}_i\bigr)^* = \crea{a}_j\anni{a}_i = \hat{\gamma}_{ij} \, ,
\end{equation}
so includes the matrix transposition. The number operator is obtained by taking the trace over this matrix of operators
\begin{equation}\label{eq:numbOpFrom1RDM}
\hat{N} = \trace\{\hat{\gamma}\} \isDefinedAs \sum_{i=1}^{N_b}\hat{\gamma}_{ii} \, .
\end{equation}
A number of properties of 1RDM operators are easy to derive \citep{Lowdin1955a}. For expectation values we have for the diagonal entries
\begin{equation}
0 \leq \norm{\anni{a}_i \Psi}^2_2
= \brakket{\Psi}{\crea{a}_i\anni{a}_i}{\Psi}
= \brakket{\Psi}{\hat{\gamma}_{ii}}{\Psi}
= \brakket{\Psi}{1 \pm \anni{a}_i\crea{a}_i}{\Psi} = 1 \pm \norm{\crea{a}_i \Psi}^2_2 \, .
\end{equation}
Therefore, we find that the diagonal elements are positive and that for fermions they have a maximum value 1 (Pauli principle: no state can be occupied by more than one particle). For bosons there is no upper bound.

Now let us derive a condition on the off-diagonal elements of the 1RDM operator, which is basically the proof of the Cauchy--Schwarz inequality. Since for every state $\ket{\Psi} \in \domain{(\hat{\gamma}_{ij})}$ we have for any $\lambda \in \Complex$ that
\begin{equation}
0 \leq \norm{\bigl(\anni{a}_i - \lambda \anni{a}_j\bigr) \Psi}^2_2 
= \brakket[\big]{\Psi}{\bigl(\crea{a}_i - \lambda^* \crea{a}_j\bigr)\bigl(\anni{a}_i - \lambda \anni{a}_j\bigr)}{\Psi}
= \gamma_{ii} + \abs{\lambda}^2\gamma_{jj} - \lambda\gamma_{ji} - \lambda^*\gamma_{ij} ,
\end{equation}
where we used $\gamma_{ij} = \brakket{\Psi}{\hat{\gamma}_{ij}}{\Psi}$ as an abbreviation. Now setting $\lambda = \gamma_{ij} / \gamma_{jj}$, we find%
\footnote{The choice $\lambda = \gamma_{ij}/\gamma_{jj}$ is problematic if $\gamma_{jj} = 0$. This case is easily circumvented by interchanging the roles of $i$ and $j$ if $\gamma_{ii} \neq 0$. In the case also $\gamma_{ii} = 0$, we immediately find that $\gamma_{ij} = 0$. Further, it is clear that $\abs{\gamma}_{ij} = \infty$ is only possible for any $\lambda$ if also $\gamma_{ii} = \gamma_{jj} = \infty$.}%
\begin{equation}\label{eq:gammaOffDiagBound}
\gamma_{ii}\gamma_{jj} \geq \abs{\gamma_{ij}}^2 \, .
\end{equation}
Consequently the off-diagonal elements are bounded by the diagonal elements. This is a necessary condition for a matrix to be positive semidefinite. Indeed, since the expectation value of the 1RDM operator is obviously a hermitian matrix, it can be diagonalised by a unitary transformation of the one-particle basis. Since the conditions derived for the diagonal entries are valid in any one-particle basis, this implies that the 1RDM operator is always a positive semidefinite matrix for any $\ket{\Psi} \in \domain{(\hat{\gamma}_{ij})}$. Hence, the 1RDM operator is a positive semidefinite operator, i.e., $\hat{\gamma} \geq 0$, and thus has a self-adjoint realisation. Due to the Pauli exclusion principle for fermions, we have additionally the natural upper bound on the eigenvalues, $n_i \leq 1$, which can alternatively be expressed as $n_i^2 \leq n_i$. Since this applies to any $\ket{\Psi} \in \Fock_{-}$, the last inequality can be translated back to the 1RDM operator as $\hat{\gamma}^2 \leq \hat{\gamma}$ in the fermionic case.

The next step is to identify the states for which the 1RDM operator is well defined, i.e., for which the expectation values of all its elements are finite. This identification is important, since the 1RDM will be the central quantity in the theory.
In the simplest setting that we only work with one single-particle state, $N_b = 1$, the 1RDM is identical to the number operator $\hat{\gamma}_{11} = \crea{a}_1\anni{a}_1 = \hat{N}$, so also has the same domain.
For example, the state $\ket{\Psi}= \sum_{n=1}^{\infty} \ket{n}/n$ introduced earlier at the beginning of Sec.~\ref{sec:Hamiltonians} is an example which yields $\gamma_{11} = \brakket{\Psi}{\hat{\gamma}_{11}}{\Psi} = \infty$, so is outside its domain.
Since the number operator is obtained by summing over all entries~\eqref{eq:numbOpFrom1RDM}, one would expect that might be the same in general. This is indeed the case, since the off-diagonal entries are bounded by the diagonal part of the 1RDM~\eqref{eq:gammaOffDiagBound}. We have put this convenient result as the following proposition.

\begin{prop}
$\domain(\hat{\gamma}) = \domain(\hat{N})$.
\end{prop}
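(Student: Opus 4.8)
The plan is to unpack $\domain(\hat\gamma)$ exactly as the sentence preceding the proposition suggests: the set of $\ket{\Psi}\in\Fock$ for which every matrix element $\brakket{\Psi}{\hat\gamma_{ij}}{\Psi}$ is finite. Since $\brakket{\Psi}{\hat\gamma_{ij}}{\Psi}=\braket{\anni{a}_j\Psi}{\anni{a}_i\Psi}$, this amounts to finiteness of all the diagonal quantities $\gamma_{ii}=\norm{\anni{a}_i\Psi}_2^2$ together with finiteness of all cross terms. I would then establish the two inclusions separately, the whole argument resting on just two ingredients already at hand: the Cauchy--Schwarz-type bound~\eqref{eq:gammaOffDiagBound}, $\abs{\gamma_{ij}}^2\leq\gamma_{ii}\gamma_{jj}$, and the hypothesis $N_b<\infty$.

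For $\domain(\hat\gamma)\subseteq\domain(\hat N)$: if $\ket{\Psi}\in\domain(\hat\gamma)$ then in particular each diagonal element $\gamma_{ii}=\norm{\anni{a}_i\Psi}_2^2$ is finite. Because $\hat N=\trace\{\hat\gamma\}=\sum_{i=1}^{N_b}\hat\gamma_{ii}$ by~\eqref{eq:numbOpFrom1RDM} and this sum runs over finitely many indices, $\brakket{\Psi}{\hat N}{\Psi}=\sum_{i=1}^{N_b}\gamma_{ii}<\infty$, so $\ket{\Psi}\in\domain(\hat N)$. For the reverse inclusion: if $\ket{\Psi}\in\domain(\hat N)$ then $\sum_{i=1}^{N_b}\norm{\anni{a}_i\Psi}_2^2=\brakket{\Psi}{\hat N}{\Psi}<\infty$, and since all summands are non-negative each $\gamma_{ii}$ is individually finite; the off-diagonal bound~\eqref{eq:gammaOffDiagBound} then gives $\abs{\gamma_{ij}}^2\leq\gamma_{ii}\gamma_{jj}<\infty$ for every pair $(i,j)$, so all matrix elements of $\hat\gamma$ are finite and $\ket{\Psi}\in\domain(\hat\gamma)$.

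The only place where any care is needed is the meaning of ``domain''. If one insists instead on the operator domain, i.e.\ $\hat\gamma_{ij}\ket{\Psi}=\crea{a}_j\anni{a}_i\ket{\Psi}\in\Fock$ and $\hat N\ket{\Psi}\in\Fock$, the inclusion $\domain(\hat\gamma)\subseteq\domain(\hat N)$ is unaffected since $\hat N\ket{\Psi}=\sum_{i=1}^{N_b}\hat\gamma_{ii}\ket{\Psi}$ is a finite sum of Fock vectors, while the reverse inclusion additionally needs the routine sector-wise estimate $\norm{\crea{a}_j\anni{a}_i\Psi_n}\leq C(N_b)\,n\,\norm{\Psi_n}$ to turn $\sum_n n^2\norm{\Psi_n}^2<\infty$ into $\norm{\crea{a}_j\anni{a}_i\Psi}_2<\infty$. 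In either reading the essential, and only nontrivial, point is the finiteness $N_b<\infty$, which collapses the $N_b$ diagonal conditions to the single condition defining $\domain(\hat N)$; without it (the genuinely infinite-dimensional one-particle space of Sec.~\ref{sec:infiniteDiscus}) the statement would fail. So I do not expect a real obstacle here — the proposition is essentially a bookkeeping consequence of the finite basis together with~\eqref{eq:gammaOffDiagBound}.
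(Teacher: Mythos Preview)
Your argument is correct and follows essentially the same route as the paper: both inclusions are obtained from the finiteness of $N_b$ (so that $\brakket{\Psi}{\hat N}{\Psi}$ is a finite sum of the $\gamma_{ii}$), positivity of the diagonal entries, and the Cauchy--Schwarz bound~\eqref{eq:gammaOffDiagBound} for the off-diagonals. Your additional paragraph on the operator-domain reading is a nice clarification but not needed for the paper's version of the statement.
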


\begin{proof}
First we show $\domain(\hat{\gamma}) \subseteq \domain(\hat{N})$. For any $\ket{\Psi} \in \domain(\hat{\gamma})$, we have $\infty > \trace\{\brakket{\Psi}{\hat{\gamma}}{\Psi}\} = \brakket{\Psi}{\hat{N}}{\Psi}$, because $\gamma_{ii} < \infty$ and the $\trace\{\cdot\}$ only sums over a finite number of elements.

Now we show that $\domain(\hat{\gamma}) \supseteq \domain(\hat{N})$. Since we have for any state $\ket{\Psi} \in \domain(\hat{N})$ that $\infty > \brakket{\Psi}{\hat{N}}{\Psi} = \trace\{\brakket{\Psi}{\hat{\gamma}}{\Psi}\} \geq \brakket{\Psi}{\hat{\gamma}_{ii}}{\Psi}$, because of the positivity of the 1RDM operator. The inequality~\eqref{eq:gammaOffDiagBound} immediately gives $\infty > \gamma_{ii}\gamma_{jj} \geq \abs{\gamma_{ij}}^2$ for any $i,j \in \Nbas$.
\end{proof}

For a given density-matrix operator $\hat{\rho}$ the 1RDM can then be found by $\gamma_{ij}[\hat{\rho}] = \Trace\{\hat{\rho}\,\hat{\gamma}_{ij}\}$. Again, for the bosonic case this is not defined for every possible $\hat{\rho} \in \NdensMat_{+}$.
However, with theorem~\ref{thm:finiteNk}, which implies that $\Trace\{\hat{N} \hat{\rho}_v\} < \infty$, and with~\eqref{eq:gammaOffDiagBound} we have $\abs{\Trace\{\hat{\gamma}_{ij} \hat{\rho}_v\}} < \infty$. So the relevant space to consider for the 1RDMs are hermitian $N_b \times N_b$ matrices with the appropriate constraints for bosons ($+$) and fermions ($-$) 
\begin{subequations}\label{def:NoneMat}
\begin{align}
\NoneMat_+ &\isDefinedAs \set*{\gamma \in \HermitanMat(N_b) }{ \gamma \geq 0 } \, , \\
\NoneMat_- &\isDefinedAs \set*{\gamma \in \HermitanMat(N_b) }{ 
\gamma \geq 0, \gamma^2 \leq \gamma } \, .
\end{align}
\end{subequations}
Note that the fermionic 1RDMs are a subset of the bosonic 1RDMs, $\NoneMat_- = \set*{\gamma \in \NoneMat_+ }{ \gamma^2 \leq \gamma }$.

Per-Olov Löwdin gave the eigenvalues of the 1RDM a special name: natural occupation numbers \citep{Lowdin1955a}. The eigenstates he named the natural (spin-)\linebreak[0]orbitals. He conjectured that the natural orbitals would be the orbitals which would yield the fastest convergence of a configuration(s) interaction (CI) expansion of the wave function. Unfortunately, this is only a peculiar property of the the two-electron system and does not hold for general $N$-electron systems \citep{BytautasIvanicRuedenberg2003, Giesbertz2014}.

The most important use of the natural occupation numbers for our purpose is a famous theorem by Coleman establishing ensemble integer $N$-representability of any fermionic 1RDM \citep{Coleman1963}, which is readily extended to the bosonic case.

\begin{thm}[Coleman]
\label{thm:Coleman}
Any $\gamma \in \NoneMat_{\pm}$ with $\trace\{\gamma\} = N \in \Nats$ is ensemble integer $N$-representable, i.e.\ there always exists a density-matrix operator $\hat{\rho} \in \NdensMat_{\pm}$ containing only $N$-particle states (exactly $N$ creation operators acting on the empty ket) such that $\gamma = \Trace\{\hat{\rho}\,\hat{\gamma}\}$.
\end{thm}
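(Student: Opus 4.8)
The plan is to first diagonalize $\gamma$ by passing to its natural orbitals, which reduces everything to the diagonal case, then to settle the bosonic case by an essentially trivial convex combination and to spend the real effort on the fermionic case, where the Pauli bound $\gamma^2 \leq \gamma$ becomes exactly the combinatorial constraint needed. Concretely: since $\gamma \in \NoneMat_{\pm}$ is hermitian and positive semidefinite there is a unitary $U$ on $\oneH$ with $\gamma = U\,\diag(n_i)\,U^{\dagger}$, where the natural occupation numbers satisfy $n_i \geq 0$ (and additionally $n_i \leq 1$ in the fermionic case, since $\gamma^2 \leq \gamma$) and $\sum_{i=1}^{N_b} n_i = \trace\{\gamma\} = N$. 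A number‑conserving one‑particle unitary $U$ is implemented on $\Fock_{\pm}$ by the bounded unitary $\hat{\mathcal{U}}$ acting as the appropriate restriction of $U^{\otimes n}$ on each sector $\oneH^n_{\pm}$; it leaves the particle number invariant and transforms the 1RDM operator covariantly, $\hat{\mathcal{U}}^{\dagger}\hat{\gamma}_{ij}\hat{\mathcal{U}} = \sum_{kl}U_{ik}\hat{\gamma}_{kl}U^*_{jl}$. Hence if $\hat{\rho}_0 \in \NdensMat_{\pm}$ uses only $N$‑particle states and has 1RDM $\diag(n_i)$, then $\hat{\rho} = \hat{\mathcal{U}}\hat{\rho}_0\hat{\mathcal{U}}^{\dagger} \in \NdensMat_{\pm}$ uses only $N$‑particle states and has $\Trace\{\hat{\rho}\,\hat{\gamma}_{ij}\} = (U\diag(n_i)U^{\dagger})_{ij} = \gamma_{ij}$. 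So it suffices to construct $\hat{\rho}_0$ for a diagonal $\gamma = \diag(n_i)$.

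\emph{Bosonic case.} For $N \geq 1$ let $\ket{\Phi_i} \isDefinedAs (N!)^{-1/2}(\crea{a}_i)^N\ket{0}$ be the normalised state with all $N$ bosons in orbital $i$; a direct computation (using the commutation relations, so $\hat{\gamma}_{ii}\ket{\Phi_i} = N\ket{\Phi_i}$ and $\anni{a}_l\ket{\Phi_i}=0$ for $l\neq i$) gives $\brakket{\Phi_i}{\hat{\gamma}_{kl}}{\Phi_i} = N\,\delta_{ik}\delta_{il}$. Then $\hat{\rho}_0 \isDefinedAs \sum_{i=1}^{N_b}(n_i/N)\,\ket{\Phi_i}\bra{\Phi_i}$ is a density‑matrix operator built only from $N$‑particle states with 1RDM $\diag(n_i)$; this just uses that $(n_i/N)$ are nonnegative weights summing to one, i.e. that $\diag(n_i)$ lies in the simplex with vertices $N e_i$. (For $N=0$ take $\hat{\rho}_0 = \ket{0}\bra{0}$.)

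\emph{Fermionic case.} The key lemma is purely combinatorial: any $(n_1,\dots,n_{N_b})$ with $0 \leq n_i \leq 1$ and $\sum_i n_i = N \in \Nats$ lies in the convex hull of the $0/1$‑vectors with exactly $N$ ones (the vertices of the hypersimplex). I would prove this by induction on the number of non‑integer components via mass transfer: if some $n_j \in (0,1)$ then, the sum being an integer, some second index $k$ has $n_k \in (0,1)$, and replacing $(n_j,n_k)$ by $(n_j\pm t, n_k\mp t)$ with $t>0$ chosen maximal so that all entries stay in $[0,1]$ exhibits $(n_i)$ as the midpoint of two feasible vectors, each with the same sum $N$ and strictly fewer non‑integer entries; the recursion terminates at $0/1$‑vectors, each of which has exactly $N$ ones. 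Each such $0/1$‑vector $\chi$ is the 1RDM of the Slater determinant $\ket{\Phi_\chi}$ occupying precisely the orbitals $\{i : \chi_i = 1\}$, so writing $\diag(n_i) = \sum_\alpha w_\alpha\,\diag(\chi^{(\alpha)})$ yields $\hat{\rho}_0 \isDefinedAs \sum_\alpha w_\alpha\,\ket{\Phi_{\chi^{(\alpha)}}}\bra{\Phi_{\chi^{(\alpha)}}} \in \NdensMat_{-}$, supported on $N$‑particle states, with 1RDM $\diag(n_i)$.

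\emph{Main obstacle.} Everything except the fermionic lemma is bookkeeping; the one genuinely non‑routine point is that lemma, equivalently the statement that the Pauli constraint $\hat{\gamma}^2\leq\hat{\gamma}$ (i.e.\ $n_i\leq1$) is exactly what allows a 1RDM with integer trace to be resolved over determinantal configurations. The mass‑transfer induction above is the most self‑contained way to see it; alternatively one invokes that the extreme points of the polytope $\{x\in[0,1]^{N_b} : \sum_i x_i = N\}$ are its $0/1$‑points, after which the convex‑combination bound is immediate.
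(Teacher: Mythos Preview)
Your approach is essentially identical to the paper's: diagonalize to the NO basis, identify the relevant polytope and its extreme points (scaled unit vectors $N\mat{e}_i$ for bosons, $0/1$-vectors with exactly $N$ ones for fermions), and represent each extreme point by the obvious pure $N$-particle state; the paper simply asserts the fermionic extreme-point characterisation, whereas you supply a constructive argument. The one slip is in your mass-transfer induction: with a \emph{single} $t$ chosen maximal so that both $(n_j\pm t,n_k\mp t)$ remain in $[0,1]$, only one of the two resulting vectors is guaranteed to acquire a new integer entry, so ``midpoint'' and ``each with strictly fewer non-integer entries'' are not both true as written---use asymmetric step sizes $t_{\pm}=\min(1-n_j,n_k)$ and $\min(n_j,1-n_k)$ and the corresponding convex combination, or just invoke (as the paper tacitly does) that the extreme points of $\{x\in[0,1]^{N_b}:\sum_i x_i=N\}$ are precisely its $0/1$-points.
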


\begin{figure}[t]\centering
\begin{tikzpicture}[scale=1.3,x={(1.1,0.3)},z={(0.5,-0.7)}]
  \draw[->] (0,0,0) -- (4.5,0,0);
  \node[anchor=west] at (4.5,0,0) {$n_2$};
  \draw[->] (0,0,0) -- (0,4.5,0);
  \node[anchor=west] at (0,4.5,0) {$n_3$};
  \draw[->] (0,0,0) -- (0,0,4.5);
  \node[anchor=west] at (0,0,4.5) {$n_1$};
  \filldraw[black] (0,0,0) circle (1.5pt);
  \node[anchor=east] at (0,0,0) {$\Gamma^0_{\pm}$};
  \filldraw[thick, blue,fill opacity=0.2] (1,0,0) -- (0,1,0) -- (0,0,1) -- cycle;
  \filldraw[blue] (1,0,0) circle (0.65pt);
  \filldraw[blue] (0,1,0) circle (0.65pt);
  \filldraw[blue] (0,0,1) circle (0.65pt);
  \node[anchor=east, blue] at (0,0,1) {$\Gamma^1_{\pm}$};
  \draw[thin] (1,0,0) -- (1,1,0) -- (0,1,0) -- (0,1,1) -- (0,0,1) -- (1,0,1) -- cycle;
  \filldraw[thick, red,fill opacity=0.2] (2,0,0) -- (0,2,0) -- (0,0,2) -- cycle;
  \filldraw[red] (2,0,0) circle (0.65pt);
  \filldraw[red] (0,2,0) circle (0.65pt);
  \filldraw[red] (0,0,2) circle (0.65pt);
  \node[anchor=east, red] at (0,0,2) {$\Gamma^2_+$};
  \filldraw[red, fill opacity=0.8] (1,1,0) -- (0,1,1) -- (1,0,1) -- cycle;
  \filldraw[red] (1,1,0) circle (0.65pt);
  \filldraw[red] (0,1,1) circle (0.65pt);
  \filldraw[red] (1,0,1) circle (0.65pt);
  \draw[thin] (1,1,0) -- (1,1,1);
  \draw[thin] (1,0,1) -- (1,1,1);
  \draw[thin] (0,1,1) -- (1,1,1);
  \filldraw[thick, green,fill opacity=0.2] (3,0,0) -- (0,3,0) -- (0,0,3) -- cycle;
  \filldraw[green] (3,0,0) circle (0.65pt);
  \filldraw[green] (0,3,0) circle (0.65pt);
  \filldraw[green] (0,0,3) circle (0.65pt);
  \node[anchor=east, green] at (0,0,3) {$\Gamma^3_+$};
  \filldraw[green] (1,1,1) circle (1.5pt);
  \filldraw[thick, brown,fill opacity=0.2] (4,0,0) -- (0,4,0) -- (0,0,4) -- cycle;
  \filldraw[brown] (4,0,0) circle (0.65pt);
  \filldraw[brown] (0,4,0) circle (0.65pt);
  \filldraw[brown] (0,0,4) circle (0.65pt);
  \node[anchor=east, brown] at (0,0,4) {$\Gamma^4_+$};
\end{tikzpicture}
\caption{Bosonic and fermionic polytopes with $N_b = 3$. The fermionic polytope, $\Gamma^N_-$ is obtained from the bosonic polytope, $\Gamma^N_+$, by constraining it to the unit cube.}
\label{fig:polytopes}
\end{figure}
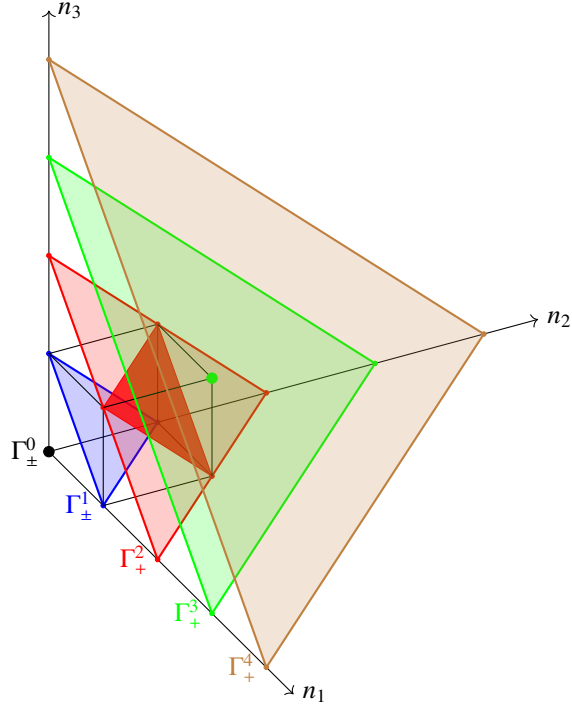

\begin{proof}
Coleman originally considered the fermionic case, but the bosonic case is somewhat simpler, so let us consider that one first. We can always assume that we work in the NO basis. That is, we perform a basis transformation in the single-particle space $\oneH$ such that we diagonalise the $N_b \times N_b$ matrix $\gamma$. In this basis the 1RDM can be expressed as an $N_b$ dimensional vector containing occupation numbers $\mat{n} = \bigl(n_1,n_2,\dotsc,n_{N_b}\bigr)$. Since the sum of the occupation numbers is restricted to be $N$, all the $N$-particle 1RDMs with the same set of NOs ($=\Gamma^N_+$), constitute a convex polytope. This means that each 1RDM can be expressed as a linear combination of its extreme elements. Since these extreme elements are readily identified as the 1RDMs which have one occupation number equal to $N$ and all the other to zero, the extreme elements are scaled unit vectors, $N\mat{e}_i$. So the set of all $N$-boson 1RDMs with a given set of NOs can be expressed as
\begin{equation*}
\Gamma^N_+
= \set*{\sum_{i=1}^{N_b}\lambda_i\,N\mat{e}_i}
{\lambda_i \geq 0, \sum_{i=1}^{N_b}\lambda_i = 1} \, ,
\end{equation*}
which is just a scaled simplex.
Next note that each extreme element is generated by a pure state in which one orbital is occupied $N$-times, $\ket{0\dotsc n_i \dotsc 0}\bra{0\dotsc n_i \dotsc 0} \to N\mat{e}_i$, so the extreme elements are even pure state $N$-representable. Because the map $\hat{\rho} \to \gamma$ is linear, this implies that for each $N$-bosonic 1RDM we can write a density-matrix operator which generates this 1RDM as a linear combination of the pure states generating the extreme points
\begin{equation*}
\hat{\rho}(\mat{n}) = \sum_{i=1}^{N_b}\lambda_i\ket{0\dotsc n_i \dotsc 0}\bra{0\dotsc n_i \dotsc 0} \, .
\end{equation*}
The same strategy works in the fermionic case, except that the polytope has a more complicated shape due to the additional condition $n_i \leq 1$. The extreme points of the fermionic polytope are all possible permutations of $N$ occupation numbers set to one and all others set to zero
\begin{equation*}
\bar{\gamma}_{I} \isDefinedAs
\bar{\gamma}_{i_1\dotsc i_N} \isDefinedAs \mat{e}_{i_1} + \dotsb + \mat{e}_{i_N} \, ,
\end{equation*}
for $1 \leq i_1 < \dotsb < i_N \leq N_b$. The index $I$ is a renumeration of $i_1\dotsc i_N$ and has $K = \binom{N_b}{N}$ elements.
The fermionic polytope can now explicitly be given in terms of these extreme $N$-fermion 1RDMs as
\begin{equation*}
\Gamma^N_-
= \set*{\sum_{I=1}^K\lambda_I\,\bar{\gamma}_I}{\lambda_I \geq 0, \sum_{I=1}^K\lambda_I = 1} \, .
\end{equation*}
The extreme elements can now be identified with all possible $N$-particle determinants,
$\ket{I} \isDefinedAs \ket{i_1\dotsc i_N} \to \bar{\gamma}_{i_1\dotsc i_N}$, so they are also pure state $N$-representable. Using again that the mapping $\hat{\rho} \to \gamma$ is linear any $N$-fermion 1RDM can be generated from a linear combination of the determinants generating the extreme points
\begin{equation*}
\hat{\rho}(\mat{n}) = \sum_{I=1}^K\lambda_I\ket{I}\bra{I} \, . \qedhere
\end{equation*}
\end{proof}

The polytopes used in the proof are illustrated in Fig.~\ref{fig:polytopes} for $N_b = 3$. Note that the fermionic polytopes can be obtained from the bosonic ones by constraining them to the unit (hyper-)\linebreak[0]cube. Since multiple particles are needed to bring out the exchange effects, $\Gamma^0_+ = \Gamma^0_-$ and $\Gamma^1_+ = \Gamma^1_-$. In the $N = 2$ case, the  $\Gamma^2_-$ is the small triangle within the $\Gamma^2_{+}$ polytope. Because there is only one fermionic state with $N = 3$, the $\Gamma^3_-$ is just a single point in the $\Gamma^3_+$ polytope.

Since every 1RDM with a fractional number of $N$ particles can be created as a linear combination between an $\lfloor N \rfloor$- and $\lceil N \rceil$-particle 1RDM, we have immediately the following corollary.

\begin{cor}
\label{cor:ColemanNonInteger}
Any $\gamma \in \NoneMat_{\pm}$ is ensemble $N$-representable, i.e.\ there always exists a density-matrix operator $\hat{\rho} \in \NdensMat_{\pm}$ such that $\gamma = \Trace\{\hat{\rho}\,\hat{\gamma}\}$.
\end{cor}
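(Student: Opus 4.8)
The plan is to realise the one-line remark preceding the statement: reduce the non-integer case to two applications of Theorem~\ref{thm:Coleman} via a convex decomposition. Write $N \isDefinedAs \trace\{\gamma\}$. If $N \in \Nats$ there is nothing to prove, so assume $N = \lfloor N\rfloor + f$ with $f \in (0,1)$ (so in particular $N > 0$). The goal is to produce $\gamma_-, \gamma_+ \in \NoneMat_{\pm}$ with $\trace\{\gamma_-\} = \lfloor N\rfloor$, $\trace\{\gamma_+\} = \lceil N\rceil$ and
\[
\gamma = (1-f)\,\gamma_- + f\,\gamma_+ \, .
\]
Granting this, Theorem~\ref{thm:Coleman} yields $\hat\rho_-, \hat\rho_+ \in \NdensMat_{\pm}$ with $(\gamma_\mp)_{ij} = \Trace\{\hat\rho_\mp\,\hat\gamma_{ij}\}$, and then $\hat\rho \isDefinedAs (1-f)\hat\rho_- + f\hat\rho_+$ is again self-adjoint, positive semidefinite and of unit trace, hence lies in $\NdensMat_{\pm}$; by linearity of the map $\hat\rho \mapsto \gamma$ it represents $\gamma$.

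So the entire content is the convex decomposition, and here the two statistics behave differently. For bosons, $\NoneMat_+$ only demands positive semidefiniteness, so one may simply take $\gamma_- \isDefinedAs (\lfloor N\rfloor/N)\,\gamma$ and $\gamma_+ \isDefinedAs (\lceil N\rceil/N)\,\gamma$: positive multiples of a positive semidefinite matrix are positive semidefinite, the traces are correct, and $(1-f)(\lfloor N\rfloor/N) + f(\lceil N\rceil/N) = 1$. (The sub-case $0 < N < 1$ is included, $\gamma_- = 0$ being represented by the vacuum.)

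For fermions the rescaling fails, since $\lceil N\rceil/N > 1$ can push a natural occupation number above $1$ and violate $\gamma_+^2 \le \gamma_+$. Here I would diagonalise $\gamma$ in its natural-orbital basis, reducing the problem to an occupation vector $\mat n$ in the polytope $P_N \isDefinedAs \{\mat n : 0 \le n_i \le 1,\ \sum_i n_i = N\}$, and then rerun the Krein--Milman argument already used in the proof of Theorem~\ref{thm:Coleman}: $\mat n$ is a finite convex combination of extreme points of $P_N$, and (for $N \notin \Nats$) an extreme point of $P_N$ has at most one fractional coordinate, which must equal $f$, while $\lfloor N\rfloor$ coordinates equal $1$. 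Each such extreme point $\mat v$ splits as $\mat v = (1-f)\,\mat v^\flat + f\,\mat v^\sharp$, where $\mat v^\flat$ (resp. $\mat v^\sharp$) sets the fractional entry to $0$ (resp. $1$); these are $0/1$-vectors, i.e. vertices of $\Gamma^{\lfloor N\rfloor}_-$ and $\Gamma^{\lceil N\rceil}_-$, each the 1RDM of a single Slater determinant. Collecting the two families of terms with the original convex weights produces $\gamma_-$ and $\gamma_+$ (equivalently, one may skip the intermediate matrices and assemble $\hat\rho$ directly as a convex combination of the two-determinant ensembles $(1-f)\ket{I^\flat}\bra{I^\flat} + f\ket{I^\sharp}\bra{I^\sharp}$). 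Alternatively the decomposition can be written explicitly: pick $\delta_i^- \ge 0$ with $\delta_i^- \le \min\!\bigl(n_i,\, \tfrac{f}{1-f}(1-n_i)\bigr)$ and $\sum_i \delta_i^- = f$ — feasible because these upper bounds sum to at least $f$, a short estimate using $N = \lfloor N\rfloor + f$ and $n_i \le 1$ — and set $\gamma_- \isDefinedAs \diag(n_i - \delta_i^-)$, $\gamma_+ \isDefinedAs \diag\!\bigl(n_i + \tfrac{1-f}{f}\delta_i^-\bigr)$.

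The main obstacle is precisely this fermionic decomposition: keeping both summands inside $\NoneMat_-$, and in particular below the Pauli bound, rules out the trivial rescaling and forces either the extreme-point analysis of $P_N$ or the explicit feasibility estimate. Everything else — convexity of $\NdensMat_{\pm}$, linearity of the 1RDM map, and the reduction to integer particle number — is immediate.
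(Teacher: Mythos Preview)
Your proposal is correct and follows exactly the route the paper indicates: the paper's entire argument is the single sentence preceding the corollary, asserting that any fractional-$N$ 1RDM is a convex combination of an $\lfloor N\rfloor$- and an $\lceil N\rceil$-particle 1RDM, after which Theorem~\ref{thm:Coleman} applies. You not only reproduce this but actually supply the missing verification --- in particular the fermionic decomposition respecting the Pauli bound, via either the extreme-point structure of $P_N$ or the explicit $\delta_i^-$ construction --- which the paper leaves to the reader.
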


This corollary is especially useful for the universal function~\eqref{def:universalFunction}, since it implies that we can always find at least one $\hat{\rho} \to \gamma$ for all $\gamma \in \NoneMat$. Since the extreme points in theorem~\ref{thm:Coleman} and corollary~\ref{cor:ColemanNonInteger} span a finite-dimensional space, we have $E[\gamma] < \infty$, $S[\gamma] < \infty$ and $F[\gamma] < \infty$ for all $\gamma \in \NoneMat$. It is therefore natural to consider $F[\gamma]$ for $\gamma \in \NoneMat$. Later we will show that the infimum can be replaced by a minimum if the maximum order of the interactions in the Hamiltonian is finite, $\nmax< \infty$, and strictly positive definite. The existence of a minimum implies that $F[\gamma] > -\infty$ for all $\gamma \in \NoneMat$, and $\NoneMat$ will be the domain of $F[\gamma]$.

At this point it is important to note that the physically relevant 1RDMs are the ones that are associated with a Gibbs state $\hat{\rho}_v$ of a Hamiltonian $\hat{H}_v$. Thus while we have now defined the most general space of (ensemble $N$-representable) 1RDMs $\NoneMat$, it is the set of all $v$-representable 1RDMs
\begin{equation}\label{def:VoneMat}
\VoneMat \isDefinedAs \set{\gamma \in \NoneMat}{\exists\;v \in \FiniteNpot \mapsto \gamma} \, 
\end{equation}
that is central to our considerations. Though there might be many $\hat{\rho}$ that produce a given $\gamma \in \VoneMat$ one of our goals is to show that there is one and only one $\hat{\rho}_v$. A first obvious characterisation is that $\VoneMat \subseteq \NoneMat$. For a more in-depth characterisation of the set of $v$-representable 1RDMs we will employ results from convex analysis in finite dimensions, which we recapitulate now in the following section.

\subsection{Convex and concave functions}
\label{sec:convex}
As mentioned in the introduction, most of the functions we will be dealing with are convex or concave. In the \emph{finite} dimensional case they have some convenient general properties which we can readily exploit, because we work with a finite one-particle basis. These properties are very intuitive and will be illustrated with the help of some figures. Additional mathematical details and proofs of these properties can be found in Appendix~\ref{ap:convex}. For completeness, let us at least give a mathematical definition of a convex (concave function).
\begin{defn}[convex\slash{}concave function]
\label{def:convexFunc}
Consider a set $X$. A function $f \colon X \to \Reals \cup \{+\infty\}$ is called convex if for all $x_1,x_2 \in X$ and $\lambda \in [0,1]$
\begin{equation*}
f(\lambda x_1 + (1-\lambda)x_2) \leq \lambda f(x_1) + (1 - \lambda) f(x_2) \, .
\end{equation*}
The function is called \emph{strictly} convex if for all $x_1 \neq x_2 \in X$ and \( 0 < \lambda < 1 \) there is only an equality if $f(x_1) = -\infty$ or $f(x_2) = -\infty$.
A function $f \colon X \to \Reals \cup \{-\infty\}$ is (strictly) concave if $-f$ is (strictly) convex.
\end{defn}

The definition of a convex function simply means that if we draw a straight line between two points on the graph of a function, the graph of the function needs to be on or below this line. Strict convexity means that the graph of the function is only allowed to lie below the line connecting the points. An example illustrating the concept of convexity is shown in Fig.~\ref{fig:convex}. As (strict) concavity of $f$ simply means that $-f$ is (strictly) convex, we will limit the remainder of the discussion to convex functions.

Note that usually the definition of a convex function $f$ is only given on its domain, $\domain(f) \isDefinedAs \set[\big]{x \in X}{\abs{f(x)} < \infty}$. By allowing a convex function to take on the values $+\infty$, the definition also works over the full set $X$.

Convex functions have a number of convenient properties. The most famous property is that any local minimiser of a convex function is immediately a global minimiser and in the case of strict convex functions, the minimiser is even unique. This property holds even if $X$ is infinite-dimensional, e.g. the bosonic Fock space $\Fock_+$. We will formalise this property into the following theorem, which makes it easier to refer back to it later. The proof can be found in~\ref{ap:unimodal}.

\begin{thm}[Unimodality]\label{thm:unimodal}
Let $f$ be a convex function on a convex subspace $M$, and let $x^* \in M \cap \domain(f)$ be a local minimiser of $f$ on $M$
\begin{equation*}
\exists r > 0  \suchthat f(y) \geq f(x^*) \quad \forall y \in M, \norm{y - x} < r \, .
\end{equation*}
Then $x^*$ is a global minimiser of $f$ on $M$.

If $f$ is strictly convex, then the set of minimisers on $M$ is either empty or contains only one element (singleton).
\end{thm}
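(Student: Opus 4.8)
The plan is to reduce everything to one-dimensional convexity by restricting $f$ to line segments lying in $M$; since $M$ is convex such segments stay inside $M$, and nothing about the dimension or topology of the ambient set $X$ is ever used (which is why the statement also applies to the infinite-dimensional bosonic Fock space).

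\emph{Local implies global.} I would take an arbitrary $y \in M$ and show $f(y) \geq f(x^*)$. If $f(y) = +\infty$ this is trivial, so assume $y \in \domain(f)$. For $\lambda \in (0,1]$ put $z_\lambda \isDefinedAs (1-\lambda)x^* + \lambda y \in M$. For $\lambda$ small enough one has $\norm{z_\lambda - x^*} = \lambda\,\norm{y - x^*} < r$, so the local minimality hypothesis gives $f(z_\lambda) \geq f(x^*)$, while convexity of $f$ gives $f(z_\lambda) \leq (1-\lambda)f(x^*) + \lambda f(y)$. Because $x^* \in \domain(f)$, the number $f(x^*)$ is finite, so I may subtract $(1-\lambda)f(x^*)$ from both sides and divide by $\lambda > 0$ to obtain $f(x^*) \leq f(y)$. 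As $y$ was arbitrary, $x^*$ is a global minimiser on $M$.

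\emph{Uniqueness under strict convexity.} Assume the set of minimisers of $f$ on $M$ is non-empty and let $x_1, x_2$ be minimisers; they lie in $\domain(f)$, so their common value $m \isDefinedAs f(x_1) = f(x_2)$ is finite. If $x_1 \neq x_2$, then $z \isDefinedAs \half x_1 + \half x_2 \in M$ and strict convexity (Definition~\ref{def:convexFunc} with $\lambda = \half$, both endpoint values being finite) gives $f(z) < \half f(x_1) + \half f(x_2) = m$, contradicting that $m$ is the minimum. Hence $x_1 = x_2$, so the minimiser set is a singleton.

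I do not expect a genuine obstacle here; the argument is elementary. The only points requiring care are the bookkeeping with the value $+\infty$ (handled by treating $f(y) = +\infty$ separately and by invoking $x^* \in \domain(f)$ to guarantee $f(x^*)$ is finite before dividing) and checking that the strict-convexity clause of Definition~\ref{def:convexFunc} really does yield the strict midpoint inequality once both endpoint values are finite, which is precisely the situation for two candidate minimisers.
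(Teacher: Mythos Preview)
Your proof is correct and follows essentially the same approach as the paper: restrict to the line segment from $x^*$ to an arbitrary $y$, combine the convexity inequality with local minimality for small parameter, and for uniqueness use the midpoint contradiction. The bookkeeping with $+\infty$ and the finiteness of $f(x^*)$ is handled exactly as the paper does.
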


\begin{figure}[t]
\centering
\begin{tikzpicture}[scale=0.75]
  \filldraw (-3,5) circle (0.075);
  \draw[dashed] (-3,5) -- (-3,-1.5);
  \draw[thick] plot[domain=-1.5:0] (2*\x,\x*\x) -- plot[domain=0:2] (2*\x+1,\x*\x);
  \filldraw[thick,fill=white] (-3,2.25) circle (0.075);
  \draw[dashed] (5,-0.5) -- (5,6);
  \filldraw[thick,fill=black] (5,4) circle (0.075);
  \node at (3.6,1) {$f(x)$};
  \draw[dashed] (7,-1.5) -- (7,6);
  \draw[thick, red] (-2,1) -- (4,2.25);
  \draw (-3,-0.5) to node[anchor=north] {$\domain(f)$} (5,-0.5);
  \draw (-3,-0.35) -- (-3,-0.65);
  \draw (5,-0.35) -- (5,-0.65);
  \draw (-3,-1.5) to node[anchor=north] {$X$} (7,-1.5);
  \draw (-3,-1.35) -- (-3,-1.65);
  \draw (7,-1.35) -- (7,-1.65);
  \node at (6,6) {$f = +\infty$};
\end{tikzpicture}

\caption{Example of a convex function $f$. Convex means that any straight line segment between two points on the graph of the function lies above or on the graph. An instance is shown by the red straight line segment which lies above all function values. Note that this property only needs to hold on the domain of the function (all $x$ for which $f(x) < \infty$).}
\label{fig:convex}
\end{figure}
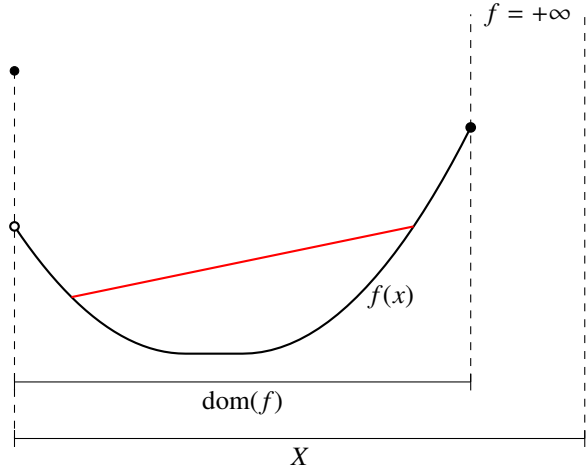

Another convenient property is that convex functions on \emph{finite} dimensional spaces are continuous on the interior of their domain. Interior simply means that we do not include the boundary. From the example in Fig.~\ref{fig:convex} it is clear that a convex function can make jumps at the border of its domain, so clearly convex functions are not necessarily continuous at the edge.

One can even show a somewhat stronger type of continuity: local Lipschitz continuity. Local Lipschitz loosely means that on any closed and finite interval, the function cannot be infinitely steep. Local Lipschitz continuity of a convex function is illustrated in Fig.~\ref{fig:continuity}. Again, we will rephrase this property as a theorem in a more mathematical language, which is explained in more detail in  Appendix~\ref{ap:convexLocLip}.

\begin{thm}\label{thm:convexLocLip}
Let $X$ be a finite-dimensional vector space and $f \colon X \to \Reals \cup \{+\infty\}$ a convex function. The function $f$ is locally Lipschitz continuous on the interior of its domain, $\interiorOf\domain(f)$.
\end{thm}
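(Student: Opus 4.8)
The plan is the classical three-step argument: (i) local boundedness from above near a point of $\interiorOf\domain(f)$, (ii) upgrade to a two-sided bound by a reflection, and (iii) extract a Lipschitz estimate from local boundedness by an elementary convex-combination argument.

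\emph{Step (i): local upper bound.} Fix $x_0 \in \interiorOf\domain(f)$ and pick $\epsilon > 0$ with $\closedBall_{\epsilon}(x_0) \subseteq \domain(f)$. Since $X$ is finite-dimensional we can choose finitely many points $v_0,\dotsc,v_m \in \closedBall_{\epsilon}(x_0) \subseteq \domain(f)$ whose convex hull $\Delta$ is full-dimensional and contains a closed ball $\closedBall_{2r}(x_0)$ for some $r \in (0,\epsilon)$ — for instance the $2^{\dim X}$ vertices of a small coordinate box, or the vertices of a small simplex centred at $x_0$. This is exactly the point where the interior hypothesis is used. By iterating the defining inequality of Definition~\ref{def:convexFunc} (finite Jensen inequality), every $x = \sum_i \lambda_i v_i \in \Delta$ satisfies $f(x) \leq \sum_i \lambda_i f(v_i) \leq \max_i f(v_i) =: M < \infty$; in particular $f \leq M$ on $\closedBall_{2r}(x_0)$.

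\emph{Step (ii): two-sided bound.} For $x \in \closedBall_{r}(x_0)$ the reflected point $x' := 2x_0 - x$ lies in $\closedBall_{r}(x_0) \subseteq \closedBall_{2r}(x_0)$, and $x_0 = \tfrac12 x + \tfrac12 x'$, so convexity gives $f(x_0) \leq \tfrac12 f(x) + \tfrac12 f(x')$, hence $f(x) \geq 2f(x_0) - f(x') \geq 2f(x_0) - M$. Since $f(x_0)$ is finite, together with Step (i) we obtain a constant $M' < \infty$ with $\abs{f} \leq M'$ on $\closedBall_{r}(x_0)$.

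\emph{Step (iii): Lipschitz estimate.} Let $x,y \in \closedBall_{r/2}(x_0)$ with $x \neq y$, put $t := \norm{y-x}$ and $u := (y-x)/t$, and set $z := y + \tfrac{r}{2}u$. Then $\norm{z - x_0} \leq \norm{y-x_0} + \tfrac r2 \leq r$, so $z \in \closedBall_{r}(x_0)$, and from $z - x = (t + \tfrac r2)u$ and $y - x = tu$ we get the convex combination $y = \tfrac{r/2}{t + r/2}\,x + \tfrac{t}{t+r/2}\,z$. Convexity yields $f(y) - f(x) \leq \tfrac{t}{t+r/2}\bigl(f(z) - f(x)\bigr) \leq \tfrac{2t}{r}\cdot 2M' = \tfrac{4M'}{r}\norm{y-x}$, and by symmetry $\abs{f(x) - f(y)} \leq \tfrac{4M'}{r}\norm{x-y}$. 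Thus $f$ is Lipschitz on $\closedBall_{r/2}(x_0)$; as $x_0 \in \interiorOf\domain(f)$ was arbitrary, $f$ is locally Lipschitz continuous on $\interiorOf\domain(f)$. (If a single constant on a given compact $K \subseteq \interiorOf\domain(f)$ is wanted, cover $K$ by finitely many such balls and chain the estimates along a polygonal path inside $\interiorOf\domain(f)$.)

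\emph{Main obstacle.} The only nontrivial step is (i): constructing the full-dimensional polytope around $x_0$ with all vertices inside $\domain(f)$ and obtaining a uniform upper bound there. Both finite-dimensionality and the interior hypothesis are essential here — without them a convex function can fail to be locally bounded (or can jump at the boundary of its domain, as in Figure~\ref{fig:convex}). Steps (ii) and (iii) are then purely formal manipulations of convex combinations.
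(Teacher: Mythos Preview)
Your proof is correct and follows essentially the same three-step route as the paper's own argument in Appendix~\ref{ap:convexLocLip}: a simplex-plus-Jensen upper bound, a reflection $x' = 2x_0 - x$ for the lower bound, and a convex-combination estimate on a half-radius ball to extract the Lipschitz constant $4M'/r$. Your Step~(iii) is written out a bit more carefully than the paper's (which contains a harmless imprecision in stating $\norm{x - x''} = r$), but the ideas are identical.
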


\begin{figure}[t]
\begin{tabular}{@{}l@{\hspace{0.04\textwidth}}r@{}}
\begin{minipage}[t]{0.48\textwidth}\centering
\begin{tikzpicture}[domain=-1.5:2, scale=0.75]
\filldraw (-3,5) circle (0.075);
\draw[dashed] (-3,5) -- (-3,2.25);
\draw[thick] plot (2*\x,\x*\x);
\filldraw[thick,fill=white] (-3,2.25) circle (0.075);
\draw (-2,2) -- (-2,-1);
\draw (3,3) -- (3,-1);
\draw[<->] (-2,-0.75) to node[anchor=north] {$R$} (3,-0.75);
\draw[thick, red] (2,0) -- (-3,1.25);
\draw[red] (2,0.25) -- (-1.75,0.25);
\draw[red] (1,0.25) -- (2,1);
\draw[red] (0,-0.5) -- (3.25,1.9375);
\draw[red] (2.5,0.625) -- (-1,-0.25);
\draw[thick, red] (0.5,-0.25) -- (3.5,2.75);
\filldraw (1,0.25) circle (0.1);
\node at (0.8,0.7) {$f(x)$};
\end{tikzpicture}
\end{minipage} &
\begin{minipage}[t]{0.48\textwidth}\centering
\begin{tikzpicture}[domain=-1:1]
\draw[dashed] (-2,4) -- (-2,2);
\draw[thick, domain=-2:0] plot (\x,{0.25*((\x-1)*(\x-1) - 1)});
\filldraw[thick, fill=white] (-2,2) circle (0.05);
\filldraw[thick] (-2,4) circle (0.05);
\draw[thick, domain=0:4] plot (\x,{0.1*((\x+1)*(\x+1) - 1)});
\draw[red] (-1.5,-0.3) -- (0,0) -- (1.5,-0.75);
\draw[red] (-1.5,0.375) -- (1.5,-0.375);
\draw[red] (-1.5,0) -- (1.5,0);
\draw[thick, red, <->] (-1.5,0.75) node[anchor=south east] {$f'_{-1}(x)$} -- (0,0) -- (1.5,0.3) node[anchor=west] {$f'_1(x)$};
\filldraw (0,0) circle (0.07);
\node at (0,0.4) {$f(x)$};
\draw[decorate, decoration=brace] (1.6,-1) -- (-1.6,-0.5);
\node[red] at (0,-1.1) {$\du f(x)$};
\end{tikzpicture}
\end{minipage} \\
\begin{minipage}[t]{0.48\textwidth}
\caption{For a convex function we can always draw two straight lines with a non-vertical slope on any region $R$ in the interior of its domain around $x$, so a convex function on a one dimensional space is locally Lipschitz continuous. Note that at the boundary of its domain, the convex function can jump, so a convex function is not necessarily continuous at the boundary of its domain.}
\label{fig:continuity}
\end{minipage} &
\begin{minipage}[t]{0.48\textwidth}
\caption{Directional derivatives and subgradients at $x$ of a 1-dimensional convex function $f$. As there is a kink at $x$, there are multiple subgradients in the subdifferential $\du f$. The directional derivatives are the subgradients with the highest slope in each direction.}
\label{fig:gradients}
\end{minipage}
\end{tabular}
\end{figure}

The fact that we have even local Lipschitz continuity basically infers directly an other convenient property of convex functions on finite-dimensional spaces and that is that the directional derivative exists in each direction.

\begin{defn}[Directional derivative]\label{def:directionalDerivative}
A function $f$ is differentiable at $x$ in the direction $h$ if the following limit exists
\begin{equation*}
f'_h(x) \isDefinedAs \lim_{t \downarrow 0}\frac{f(x + h\,t) - f(x)}{t} \, .
\end{equation*}
\end{defn}

\begin{thm}\label{thm:existenceDirectionalDerivative}
Let $X$ be a finite-dimensional vector space and $f \colon X \to \Reals \cup \{+\infty\}$ a convex function. The function $f$ is differentiable in any direction at any point in the interior of its domain.
\end{thm}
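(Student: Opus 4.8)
The plan is to reduce the existence of $f'_h(x)$ to the elementary fact that a monotone function bounded on one side has a one-sided limit. Concretely, fix $x \in \interiorOf\domain(f)$ and a direction $h \in X$; the case $h = 0$ is trivial, so assume $h \neq 0$. Because $x$ is an interior point of $\domain(f)$, there is a $\tau > 0$ with $x + th \in \domain(f)$ for all $t \in (0,\tau)$, so the difference quotient
\[
g(t) \isDefinedAs \frac{f(x + th) - f(x)}{t}
\]
is finite for $t \in (0,\tau)$, and the claim is exactly that $g(t)$ converges as $t \downarrow 0$.

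The first key step is to show $g$ is non-decreasing on $(0,\tau)$. Given $0 < s < t < \tau$, I would write $x + sh$ as the convex combination $x + sh = \tfrac{s}{t}\,(x + th) + \bigl(1 - \tfrac{s}{t}\bigr)\,x$, apply the convexity inequality of Definition~\ref{def:convexFunc}, and rearrange to obtain $f(x+sh) - f(x) \leq \tfrac{s}{t}\bigl(f(x+th) - f(x)\bigr)$, i.e. $g(s) \leq g(t)$. The second key step is to bound $g$ from below near $0$, and this is where finite-dimensionality enters: by Theorem~\ref{thm:convexLocLip}, $f$ is locally Lipschitz at $x$, so there are $r > 0$ and $L \geq 0$ with $\abs{f(y) - f(x)} \leq L\norm{y - x}$ whenever $\norm{y - x} < r$; hence $\abs{g(t)} \leq L\norm{h}$ for all $0 < t < \min(\tau, r/\norm{h})$, and in particular $g$ is bounded below on a punctured neighbourhood of $0$.

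Combining the two steps, $g$ is non-decreasing and bounded below on $(0,\tau')$ for $\tau' = \min(\tau, r/\norm{h})$, so its right limit at $0$ exists and equals $\inf_{0 < t < \tau'} g(t)$, which is finite; this limit is by definition $f'_h(x)$. Since $x \in \interiorOf\domain(f)$ and $h \in X$ were arbitrary, $f$ is differentiable in every direction at every interior point of its domain. The only step that is not completely formal is the lower bound: monotonicity alone permits $g(t) \to -\infty$ as $t \downarrow 0$, and ruling this out genuinely requires the local Lipschitz estimate of Theorem~\ref{thm:convexLocLip}; beyond that, the argument is just bookkeeping of neighbourhoods to keep every quantity finite.
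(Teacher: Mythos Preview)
Your proof is correct and follows essentially the same approach as the paper's: both arguments show that the difference quotient is monotone (non-decreasing as $t \downarrow 0$) via convexity, then invoke the local Lipschitz property of Theorem~\ref{thm:convexLocLip} to bound it from below, whence the limit exists and is finite. Your version is slightly more explicit about the neighbourhoods and the trivial case $h=0$, but the mathematical content is identical.
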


This basically follows directly from the local Lipschitz continuity of finite-dimensional convex functions. A more detailed proof is given in Appendix~\ref{ap:existenceDirectionalDerivative}.
Though a finite-dimensional convex function is differentiable in each direction, these derivatives $f'_h(x)$ are not necessarily linear in $h$, so the gradient (Gâteaux derivative) of $f$ might not exist.
A typical example is the function $f(x) = \abs{x}$, which is not differentiable at $x=0$. Still we have the following directional derivatives: $f'_{-1}(0) = -1$ and $f'_1(0) = 1$. However, it is possible to define a good surrogate for convex functions: the subdifferential. The subdifferential is the set of all possible lines or (hyper)planes one can draw through a point of the graph of a function which do not cross the function at any point (see Fig.~\ref{fig:gradients}). This idea is made mathematically more precise in the following definition.

\begin{defn}[Subgradient and subdifferential of a finite-dimensional convex function]
\label{def:convexSubGrad}
Let $X$ be a finite-dimensional vector space and $f \colon X \to \Reals \cup \{+\infty\}$ be a convex function. Then $h \in X$%
\footnote{The definition also works in an infinite-dimensional setting with the modification that the subgradients reside in the dual of $X$ (definition~\ref{def:duals}), so $h \in X^*$. The (topological) dual is needed to ensure that $\abs{\braket{h}{x}} < \infty$.} 
is called a subgradient of $f$ at $x \in \domain(f)$ if for any $y \in \domain(f)$ we have
\begin{equation*}
f(y) \geq f(x) + \braket{h}{y - x} \, .
\end{equation*}
The set $\du f(x)$ of all subgradients of $f$ at $x$ is called the subdifferential of $f$ at $x$.
\end{defn}

The subdifferential of a finite-dimensional convex function has the following properties, which are proven in Appendix~\ref{ap:subdifferential} 

\begin{thm}\label{thm:subdifferential}
Let $X$ be a finite-dimensional vector space and $f \colon X \to \Reals \cup \{+\infty\}$ be a convex function. Then

\begin{theonum}
\item\label{thm:nonEmptySubDif} the set $\du f(x)$ is nonempty,

\item\label{thm:subDifCompact} the set $\du f(x)$ is compact (closed and bounded) and convex,

\item\label{thm:subDifDirDif} for any $h \in X$, 
$f'_h(x) = \max\set{ \braket{d}{h} }{ d \in \du f(x) }$,

\item\label{thm:uniqueSubDif} $f$ is differentiable if and only if the subdifferential contains only one element. In that case this element equals the usual gradient: $\du f(x) = \{\nabla f(x)\}$.
\end{theonum}
\end{thm}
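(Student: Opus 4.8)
The plan is to prove the four items by combining the local Lipschitz continuity of finite-dimensional convex functions (Theorem~\ref{thm:convexLocLip}) with elementary separating-hyperplane arguments and the existence of one-sided directional derivatives (Theorem~\ref{thm:existenceDirectionalDerivative}). I would work throughout at an interior point $x \in \interiorOf\domain(f)$, so that $f$ is finite and locally Lipschitz on a ball around $x$.

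For item~\ref{thm:nonEmptySubDif} (non-emptiness), the natural route is via the epigraph. The set $\epigraph(f) = \set{(y,t) \in X \times \Reals}{f(y) \leq t}$ is convex, and since $x$ is interior to $\domain(f)$ the point $(x,f(x))$ lies on its boundary. Applying the supporting-hyperplane theorem in $X \times \Reals$ gives a nonzero functional $(h,s)$ with $\braket{h}{y-x} + s(t - f(x)) \leq 0$ for all $(y,t) \in \epigraph(f)$; pushing $t \to +\infty$ forces $s \leq 0$, and a short argument using that $x$ is interior (so $f$ is bounded near $x$) rules out $s = 0$. Normalising $s = -1$ yields exactly $f(y) \geq f(x) + \braket{h/1}{y-x}$, i.e.\ $h \in \du f(x)$. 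For item~\ref{thm:subDifCompact}, convexity of $\du f(x)$ is immediate from averaging two subgradient inequalities; closedness is immediate from the inequality being closed under limits of $h$; boundedness I would get from local Lipschitz continuity: if $L$ is a Lipschitz constant of $f$ on $\closedBall_{\epsilon}(x)$, then plugging $y = x + \epsilon d/\norm{d}$ into the subgradient inequality gives $\norm{d} \leq L$.

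Item~\ref{thm:subDifDirDif} is the crux and where I expect the real work. The inequality $f'_h(x) \geq \braket{d}{h}$ for every $d \in \du f(x)$ is easy: take $y = x + th$ in the subgradient inequality, divide by $t>0$, and let $t \downarrow 0$. For the reverse — that the max is attained — I would fix a direction $h_0$ and build a subgradient $d$ with $\braket{d}{h_0} = f'_{h_0}(x)$. The standard device is to note that $g(h) \isDefinedAs f'_h(x)$ is itself convex and positively homogeneous in $h$ (this uses that one-sided derivatives exist, Theorem~\ref{thm:existenceDirectionalDerivative}), that $h \mapsto g(h)$ dominates the linear functional $t \mapsto t\,f'_{h_0}(x)$ on the line $\Reals h_0$, and then to invoke a Hahn–Banach / sublinear-extension argument to extend that linear functional to all of $X$ while staying below $g$. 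The extension is the desired $d$; one checks $d \in \du f(x)$ because $f(y) - f(x) \geq f'_{y-x}(x) \geq \braket{d}{y-x}$ using convexity of $f$ (the difference quotient is monotone in $t$). The main obstacle is packaging this extension argument cleanly in finite dimensions and verifying the sublinearity/convexity of $g$; everything else is routine.

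Finally, for item~\ref{thm:uniqueSubDif}, one direction is trivial: if $f$ is (Gâteaux) differentiable at $x$ then $f'_h(x) = \braket{\nabla f(x)}{h}$ is linear in $h$, and by item~\ref{thm:subDifDirDif} the maximum of $\braket{d}{h}$ over the convex compact set $\du f(x)$ equals this linear functional for every $h$; a convex compact set whose support function is linear must be a single point, giving $\du f(x) = \{\nabla f(x)\}$. Conversely, if $\du f(x) = \{d\}$ is a singleton, then item~\ref{thm:subDifDirDif} gives $f'_h(x) = \braket{d}{h}$ for all $h$, so in particular $f'_{h}(x) = -f'_{-h}(x)$, which means the two-sided directional derivative exists and is linear in $h$; together with local Lipschitz continuity this upgrades to full differentiability with gradient $d$. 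I would state the Gâteaux-to-Fréchet step as standard in finite dimensions and relegate the details to Appendix~\ref{ap:subdifferential}.
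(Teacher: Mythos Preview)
Your proposal is correct and follows essentially the same route as the paper: epigraph plus supporting hyperplane for (i), local Lipschitz for the bound in (ii), the sublinearity of $h \mapsto f'_h(x)$ together with $f(y) \geq f(x) + f'_{y-x}(x)$ for (iii), and the support-function characterisation for (iv). The only cosmetic difference is in (iii): you invoke Hahn--Banach to extend a linear functional from the line $\Reals h_0$ under the sublinear majorant $g(h) = f'_h(x)$, whereas the paper bootstraps by applying part (i) to the convex function $g$ itself, taking a subgradient $d_h \in \du_h g(h)$ and then letting a scaling parameter $\tau \to \infty$ and $\tau \to 0$ in the subgradient inequality $\tau g(v) \geq g(h) + \braket{d_h}{\tau v - h}$ to force $d_h \in \du f(x)$ and $\braket{d_h}{h} = f'_h(x)$; both are the same separation argument in different clothing.
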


This theorem is illustrated in Fig.~\ref{fig:gradients}.
At each point at the interior we can clearly draw a tangent line to the graph of $f$ which is completely below the graph (part~\ref{thm:nonEmptySubDif}).
At the kink we can draw multiple subgradients, so the subdifferential contains more than one element. The directional derivatives are the subgradients with maximum slope in each direction (part~\ref{thm:subDifDirDif}).
All subgradients are a convex combination of the directional derivatives, so $\du f(x)$ is convex. Since the directional derivatives are finite and contained in $\du f(x)$, it is also compact (part~\ref{thm:subDifCompact}).
At any other point in the interior we clearly have only one tangent line, so one element in $\du f(x)$, which obviously needs to be equal to the derivative of $f$ (part~\ref{thm:uniqueSubDif}).

Property $iv$ of theorem~\ref{thm:subdifferential} is particular useful for us. Since if we can show that a finite-dimensional function is convex, we only need to show uniqueness of the tangent to proof that the function is differentiable on the interior of its domain. The only other task is then to characterise (the interior of) the domain.

At this point we would like to make a connection to the Gâteaux derivative often encountered in formal DFT. The Gâteaux derivative is the directional derivative (Definition~\ref{def:directionalDerivative}) with the additional requirement that it is linear and continuous in its direction $h$. In the finite-dimensional case linearity automatically implies continuity, but in the infinite-dimensional case, continuity cannot be taken for granted anymore. It is exactly the continuity property which causes most trouble. As pointed out by Lammert \citep{Lammert2006a}, this complication has been overlooked by Englisch \& Englisch in their proof for the differentiability of the universal function in DFT \citep{EnglischEnglisch1984a, EnglischEnglisch1984b} and unfortunately repeated by many others \citep{PhD-Leeuwen1994, Eschrig1996, Farid1998, Farid1999, HolasMarch2002, Leeuwen2003, LindgrenSalomonson2003, Eschrig2003, LindgrenSalomonson2004, ZaharievWang2004, Ayers2006, Eschrig2010}, see also~\cite[p.~50]{DreizlerGross1990}.
Several remedies have been proposed by introducing some regularization. The original problem is then approached by taking the limit to no regularization. Lammert proposed to coarse-grain the density by partitioning the space into cells \citep{Lammert2006b, Lammert2010}. In the limit of small cells, the original system is recovered. Kvaal et al.\ proposed to use the Moreau--Yosida regularization, which adds a smoothening term to the functional whose contribution is adjusted by a constant $\epsilon$. For any $\epsilon > 0$ the functional is now differentiable and in the limit $\epsilon \to 0^+$, the original system is recovered \citep{KvaalEkstromTeale2014}.
Apart from the issue of Gâteaux differentiability, there are several other difficulties which one needs to deal with in the infinite-dimensional case. Together with the differentiability issues, these difficulties are discussed later in Section~\ref{sec:infiniteDiscus}.

\subsection{General properties of the grand potential and implications on the universal functional}
\label{sec:grandpotential}

As we have shown in the introduction, the existence of a density-matrix operator which minimises the grand potential cannot be taken for granted. In this section we discuss important consequences if the canonical density-matrix operator~\eqref{eq:equiRho} does exist, i.e.\ if $Z[v] < \infty$. Later we will show that this is the case for any potential in the fermionic case in corollary~\ref{cor:minimumgrandpotential}. Some additional restrictions on the potential are needed in the bosonic case, as shown in theorem~\ref{thm:bosonicOmegaMinExist}.
With this assumption it is easy to establish the following.
\begin{thm}\label{thm:rhoToH}
For $v \in \FiniteNpot$, the mapping $\hat{H}_v \mapsto \hat{\rho}_v$ is invertible up to a constant in the Hamiltonian, i.e.\ $h^{(0)}$ in~\eqref{eq:ncHam}.
\end{thm}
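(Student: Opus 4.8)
The plan is to show that the Gibbs map $\hat H_v \mapsto \hat\rho_v = \e^{-\beta\hat H_v}/Z[v]$ can be undone: given the operator $\hat\rho_v$, one recovers $\hat H_v$ up to the additive constant $h^{(0)}\hat 1$. The starting point is simply to invert the defining relation~\eqref{eq:equiRho}. Since $\hat\rho_v$ is a positive, self-adjoint, trace-class operator with $Z[v]<\infty$, it has a well-defined logarithm (a self-adjoint operator, possibly unbounded, sharing the eigenbasis of $\hat H_v$). Taking $\ln$ of $\hat\rho_v = \e^{-\beta\hat H_v}/Z[v]$ gives $\ln\hat\rho_v = -\beta\hat H_v - \ln Z[v]\,\hat 1$, i.e.
\begin{equation*}
\hat H_v = -\frac{1}{\beta}\ln\hat\rho_v - \frac{1}{\beta}\ln Z[v]\,\hat 1 \, .
\end{equation*}
This is exactly the stationarity condition~\eqref{eq:interimSol} read backwards. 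Hence $\hat H_v$ is determined by $\hat\rho_v$ together with one scalar, $\ln Z[v]$, which is the only ambiguity.

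Next I would identify that scalar ambiguity with the constant $h^{(0)}$. Write $\hat H_v = \hat H_v' + h^{(0)}\hat 1$, where $\hat H_v'$ collects all the $n\geq 1/2$ terms in~\eqref{eq:ncHam} plus the non-conserving and potential parts. Then $\e^{-\beta\hat H_v} = \e^{-\beta h^{(0)}}\e^{-\beta\hat H_v'}$, so $Z[v] = \e^{-\beta h^{(0)}}\,\Trace\{\e^{-\beta\hat H_v'}\}$ and therefore $\hat\rho_v = \e^{-\beta\hat H_v'}/\Trace\{\e^{-\beta\hat H_v'}\}$ is completely independent of $h^{(0)}$. Conversely, from $\hat\rho_v$ the formula above reconstructs $\hat H_v$ only modulo a multiple of the identity; and shifting $\hat H_v$ by a constant is precisely shifting $h^{(0)}$. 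This shows the map is injective up to, and only up to, the choice of $h^{(0)}$: if two Hamiltonians $\hat H_v$ and $\hat H_{v'}$ (both with finite partition function) yield the same $\hat\rho_v$, then $-\frac1\beta\ln\hat\rho_v$ equals both $\hat H_v$ and $\hat H_{v'}$ up to constants, so $\hat H_v - \hat H_{v'}$ is a multiple of $\hat 1$, i.e.\ they differ only in $h^{(0)}$.

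In the fermionic case everything above is elementary finite-dimensional linear algebra: $\hat\rho_v$ is a positive-definite matrix on the $2^{N_b}$-dimensional Fock space (positive-definite because $\e^{-\beta\hat H_v}$ has strictly positive eigenvalues), its matrix logarithm exists and is unique, and the reconstruction is immediate. The only point requiring a word of care is the bosonic case, where $\hat H_v$ is unbounded and $\hat\rho_v$ has eigenvalues accumulating at $0$, so $\ln\hat\rho_v$ is an unbounded self-adjoint operator. Here I would invoke the spectral theorem: $\hat\rho_v$ being self-adjoint and injective (no zero eigenvalue, since $\e^{-\beta\hat H_v}$ is injective) has a self-adjoint logarithm defined on a dense domain, and the identity $\hat\rho_v = \e^{-\beta\hat H_v}/Z[v]$ forces $\ln\hat\rho_v$ to have the same spectral projections as $\hat H_v$ with eigenvalues related affinely. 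The main obstacle — though a mild one — is thus the bookkeeping of domains and self-adjointness of $\ln\hat\rho_v$ in the bosonic setting; once one grants (as we may, from Sec.~\ref{sec:Hamiltonians} and the assumption $Z[v]<\infty$) that $\hat H_v$ is self-adjoint with purely discrete spectrum and summable Gibbs weights, the inversion is forced and the constant $h^{(0)}$ is manifestly the unique residual freedom.
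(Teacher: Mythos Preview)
Your proof is correct and follows essentially the same route as the paper: both invert the Gibbs relation via~\eqref{eq:interimSol} (i.e.\ $\beta^{-1}\ln\hat\rho_v + \hat H_v = C$) to conclude that two Hamiltonians with the same Gibbs state differ by a constant. You add some careful remarks on domains and self-adjointness of $\ln\hat\rho_v$ in the bosonic case that the paper omits, but the core argument is identical.
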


\begin{proof}
We can use~\eqref{eq:interimSol} to proof the theorem in the same manner as the first Hohenberg--Kohn theorem \citep{HohenbergKohn1964}. Assume that two different Hamiltonians, $\hat{H}_v$ and $\hat{H}_v'$, yield the same density-matrix operator $\hat{\rho}_v$. Since~\eqref{eq:interimSol} holds for both $\hat{H}_v$ and $\hat{H}_v'$ with constants $C$ and $C'$ respectively, we can subtract the two equations, which yields $\hat{H}_v - \hat{H}_v' = C - C' = \text{constant}$.
\end{proof}

\begin{cor}\label{cor:rhoToV}
For $v \in \FiniteNpot$, the mapping $v \mapsto \hat{\rho}_v$ is invertible.
\end{cor}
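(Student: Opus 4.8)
The plan is to use the obvious factorisation $v \mapsto \hat{H}_v \mapsto \hat{\rho}_v$ and to reduce the claim to Theorem~\ref{thm:rhoToH}. By construction $\hat{H}_v = \hat{H}_0 + \hat{V}_v$ with $\hat{H}_0$ fixed and $\hat{V}_v = \sum_{ij} v_{ij}\crea{a}_i\anni{a}_j$ depending linearly on $v$, and the composite map is well defined exactly on $\FiniteNpot$. Suppose two potentials $v, v' \in \FiniteNpot$ give the same Gibbs state, $\hat{\rho}_v = \hat{\rho}_{v'}$. By Theorem~\ref{thm:rhoToH} the two Hamiltonians can then differ only by an additive constant, $\hat{H}_v - \hat{H}_{v'} = C\hat{1}$ for some $C \in \Reals$, and since the fixed piece $\hat{H}_0$ cancels this reads $\hat{V}_w \isDefinedAs \sum_{ij}(v_{ij}-v'_{ij})\crea{a}_i\anni{a}_j = C\hat{1}$, where $w \isDefinedAs v - v' \in \HermitanMat(N_b)$. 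It therefore suffices to show that the only one-body operator proportional to the identity is the zero operator, with $C = 0$; this forces $w = 0$, i.e.\ $v = v'$, which is the asserted injectivity.

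To carry this out I would simply test the operator identity $\hat{V}_w = C\hat{1}$ on the lowest particle-number sectors of the Fock space. Acting on the vacuum gives $\hat{V}_w\ket{0} = 0$ because every annihilation operator kills $\ket{0}$, whence $C\ket{0} = 0$ and thus $C = 0$. Acting next on a one-particle state $\ket{k} = \crea{a}_k\ket{0}$ and using the (anti-)commutation relations~\eqref{eq:creaAnniCommut} to move $\anni{a}_j$ past $\crea{a}_k$ (so that $\anni{a}_j\crea{a}_k\ket{0} = \delta_{jk}\ket{0}$ in both statistics) yields $\hat{V}_w\ket{k} = \sum_i w_{ik}\ket{i}$; since this must vanish and the $\ket{i}$ are orthonormal, $w_{ik} = 0$ for all $i,k \in \Nbas$, i.e.\ $v = v'$. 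Equivalently, one can phrase the same computation as: $\hat{V}_v$ restricted to $\oneH^0$ is the zero map, while its restriction to $\oneH^1 \cong \oneH$ is just multiplication by the matrix $v$, so a constant shift of the operator $\hat{V}_v$ is compatible only with no shift of $v$ whatsoever.

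There is no genuine technical obstacle here; the corollary is essentially a bookkeeping consequence of Theorem~\ref{thm:rhoToH}. The point worth stressing — and the actual content of the statement — is that, in contrast to the canonical ensemble or to ground-state DFT, a constant shift of the potential matrix $v \mapsto v + \mu\unitMat$ does \emph{not} correspond to a constant shift of the Hamiltonian: it changes $\hat{H}_v$ by $\mu\hat{N}$, which is not a multiple of $\hat{1}$. It is precisely this feature of the grand-canonical setting, where $-\trace\{v\}$ plays the role of the chemical potential and genuinely fixes the average particle number, that removes the residual additive freedom left over in Theorem~\ref{thm:rhoToH} and upgrades it to a strict one-to-one correspondence $v \leftrightarrow \hat{\rho}_v$.
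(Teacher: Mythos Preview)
Your argument is correct and is precisely the filling-in that the paper leaves implicit: the paper states the result as an immediate corollary of Theorem~\ref{thm:rhoToH} without a separate proof, merely pointing to more elaborate arguments in the literature. Your factorisation $v \mapsto \hat{H}_v \mapsto \hat{\rho}_v$ together with the observation that a one-body operator $\hat{V}_w$ vanishes on $\ket{0}$ (forcing $C=0$) and acts as multiplication by $w$ on $\oneH^1$ (forcing $w=0$) is exactly the content the paper tacitly invokes when it says the only residual freedom in Theorem~\ref{thm:rhoToH} is a constant $h^{(0)}$, which $v$ does not touch.
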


A significantly more elaborate proof of this corollary can be found in~\cite[p.~28]{PhD-Baldsiefen2012} and~\cite[p.~3]{BaldsiefenCangiGross2015}.

Another important observation is that the canonical density-matrix operator is strictly positive definite, $\hat{\rho}_v > 0$, so these density-matrix operators reside in the following subspace of $\NdensMat_{\pm}$
\begin{equation}\label{def:NTdensMat}
\NTdensMat_{\pm} \isDefinedAs \set*{\hat{\rho} \colon \Fock_{\pm} \to \Fock_{\pm} }{ 
\hat{\rho} = \hat{\rho}^{\dagger}, \hat{\rho} > 0, \Trace\{\hat{\rho}\} = 1 } \, .
\end{equation}
This is consistent with the notion that at finite temperature, all eigenstates of the Hamiltonian $\ket{\Psi_I}$ contribute with a Boltzmann weight $w_i = \e^{-\beta E_I}/Z > 0$. This justifies that we only took the constraint $\Trace\{\hat{\rho}\} = 1$ into account in the minimisation procedure~\eqref{eq:interimSol} and not the positivity of the ensemble weights.
Note that $\NdensMat_{\pm}$ forms the closure of $\NTdensMat_{\pm}$.

This observation also implies that the corresponding 1RDMs have $n_i > 0$ and in the fermionic case $n_i < 1$ additionally.
To show strict positivity, we work in the NO basis.
First note that for any state $\ket{\Psi_I}$, we have $\brakket{\Psi_I}{\crea{a}_i\anni{a}_i}{\Psi_I} = \norm{\anni{a}_i \Psi_I}^2 \geq 0$. As the eigenstates of the Hamiltonian form a complete basis in the Fock space, the NO $i$ contributes to at least one of these eigenstates, so for at least one of these eigenstates $\ket{\Psi_I}$ we have $\norm{\anni{a}_i \Psi_I}^2 > 0$. As $w_i = \e^{-\beta E_I}/Z > 0$, we immediately find the following lower bound
\begin{equation}
n_i = \sum_Iw_I\brakket{\Psi_I}{\crea{a}_i\anni{a}_i}{\Psi_I} > 0 \, .
\end{equation}
In the case of fermions, the anti-commutation implies $\brakket{\Psi_I}{\crea{a}_i\anni{a}_i}{\Psi_I}
= 1 - \brakket{\Psi_I}{\anni{a}_i\crea{a}_i}{\Psi_I} = 1 - \norm{\crea{a}_i \Psi_I}^2 \leq 1$. Similarly, as the eigenstates of the Hamiltonian form a complete basis in the Fock space, the $i$-th NO cannot be omnipresent in all eigenstates. So for at least one of these eigenstates $\ket{\Psi_I}$ we have $\norm{\anni{a}_i \Psi_I}^2 < 1$, which yields the following upper bound for fermions
\begin{equation}
n_i = \sum_Iw_I\brakket{\Psi_I}{\crea{a}_i\anni{a}_i}{\Psi_I} < 1 \, .
\end{equation}
The 1RDMs produced by a potential therefore reside only in the interior of $\NoneMat_{\pm}$
\begin{subequations}\label{defs:ToneMats}
\begin{align}
\ToneMat_+ &\isDefinedAs \interior(\NoneMat_+)
= \set*{\gamma \in \HermitanMat(N_b) }{ \gamma > 0 } \, , \\
\ToneMat_- &\isDefinedAs \interior(\NoneMat_-)
= \set*{\gamma \in \HermitanMat(N_b) }{ 
\gamma > 0, \gamma^2 < \gamma } \, .
\end{align}
\end{subequations}
We will show momentarily that the interior of the $N$-representable 1RDMs can actually be identified with the set of $v$-representable 1RDMs, i.e.\ $\ToneMat = \VoneMat$. But first, we need to show some additional properties of the (canonical) grand potential.

Let us consider the value of the grand potential evaluated at the canonical density-matrix operator, the canonical grand potential
\begin{equation}\label{eq:OmegaVmin}
\Omega[v] \isDefinedAs \min_{\hat{\rho} \in \NdensMat}\Omega_v[\hat{\rho}]
= \min_{\hat{\rho} \in \NTdensMat}\Omega_v[\hat{\rho}]
= -\beta^{-1}\ln\bigl(Z[v]\bigr) \, .
\end{equation}
Since $\Omega[v]$ is obtained by minimisation of $\Omega_v[\hat{\rho}]$, it is readily shown to be concave \citep{Eschrig2010}.

\begin{thm}
$\Omega[v]$ is strictly concave in $v$.
\end{thm}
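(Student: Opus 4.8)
The plan is to use the explicit formula $\Omega[v] = -\beta^{-1}\ln Z[v]$ with $Z[v] = \Trace\{\e^{-\beta \hat H_v}\}$ and show that $v \mapsto \ln Z[v]$ is strictly convex (so its negative multiple is strictly concave). The key structural fact is that the map $v \mapsto \hat H_v = \hat H_0 + \sum_{ij} v_{ij}\crea{a}_i\anni{a}_j$ is affine in $v$, so the problem reduces to a standard convexity statement about the log-partition function as a functional of a Hamiltonian that depends affinely on a parameter.

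First I would fix $v_0, v_1 \in \FiniteNpot$ with $v_0 \neq v_1$ and $\lambda \in (0,1)$, and set $v_\lambda = (1-\lambda)v_0 + \lambda v_1$. Because $\hat V_v$ is linear in $v$, we have $\hat H_{v_\lambda} = (1-\lambda)\hat H_{v_0} + \lambda \hat H_{v_1}$. I would then invoke the Golden--Thompson inequality, $\Trace\{\e^{A+B}\} \leq \Trace\{\e^{A}\e^{B}\}$ for self-adjoint $A,B$, applied with $A = -\beta(1-\lambda)\hat H_{v_0}$ and $B = -\beta\lambda \hat H_{v_1}$, followed by Hölder's inequality for Schatten norms, $\Trace\{\abs{XY}\} \leq \norm{X}_{1/(1-\lambda)}\norm{Y}_{1/\lambda}$, to obtain
\begin{equation*}
Z[v_\lambda] = \Trace\bigl\{\e^{-\beta\hat H_{v_\lambda}}\bigr\}
\leq \Trace\bigl\{\e^{-\beta(1-\lambda)\hat H_{v_0}}\e^{-\beta\lambda\hat H_{v_1}}\bigr\}
\leq Z[v_0]^{1-\lambda}\,Z[v_1]^{\lambda} \, .
\end{equation*}
Taking $-\beta^{-1}\ln$ of both sides gives $\Omega[v_\lambda] \geq (1-\lambda)\Omega[v_0] + \lambda\Omega[v_1]$, i.e.\ concavity. (In the fermionic case everything is finite-dimensional so these trace inequalities are elementary; in the bosonic case one works on the domain where $Z[v]<\infty$, which is exactly what membership in $\FiniteNpot$ guarantees, and Golden--Thompson and Hölder still hold for trace-class operators.) Alternatively, one can avoid Golden--Thompson entirely by writing $\Omega_v[\hat\rho] = E_v[\hat\rho] - \beta^{-1}S[\hat\rho] = \trace\{v\gamma[\hat\rho]\} + \Omega_0[\hat\rho]$, noting that for fixed $\hat\rho$ this is affine (hence concave) in $v$, and using that a pointwise infimum of affine functions, $\Omega[v] = \inf_{\hat\rho}\Omega_v[\hat\rho]$, is concave.

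The main obstacle is \emph{strictness}. Concavity alone follows from either argument above, but to get strict concavity one must rule out equality $\Omega[v_\lambda] = (1-\lambda)\Omega[v_0] + \lambda\Omega[v_1]$ when $v_0 \neq v_1$. The cleanest route is via the variational characterization: $\Omega[v] = \Omega_v[\hat\rho_v] < \Omega_v[\hat\rho]$ for any $\hat\rho \neq \hat\rho_v$ (a consequence of strict convexity of $\Omega_v[\hat\rho]$, together with uniqueness of the minimiser $\hat\rho_v$ established earlier via the Gibbs formula). Suppose equality held. Since $\Omega[v_\lambda] = \Omega_{v_\lambda}[\hat\rho_{v_\lambda}] = (1-\lambda)\Omega_{v_0}[\hat\rho_{v_\lambda}] + \lambda\Omega_{v_1}[\hat\rho_{v_\lambda}]$ by affineness in $v$, and since $\Omega_{v_0}[\hat\rho_{v_\lambda}] \geq \Omega[v_0]$ and $\Omega_{v_1}[\hat\rho_{v_\lambda}] \geq \Omega[v_1]$, equality forces $\Omega_{v_0}[\hat\rho_{v_\lambda}] = \Omega[v_0]$ and $\Omega_{v_1}[\hat\rho_{v_\lambda}] = \Omega[v_1]$; by uniqueness of the minimiser this gives $\hat\rho_{v_\lambda} = \hat\rho_{v_0} = \hat\rho_{v_1}$. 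But Corollary~\ref{cor:rhoToV} says the map $v \mapsto \hat\rho_v$ is injective on $\FiniteNpot$, contradicting $v_0 \neq v_1$. Hence the inequality is strict and $\Omega[v]$ is strictly concave. One should double-check that $\hat\rho_{v_\lambda} \in \NTdensMat$ lies in the domain of all three functionals $\Omega_{v_0}, \Omega_{v_1}, \Omega_{v_\lambda}$ so that these manipulations make sense — in the bosonic case this requires knowing $\Trace\{\hat H_{v_j}\hat\rho_{v_\lambda}\} < \infty$, which follows from the boundedness-from-below and finite-order-interaction assumptions controlling $\hat\rho_{v_\lambda}$'s decay (cf.\ Theorem~\ref{thm:finiteNk} and the surrounding estimates).
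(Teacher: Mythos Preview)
Your proposal is correct, and your second route (the variational one) together with your strictness argument is exactly the paper's proof: write $\Omega[v] = \min_{\hat\rho}\Omega_v[\hat\rho]$, note that $\Omega_v[\hat\rho]$ is affine in $v$ for fixed $\hat\rho$, conclude concavity from the infimum-of-affine structure, and upgrade to strict concavity by observing that equality would force $\hat\rho_{v_\lambda}$ to minimise both $\Omega_{v_0}$ and $\Omega_{v_1}$, contradicting the injectivity of $v\mapsto\hat\rho_v$ (Corollary~\ref{cor:rhoToV}).

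The Golden--Thompson/H\"older route you sketch first is a genuine alternative for the non-strict part, but it is heavier machinery than needed here: the paper dispenses with concavity in a single line via the infimum representation, without any operator inequalities. Your instinct to fall back on the variational formulation for strictness is the right one anyway, since equality cases in Golden--Thompson (which require $[\hat H_{v_0},\hat H_{v_1}]=0$) and in the Schatten--H\"older inequality are more awkward to analyse than simply invoking uniqueness of the Gibbs state. Your closing remark about finiteness of $\Trace\{\hat H_{v_j}\hat\rho_{v_\lambda}\}$ in the bosonic case is a fair point of care; the paper glosses over this but it is indeed covered by the estimates around Theorem~\ref{thm:finiteNk}.
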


\begin{proof}
Concavity trivially follows from its expression as a minimisation~\eqref{eq:OmegaVmin}. So for $v_1 \neq v_2$ and $0 < t < 1$ we have
\begin{align}
\Omega[tv_1 + (1-t)v_2]
&= \min_{\hat{\rho} \in \NTdensMat}\Trace\biggl\{\hat{\rho}\Bigl(t \hat{H}_{v_1} + (1-t)\hat{H}_{v_2} + \frac{1}{\beta}\ln(\hat{\rho})\Bigr)\biggr\} \\
&> t\min_{\hat{\rho}_1 \in \NTdensMat}\Trace\biggl\{\hat{\rho}_1\Bigl(\hat{H}_{v_1} + \frac{1}{\beta}\ln(\hat{\rho}_1)\Bigr)\biggr\} +
(1-t) \min_{\hat{\rho}_2 \in \NTdensMat}\Trace\biggl\{\hat{\rho}_2\Bigl(\hat{H}_{v_2} + \frac{1}{\beta}\ln(\hat{\rho}_2)\Bigr)\biggr\}
= t\Omega[v_1] + (1-t)\Omega[v_2] \, , \notag
\end{align}
where the strict inequality follows from the fact that the minimiser of $\Omega_v[\hat{\rho}]$ is unique (corollary~\ref{cor:rhoToV}).
\end{proof}

From corollary~\ref{cor:rhoToV}, Mermin's generalisation of the Hohenberg--Kohn theorem \citep{Mermin1965} follows directly.

\begin{thm}[Mermin]\label{thm:Mermin}
For $v \in \FiniteNpot$, the map $v \mapsto \gamma_v$ is invertible, i.e.\ the potential which generates a particular 1RDM is unique.
\end{thm}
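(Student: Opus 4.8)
The plan is to reproduce Mermin's argument \citep{Mermin1965}, i.e.\ the finite-temperature counterpart of the Hohenberg--Kohn uniqueness proof, building directly on corollary~\ref{cor:rhoToV}. The one fact I need beyond the definitions is the \emph{strict} Gibbs variational principle: since $\hat{\rho}_v$ is, by corollary~\ref{cor:rhoToV}, the \emph{unique} minimiser of $\Omega_v$ over $\NdensMat_{\pm}$, any admissible $\hat{\rho} \neq \hat{\rho}_v$ satisfies $\Omega_v[\hat{\rho}] > \Omega_v[\hat{\rho}_v] = \Omega[v]$.

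The core of the argument is a short reductio. Suppose $v_1 \neq v_2$ in $\FiniteNpot$ produced the same 1RDM, which I call $\gamma$. By corollary~\ref{cor:rhoToV} the two Gibbs states are then distinct, $\hat{\rho}_{v_1} \neq \hat{\rho}_{v_2}$. Because $\hat{H}_{v_1}$ and $\hat{H}_{v_2}$ differ only through the one-body term, $\hat{H}_{v_1} - \hat{H}_{v_2} = \sum_{ij}(v_1 - v_2)_{ij}\crea{a}_i\anni{a}_j$, for every admissible $\hat{\rho}$ one has $\Omega_{v_1}[\hat{\rho}] = \Omega_{v_2}[\hat{\rho}] + \trace\{(v_1 - v_2)\gamma[\hat{\rho}]\}$. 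Inserting $\hat{\rho} = \hat{\rho}_{v_2}$ into $\Omega_{v_1}$ and using the strict variational principle gives
\begin{equation*}
\Omega[v_1] < \Omega_{v_1}[\hat{\rho}_{v_2}] = \Omega[v_2] + \trace\{(v_1 - v_2)\gamma\} \, ,
\end{equation*}
and inserting $\hat{\rho} = \hat{\rho}_{v_1}$ into $\Omega_{v_2}$, now using $\gamma[\hat{\rho}_{v_1}] = \gamma$, gives
\begin{equation*}
\Omega[v_2] < \Omega[v_1] + \trace\{(v_2 - v_1)\gamma\} \, .
\end{equation*}
Adding the two lines, the potential terms cancel and I am left with $\Omega[v_1] + \Omega[v_2] < \Omega[v_1] + \Omega[v_2]$, a contradiction. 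Hence $v_1 = v_2$, which is the asserted injectivity of $v \mapsto \gamma_v$. (Equivalently, the non-strict version of this computation shows $-\gamma_v \in \du(-\Omega)(v)$, so the statement also follows from the strict concavity of $\Omega[v]$ proved above together with strict monotonicity of the subdifferential of a strictly convex function.)

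The step that actually needs care — and the only real obstacle — is the bosonic case, where I must check that $\hat{\rho}_{v_2}$ is genuinely admissible in $\Omega_{v_1}$, i.e.\ that $\Omega_{v_1}[\hat{\rho}_{v_2}] < \infty$, so that both the variational inequality and the identity above are meaningful. This is immediate: the entropy term $-\beta^{-1}S[\hat{\rho}_{v_2}]$ appears identically in $\Omega_{v_1}[\hat{\rho}_{v_2}]$ and in the finite quantity $\Omega_{v_2}[\hat{\rho}_{v_2}] = \Omega[v_2]$, while the shift $\trace\{(v_1 - v_2)\gamma_{v_2}\}$ is finite because $\gamma_{v_2}$ is an honest finite-dimensional matrix with $\trace\{\gamma_{v_2}\} = \Trace\{\hat{N}\hat{\rho}_{v_2}\} < \infty$ by theorem~\ref{thm:finiteNk}. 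In the fermionic case there is nothing to check, as the whole Fock space is finite-dimensional. Finally, no degeneracy discussion is needed: at $\beta < \infty$ the Gibbs state is strictly positive ($\hat{\rho}_v \in \NTdensMat_{\pm}$), so uniqueness of the minimiser is unconditional and the argument never meets the equi-ensemble ambiguity of the zero-temperature theory.
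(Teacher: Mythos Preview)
Your proof is correct and follows essentially the same Mermin-style reductio ad absurdum as the paper, swapping the roles of $v_1$ and $v_2$ in the strict Gibbs variational inequality and adding to reach a contradiction. Your explicit verification that $\Omega_{v_1}[\hat{\rho}_{v_2}] < \infty$ in the bosonic case is a nice extra bit of care that the paper's version leaves implicit.
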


\begin{proof}
The proof goes by reductio ad absurdum, so assume that there are two different potentials, $v_1 \neq v_2 \mapsto \hat{\rho}_1 \neq \hat{\rho}_2$ which both yield the same 1RDM, $\gamma$.
\begin{align*}
\Omega[v_1] &= \Omega_{v_1}[\hat{\rho}_1]
= \Omega_{v_2}[\hat{\rho}_1] + \Trace\{\gamma(v_1 - v_2)\} \\
&> \Omega_{v_2}[\hat{\rho}_2] + \Trace\{\gamma(v_1 - v_2)\}
= \Omega[v_2] + \Trace\{\gamma(v_1 - v_2)\} \, .
\end{align*}
Now turning the roles of $v_1$ and $v_2$ around and adding the two equations to each other we find the inconsistency
\begin{equation*}
\Omega[v_1] + \Omega[v_2] > \Omega[v_2] + \Omega[v_1] \, .
\end{equation*}
Hence, our assumption that there are two one-body potentials which yield the same 1RDM is false.
\end{proof}

Now we would like to show strict convexity of the universal functional, $F[\gamma]$. For this, we need to show strict convexity of $\Omega_v[\hat{\rho}]$. As the energy~\eqref{def:Energy} is linear, we only need to show that the entropy~\eqref{def:entropy} is strictly concave \citep{Ruelle1969, Lieb1975, Wehrl1978}.
\begin{thm}
The entropy is strictly concave. That is, for any \( \hat{\rho}_{\lambda} = \lambda\hat{\rho}_0 + (1-\lambda)\hat{\rho}_1 \in \NdensMat \) with $\lambda \in [0,1]$ we have $S[\hat{\rho}_{\lambda}] \geq \lambda S[\hat{\rho}_0] + (1 - \lambda)S[\hat{\rho}_1]$. For \( \lambda \in (0,1) \) and $\hat{\rho}_0 \neq \hat{\rho}_1$ we only have an equality if $S[\hat{\rho}_0] = \infty$ or $S[\hat{\rho}_1] = \infty$.
\end{thm}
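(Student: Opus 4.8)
The plan is to prove the stronger, \emph{quantitative} identity
\[
S[\hat{\rho}_{\lambda}] = \lambda S[\hat{\rho}_0] + (1-\lambda)S[\hat{\rho}_1]
+ \lambda S(\hat{\rho}_0 \| \hat{\rho}_{\lambda}) + (1-\lambda) S(\hat{\rho}_1 \| \hat{\rho}_{\lambda}) \, ,
\]
where $S(\hat{\rho} \| \hat{\sigma}) \isDefinedAs \Trace\{\hat{\rho}(\ln\hat{\rho} - \ln\hat{\sigma})\}$ is the (Umegaki) relative entropy, and then to read off both concavity and its equality case from the non-negativity of the relative entropy (Klein's inequality) together with the fact that it vanishes only when its two arguments coincide.

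First I would establish the identity. Since the eigenvalues of $\hat{\rho}_{\lambda}$ lie in $[0,1]$ we have $\hat{\rho}_{\lambda} \leq \hat{1}$, hence $-\ln\hat{\rho}_{\lambda} \geq 0$ is a (possibly unbounded) positive self-adjoint operator, and linearity of the trace on positive operators applied to $\hat{\rho}_{\lambda} = \lambda\hat{\rho}_0 + (1-\lambda)\hat{\rho}_1$ yields
\[
S[\hat{\rho}_{\lambda}] = \Trace\{\hat{\rho}_{\lambda}(-\ln\hat{\rho}_{\lambda})\}
= \lambda\Trace\{\hat{\rho}_0(-\ln\hat{\rho}_{\lambda})\} + (1-\lambda)\Trace\{\hat{\rho}_1(-\ln\hat{\rho}_{\lambda})\} \, ,
\]
an identity valid in $[0,+\infty]$. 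Because $\hat{\rho}_{\lambda} \geq \lambda\hat{\rho}_0$ and $\hat{\rho}_{\lambda} \geq (1-\lambda)\hat{\rho}_1$, the supports obey $\mathrm{supp}\,\hat{\rho}_i \subseteq \mathrm{supp}\,\hat{\rho}_{\lambda}$, so each term above is a well-defined element of $[0,+\infty]$ and equals $S[\hat{\rho}_i] + S(\hat{\rho}_i \| \hat{\rho}_{\lambda})$; substituting gives the claimed identity. For concavity it now suffices that $S(\hat{\rho}_i \| \hat{\rho}_{\lambda}) \geq 0$: in the fermionic (finite-dimensional) case this reduces, in a joint eigenbasis, to the scalar estimate $a\ln a - a\ln b \geq a - b$ (with equality iff $a=b$) summed against a doubly stochastic overlap matrix, and in the bosonic case I would invoke Klein's inequality for the generally unbounded operator $-\ln\hat{\rho}_{\lambda}$ \citep{Lieb1975, Wehrl1978}. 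If $S[\hat{\rho}_0]$ or $S[\hat{\rho}_1]$ is $+\infty$ the right-hand side of the identity is $+\infty$ and so is $S[\hat{\rho}_{\lambda}]$, consistent with the stated caveat; if both are finite, then operator monotonicity of the logarithm ($-\ln\hat{\rho}_{\lambda} \leq -\ln(\lambda_i\hat{\rho}_i)$ on $\mathrm{supp}\,\hat{\rho}_{\lambda}$, with $\lambda_0 = \lambda$, $\lambda_1 = 1-\lambda$) gives $\Trace\{\hat{\rho}_i(-\ln\hat{\rho}_{\lambda})\} \leq -\ln\lambda_i + S[\hat{\rho}_i] < \infty$, so every term is finite and the identity may be rearranged freely.

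Strict concavity then follows: for $\lambda \in (0,1)$, $\hat{\rho}_0 \neq \hat{\rho}_1$, and $S[\hat{\rho}_0], S[\hat{\rho}_1] < \infty$, an equality $S[\hat{\rho}_{\lambda}] = \lambda S[\hat{\rho}_0] + (1-\lambda)S[\hat{\rho}_1]$ forces $\lambda S(\hat{\rho}_0 \| \hat{\rho}_{\lambda}) + (1-\lambda)S(\hat{\rho}_1 \| \hat{\rho}_{\lambda}) = 0$, hence both relative entropies vanish, hence by the equality clause of Klein's inequality $\hat{\rho}_0 = \hat{\rho}_{\lambda} = \hat{\rho}_1$ --- a contradiction. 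The main obstacle is not the algebra but the operator-analytic bookkeeping in the infinite-dimensional bosonic setting: one must verify that $\Trace\{\hat{\rho}(-\ln\hat{\sigma})\}$ is a well-defined element of $[0,+\infty]$, that the support inclusion correctly absorbs the possible non-invertibility of $\hat{\rho}_{\lambda}$, that Klein's inequality together with its equality condition survives for the unbounded operator $-\ln\hat{\rho}_{\lambda}$, and that the possibly infinite values $S[\hat{\rho}_i] = +\infty$ are dealt with exactly as the theorem's exception clause prescribes. In the fermionic case all of these points are automatic and the argument is elementary.
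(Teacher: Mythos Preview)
Your argument is correct and takes a genuinely different route from the paper's.

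The paper works in the eigenbasis $\{\ket{\Psi_k}\}$ of $\hat{\rho}_\lambda$ and chains two scalar inequalities: first strict concavity of $s(x)=-x\ln x$ applied to the diagonal entries $\brakket{\Psi_k}{\hat{\rho}_\lambda}{\Psi_k} = \lambda\brakket{\Psi_k}{\hat{\rho}_0}{\Psi_k}+(1-\lambda)\brakket{\Psi_k}{\hat{\rho}_1}{\Psi_k}$, and then Jensen's inequality $\sum_k s(\brakket{\Psi_k}{\hat{\rho}_i}{\Psi_k}) \geq \sum_k\brakket{\Psi_k}{s(\hat{\rho}_i)}{\Psi_k}=S[\hat{\rho}_i]$. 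No relative entropy appears; Klein's inequality is introduced only later in the paper, for the grand-potential minimisation. Your approach instead proves the exact identity
\[
S[\hat{\rho}_\lambda]=\lambda S[\hat{\rho}_0]+(1-\lambda)S[\hat{\rho}_1]+\lambda S(\hat{\rho}_0\|\hat{\rho}_\lambda)+(1-\lambda)S(\hat{\rho}_1\|\hat{\rho}_\lambda),
\]
which is precisely the quantity $S_\lambda[\hat{\rho}_1|\hat{\rho}_0]$ the paper later uses in Lindblad's alternative formula for the relative entropy. What you gain is a clean, quantitative expression for the concavity gap and an immediate strictness argument via the equality clause $S(\hat{\rho}\|\hat{\sigma})=0 \Leftrightarrow \hat{\rho}=\hat{\sigma}$; what the paper gains is a self-contained argument that does not require the relative-entropy machinery (Klein's inequality, operator monotonicity of $\ln$, the equality case) to be in place beforehand.

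One point to tighten: when $S[\hat{\rho}_i]=\infty$ you assert that the right-hand side of the identity is $+\infty$ and hence so is $S[\hat{\rho}_\lambda]$. For this to follow from the identity you need the splitting $-\Trace\{\hat{\rho}_i\ln\hat{\rho}_\lambda\}=S[\hat{\rho}_i]+S(\hat{\rho}_i\|\hat{\rho}_\lambda)$ to hold in $[0,+\infty]$, i.e.\ that $-\Trace\{\hat{\rho}_i\ln\hat{\rho}_\lambda\}=\infty$ whenever $S[\hat{\rho}_i]=\infty$. This is true because your operator-monotonicity bound gives $S(\hat{\rho}_i\|\hat{\rho}_\lambda)\leq -\ln\lambda_i<\infty$, and the difference of the two (a priori possibly divergent) traces is therefore finite; but the rearrangement should be justified via partial sums rather than invoked formally. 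You correctly flag this kind of bookkeeping as the main obstacle, so the proposal is sound.
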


\begin{proof}
If $S[\hat{\rho}_{\lambda}] = \infty$, we immediately find the inequality and only when $S[\hat{\rho}_0] = \infty$ or $S[\hat{\rho}_1] = \infty$, we have an equality.

Now consider the situation when $S[\hat{\rho}_{\lambda}] < \infty$.
Let $\hat{\rho}_{\lambda} = \sum_kw_k\ket{\Psi_k}\bra{\Psi_k}$.
Strict concavity of the entropy now follows directly from the strict concavity of the function $s(x) = -x\ln(x)$.
\begin{align*}
S[\hat{\rho}_{\lambda}] &= -\sum_kw_k\ln(w_k)
= \sum_ks\bigl(\brakket{\Psi_k}{\hat{\rho}_{\lambda}}{\Psi_k}\bigr) 
= \sum_ks\bigl(\lambda\brakket{\Psi_k}{\hat{\rho}_0}{\Psi_k} +
(1-\lambda)\brakket{\Psi_k}{\hat{\rho}_1}{\Psi_k}\bigr) \\*
&> \lambda\sum_ks\bigl(\braket{\Psi_k}{\hat{\rho}_0 \, \Psi_k}\bigr) +
(1 - \lambda)\sum_ks\bigl(\braket{\Psi_k}{\hat{\rho}_1 \, \Psi_k}\bigr) \\*
&\geq \lambda\sum_k\braket{\Psi_k}{s(\hat{\rho}_0) \, \Psi_k} +
(1 - \lambda)\sum_k\braket{\Psi_k}{s(\hat{\rho}_1) \, \Psi_k}
= \lambda S[\hat{\rho}_0] + (1 - \lambda) S[\hat{\rho}_1] \, .
\end{align*}
The last inequality follows from Jensen's inequality (Lemma~\ref{thm:Jensen}), which is simply extending the convexity (concavity) definition over a convex combination of more than two points. 
\end{proof}

\begin{cor}\label{cor:OmegaStrictConvex}
The grand potential $\Omega_v[\hat{\rho}]$ is strictly convex in the density-matrix operator, $\hat{\rho}$.
\end{cor}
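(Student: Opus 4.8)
The plan is to read the strict convexity of $\Omega_v$ off the decomposition~\eqref{def:grandPotential}, $\Omega_v[\hat{\rho}] = E_v[\hat{\rho}] - \beta^{-1}S[\hat{\rho}]$, using the strict concavity of the entropy just established. First I would note that the energy functional~\eqref{def:Energy}, $E_v[\hat{\rho}] = \Trace\{\hat{\rho}\,\hat{H}_v\}$, is linear in $\hat{\rho}$ by linearity of the trace, and hence convex; since $\hat{H}_v$ is bounded from below, forming convex combinations of $\hat{\rho}$'s causes no $\infty - \infty$ ambiguity. Next, because $\beta^{-1} > 0$, scaling the strictly concave entropy $S$ by $-\beta^{-1}$ yields a strictly convex functional $-\beta^{-1}S$, where the exceptional equality case of Definition~\ref{def:convexFunc}, namely $-\beta^{-1}S[\hat{\rho}] = -\infty$, is exactly the case $S[\hat{\rho}] = +\infty$ appearing in the entropy theorem.

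The remaining step is the elementary observation that the sum of a convex and a strictly convex extended-real-valued function is strictly convex: for $\hat{\rho}_0 \neq \hat{\rho}_1 \in \NdensMat_{\pm}$ and $\lambda \in (0,1)$, writing $\hat{\rho}_{\lambda} = \lambda\hat{\rho}_0 + (1-\lambda)\hat{\rho}_1$, linearity of $E_v$ gives $E_v[\hat{\rho}_{\lambda}] = \lambda E_v[\hat{\rho}_0] + (1-\lambda)E_v[\hat{\rho}_1]$, while strict concavity of $S$ gives the strict inequality $-\beta^{-1}S[\hat{\rho}_{\lambda}] < -\beta^{-1}\bigl(\lambda S[\hat{\rho}_0] + (1-\lambda)S[\hat{\rho}_1]\bigr)$ unless $S[\hat{\rho}_0] = +\infty$ or $S[\hat{\rho}_1] = +\infty$. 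Adding the two relations yields $\Omega_v[\hat{\rho}_{\lambda}] < \lambda\Omega_v[\hat{\rho}_0] + (1-\lambda)\Omega_v[\hat{\rho}_1]$, with the only permitted equality reducing to the case $S[\hat{\rho}_i] = +\infty$, i.e.\ $\Omega_v[\hat{\rho}_i] = -\infty$ when $E_v[\hat{\rho}_i]$ is finite there, exactly as Definition~\ref{def:convexFunc} allows.

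There is no serious obstacle here --- this really is a corollary of the preceding theorem --- but the one point deserving care is the bookkeeping with infinite values in the bosonic case: as remarked after~\eqref{def:TraceClass}, neither $E_v[\hat{\rho}]$ nor $S[\hat{\rho}]$ need be finite on all of $\NdensMat_{+}$, so one should check that whenever $E_v[\hat{\rho}_i] = +\infty$ (with $S$ finite there) the grand potential equals $+\infty$ and the convexity inequality is trivial, and that the degenerate situation $E_v = +\infty$ together with $S = +\infty$ does not arise on the relevant set. In the fermionic case all of this is vacuous: the Fock space is finite-dimensional, $\hat{H}_v$ is bounded, and every trace in sight is finite, so the two displayed relations give the claim at once.
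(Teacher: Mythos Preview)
Your proposal is correct and follows exactly the approach the paper intends: the sentence preceding the entropy theorem already signals that, since $E_v$ is linear, strict concavity of $S$ is all that is needed, and the corollary is stated without a separate proof. Your write-up is simply a more explicit version of this, including careful handling of the $\pm\infty$ bookkeeping in the bosonic case that the paper leaves implicit.
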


Note that the strict convexity of $\Omega_v[\hat{\rho}]$ implies that its minimiser $\hat{\rho}_v$ is unique if it exists (see Theorem~\ref{thm:unimodal}) which is in agreement with Theorem~\ref{thm:rhoToH} and Corollary~\ref{cor:rhoToV} . Indeed, from a minimalists point of view we could have avoided to proof Theorem~\ref{thm:rhoToH} in the usual Hohenberg--Kohn way and just stated it as a corollary at this point. But for the sake of simplicity we kept it separate. From the strict convexity of the grand potential $\Omega_v[\hat{\rho}]$, we can readily establish the desired property of the universal functional

\begin{thm}\label{thm:Fconvex}
The universal functional $F[\gamma]$ is strictly convex on $\NoneMat$.
\end{thm}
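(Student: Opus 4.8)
The plan is to obtain strict convexity of $F$ as an immediate consequence of the strict convexity of $\Omega_0[\hat\rho]$ in the density-matrix operator (Corollary~\ref{cor:OmegaStrictConvex}) together with the linearity of the map $\hat\rho\mapsto\gamma$, $\gamma_{ij}[\hat\rho]=\Trace\{\hat\rho\,\hat\gamma_{ij}\}$. Fix $\gamma_0,\gamma_1\in\NoneMat$ and $\lambda\in(0,1)$ and put $\gamma_\lambda\isDefinedAs\lambda\gamma_0+(1-\lambda)\gamma_1$; since $\NoneMat$ is convex (it is cut out of the Hermitian matrices by the convex constraints $\gamma\geq 0$ and, for fermions, additionally $\gamma\leq 1$, i.e.\ $\gamma^2\leq\gamma$) we have $\gamma_\lambda\in\NoneMat$, and by Corollary~\ref{cor:ColemanNonInteger} each of $F[\gamma_0],F[\gamma_1],F[\gamma_\lambda]$ is $<+\infty$, while $F[\gamma]\geq\Omega[0]>-\infty$, so all three are finite. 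For plain convexity I would, given $\varepsilon>0$, pick $\hat\rho_i\to\gamma_i$ with $\Omega_0[\hat\rho_i]\leq F[\gamma_i]+\varepsilon$, note that $\hat\rho_\lambda\isDefinedAs\lambda\hat\rho_0+(1-\lambda)\hat\rho_1\in\NdensMat$ satisfies $\hat\rho_\lambda\to\gamma_\lambda$ by linearity of the trace, and estimate
\begin{equation*}
F[\gamma_\lambda]\leq\Omega_0[\hat\rho_\lambda]\leq\lambda\Omega_0[\hat\rho_0]+(1-\lambda)\Omega_0[\hat\rho_1]\leq\lambda F[\gamma_0]+(1-\lambda)F[\gamma_1]+\varepsilon ,
\end{equation*}
using only convexity of $\Omega_0$; letting $\varepsilon\downarrow 0$ gives convexity of $F$ on $\NoneMat$.

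For strictness I would take $\gamma_0\neq\gamma_1$. The key observation is that any $\hat\rho_0\to\gamma_0$ and $\hat\rho_1\to\gamma_1$ are necessarily distinct, for otherwise they would produce the same 1RDM. When the infimum defining $F$ is attained — trivially so in the finite-dimensional fermionic case and, in the bosonic case, under the standing hypotheses $\nmax<\infty$ and $h^{(\nmax)}$ positive definite (established in the later sections) — I would simply take minimisers $\hat\rho_i\to\gamma_i$ with $\Omega_0[\hat\rho_i]=F[\gamma_i]$, form the admissible trial operator $\hat\rho_\lambda=\lambda\hat\rho_0+(1-\lambda)\hat\rho_1\to\gamma_\lambda$, and apply Corollary~\ref{cor:OmegaStrictConvex} to the distinct pair $\hat\rho_0\neq\hat\rho_1$ to obtain
\begin{equation*}
F[\gamma_\lambda]\leq\Omega_0[\hat\rho_\lambda]<\lambda\Omega_0[\hat\rho_0]+(1-\lambda)\Omega_0[\hat\rho_1]=\lambda F[\gamma_0]+(1-\lambda)F[\gamma_1] ,
\end{equation*}
which is the asserted strict inequality.

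The main obstacle is precisely the attainment of the infimum, which is genuinely delicate in the infinite-dimensional bosonic case, so for a self-contained argument I would instead work with minimising sequences $\hat\rho_i^{(n)}\to\gamma_i$, $\Omega_0[\hat\rho_i^{(n)}]\to F[\gamma_i]$. If equality held for $F$ along the segment $[\gamma_0,\gamma_1]$, the sandwich $F[\gamma_\lambda]\leq\Omega_0[\hat\rho_\lambda^{(n)}]\leq\lambda\Omega_0[\hat\rho_0^{(n)}]+(1-\lambda)\Omega_0[\hat\rho_1^{(n)}]\to\lambda F[\gamma_0]+(1-\lambda)F[\gamma_1]=F[\gamma_\lambda]$ would force the convexity defect of $\Omega_0$ — which, the energy being linear, equals $\beta^{-1}$ times the concavity defect $S[\hat\rho_\lambda^{(n)}]-\lambda S[\hat\rho_0^{(n)}]-(1-\lambda)S[\hat\rho_1^{(n)}]$ of the entropy — to vanish in the limit. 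I would then sharpen the strict-concavity proof quantitatively, e.g.\ via the relative-entropy representation of the defect and the Pinsker-type bound $S[\hat\rho_\lambda]-\lambda S[\hat\rho_0]-(1-\lambda)S[\hat\rho_1]\geq\tfrac{\lambda(1-\lambda)}{2}\norm{\hat\rho_0-\hat\rho_1}_1^2$, to conclude $\norm{\hat\rho_0^{(n)}-\hat\rho_1^{(n)}}_1\to 0$. In the fermionic case this already contradicts $\gamma_0\neq\gamma_1$, since the $\hat\gamma_{ij}$ are bounded and hence $\hat\rho\mapsto\gamma$ is trace-norm continuous; in the bosonic case one has to additionally exploit the uniform bound on $\Trace\{\hat N^{\nmax}\hat\rho_i^{(n)}\}$ coming from finiteness of the energy and boundedness below of $\hat H_0$ to control the unbounded $\hat\gamma_{ij}$ and reach the same contradiction. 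I expect this last implication — passing from trace-norm closeness of near-optimal density operators to closeness of their 1RDMs when $\hat\gamma_{ij}$ is unbounded — to require the most care, and the cleanest write-up is probably to quote the later attainment result and use the short two-line argument of the previous paragraph.
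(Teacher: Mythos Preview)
Your core approach is the paper's: strict convexity of $\Omega_0[\hat\rho]$ together with linearity of $\hat\rho\mapsto\gamma$. The paper compresses everything into a two-line chain of infima,
\[
\lambda F[\gamma_1]+(1-\lambda)F[\gamma_2]
=\inf_{\hat\rho_1\to\gamma_1}\inf_{\hat\rho_2\to\gamma_2}\bigl(\lambda\Omega_0[\hat\rho_1]+(1-\lambda)\Omega_0[\hat\rho_2]\bigr)
>\inf_{\hat\rho_1\to\gamma_1}\inf_{\hat\rho_2\to\gamma_2}\Omega_0\bigl[\lambda\hat\rho_1+(1-\lambda)\hat\rho_2\bigr]
\geq F[\gamma_\lambda],
\]
and thereby tacitly assumes that the outer infima are attained (otherwise the pointwise strict inequality only survives as $\geq$). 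That attainment is supplied later by Corollary~\ref{cor:minimumgrandpotential} and Theorem~\ref{thm:FinfToMinBosons}, but the paper does not pause over the issue here.

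Your write-up is more careful on exactly this point: you separate plain convexity (the $\varepsilon$-argument, which needs no minimiser) from strict convexity, and you propose an attainment-free alternative via the Pinsker-type lower bound on the entropy concavity defect. That bound is correct (it is $\lambda S[\hat\rho_0|\hat\rho_\lambda]+(1-\lambda)S[\hat\rho_1|\hat\rho_\lambda]\geq\tfrac{\lambda(1-\lambda)}{2}\norm{\hat\rho_0-\hat\rho_1}_1^2$), and in the fermionic case the contradiction with $\gamma_0\neq\gamma_1$ is immediate since $\hat\gamma_{ij}$ is bounded. In the bosonic case your proposed control via uniform $\Trace\{\hat N^{\nmax}\hat\rho_i^{(n)}\}$-bounds is the right idea but genuinely needs the extra work you anticipate; the paper simply sidesteps this by invoking the later existence results, so quoting them and using your short two-line argument is indeed the cleanest route and matches what the paper effectively does.
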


\begin{proof}
Let $\gamma_{\lambda} = \lambda\gamma_1 + (1-\lambda)\gamma_2$. 
Using the strict convexity of $\Omega_v[\hat{\rho}]$, we find
\begin{align*}
\lambda F[\gamma_1] + (1-\lambda)F[\gamma_2]
&= \lambda\;\inf_{\crampedclap{\hat{\rho}_1 \to \gamma_1}}\;\Omega_0[\hat{\rho}_1] +
(1-\lambda)\;\inf_{\crampedclap{\hat{\rho}_2 \to \gamma_2}}\;\Omega_0[\hat{\rho}_2] \\
&> \inf_{\hat{\rho}_1 \to \gamma_1}\;
\;\inf_{\crampedclap{\hat{\rho}_2 \to \gamma_2}}
\Omega_0[\lambda\hat{\rho}_1 + (1-\lambda)\hat{\rho}_2]
= \inf_{\crampedclap{\hat{\rho} \to \gamma_{\lambda}}}
\Omega_0[\hat{\rho}] = F[\gamma_{\lambda}] \, . \qedhere
\end{align*}
\end{proof}

Since $F \colon \NoneMat \to \Reals$ is convex, it will have all the nice properties discussed before in~\ref{sec:convex} on the interior of its domain $\ToneMat$~\eqref{defs:ToneMats}.
If we can additionally show that the infimum can be replaced by a minimum, then we have the very nice property of $\gamma \in \ToneMat$ are not only $N$-representable, but that there even exists a canonical density-matrix operator which generates them, $\hat{\rho}_v \mapsto \gamma_v$. So every $\gamma \in \ToneMat$ is even $v$-representable, which implies that the universal function is differentiable. This is the main result of this work and is made more precise in the following theorem.

\begin{thm}\label{thm:vRepr}
If the minimum in~\eqref{def:universalFunction} is attained, then
a) $\ToneMat = \VoneMat$ and b) the universal functional $F[\gamma]$ is differentiable on the interior of its domain $\ToneMat$.
\end{thm}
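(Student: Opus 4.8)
The plan is to prove both parts simultaneously by showing that, for any $\gamma_0\in\ToneMat$, the subdifferential $\du F(\gamma_0)$ coincides with the (negatives of the) potentials generating $\gamma_0$, and that this set is a singleton. First I would record the standing facts already available: under the hypothesis that the infimum in~\eqref{def:universalFunction} is attained, $F$ is finite on all of $\NoneMat$ (bounded below because the minimum is attained, bounded above by Corollary~\ref{cor:ColemanNonInteger} and the finite-dimensionality of the polytope), so after extending $F$ by $+\infty$ off $\NoneMat$ we have $\domain(F)=\NoneMat$ and $\interiorOf\domain(F)=\ToneMat$ by~\eqref{defs:ToneMats}. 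The inclusion $\VoneMat\subseteq\ToneMat$ is already established in the preceding discussion: a Gibbs state $\hat{\rho}_v$ is strictly positive, hence $\gamma_v$ has all natural occupations strictly positive (and, for fermions, strictly below one), i.e.\ $\gamma_v\in\ToneMat$. So the substance of part a) is the reverse inclusion $\ToneMat\subseteq\VoneMat$.

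The computational backbone is the elementary identity $\Omega_0[\hat{\rho}]+\trace\{v\,\gamma\}=\Omega_v[\hat{\rho}]$ valid for every $\hat{\rho}\to\gamma$ (since $\hat{V}_v=\sum_{ij}v_{ij}\hat{\gamma}_{ji}$, so $\Trace\{\hat{\rho}\hat{V}_v\}=\trace\{v\,\gamma\}$). Taking infima turns~\eqref{eq:OmegaVpart} into $\Omega[v]=\inf_{\gamma}\bigl(F[\gamma]+\trace\{v\,\gamma\}\bigr)=\inf_{\hat{\rho}\in\NdensMat}\Omega_v[\hat{\rho}]$. Now fix $\gamma_0\in\ToneMat$. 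By Theorem~\ref{thm:nonEmptySubDif} the subdifferential $\du F(\gamma_0)$ is nonempty; choose $-v\in\du F(\gamma_0)$, so $F[\gamma]+\trace\{v\,\gamma\}\ge F[\gamma_0]+\trace\{v\,\gamma_0\}$ for all $\gamma\in\NoneMat$, i.e.\ $\gamma_0$ minimises $\gamma\mapsto F[\gamma]+\trace\{v\,\gamma\}$. Picking $\hat{\rho}_0\to\gamma_0$ that attains $F[\gamma_0]$ (it exists by hypothesis), the identity gives $\Omega_v[\hat{\rho}_0]=F[\gamma_0]+\trace\{v\,\gamma_0\}=\Omega[v]=\inf_{\hat{\rho}}\Omega_v[\hat{\rho}]$, so $\hat{\rho}_0$ is a minimiser of $\Omega_v$ over $\NdensMat$. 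A minimiser is a stationary point, so the computation~\eqref{eq:interimSol}--\eqref{eq:equiRho} forces $\hat{\rho}_0=\e^{-\beta\hat{H}_v}/Z[v]$ with $0<Z[v]<\infty$; hence $v\in\FiniteNpot$, $\hat{\rho}_0=\hat{\rho}_v$ and $\gamma_0=\gamma_v\in\VoneMat$. This proves $\ToneMat\subseteq\VoneMat$ and with it part a).

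For part b) I would use Theorem~\ref{thm:uniqueSubDif}: a finite-dimensional convex function is differentiable at $\gamma_0$ precisely when $\du F(\gamma_0)$ is a singleton, and then the single subgradient is the gradient. The argument just given shows every $-v\in\du F(\gamma_0)$ is minus a potential in $\FiniteNpot$ with $\gamma_v=\gamma_0$. Conversely, if $v\in\FiniteNpot$ and $\gamma_v=\gamma_0$, then $\Omega[v]=\Omega_v[\hat{\rho}_v]=\Omega_0[\hat{\rho}_v]+\trace\{v\,\gamma_0\}\ge F[\gamma_0]+\trace\{v\,\gamma_0\}\ge\Omega[v]$, so equality holds throughout and $-v\in\du F(\gamma_0)$. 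Thus $\du F(\gamma_0)$ is in bijection with $\set{v\in\FiniteNpot}{\gamma_v=\gamma_0}$, which by Mermin's Theorem~\ref{thm:Mermin} has exactly one element. Hence $\du F(\gamma_0)=\{-v\}$, $F$ is differentiable at $\gamma_0$ with $\nabla F(\gamma_0)=-v$ (recovering~\eqref{eq:minimiser}), and since $\gamma_0\in\ToneMat$ was arbitrary, $F$ is differentiable throughout $\ToneMat$.

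The step I expect to be the real obstacle is the passage ``a minimiser of $\Omega_v$ over $\NdensMat$ must be the Gibbs state $\e^{-\beta\hat{H}_v}/Z[v]$ with $Z[v]<\infty$'' in the bosonic case. In finite dimensions (fermions) this is routine, but in the infinite-dimensional bosonic setting a minimiser $\hat{\rho}_0$ is a priori only known to lie in the domain of $\Omega_v$, which is a delicate set (cf.\ Theorem~\ref{thm:domainSbosonic}); one must justify that the stationarity equation~\eqref{eq:interimSol} really holds there and yields a Gibbs state with finite partition function. This is exactly where the hypotheses ensuring the minimum is attained --- $\nmax<\infty$ and $h^{(\nmax)}>0$, established in Sections~\ref{sec:minimumgrandpotential}--\ref{sec:bosonpartition} --- together with the convexity machinery of Section~\ref{sec:convex} must be brought to bear; everything else is the finite-dimensional convex-analysis bookkeeping above plus Mermin's theorem.
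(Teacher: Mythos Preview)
Your argument is essentially the paper's own proof, only spelled out in more detail; the paper moves directly from the subgradient inequality to ``$-h$ is a potential generating $\gamma$'' and then invokes Mermin's theorem for uniqueness of that potential, hence uniqueness of the subgradient. Your worry about the bosonic obstacle is overstated: once you have exhibited a minimiser $\hat{\rho}_0$ of $\Omega_v$ with finite value, Theorem~\ref{thm:bosonicOmegaMinExist} immediately gives $Z[v]<\infty$ and identifies $\hat{\rho}_0=\hat{\rho}_v$ by strict convexity (Corollary~\ref{cor:OmegaStrictConvex}), so there is no need to re-derive the Gibbs form through the formal stationarity computation~\eqref{eq:interimSol}--\eqref{eq:equiRho} in the infinite-dimensional setting.
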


\begin{proof}
As $F$ is convex, it has at least one subgradient, $h \in \du F[\gamma]$ for any $\gamma \in \ToneMat$. So $F[\tilde{\gamma}] + \braket{-h}{\tilde{\gamma}} \geq F[\gamma] + \braket{-h}{\gamma}$ for all $\tilde{\gamma} \in \NoneMat$. This implies that
\begin{equation*}
F[\gamma] + \braket{-h}{\gamma}
= \min_{\crampedclap{\tilde{\gamma} \in \NoneMat}}\,
\bigl(F[\tilde{\gamma}] + \braket{-h}{\tilde{\gamma}}\bigr)
= \Omega[-h] \, .
\end{equation*}
Hence, the negative of any subgradient, $-h$, yields a potential generating $\gamma$. However, from Mermin's theorem~\ref{thm:Mermin}, we have that only one such potential exist, so any $\gamma \in \ToneMat$ is uniquely $v$-representable. This implies also, that there is only one subgradient, \( \du F[\gamma_v] = \{-v\} \), so $F[\gamma]$ is differentiable for all $\gamma \in \ToneMat$ and equals minus the potential which yielded $\gamma$, i.e.\ $\nabla F[\gamma_v] = -v$.
\end{proof}

\begin{figure}[!t]
\begin{tikzpicture}[descrip/.style={draw=none, color=purple, text width=1.25cm, scale=2},
        text width=3cm, align=flush center, grow cyclic,
	edge from parent/.style={very thick, {Latex[length=10pt]}-, draw},
	every node/.style={thick, draw=black, rounded corners},
	level 1/.style={level distance=2cm, sibling angle=180},
	level 2/.style={level distance=2.5cm, sibling angle=90},
	level 3/.style={level distance=3.3cm, sibling angle=50}]
  \node {$F[\gamma]$ is differentiable on $\ToneMat$ (Thm~\ref{thm:vRepr}b)}
    child { node[name=nodeVunique] {$v \to \gamma$ is unique (Mermin, Thm~\ref{thm:Mermin})} }
    child { node {$\ToneMat = \VoneMat$ (Thm~\ref{thm:vRepr}a)}
      child[level distance=5.7cm] { node[name=nodeinfToMinFferm] {$\inf \to \min$ in $F_-[\gamma]$ (Thm~\ref{thm:extremeValue} \& Cor.~\ref{cor:minimumgrandpotential})}
        child[level distance=2.5cm, grow=90] { node[name=nodeFfermCont] {$\Omega^-_v[\hat{\rho}]$ is continuous (Thm~\ref{thm:convexLocLip})} [level distance=2.75cm]
          child[grow=60] { node[name=nodeNfinite, text width=2cm] {$\NdensMat_-$ is finite-dimensional} }
          child[grow=120] { node[text width=2.25cm] {$\Omega_v[\hat{\rho}]$ is convex (Cor.~\ref{cor:OmegaStrictConvex})} }
        }
      }
      child { node {existence of subgradient (Thm~\ref{thm:subdifferential})}
        child { node[name=nodeFconvex, text width=2cm] {$F[\gamma]$ convex (Thm~\ref{thm:Fconvex})} }
        child { node[name=nodeFbounded, text width=2.6cm] {$F[\gamma] < \infty$ on $\NoneMat$ (Thm~\ref{thm:Coleman} \& Cor.~\ref{cor:ColemanNonInteger})} }
      }
      child[level distance=5.7cm] { node {$\inf \to \min$ in $F_+[\gamma]$ (Thm~\ref{thm:FinfToMinBosons})}
        child[grow=-90, level distance=2cm] { node[text width=4cm] {$\Omega_0^+[\hat{\rho}]$ is weak-* lower semicontinuous (Cor.~\ref{cor:OmegaWeakStarlsc})}
          child[level distance=1.75cm] { node {$Z[0] < \infty$ (Thm~\ref{thm:finiteZ})}
            child { node {$h^{(\nmax)} > 0$ (Sec.~\ref{sec:potentials})} }
          }
        }
        child[grow=90, level distance=2.5cm] { node[text width=4cm] {$\NdensMat_+$ is compact in the weak-* topology (Banach--Alaoglu, Thm~\ref{thm:BanachAlaoglu})} }
      }
    };
  \path (nodeNfinite.south) -- +(0.75,0) coordinate(nodeNfiniteR);
  \draw[very thick, -{Latex[length=10pt]}] (nodeNfiniteR) |- (nodeinfToMinFferm);
  
  \node[below=0.2 of nodeVunique, descrip] (nodeGen) {general};
  \node[base right=3.4 of nodeGen, descrip] {fermions};
  \node[base left=3.3 of nodeGen, descrip] {bosons};
  
   \begin{pgfonlayer}{background}
     \clip[xshift=0.2cm] (-0.5\textwidth,10cm) rectangle ++(\textwidth,-13.9cm);
     \colorlet{bosonColor}{blue!25};
     \colorlet{generalColor}{orange!25};
     \colorlet{fermionColor}{green!50!black!25};
     
     \path (nodeFconvex.east) -- ++(0.6,1) coordinate(coordFconvex);
     \path (nodeFbounded.west) -- ++(-0.5,1) coordinate(coordFbound);
     \fill [bosonColor] (coordFbound) rectangle ++(-10,-20);
     \fill [generalColor] (coordFbound) -- (coordFconvex) -- ++(0,-20) -| cycle;
     \fill [fermionColor]  (coordFconvex) rectangle ++(10,-20);
     
     \shade [left color=bosonColor, right color=generalColor]
       ([xshift=-0.3cm]coordFbound) rectangle ++(0.8,-20);
     \shade [left color=generalColor, right color=fermionColor]
       ([xshift=-0.6cm]coordFconvex) rectangle ++(0.8,-20);
   \end{pgfonlayer}
\end{tikzpicture}
\caption{Overview how the most important theorems lead to the differentiability of the universal functional. Note the simplicity in the fermionic case due to the finite dimension of the fermionic Fock space $\Fock_-$ and hence of $\NdensMat_+$. To prove the existence of the minimum in the bosonic case additional assumptions are needed ($h^{(\nmax)} > 0$) and more powerful mathematics.}
\label{fig:overview}
\end{figure}
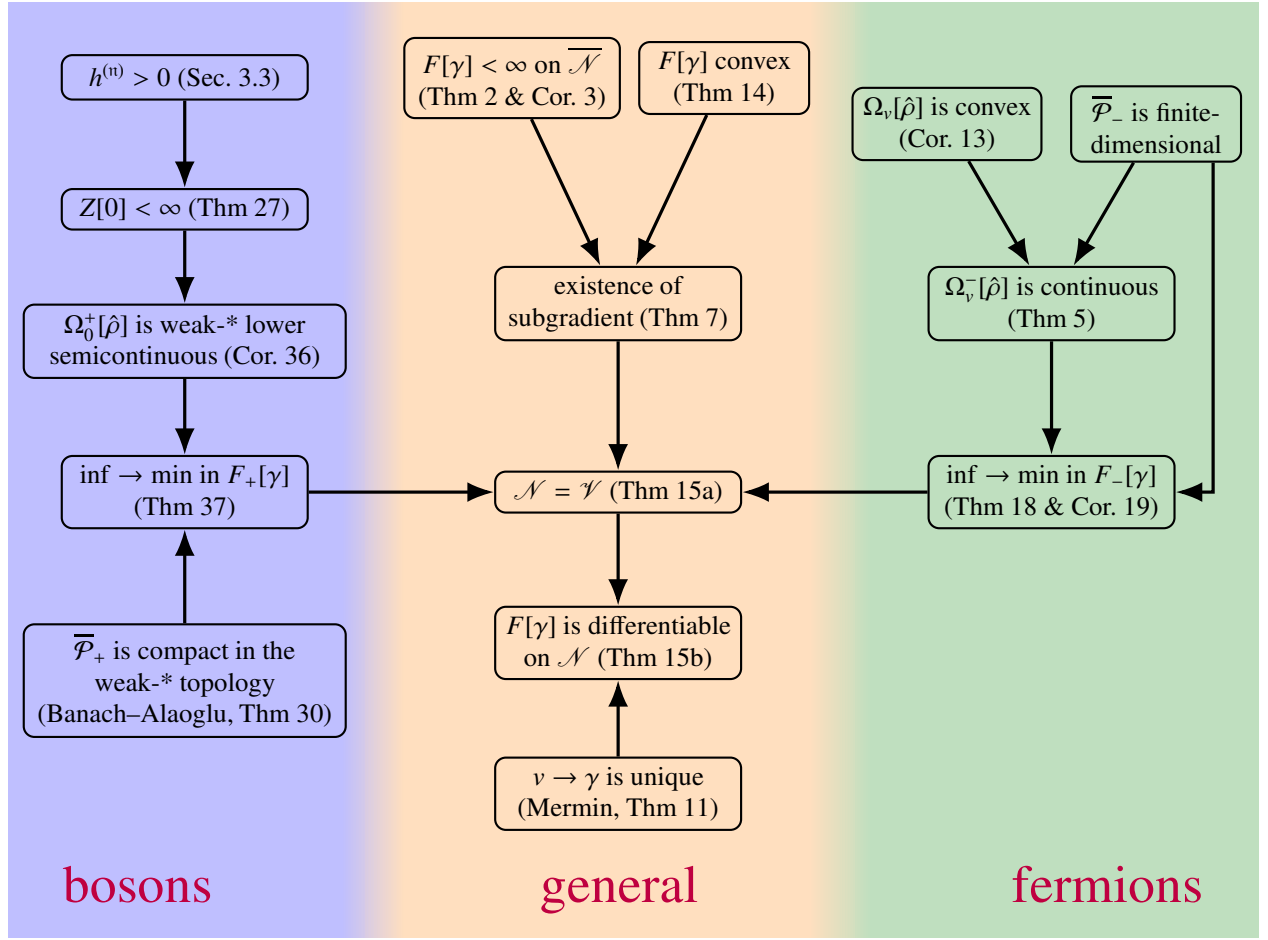

At this point it becomes useful to take a look at the scheme in Fig.~\ref{fig:overview} which presents an overview of the most important theorems and how they are connected. So far we have been dealing with the general part of the theory. From the scheme it is clear, that to make the theory fly, we still need to show that the infima can be replaced by minima. The simplicity of the scheme for the fermions stresses again that the fermionic case will be relatively straightforward, whereas the bosonic will be more complicated.

\section{The fermionic case}
\label{sec:fermions}

In this section we specialise to the fermionic case. In this finite-dimensional setting we will show that the infima can be replaced by minima. These results are needed to substantiate the previous section.
Since the fermionic Fock space is finite-dimensional, we immediately have from Theorem~\ref{thm:convexLocLip}

\begin{cor}\label{thm:fermScont}
The fermionic energy, entropy and grand potential are locally Lipschitz continuous.
\end{cor}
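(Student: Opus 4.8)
The plan is to obtain all three statements as immediate consequences of Theorem~\ref{thm:convexLocLip}, once the finite-dimensional ambient space and the relevant convexity facts are in place. First I would pin down the setting: since $\dim\Fock_- = 2^{N_b} < \infty$, the real vector space of self-adjoint operators on $\Fock_-$ (which here coincides with $\TraceClass_-$) is finite-dimensional, and $\NdensMat_-$ is a convex subset of the affine hyperplane $\set{\hat A = \hat A^{\dagger}}{\Trace\{\hat A\} = 1}$. Translating this hyperplane to a finite-dimensional vector space, its relative interior, which is exactly $\NTdensMat_-$, is the set on which Theorem~\ref{thm:convexLocLip} produces local Lipschitz continuity.

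Next I would treat the three functionals in turn. The energy $E_v[\hat\rho] = \Trace\{\hat\rho\,\hat H_v\}$ is linear in $\hat\rho$, hence both convex and concave and finite on the whole space; even without Theorem~\ref{thm:convexLocLip} one has $\abs{E_v[\hat\rho_1] - E_v[\hat\rho_2]} \leq \norm{\hat H_v}_{\infty}\,\norm{\hat\rho_1 - \hat\rho_2}_1$, so $E_v$ is globally, and in particular locally, Lipschitz. For the entropy I would use that $S$ was already shown to be (strictly) concave, so $-S$ is convex; moreover in the finite-dimensional fermionic case $S[\hat\rho] = -\sum_i w_i\ln w_i$ is a finite sum over the $2^{N_b}$ eigenvalues $w_i \in [0,1]$ (with the convention $0\ln 0 = 0$), whence $0 \leq S[\hat\rho] \leq \ln(2^{N_b})$ for every $\hat\rho \in \NdensMat_-$. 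Thus $-S$, extended by $+\infty$ off $\NdensMat_-$, is a convex function on a finite-dimensional vector space with $\domain(-S) = \NdensMat_-$, and Theorem~\ref{thm:convexLocLip} gives local Lipschitz continuity of $-S$, hence of $S$, on $\interiorOf\domain(-S) = \NTdensMat_-$. Finally $\Omega_v[\hat\rho] = E_v[\hat\rho] - \beta^{-1}S[\hat\rho]$ is a sum of a linear function and $-\beta^{-1}S$, hence convex with the same domain $\NdensMat_-$, so Theorem~\ref{thm:convexLocLip} again yields local Lipschitz continuity on $\NTdensMat_-$.

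There is essentially no genuine obstacle here, which is precisely why the statement is only a corollary; the two points needing a little attention are (i) recording that the relevant ambient space is the trace-one affine hyperplane, so that ``interior of the domain'' means the relative interior $\NTdensMat_-$ rather than an interior inside all of $\TraceClass_-$ (which would be empty), and (ii) invoking finite-dimensionality of $\Fock_-$ to guarantee that $S$ is finite everywhere on $\NdensMat_-$ — it is exactly this that fails in the bosonic case, cf.\ Theorem~\ref{thm:domainSbosonic}. Beyond these bookkeeping remarks the corollary is a direct application of Theorem~\ref{thm:convexLocLip}.
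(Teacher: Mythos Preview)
Your proposal is correct and follows exactly the approach the paper intends: the corollary is presented there as an immediate consequence of Theorem~\ref{thm:convexLocLip} via finite-dimensionality of $\Fock_-$, with no further argument given. Your write-up simply fills in the details the paper leaves implicit, including the relative-interior point and the finiteness of $S$ on $\NdensMat_-$.
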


For the energy, we even have a somewhat stronger continuity property.

\begin{prop}
The fermionic energy $E_v[\hat{\rho}]$ is (globally) Lipschitz continuous.
\end{prop}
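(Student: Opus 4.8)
The plan is to exploit that in the fermionic case the energy is an \emph{affine} — indeed the restriction of a \emph{linear} — functional of the density-matrix operator, so that the difference quotient collapses to a single trace, which can then be bounded by the Schatten-norm version of Hölder's inequality. Concretely, for any $\hat{\rho}_1,\hat{\rho}_2\in\NdensMat_-$ one writes
\begin{equation*}
\abs{E_v[\hat{\rho}_1]-E_v[\hat{\rho}_2]}
= \abs[\big]{\Trace\{(\hat{\rho}_1-\hat{\rho}_2)\hat{H}_v\}}
\leq \norm{\hat{\rho}_1-\hat{\rho}_2}_1\,\norm{\hat{H}_v}_{\infty} ,
\end{equation*}
where the last step is $\abs{\Trace\{\hat{A}\hat{B}\}}\leq\norm{\hat{A}}_1\norm{\hat{B}}_{\infty}$ applied with $\hat{A}=\hat{\rho}_1-\hat{\rho}_2\in\TraceClass_-$ and $\hat{B}=\hat{H}_v$. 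The trace-class membership of the difference follows since $\NdensMat_-\subset\TraceClass_-$ and $\TraceClass_-$ is a vector space.

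The only remaining point is to verify that $L\isDefinedAs\norm{\hat{H}_v}_{\infty}$ is a finite constant. This is immediate here: since $\oneH$ is $N_b$-dimensional the fermionic Fock space $\Fock_-$ is finite-dimensional ($2^{N_b}$), so every linear operator on it — in particular $\hat{H}_v$ — is bounded (Sec.~\ref{sec:Hamiltonians}). Hence $L<\infty$ is independent of $\hat{\rho}_1,\hat{\rho}_2$, which is precisely global Lipschitz continuity of $E_v$ with respect to the trace norm on $\NdensMat_-$. Equivalently, one may first extend $\hat{\rho}\mapsto\Trace\{\hat{\rho}\,\hat{H}_v\}$ from $\NdensMat_-$ to all of $\TraceClass_-$ by linearity; the estimate above then says this extension is a bounded linear functional on $(\TraceClass_-,\norm{\cdot}_1)$ with norm at most $\norm{\hat{H}_v}_{\infty}$, and a bounded linear map is automatically globally Lipschitz with that same constant.

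There is essentially no genuine obstacle in the fermionic setting; the substantive content is \emph{why} the statement is flagged as fermion-specific rather than holding in general. In the bosonic case $\hat{H}_v$ acts on the infinite-dimensional $\Fock_+$ and is unbounded (only densely defined, cf.\ Sec.~\ref{sec:Hamiltonians}), so $\norm{\hat{H}_v}_{\infty}=\infty$ and $\Trace\{\hat{\rho}\,\hat{H}_v\}$ need not even be finite for every $\hat{\rho}\in\NdensMat_+$; the argument above then fails outright and only a weaker continuity on a suitable subset survives, as discussed after Corollary~\ref{cor:OmegaStrictConvex}.
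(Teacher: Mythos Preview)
Your proof is correct and follows essentially the same route as the paper's: linearity of $E_v$ in $\hat{\rho}$, the Schatten--H\"older inequality $\abs{\Trace\{\hat{A}\hat{B}\}}\leq\norm{\hat{A}}_1\norm{\hat{B}}_{\infty}$, and finiteness of $\norm{\hat{H}_v}_{\infty}$ from the finite dimensionality of $\Fock_-$. If anything, your version is slightly cleaner, since the paper's displayed chain passes through the intermediate quantity $\norm{\hat{H}_v}_{\infty}\abs{\Trace\{\hat{\rho}_n-\hat{\rho}\}}$, which is identically zero and hence cannot serve as an upper bound; your direct application of the trace inequality avoids this slip.
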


\begin{proof}
Since the Hamiltonian acts on a finite Hilbert space, it has a largest singular value, $\norm{\hat{H}_v}_{\infty} < \infty$. So for any sequence of density-matrix operators $\hat{\rho}_n \to \hat{\rho}$ for $n \to \infty$, we have
\begin{equation*}
\abs{E_v[\hat{\rho}_n] - E_v[\hat{\rho}]}
= \abs[\big]{\Trace\{\hat{H}_v(\hat{\rho}_n - \hat{\rho})\}}
\leq \norm{\hat{H}_v}_{\infty}\abs[\big]{\Trace\{(\hat{\rho}_n - \hat{\rho})\}}
\leq \norm{\hat{H}_v}_{\infty}\norm{\hat{\rho}_n - \hat{\rho}}_1 \, .
\end{equation*}
So for $\norm{\hat{\rho}_n - \hat{\rho}}_1 \to 0$ we find that $\abs{E_v[\hat{\rho}_n] - E_v[\hat{\rho}]} \to 0$. Since the convergence is linear with respect to $\norm{\hat{\rho}_n - \hat{\rho}}_1$ with a global constant, $\norm{\hat{H}_v}_{\infty}$, the fermionic energy is even Lipschitz continuous with the Lipschitz constant $\norm{\hat{H}_v}_{\infty}$.
\end{proof}

Since the grand potential is strictly convex with respect to $\hat{\rho}$ and the universal functional with respect to $\gamma$, we know that the respective minimisers will be unique if they exist. The existence of the minimisers in the fermionic case is guaranteed by the following theorem, because $\NdensMat_-$ is compact (closed and bounded).

\begin{thm}[Extreme value]\label{thm:extremeValue}
Let $f \colon X \to \Reals$ be a continuous function and $M \subseteq X$ a compact (closed and bounded) set. Then $f$ must attain a minimum and maximum at least once.
\end{thm}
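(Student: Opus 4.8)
The plan is to prove the statement for the maximum and then deduce the minimum by applying the result to $-f$, which is again continuous. First I would establish that $f$ is bounded on $M$. Suppose it were not; then there is a sequence $(x_n)_n$ in $M$ with $\abs{f(x_n)} \to \infty$. Since $M$ is compact — and, in the finite-dimensional setting relevant here (e.g.\ $\NdensMat_-$ living inside a finite-dimensional space), hence sequentially compact by the Bolzano--Weierstrass property — there is a subsequence $x_{n_k} \to x^*$ with $x^* \in M$ because $M$ is closed. Continuity of $f$ then forces $f(x_{n_k}) \to f(x^*)$, a finite number, contradicting $\abs{f(x_{n_k})} \to \infty$. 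Hence $s \isDefinedAs \sup_{x \in M} f(x)$ is finite.

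Next I would show that the supremum is actually attained. By definition of the supremum there is a sequence $(y_n)_n$ in $M$ with $f(y_n) \to s$. Extracting a convergent subsequence $y_{n_k} \to y^* \in M$ and again invoking continuity of $f$ gives $f(y^*) = \lim_k f(y_{n_k}) = s$, so $y^*$ is a maximiser. Applying this to the continuous function $-f$ and using $\min_{x \in M} f(x) = -\max_{x \in M}\bigl(-f(x)\bigr)$ produces a minimiser as well.

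I do not expect any genuine obstacle: this is the classical Weierstrass extreme value theorem. The only point that deserves a word of care is the use of the equivalence between compactness and sequential compactness, which is immediate in the finite-dimensional vector spaces we work with. As an alternative route that avoids sequences entirely, one can argue directly that the continuous image $f(M) \subseteq \Reals$ is compact, hence closed and bounded, so that $\sup f(M)$ and $\inf f(M)$ are finite and belong to $f(M)$, which is exactly the assertion that the maximum and minimum are attained.
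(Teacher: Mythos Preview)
Your proof is correct and follows essentially the same route as the paper: both argue by sequential compactness, first ruling out unboundedness via a subsequence argument and then extracting a convergent subsequence from a maximising sequence to show the supremum is attained. Your write-up is in fact a little more explicit than the paper's in the second step, and your side remark that $f(M)\subseteq\Reals$ is compact gives a clean alternative the paper does not mention.
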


\begin{proof}
First we proof that a continuous function $f \colon X \to \Reals$ is bounded on a compact space $M \subseteq X$. We do this by reductio ad absurdum for the upper bound, so suppose that $f$ is not bounded above on $M$. Then for every natural number $n$ there exists an $x_n \in M$ such that $f(x_n) > n$, so we have a sequence $\{x_n\}$. Since $M$ is compact, this sequence has a convergent subsequence $\{x_{n_l}\}$ with a limit $x \in M$, cf.\ definition~\ref{def:compactSet}.
Because $f$ is continuous, $f(x_l)$ converges to $f(x) \in \Reals$. But $f(x_{n_l}) > n_l$ for all $l$ implies that $f(x_{n_l})$ diverges to $+\infty$, so a contradiction. Therefore the initial assumption that $f$ would be unbounded is incorrect. We can repeat the proof in a similar manner to proof that $f$ has a lower bound on $M$.

Since any converging sequence converges in a compact space, also any converging sequence $\{x_n\}$ to the maximiser (minimiser) converges to some $x \in M$. So the maximum (minimum) is attained for some $x \in M$.
\end{proof}

As $\NdensMat_-$ is finite-dimensional it is compact. Thus, from the extreme value theorem we immediately have the following corollary.

\begin{cor}
\label{cor:minimumgrandpotential}
The minimum in the fermionic grand potential $\Omega_-[v]$ and fermionic universal functional $F_-[\gamma]$ are achieved, so the infima in~\eqref{eq:OmegaVpart} and~\eqref{def:universalFunction} can be replaced by minima. Additionally, the replacing the infima by minima in~\eqref{eq:OmegaVmin} is justified in the fermionic case.
\end{cor}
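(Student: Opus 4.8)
The plan is to derive all three statements from the extreme value theorem (Theorem~\ref{thm:extremeValue}) together with the compactness of $\NdensMat_-$. Since the fermionic Fock space $\Fock_-$ is finite-dimensional, $\TraceClass_-$ is simply the finite-dimensional space of all operators on $\Fock_-$, and $\NdensMat_-$ is the subset cut out by $\hat{\rho} = \hat{\rho}^{\dagger}$, $\hat{\rho} \geq 0$ and $\Trace\{\hat{\rho}\} = 1$; these are closed conditions and $\norm{\hat{\rho}}_1 = \Trace\{\hat{\rho}\} = 1$ is bounded, so $\NdensMat_-$ is compact. For~\eqref{eq:OmegaVmin} this already suffices: by Corollary~\ref{thm:fermScont} the map $\hat{\rho} \mapsto \Omega_v[\hat{\rho}]$ is continuous on $\NdensMat_-$, so Theorem~\ref{thm:extremeValue} produces a minimiser $\hat{\rho}_v \in \NdensMat_-$. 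Because we showed in Section~\ref{sec:grandpotential} that this minimiser is strictly positive, it in fact lies in $\NTdensMat_-$, which also justifies restricting the minimisation to $\NTdensMat_-$ as written in~\eqref{eq:OmegaVmin}.

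For the universal functional~\eqref{def:universalFunction} I would fix $\gamma \in \NoneMat_-$ and consider the fibre $\set*{\hat{\rho} \in \NdensMat_-}{\gamma[\hat{\rho}] = \gamma}$. It is nonempty by Corollary~\ref{cor:ColemanNonInteger}, and it is the preimage of the point $\gamma$ under the linear (hence continuous) map $\hat{\rho} \mapsto \gamma[\hat{\rho}]$, so it is a closed subset of the compact set $\NdensMat_-$ and therefore compact. Since $\Omega_0[\hat{\rho}]$ is continuous on it (Corollary~\ref{thm:fermScont}), Theorem~\ref{thm:extremeValue} shows the infimum defining $F_-[\gamma]$ is attained. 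For~\eqref{eq:OmegaVpart} the cleanest route is to observe that $\trace\{v\,\gamma[\hat{\rho}]\} = \Trace\{\hat{\rho}\,\hat{V}_v\}$, so that $\Omega_0[\hat{\rho}] + \trace\{v\,\gamma[\hat{\rho}]\} = \Omega_v[\hat{\rho}]$ and the nested infimum collapses,
\[
  \Omega_-[v] = \inf_{\gamma \in \NoneMat_-}\Bigl(\inf_{\hat{\rho} \to \gamma}\Omega_0[\hat{\rho}] + \trace\{v\,\gamma\}\Bigr) = \inf_{\hat{\rho} \in \NdensMat_-}\Omega_v[\hat{\rho}] = \Omega[v],
\]
which is attained by the first step. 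Concretely, if $\hat{\rho}_v$ minimises $\Omega_v$ over $\NdensMat_-$ and $\gamma_v \isDefinedAs \gamma[\hat{\rho}_v]$, then $F_-[\gamma_v] + \trace\{v\,\gamma_v\} \leq \Omega_0[\hat{\rho}_v] + \trace\{v\,\gamma_v\} = \Omega_v[\hat{\rho}_v] = \Omega[v]$, while the reverse inequality is automatic since $\gamma_v$ is one of the competitors in the outer infimum; hence $\gamma_v$ realises the minimum in~\eqref{eq:OmegaVpart}.

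The argument is essentially bookkeeping on top of the earlier results, so I do not expect a serious obstacle. The two places that need a word of care are: that the entropy, and thus $\Omega_v[\hat{\rho}]$, is continuous all the way to the boundary of $\NdensMat_-$ where occupation numbers reach $0$ or $1$ — this is exactly what Corollary~\ref{thm:fermScont} supplies, ultimately because $s(x) = -x\ln(x)$ extends continuously to $x = 0$ and the eigenvalues of $\hat{\rho}$ depend continuously on $\hat{\rho}$ in finite dimensions — and the closedness of the fibre $\set*{\hat{\rho} \in \NdensMat_-}{\gamma[\hat{\rho}] = \gamma}$, which is immediate from continuity of $\hat{\rho} \mapsto \gamma[\hat{\rho}]$. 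Notably, this route avoids any claim of continuity of $F_-$ up to the boundary of $\NoneMat_-$, which would be more delicate for a merely convex function.
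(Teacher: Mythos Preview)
Your proof is correct and follows essentially the same route as the paper: compactness of $\NdensMat_-$ in finite dimensions together with the extreme value theorem and the continuity supplied by Corollary~\ref{thm:fermScont}. You supply considerably more detail than the paper's one-line justification, in particular the explicit argument that the fibre $\set{\hat{\rho} \in \NdensMat_-}{\gamma[\hat{\rho}]=\gamma}$ is closed (hence compact) and that the outer infimum in~\eqref{eq:OmegaVpart} is attained at $\gamma_v$, both of which the paper leaves entirely implicit.
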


\section{The bosonic case}
\label{sec:bosons}

The bosonic case is more complicated, because we need to deal with an infinite-dimensional Fock space. However, the infinity is only caused by an unbounded number of particles, so we can keep everything relatively well under control. This section is split in three parts. First we will show that the bosonic grand potential has a minimum if and only if $Z[v] < \infty$. In the second part we will show that if the Hamiltonian has a maximum-order interaction $0 < \nmax < \infty$ which is strictly positive definite, that $Z[v] < \infty$ and $\av{N^k}_v = \Trace\{\hat{\rho}_v\hat{N}^k\} < \infty$. The case $k=1$ is important for our theoretical setting, as it guarantees that all $v \in \FiniteNpot$ yield a proper 1RDM with finite entries, $\gamma \in \NoneMat$.
The last part of this section is attributed to showing that the minimum in the bosonic universal functional is achieved.

\subsection{When does the minimum of the grand potential exist?}
\label{sec:minimumgrandpotential}

Let us again consider the entropy first.
Though we have shown concavity of the bosonic entropy, it does not imply continuity. Because the domain $\NdensMat_+$ is infinite-dimensional, theorem~\ref{thm:convexLocLip} does not apply anymore. As a matter of fact, the bosonic entropy is even not continuous (in trace norm), because it is unbounded in every neighbourhood \citep{Wehrl1978}.

\begin{thm}
\label{thm:domainSbosonic}
Let $\rho \in \NdensMat_+$ and $\epsilon > 0$. Then there always exist another density-matrix operator $\hat{\rho}'$ such that $\norm{\hat{\rho} - \hat{\rho'}} < \epsilon$ and $S[\hat{\rho}'] = \infty$.
\end{thm}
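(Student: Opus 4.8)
The plan is to perturb $\hat{\rho}$ by a small-weight "fat tail" density-matrix operator supported on high-particle-number states, chosen so that the perturbed operator is genuinely a density-matrix operator (self-adjoint, positive, unit trace), is within $\epsilon$ in trace norm, yet has infinite entropy. The key point is that infinite entropy only requires $\sum_k w_k' \ln(1/w_k') = \infty$, and this can be arranged with a total added weight as small as we like, because the divergence of $\sum 1/(n (\ln n)^2)$-type series is extremely slow and carries negligible total mass while still having divergent $-\sum w \ln w$.

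First I would fix $\delta = \epsilon/2$ and write $\hat{\rho}' = (1-\delta)\hat{\rho} + \delta\,\hat{\sigma}$, where $\hat{\sigma} \in \NdensMat_+$ is a diagonal density-matrix operator in the number basis that I will construct to have $S[\hat{\sigma}] = \infty$. Then $\norm{\hat{\rho} - \hat{\rho}'}_1 = \delta\norm{\hat{\rho} - \hat{\sigma}}_1 \leq 2\delta = \epsilon$ by the triangle inequality and unit-trace normalisation. Next I would observe that entropy is concave (proved already in the excerpt), so $S[\hat{\rho}'] \geq (1-\delta)S[\hat{\rho}] + \delta\,S[\hat{\sigma}] = \infty$ as soon as $S[\hat{\sigma}] = \infty$, regardless of whether $S[\hat{\rho}]$ is finite; if $S[\hat{\rho}] = \infty$ to begin with we may simply take $\hat{\rho}' = \hat{\rho}$, so assume $S[\hat{\rho}] < \infty$ and the concavity bound applies directly.

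The construction of $\hat{\sigma}$ is the crux. Since $\Fock_+$ is infinite-dimensional (indeed, already for a single bosonic mode there are number eigenstates $\ket{n}$ for all $n$), I pick an orthonormal sequence $\{\ket{\chi_n}\}_{n\geq 2}$ of Fock states and set $\hat{\sigma} = \sum_{n\geq 2} \sigma_n \ket{\chi_n}\bra{\chi_n}$ with weights $\sigma_n \isDefinedAs c \big/ \bigl(n (\ln n)^2\bigr)$, where $c$ is the normalising constant making $\sum_{n\geq2}\sigma_n = 1$; this converges since $\sum 1/(n(\ln n)^2) < \infty$. Then $\hat{\sigma}$ is manifestly self-adjoint, positive semidefinite and of unit trace, hence in $\NdensMat_+$. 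The entropy is $S[\hat{\sigma}] = -\sum_{n\geq2}\sigma_n\ln\sigma_n$, and since $-\ln\sigma_n = \ln n + 2\ln\ln n - \ln c \geq \ln n$ for $n$ large, the series is bounded below by (a tail of) $\sum c\ln n \big/ \bigl(n(\ln n)^2\bigr) = c\sum 1\big/\bigl(n\ln n\bigr)$, which diverges. Hence $S[\hat{\sigma}] = \infty$.

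The main obstacle is essentially bookkeeping rather than conceptual: one must make sure the perturbed operator really lies in $\NdensMat_+$ (not just formally), that the trace-norm estimate is clean, and that the entropy of $\hat{\sigma}$ — a sum over an orthonormal eigenbasis — is correctly identified with $-\sum\sigma_n\ln\sigma_n$ and shown to diverge by comparison with the Bertrand series $\sum 1/(n\ln n)$. The subtle point worth a sentence is that no issue of the operators $\hat{H}_v\hat{\rho}'$ or domain of $\Omega_v$ arises here: the claim is purely about the entropy functional on all of $\NdensMat_+$, so one does not need $\hat{\rho}'$ to lie in the (smaller) domain of $\Omega_v$. This is precisely the phenomenon the surrounding text flags — that $\NdensMat_+$ contains, arbitrarily close to any point, operators on which $S$ (and a fortiori $\Omega_v$) is infinite — and it is why continuity arguments valid in the fermionic finite-dimensional case must be abandoned for bosons.
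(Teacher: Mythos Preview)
Your argument is correct and takes a genuinely different route from the paper. The paper works directly in the eigenbasis of $\hat{\rho}$: it keeps the first $l$ eigenvalues intact (choosing $l$ large enough that the discarded tail has total mass $<\epsilon/2$) and then \emph{replaces} the tail weights by $w'_k = A/\bigl(k(\ln k)^2\bigr)$, verifying by hand that the new weights are summable while their entropy diverges via the $\sum 1/(k\ln k)$ comparison. Your construction instead forms the convex combination $\hat{\rho}' = (1-\delta)\hat{\rho} + \delta\hat{\sigma}$ with a fixed infinite-entropy state $\hat{\sigma}$ built from the same Bertrand-type weights, and then invokes the concavity of $S$ (already established in the paper) to force $S[\hat{\rho}']=\infty$. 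Your approach is more modular and arguably cleaner, since it isolates the analytic content (existence of a single $\hat{\sigma}\in\NdensMat_+$ with $S[\hat{\sigma}]=\infty$) from the approximation step and reuses concavity as a black box; the paper's version is more self-contained in that it does not appeal to any prior result, and it shows slightly more, namely that one can take $\hat{\rho}'$ to agree with $\hat{\rho}$ on an arbitrarily large finite-rank piece. One cosmetic point: with $\delta=\epsilon/2$ you get $\norm{\hat{\rho}-\hat{\rho}'}_1\leq 2\delta=\epsilon$, not the strict inequality in the statement; taking $\delta<\epsilon/2$ fixes this immediately.
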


\begin{proof}
The proof goes by construction. If $\hat{\rho}$ has an infinite number of weights $w_k \neq 0$, we can always find a sufficiently large index $l$, such that $L = 1 - \sum_{k=1}^lw_k < \epsilon/2$. So if we set $w_1' = w_1, \dotsc, w_l' = w_l$, then we already have that $\norm{\hat{\rho} - \hat{\rho}'} < \epsilon$. Now set the remaining weights as
\begin{equation}\label{eq:weightChoice}
w'_k = \frac{A}{k(\ln k)^2} \quad \text{for $k > l$},
\end{equation}
where $A > 0$ is a normalisation constant such that $\norm{\hat{\rho}'} = 1$. That such a constant exists, i.e.\ that $\norm{\hat{\rho}'} < \infty$ follows from
\begin{equation*}
\sum_{\crampedclap{k=r}}^{\infty}\;\frac{A}{k(\ln k)^2}
\leq \frac{A}{r(\ln r)^2} + \binteg{k}{r}{\infty}\frac{A}{k(\ln k)^2}
= \frac{A}{r(\ln r)^2} + \binteg{u}{\ln r}{\infty}\frac{A}{u^2}
= \frac{A}{r(\ln r)^2} + \frac{A}{\ln r} < \infty \, ,
\end{equation*}
for any $r > 1$.
On the other hand we can partition the contribution to the entropy as
\begin{equation*}
\sum_{k=r}^{\infty}\frac{-A}{k(\ln k)^2}\ln\biggl(\frac{A}{k(\ln k)^2}\biggr)
= \sum_{k=r}^{\infty}\left(\frac{-A\ln(A)}{k(\ln k)^2} +
\frac{A}{k\ln k} + \frac{2A\ln(\ln k)}{k (\ln k)^2}\right) \, .
\end{equation*}
We have already seen that the first sum converges. The third sum also converges, since
\begin{equation*}
\sum_{k=r}^{\infty}\frac{\ln(\ln k)}{k (\ln k)^2}
\leq C + \binteg{u}{\ln r}{\infty}\frac{\ln u }{u^2}
= C + \binteg{u}{\ln r}{\infty}\frac{\ln u }{u^2}
= C + \binteg{x}{\ln(\ln r)}{\infty} x \e^{-x}
= C + \frac{1 + \ln(\ln r)}{\ln r} < \infty \, ,
\end{equation*}
for $r > \e$ and $C = \ln(\ln r) / \bigl(r (\ln r)^2\bigr)$. However, the second sum diverges, because
\begin{equation*}
\sum_{k=r}^{\infty}\frac{A}{k\ln k} \geq \binteg{k}{r}{\infty}\frac{A}{k\ln k}
= \binteg{u}{\ln r}{\infty}\frac{A}{u} = \infty .
\end{equation*}
Hence, we have $S[\hat{\rho}'] = \infty$.

In the case $\hat{\rho}$ has $m$ non-zero weights, i.e.\ $w_k = 0$ for $k > m$, we can set $w_1' = w_1, \dotsc, w_{m-1}' = w_{m-1}$ and $w'_m = \max(0,w_m - \epsilon/2)$, such that again $\norm{\hat{\rho} - \hat{\rho}'} < \epsilon$. By choosing the other weights as before~\eqref{eq:weightChoice}, we can now repeat the same argument.
\end{proof}

Since the bosonic entropy can jump to $+\infty$ for an arbitrary small variation in the density-matrix operator, it cannot be continuous. However, we actually do not need to use any continuity property to show when the infimum in~\eqref{eq:OmegaVpart} and~\eqref{eq:OmegaVmin} can be replaced by a minimum. Instead, we will use a different route via the relative entropy. To this end, consider Klein's inequality.

\begin{thm}[Klein's inequality]
\label{thm:Klein}
Let $f$ be a convex (concave) function and $A,B \in \TraceClass$. Then
\begin{equation*}
\Trace\{f(B) - f(A)\} \substack{\geq \\ (\leq)} \Trace\{(B - A)f'(A)\} \, .
\end{equation*}
\end{thm}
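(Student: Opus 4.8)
The plan is to reduce the operator statement to a scalar inequality via the spectral theorem. First I would use self-adjointness of $A$ and $B$ (both are trace-class, hence compact, hence admit a complete orthonormal eigenbasis) to write spectral decompositions $A = \sum_i a_i \ket{\phi_i}\bra{\phi_i}$ and $B = \sum_j b_j \ket{\psi_j}\bra{\psi_j}$. The key scalar fact, coming straight from convexity of $f$ together with the existence of the (one-sided) derivative $f'(a)$ at any interior point of the domain of $f$ — which is guaranteed by the convex-analysis results recalled above, in particular Theorem~\ref{thm:existenceDirectionalDerivative} — is
\begin{equation*}
f(b) - f(a) \geq f'(a)\,(b - a) \qquad \text{for all } a,b \, .
\end{equation*}
This is nothing but the statement that the graph of a convex function lies above each of its tangent (support) lines, i.e.\ that $f'(a)$ is a subgradient at $a$ in the sense of Definition~\ref{def:convexSubGrad}.

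Next I would take the trace of $f(B) - f(A)$ in the eigenbasis $\{\ket{\psi_j}\}$ of $B$, insert the resolution of the identity $\sum_i \ket{\phi_i}\bra{\phi_i}$ of $A$, and compute, writing $p_{ij} \isDefinedAs \abs{\braket{\psi_j}{\phi_i}}^2 \geq 0$ (note $\sum_i p_{ij} = 1 = \sum_j p_{ij}$ since both are orthonormal bases),
\begin{equation*}
\Trace\{f(B) - f(A)\}
= \sum_j \brakket{\psi_j}{f(B)}{\psi_j} - \sum_i \brakket{\phi_i}{f(A)}{\phi_i}
= \sum_{i,j} p_{ij}\bigl(f(b_j) - f(a_i)\bigr) \, .
\end{equation*}
Applying the scalar tangent-line inequality to each term $f(b_j) - f(a_i) \geq f'(a_i)(b_j - a_i)$ and resumming gives
\begin{equation*}
\Trace\{f(B) - f(A)\} \geq \sum_{i,j} p_{ij}\, f'(a_i)(b_j - a_i)
= \sum_i \brakket{\phi_i}{f'(A)(B - A)}{\phi_i}
= \Trace\{(B - A)f'(A)\} \, ,
\end{equation*}
where in the second step I used $f'(A) = \sum_i f'(a_i)\ket{\phi_i}\bra{\phi_i}$ and $\sum_j p_{ij} b_j = \brakket{\phi_i}{B}{\phi_i}$, $\sum_j p_{ij} a_i = a_i$. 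The concave case follows by applying the convex case to $-f$.

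The main obstacle is not the algebra but the bookkeeping around infinite dimensions and unbounded spectra in the bosonic case: one must make sure the double series $\sum_{i,j} p_{ij}(f(b_j) - f(a_i))$ is absolutely convergent (or at least that the rearrangements above are legitimate), that $f(A)$, $f(B)$, and $f'(A)(B-A)$ are all genuinely trace-class so that every trace written down is finite, and that the eigenvalues $a_i, b_j$ lie in the interior of $\domain(f)$ so that $f'(a_i)$ exists. For the applications in this paper $f$ is (a restriction of) $s(x) = -x\ln x$ or a closely related entropy-type function and the relevant operators are density-matrix operators with eigenvalues in $(0,1]$, so these domain and summability conditions are met; in the general statement they should be read as standing hypotheses (e.g.\ spectra contained in $\interiorOf\domain(f)$ and the relevant operators being trace-class), and I would state them explicitly rather than leave them implicit. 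A cleaner alternative that sidesteps some of the convergence fuss is to first prove the finite-dimensional version by the argument above, then pass to the limit using that $f$ is continuous on the interior of its domain (Theorem~\ref{thm:convexLocLip}) together with monotone/dominated convergence for the traces; but the direct spectral argument is more transparent and is the one I would present.
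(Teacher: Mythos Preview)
Your proof is correct and follows the classical double-eigenbasis route to Klein's inequality. The paper's own proof is closely related but organised differently: it works only in the eigenbasis $\{\phi_i\}$ of $A$, first invokes Jensen's inequality (Proposition~\ref{thm:Jensen}) to estimate $\brakket{\phi_i}{f(B)}{\phi_i} \geq f\bigl(\brakket{\phi_i}{B}{\phi_i}\bigr)$, and only then applies the scalar tangent-line inequality once per index $i$ with $y = \brakket{\phi_i}{B}{\phi_i}$ and $x = a_i$. By contrast you expand in both eigenbases, obtain the double sum $\sum_{i,j} p_{ij}\bigl(f(b_j) - f(a_i)\bigr)$, and apply the tangent-line inequality to every pair $(i,j)$; Jensen never enters. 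Your route is arguably more elementary since it uses a single scalar inequality, while the paper's single-sum version keeps the bookkeeping lighter (no doubly-indexed series to rearrange), which is precisely the convergence concern you flag in your final paragraph. Both arguments are standard and your caveats about domain and trace-class hypotheses are well taken.
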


A proof for Klein's inequality is given in Appendix~\ref{ap:KleinProof}.
If we take $f(x) = -x\ln(x)$, Klein's inequality yields
\begin{equation}\label{eq:Klein}
\Trace\bigl\{A\bigl(\ln(A) -\ln(B)\bigr)\bigr\} \geq \Trace\{A - B\} \, .
\end{equation}
For density-matrix operators, the right hand side vanishes, so the expression on the left is positive. The left hand side is called the relative entropy, which is defined for density-matrix operators $\hat{\rho}$ and $\hat{\sigma}$ as
\begin{equation}\label{def:relEntropy}
S[\hat{\rho} | \hat{\sigma}]
\isDefinedAs \Trace\bigl\{\hat{\rho}\bigl(\ln(\hat{\rho}) - \ln(\hat{\sigma})\bigr)\bigr\}  \, .
\end{equation}
A necessary condition for $S[\hat{\rho} | \hat{\sigma}] < \infty$ is that $\ker(\hat{\sigma}) \subseteq \ker(\hat{\rho})$ \citep{Lindblad1973}.

The relative entropy is particularly useful in our investigation of the grand potential. If we take the Gibbs state
$\hat{\rho}_v \isDefinedAs \e^{-\beta \hat{H}_v} / \Trace\bigl\{\e^{-\beta \hat{H}_v}\bigr\}$
for $\hat{\sigma}$ (thus $\ker(\hat{\sigma}) = \ker(\hat{\rho}_v) = \emptyset$), we recover the grand potential
\begin{equation}
S[\hat{\rho}| \hat{\rho}_v]
= \beta\bigl(\Omega_v[\hat{\rho}] -\Omega_v[\hat{\rho}_v]\bigr) \, .
\end{equation}
Because of Klein's inequality~\eqref{eq:Klein}, we have $S[\hat{\rho} | \hat{\sigma}] \geq 0$, which implies immediately that
\begin{equation}\label{eq:OmegaInequality}
\Omega_v[\hat{\rho}] \geq \Omega_v[\hat{\rho}_v] = -\beta^{-1}\ln\bigl(Z[v]\bigr) = \Omega[v] \, .
\end{equation}
Via Klein's inequality we therefore find that the Gibbs state $\hat{\rho}_v$ is a minimum.

\begin{thm}\label{thm:bosonicOmegaMinExist}
The grand potential has a minimum if and only if $Z[v] < \infty$. If $Z[v] < \infty$, then $\hat{\rho}_v$ is the unique minimiser.
\end{thm}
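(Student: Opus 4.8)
The plan is to split into the two directions of the "if and only if" and then establish uniqueness. The easy direction is: if $Z[v] < \infty$, then the Gibbs state $\hat{\rho}_v = \e^{-\beta\hat{H}_v}/Z[v]$ is a legitimate element of $\NTdensMat_+$ (it is self-adjoint, strictly positive, trace one), and the chain of inequalities~\eqref{eq:OmegaInequality}, obtained from Klein's inequality via $S[\hat{\rho}\,|\,\hat{\rho}_v] \geq 0$, shows $\Omega_v[\hat{\rho}] \geq \Omega_v[\hat{\rho}_v]$ for every $\hat{\rho} \in \NdensMat_+$. Hence the minimum exists and is attained at $\hat{\rho}_v$. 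I would emphasize that $S[\hat{\rho}\,|\,\hat{\rho}_v]$ is well-defined in $[0,\infty]$ for any $\hat{\rho}$ because $\ker(\hat{\rho}_v) = \{0\}$, so the necessary condition $\ker(\hat{\rho}_v) \subseteq \ker(\hat{\rho})$ is trivially met and no issue of the form $\infty - \infty$ arises on the left-hand side of~\eqref{eq:Klein}.

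For the converse I would argue by contraposition: suppose $Z[v] = \infty$ and show that $\Omega_v[\hat{\rho}]$ has no minimum, in fact $\inf_{\hat{\rho}} \Omega_v[\hat{\rho}] = -\infty$. Since $\hat{H}_v$ is self-adjoint and bounded below, write its spectral decomposition with eigenvalues $E_1 \leq E_2 \leq \cdots$ (with multiplicity) and eigenvectors $\ket{\Psi_k}$. The divergence of $Z[v] = \sum_k \e^{-\beta E_k}$ forces, via the argument already used in Sec.~\ref{sec:grandCanonIntro}, the existence of an infinite subsequence of eigenstates whose energies are bounded above by some finite $L$. I would then build a family of trial density-matrix operators supported on finitely many such states, e.g.\ $\hat{\rho}^{(n)} = \frac{1}{n}\sum_{j=1}^{n}\ket{\Psi_{k_j}}\bra{\Psi_{k_j}}$ where $E_{k_j} \leq L$. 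For these, $E_v[\hat{\rho}^{(n)}] = \frac1n\sum_j E_{k_j} \leq L$ while $S[\hat{\rho}^{(n)}] = \ln n \to \infty$, so $\Omega_v[\hat{\rho}^{(n)}] = E_v[\hat{\rho}^{(n)}] - \beta^{-1}S[\hat{\rho}^{(n)}] \leq L - \beta^{-1}\ln n \to -\infty$. Thus no minimiser can exist.

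Uniqueness then follows immediately from strict convexity: by Corollary~\ref{cor:OmegaStrictConvex} the grand potential $\Omega_v[\hat{\rho}]$ is strictly convex on the convex set $\NdensMat_+$, and by Theorem~\ref{thm:unimodal} a strictly convex function has at most one minimiser on a convex set. Since we have exhibited $\hat{\rho}_v$ as a minimiser, it is the unique one. Alternatively, one can note uniqueness is already encoded in the equality case of Klein's inequality, since $S[\hat{\rho}\,|\,\hat{\rho}_v] = 0$ forces $\hat{\rho} = \hat{\rho}_v$.

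The main obstacle I anticipate is the converse direction: one must be slightly careful that the trial states $\hat{\rho}^{(n)}$ genuinely lie in $\NdensMat_+$ and that $E_v[\hat{\rho}^{(n)}]$ is actually finite — this is where boundedness from below of $\hat{H}_v$ (guaranteed for $v \in \FiniteNpot$ by the discussion in Sec.~\ref{sec:potentials}) is used to control the sum of eigenvalues from below, and where one uses that each $\ket{\Psi_{k_j}}$ is a genuine eigenvector (hence in the domain of $\hat{H}_v$). One should also phrase the extraction of the bounded-energy subsequence carefully when $\hat{H}_v$ has continuous spectrum, but since $v \in \FiniteNpot$ is exactly the hypothesis under which the Gibbs state is well-behaved, in practice I would just invoke the spectral/accumulation-point reasoning already spelled out in Sec.~\ref{sec:grandCanonIntro} to produce the infinite family of low-energy (approximate) eigenstates.
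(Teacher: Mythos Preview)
Your forward direction and uniqueness argument are correct and essentially identical to the paper's: Klein's inequality gives $S[\hat{\rho}\,|\,\hat{\rho}_v]\geq 0$, hence~\eqref{eq:OmegaInequality}, and strict convexity (Corollary~\ref{cor:OmegaStrictConvex}) plus Theorem~\ref{thm:unimodal} gives uniqueness.

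There is, however, a genuine gap in your converse direction. The implication ``$Z[v]=\sum_k\e^{-\beta E_k}=\infty$ forces an infinite subsequence of eigenvalues bounded above by some $L$'' is \emph{false}: the argument in Sec.~\ref{sec:grandCanonIntro} only proves that an accumulation point implies $Z=\infty$, not the converse. A concrete counterexample is $E_k=\beta^{-1}\ln k$ for $k\geq 1$: then $\e^{-\beta E_k}=1/k$ and $Z=\sum_k 1/k=\infty$, yet for every $L$ only finitely many eigenvalues satisfy $E_k\leq L$. With your equi-ensemble $\hat{\rho}^{(n)}=\tfrac{1}{n}\sum_{k=1}^{n}\ket{\Psi_k}\bra{\Psi_k}$ one finds $E_v[\hat{\rho}^{(n)}]\approx\beta^{-1}(\ln n-1)$ and $S[\hat{\rho}^{(n)}]=\ln n$, so $\Omega_v[\hat{\rho}^{(n)}]\to -\beta^{-1}$, which is bounded; your trial sequence fails to witness $\inf\Omega_v=-\infty$.

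The remedy, which is what the paper does, is to use \emph{Boltzmann-weighted} truncations rather than equi-ensembles. The paper takes $\hat{\rho}_n=\hat{P}_{\leq n}\e^{-\beta\hat{H}_v}/Z_n$ with $Z_n=\Trace\{\hat{P}_{\leq n}\e^{-\beta\hat{H}_v}\}$ and $\hat{P}_{\leq n}$ the (finite-rank) projector onto states with at most $n$ particles; equivalently one can truncate in the eigenbasis, $\hat{\sigma}_n=Z_n^{-1}\sum_{k=1}^{n}\e^{-\beta E_k}\ket{\Psi_k}\bra{\Psi_k}$. Either way a direct computation gives $\Omega_v[\hat{\rho}_n]=-\beta^{-1}\ln Z_n$, and since $Z_n\nearrow Z[v]=\infty$ this tends to $-\infty$ with no assumption on the growth rate of the spectrum. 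Replacing your equi-ensemble by this truncated Gibbs state closes the gap.
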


\begin{proof}
Obviously, when $Z[v] < \infty$, $\hat{\rho}_v \in \NdensMat \subset \TraceClass$ and~\eqref{eq:OmegaInequality} shows that it yields a minimum. Since by corollary~\ref{cor:OmegaStrictConvex}, $\Omega_v[\hat{\rho}]$ is strictly convex, the minimum is unique (theorem~\ref{thm:unimodal}).

If $Z[v] = \infty$, then we can construct a sequence of density-matrix operators \(\hat{\rho}_n = \hat{P}_{\leq n}\e^{-\beta\hat{H}} / Z_n \in \NdensMat\), where \(Z_n = \Trace\bigl\{\hat{P}_{\leq n}\e^{-\beta\hat{H}}\bigr\}\) and $\hat{P}_{\leq n}$ are finite-dimensional projectors on the part of the Fock space with $n$ or less particles. In the limit $n \to \infty$ we have $Z_n \nearrow Z[v] = \infty$. Hence, we make $\Omega[\hat{\rho}_n]$ arbitrarily low by taking $n$ large enough, i.e.\ the grand potential is unbounded from below.
\end{proof}

Theorem~\ref{thm:bosonicOmegaMinExist} is the first main result of this section on the bosonic grand potential. It tells us that bosonic 1RDM functional theory at finite temperature is only sensible if we choose a potential (Hamiltonian) such that $Z[v] < \infty$.

\subsection{Boundedness of the partition function}
\label{sec:bosonpartition}

In this part we will show that if the Hamiltonian is bounded from below and has a maximum order of interaction $\nmax < \infty$ that its partition function is finite. For this purpose we will first consider the following lemma.

\begin{lem}\label{lem:domHNn}
If the highest-order interaction in the Hamiltonian $\hat{H}$ is of order $0 < \nmax < \infty$, i.e.\ the maximum number of creation\slash{}annihilation operators is $2\nmax$, then $\domain(\hat{H}) \supseteq \domain(\hat{N}^{\nmax})$.
\end{lem}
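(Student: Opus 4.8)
The plan is to show that every monomial appearing in $\hat{H}$ is relatively bounded by $\hat{N}^{\nmax}$, so that the whole Hamiltonian is defined wherever $\hat{N}^{\nmax}$ is. Concretely, recall from~\eqref{eq:ncHam} and the discussion of $\hat{H}^{\text{nc}}$ that $\hat{H}$ is a finite linear combination of strings of the form $S = \crea{a}_{i_1}\dotsb\crea{a}_{i_p}\anni{a}_{j_1}\dotsb\anni{a}_{j_q}$ with $p+q \le 2\nmax$. Since $\domain(\hat{N}^{\nmax})$ is a linear space and $\hat{H}$ is a finite sum, it suffices to show that each such string $S$ maps $\domain(\hat{N}^{\nmax})$ into $\Fock_{+}$, i.e.\ that $\norm{S\Psi} < \infty$ for every $\ket{\Psi} \in \domain(\hat{N}^{\nmax})$.

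The key estimate is a sector-wise bound. For $\ket{\Psi} = \bigoplus_{n=0}^{\infty}\ket{\Psi_n}$ with $\ket{\Psi_n}\in\oneH^n_+$, a single creation operator satisfies $\norm{\crea{a}_i\Psi_n} \le \sqrt{n+1}\,\norm{\Psi_n}$ and a single annihilation operator $\norm{\anni{a}_i\Psi_n} \le \sqrt{n}\,\norm{\Psi_n}$ (these follow directly from the definitions of $\crea{a}_i,\anni{a}_i$ and the commutation relations~\eqref{eq:creaAnniCommut}; e.g.\ $\norm{\anni{a}_i\Psi_n}^2 \le \sum_k\norm{\anni{a}_k\Psi_n}^2 = \brakket{\Psi_n}{\hat{N}}{\Psi_n} = n\norm{\Psi_n}^2$, and similarly $\norm{\crea{a}_i\Psi_n}^2 \le \brakket{\Psi_n}{\hat N + N_b}{\Psi_n}\le (n+1)N_b\norm{\Psi_n}^2$, which only costs a harmless constant). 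Composing $p+q \le 2\nmax$ such operators acting on the $n$-particle sector therefore gives $\norm{S\Psi_n} \le C\,(n + 2\nmax)^{\nmax}\,\norm{\Psi_n}$ for a constant $C$ depending only on $N_b$ and $\nmax$, since each factor contributes at most $\sqrt{n+2\nmax}$ and there are at most $2\nmax$ of them. Because $S$ shifts the particle number by the fixed amount $p-q$, the images $S\Psi_n$ live in distinct sectors and are mutually orthogonal, so
\begin{equation*}
\norm{S\Psi}^2 = \sum_{n=0}^{\infty}\norm{S\Psi_n}^2
\le C^2\sum_{n=0}^{\infty}(n+2\nmax)^{2\nmax}\norm{\Psi_n}^2 .
\end{equation*}
Now $(n+2\nmax)^{2\nmax} \le C'\,(1 + n^{2\nmax})$ for a suitable constant $C'$, and $\sum_n n^{2\nmax}\norm{\Psi_n}^2 = \norm{\hat{N}^{\nmax}\Psi}^2 < \infty$ precisely because $\ket{\Psi}\in\domain(\hat{N}^{\nmax})$, while $\sum_n\norm{\Psi_n}^2 = \norm{\Psi}^2 < \infty$. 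Hence $\norm{S\Psi} < \infty$, so $\ket{\Psi}\in\domain(S)$.

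Summing over the finitely many strings in $\hat{H}$ yields $\domain(\hat{H}) \supseteq \domain(\hat{N}^{\nmax})$, as claimed. The only point requiring a little care — the ``main obstacle'' — is bookkeeping the particle-number shifts so that one is genuinely allowed to split $\norm{S\Psi}^2$ as an orthogonal sum over sectors (rather than a double sum with cross terms), and making sure the crude per-factor bound $\sqrt{n+2\nmax}$ is legitimate even for the number-non-conserving terms; both are straightforward once one notes that $S$ maps $\oneH^n_+$ into the single sector $\oneH^{n+p-q}_+$.
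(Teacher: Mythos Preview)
Your argument is correct and in fact slightly sharper than the paper's: you obtain a genuine operator bound $\norm{S\Psi}\le C\norm{(\hat{1}+\hat{N}^{\nmax})\Psi}$ sector by sector, which directly yields the operator-domain inclusion. The paper instead uses the polarisation inequality $\hat{A}^{\dagger}\hat{B}+\hat{B}^{\dagger}\hat{A}\le|\hat{A}|^2+|\hat{B}|^2$ (Table~\ref{tab:inequalities}) to bound each hermitian combination in $\hat{H}$ by number-conserving pieces, arriving at a two-sided quadratic-form bound $|\brakket{\Psi}{\hat{H}}{\Psi}|\le\sum_{k=0}^{\nmax}M^{(k)}\brakket{\Psi}{\hat{N}^{k}}{\Psi}$. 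Your route is more elementary and makes the orthogonality of different particle-number sectors do the work; the paper's route is less explicit about the domain but has the advantage that the very same inequalities are reused later to extract the constants $K_l,K_h$ governing $\hat{H}(1-\hat{P}_{\le n})$ and hence to prove $Z[v]<\infty$ in Theorem~\ref{thm:finiteZ}. Either approach is fine for the lemma as stated.
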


\begin{proof}
First observe that from the inequality for operators, $0 \leq \abs{\hat{A} - \hat{B}}^2$ it follows that on their common domain
\begin{equation*}
\hat{A}^{\dagger}\hat{B} + \hat{B}^{\dagger}\hat{A} \leq \abs{\hat{A}}^2 + \abs{\hat{B}}^2 \, .
\end{equation*}
By using different strings of creation and annihilation operators for $\hat{A}$ and $\hat{B}$ (see Tab.~\ref{tab:inequalities}), one finds that all terms in the Hamiltonian can be bounded by
\begin{align*}
\hat{H} &\leq C +
\sum_{i_1=1}^{N_b}C_{i_1}\abs{\anni{a}_{i_1}}^{\mathrlap{2}} + \dotsb + 
\sum_{\crampedclap{i_1,\dotsc,i_{\nmax}=1}}^{N_b}
C_{i_1\dotsc i_{\nmax}}\abs{\anni{a}_{i_1}\dotsb\anni{a}_{i_{\nmax}}}^2 \\
&\leq M^{(0)} + M^{(1)}\hat{N} + M^{(2)}\hat{N}^2 + \dotsb + M^{(\nmax)}\hat{N}^{\nmax} \, ,
\end{align*}
where all $\abs{M^{(i)}} < \infty$, because the number of parameters in the Hamiltonian is finite. Repeating the same for $-\hat{H}$ leads to a similar lower bound on $\hat{H}$. Therefore, if $\hat{N}^{\nmax}$ has a finite value then also $\hat{H}$ has and thus $\domain(\hat{H}) \supseteq \domain(\hat{N}^{\nmax})$
\end{proof}

In the following it becomes advantageous to define a splitting of the Hamiltonian. We do so by first defining $\hat{P}_{\leq n}$ as the projection operator which projects on all states with maximum $n$ particles. The Hamiltonian is then split as $\hat{H} = \hat{H}\hat{P}_{\leq n} + \hat{H}(1 - \hat{P}_{\leq n})$. The first part is bounded, so $\domain(\hat{H}\hat{P}_{\leq n}) = \Fock_+$ for any finite $n$. By choosing $n$ large enough, the $\nmax^{\text{th}}$-order interaction becomes the dominant part. Hence, there exist constants $\abs{K_l},\abs{K_h} < \infty$ such that $K_l\hat{N}^{\nmax}(1 - \hat{P}_{\leq n}) \leq \hat{H}(1 - \hat{P}_{\leq n}) \leq K_h\hat{N}^{\nmax}(1 - \hat{P}_{\leq n})$ for large enough $n$.

\begin{table}[b]\centering
\caption{Creation\slash{}annihilation operator inequalities.}
\label{tab:inequalities}
\begin{tabular}{c c c}
\toprule
$\hat{A}$			&$\hat{B}$	&inequality \\
\otoprule
$\alpha\anni{a}_i$	&$1$
	&$\alpha^*\crea{a}_i  + \alpha\,\anni{a}_i
	 \leq \abs{\alpha}^2\crea{a}_i\anni{a}_i + 1$ \\
$\alpha\anni{a}_i$	&$\anni{a}_k$
	&$\alpha^*\crea{a}_i\anni{a}_k + \alpha\,\crea{a}_k\anni{a}_i
	\leq \abs{\alpha}^2\crea{a}_i\anni{a}_i + \crea{a}_k\anni{a}_k$ \\
$\alpha\anni{a}_i$	&$\crea{a}_k$
	&$\alpha^*\crea{a}_i\crea{a}_k + \alpha\,\anni{a}_k\anni{a}_i
	\leq \abs{\alpha}^2\crea{a}_i\anni{a}_i + \anni{a}_k\crea{a}_k$ \\
$\alpha\anni{a}_i\anni{a}_j$	&$1$
	&$\alpha^*\crea{a}_j\crea{a}_i + \alpha\,\anni{a}_i\anni{a}_j
	\leq \abs{\alpha}^2\crea{a}_j\crea{a}_i\anni{a}_i\anni{a}_j + 1$ \\
$\alpha\crea{a}_i\anni{a}_j$	&$1$
	&$\alpha^*\crea{a}_j\anni{a}_i + \alpha\,\crea{a}_i\anni{a}_j
	\leq \abs{\alpha}^2\crea{a}_j\anni{a}_i\crea{a}_i\anni{a}_j + 1$ \\
$\alpha\anni{a}_i\anni{a}_j$	&$\anni{a}_k$
	&$\alpha^*\crea{a}_j\crea{a}_i\anni{a}_k + \alpha\,\crea{a}_k\anni{a}_i\anni{a}_j
	\leq \abs{\alpha}^2\crea{a}_j\crea{a}_i\anni{a}_i\anni{a}_j + \crea{a}_k\anni{a}_k$ \\
$\alpha\anni{a}_i\anni{a}_j$	&$\crea{a}_k$
	&$\alpha^*\crea{a}_j\crea{a}_i\crea{a}_k + \alpha\,\anni{a}_k\anni{a}_i\anni{a}_j
	 \leq \abs{\alpha}^2\crea{a}_j\crea{a}_i\anni{a}_i\anni{a}_j + \anni{a}_k\crea{a}_k$ \\
\vdots	&\vdots	&{} \vdots \\
\bottomrule
\end{tabular}
\end{table}

\begin{thm}\label{thm:eqDomHNn}
If the highest-order interaction in the Hamiltonian is positive definite, i.e.\ if there exists a $K_l > 0$, then $\domain(\hat{H}) = \domain(\hat{N}^{\nmax})$.
\end{thm}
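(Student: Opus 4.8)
The plan is to prove the two inclusions separately. One direction, $\domain(\hat{N}^{\nmax}) \subseteq \domain(\hat{H})$, is already Lemma~\ref{lem:domHNn} and uses no positivity, so the substance of the theorem is the reverse inclusion $\domain(\hat{H}) \subseteq \domain(\hat{N}^{\nmax})$; this is where the hypothesis $K_l > 0$ must enter. First I would fix $n_0$ large enough that the lower bound stated just before the theorem holds, $K_l\hat{N}^{\nmax}(1-\hat{P}_{\leq n_0}) \leq \hat{H}(1-\hat{P}_{\leq n_0})$, and write $\hat{H} = \hat{H}\hat{P}_{\leq n_0} + \hat{H}(1-\hat{P}_{\leq n_0})$. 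Since $\hat{H}\hat{P}_{\leq n_0}$ and $\hat{N}^{\nmax}\hat{P}_{\leq n_0}$ are bounded, it is enough to dominate $\hat{N}^{\nmax}(1-\hat{P}_{\leq n_0})$ by $\hat{H}$ up to a bounded remainder.

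The key observation is that because $N_b < \infty$ every bosonic $n$-particle sector $\oneH^n_+$ is \emph{finite}-dimensional, and on it $\hat{N}^{\nmax}$ acts as $n^{\nmax}\unitMat_n$ while the particle-number-conserving part of $\hat{H}$ acts as a matrix $\hat{H}_n$. Restricting the form inequality to the $n$-sector for $n > n_0$ turns it into the honest matrix inequality $\hat{H}_n \geq K_l n^{\nmax}\unitMat_n$; in particular $\hat{H}_n$ is invertible with $\norm{\hat{H}_n^{-1}}_\infty \leq (K_l n^{\nmax})^{-1}$. The elementary bound $\norm{AB^{-1}}_\infty \leq c^{1/2}\norm{A}_\infty^{1/2}\norm{B^{-1}}_\infty^{1/2}$, valid for $0 \leq A \leq cB$ with $B > 0$, then gives $\norm{\hat{N}^{\nmax}_n\hat{H}_n^{-1}}_\infty \leq K_l^{-1}$ uniformly in $n$, hence $\norm{\hat{N}^{\nmax}_n\Psi_n} \leq K_l^{-1}\norm{\hat{H}_n\Psi_n}$ on each sector. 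Summing over $n$ controls the number-conserving part of $\hat{H}$: $\norm{\hat{N}^{\nmax}(1-\hat{P}_{\leq n_0})\Psi}$ is bounded by $K_l^{-1}$ times the norm of that part applied to $\Psi$, plus a bounded remainder.

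It then remains to absorb the number-non-conserving strings $\hat{H}_{\mathrm{od}}$. I would show, using the operator inequalities of Table~\ref{tab:inequalities} together with a scaling parameter, that $\hat{H}_{\mathrm{od}}$ is $\hat{N}^{\nmax}$-bounded with small relative bound, $\norm{\hat{H}_{\mathrm{od}}\Psi} \leq \varepsilon\norm{\hat{N}^{\nmax}\Psi} + C_\varepsilon\norm{\Psi}$ for $\Psi \in \preFock$; combining with the previous paragraph and absorbing $\varepsilon K_l^{-1}\norm{\hat{N}^{\nmax}\Psi}$ into the left-hand side for $\varepsilon$ small yields $\norm{\hat{N}^{\nmax}\Psi} \leq C\bigl(\norm{\hat{H}\Psi} + \norm{\Psi}\bigr)$, first on the core $\preFock$ and then, since $\hat{N}^{\nmax}$ is closed, on all of $\domain(\hat{H})$. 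With Lemma~\ref{lem:domHNn} this gives the asserted equality. The step I expect to be the real obstacle is precisely the passage from the \emph{form} bound $K_l\hat{N}^{\nmax} \leq \hat{H}$ to a genuine estimate of $\norm{\hat{N}^{\nmax}\Psi}$ in terms of $\norm{\hat{H}\Psi}$ — form comparison on its own only equates form domains, and it is the finite-dimensionality of each particle-number sector that lets one upgrade it sector-by-sector with constants uniform in $n$. A secondary care point is verifying that, under the standing hypotheses, the non-conserving terms can indeed be dominated by $\hat{N}^{\nmax}$ with small relative bound, which is where the sufficient bounds on the non-conserving couplings from Section~\ref{sec:bosonpartition} are needed.
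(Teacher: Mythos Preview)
Your proof is more careful than the paper's, and in a meaningful way. The paper's argument is a single line: from the form inequality
\[
K_l\,\brakket{\Psi}{\hat{N}^{\nmax}(1-\hat{P}_{\leq n})}{\Psi}
\;\leq\;\brakket{\Psi}{\hat{H}(1-\hat{P}_{\leq n})}{\Psi}
\]
it concludes directly that $\domain(\hat{N}^{\nmax}) \supseteq \domain(\hat{H})$ and combines with Lemma~\ref{lem:domHNn}. In other words, the paper treats ``domain'' throughout as the form domain (the set of states on which the expectation value is finite), and on that reading the form comparison \emph{is} the whole proof.

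You, by contrast, read ``domain'' in the operator sense and work to upgrade the form bound to a graph-norm estimate $\norm{\hat{N}^{\nmax}\Psi} \leq C\bigl(\norm{\hat{H}\Psi} + \norm{\Psi}\bigr)$. Your device for doing so --- exploiting that each bosonic $n$-sector is finite-dimensional (since $N_b < \infty$), so that the form bound becomes an honest matrix inequality $\hat{H}_n \geq K_l n^{\nmax}\unitMat_n$ with constants uniform in $n$ --- is sound; note that in your application $A = n^{\nmax}\unitMat_n$ is scalar, so the bound $\norm{\hat{N}^{\nmax}_n\hat{H}_n^{-1}}_{\infty} \leq K_l^{-1}$ is immediate and you do not need the general matrix inequality you quote. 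The subsequent handling of the non-conserving terms via relative $\hat{N}^{\nmax}$-boundedness with small relative bound is the standard Kato--Rellich strategy; the inequalities in Table~\ref{tab:inequalities} are form bounds, but converting them to operator bounds again relies on the sector-wise finite dimension. What your approach buys is a genuinely stronger conclusion (equality of operator domains, not just form domains); what the paper's approach buys is brevity, at the cost of leaving that distinction implicit.
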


\begin{proof}
As there exists a constant $K_l > 0$ such that $K_l\hat{N}^{\nmax} (1 - \hat{P}_{\leq n}) \leq \hat{H}(1 - \hat{P}_{\leq n})$, we have for any $\ket{\Psi} \in \Fock$
\begin{equation*}
\brakket{\Psi}{\hat{N}^{\nmax}(1 - \hat{P}_{\leq n})}{\Psi}
\leq K_l^{-1} \brakket{\Psi}{\hat{H}(1 - \hat{P}_{\leq n})}{\Psi} \, ,
\end{equation*}
so $\domain(\hat{N}^{\nmax}) \supseteq \domain(\hat{H})$. Combined with Lemma~\ref{lem:domHNn} with have $\domain(\hat{H}) = \domain(\hat{N}^{\nmax})$.
\end{proof}

\begin{thm}\label{thm:boundedfrombelow}
If the highest-order interaction in the Hamiltonian is positive semidefinite, i.e.\ if there exists a $K_l \geq 0$, then $\hat{H}$ is bounded from below on $\domain(\hat{N}^{\nmax})$.
\end{thm}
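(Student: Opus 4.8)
The plan is to build on the splitting $\hat{H} = \hat{H}\hat{P}_{\leq n} + \hat{H}(1-\hat{P}_{\leq n})$ and the operator bound $K_l\hat{N}^{\nmax}(1-\hat{P}_{\leq n}) \leq \hat{H}(1-\hat{P}_{\leq n})$ recorded just before the statement. The point is that, since now $K_l\geq 0$ and $\hat{N}^{\nmax}\geq 0$, the ``high particle number'' part of $\hat{H}$ is already bounded from below by zero, whereas the complementary ``low particle number'' part is supported on a finite-dimensional particle-number sector of Fock space, hence is represented by a bounded operator and is trivially semibounded. Adding the two contributions gives boundedness of $\hat{H}$ from below.

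To turn this into a proof I would first fix $n$ large enough that the quoted inequality holds, read it as the quadratic-form statement $\brakket{\Phi}{\hat{H}}{\Phi} \geq K_l\brakket{\Phi}{\hat{N}^{\nmax}}{\Phi} \geq 0$ for all $\Phi$ in the range of $\hat{P}_{>}:=1-\hat{P}_{\leq n}$, and then use the symmetric decomposition $\hat{H} = \hat{P}_{>}\hat{H}\hat{P}_{>} + R_n$ with $R_n := \hat{P}_{\leq n}\hat{H}\hat{P}_{\leq n} + \hat{P}_{\leq n}\hat{H}\hat{P}_{>} + \hat{P}_{>}\hat{H}\hat{P}_{\leq n}$. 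For any $\ket{\Psi}\in\domain(\hat{N}^{\nmax})$ — which by Lemma~\ref{lem:domHNn} is contained in $\domain(\hat{H})$, and for which $\hat{P}_{\leq n}\Psi$ and $\hat{P}_{>}\Psi$ again lie in $\domain(\hat{N}^{\nmax})$ since $\hat{P}_{\leq n}$ commutes with $\hat{N}$ — this yields
\begin{equation*}
\brakket{\Psi}{\hat{H}}{\Psi}
= \brakket{\hat{P}_{>}\Psi}{\hat{H}}{\hat{P}_{>}\Psi} + \brakket{\Psi}{R_n}{\Psi}
\geq -\norm{R_n}_{\infty}\,\norm{\Psi}^{2} ,
\end{equation*}
so the whole claim reduces to boundedness of the operator $R_n$.

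Establishing that $R_n$ is bounded is the only real obstacle, and I expect to dispatch it using the finite-order structure of $\hat{H}$: every term of $\hat{H}$ is a string of at most $2\nmax$ creation and annihilation operators, and hence shifts the particle number by at most $2\nmax$. Therefore $\hat{P}_{\leq n}\hat{H}\hat{P}_{\leq n}$ has source and range in the finite-dimensional $(\leq n)$-particle sector; $\hat{P}_{>}\hat{H}\hat{P}_{\leq n}$ is $\hat{P}_{>}$ composed with the bounded operator $\hat{H}\hat{P}_{\leq n}$ noted above (equivalently, a linear map out of the finite-dimensional space $\hat{P}_{\leq n}\Fock_{+}$); and $\hat{P}_{\leq n}\hat{H}\hat{P}_{>} = \hat{P}_{\leq n}\hat{H}\,(\hat{P}_{\leq n+2\nmax}-\hat{P}_{\leq n})$, because components with more than $n+2\nmax$ particles cannot be lowered to $(\leq n)$ particles by $\hat{H}$, so it too has finite-dimensional source and range. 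Each summand is thus a bounded (indeed finite-rank) operator, whence $\norm{R_n}_{\infty}=:C_n<\infty$, and the displayed estimate gives $\hat{H}\geq -C_n\hat{1}$ on $\domain(\hat{N}^{\nmax})$ — exactly boundedness from below, which together with hermiticity and the density of $\domain(\hat{N}^{\nmax})$ in $\Fock_{+}$ is precisely what is needed for the Friedrichs (self-adjoint) extension discussed in Sec.~\ref{sec:Hamiltonians}.
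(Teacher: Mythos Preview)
Your proof is correct and follows essentially the same strategy as the paper: split by particle number, use $K_l\geq 0$ to bound the high-particle-number part from below by zero, and use finite-dimensionality to bound the remainder. The paper argues via the asymmetric splitting $\hat{H}=\hat{H}\hat{P}_{\leq n}+\hat{H}(1-\hat{P}_{\leq n})$ and takes separate infima, invoking the ``lowest eigenvalue of $\hat{H}\hat{P}_{\leq n}$''.

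Your symmetric decomposition $\hat{H}=\hat{P}_{>}\hat{H}\hat{P}_{>}+R_n$ is a genuine refinement rather than a different route: when $\hat{H}$ contains number-non-conserving terms, $\hat{H}\hat{P}_{\leq n}$ is not hermitian, so the paper's talk of its ``lowest eigenvalue'' and of separate real infima is loose. Your version makes hermiticity manifest and your observation that $\hat{P}_{\leq n}\hat{H}\hat{P}_{>}=\hat{P}_{\leq n}\hat{H}(\hat{P}_{\leq n+2\nmax}-\hat{P}_{\leq n})$ (because $\hat{H}$ shifts the particle number by at most $2\nmax$) is exactly the missing ingredient that makes the remainder finite-rank and hence bounded. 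So your argument cleans up the paper's without changing its substance.
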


\begin{proof}
If the highest-order interaction is positive semidefinite, we have for large enough $n$ that $K_l \geq 0$ (defined after the proof of lemma~\ref{lem:domHNn}). Hence, for large enough $n$ we have
\begin{equation*}
\mathcal{E}_0 =
\inf_{\Psi}\frac{\brakket{\Psi}{\hat{H}}{\Psi}}{\braket{\Psi}{\Psi}}
\geq \inf_{\Psi}\frac{\brakket{\Psi}{\hat{H}\hat{P}_{\leq n}}{\Psi}}{\braket{\Psi}{\Psi}} +
\inf_{\Psi}\frac{\brakket{\Psi}{\hat{H}(1-\hat{P}_{\leq n})}{\Psi}}{\braket{\Psi}{\Psi}}
\geq E_n + K_l \geq E_n \, ,
\end{equation*}
where $E_n$ is the lowest eigenvalue of $\hat{H}\hat{P}_{\leq n}$. As $\hat{H}\hat{P}_{\leq n}$ only acts within a finite-dimensional part of the Fock space, we have $E_n > -\infty$ and hence, $\mathcal{E}_0 > - \infty$.
\end{proof}

For the partition function to be bounded, we do not only need the Hamiltonian to be bounded from below, but also the absence of accumulation points. So we will consider Hamiltonians with a strictly positive definite highest-order interaction.

\begin{prop}
A Hamiltonian with a highest-order interaction $0 < \nmax < \infty$ has no accumulation point in its spectrum if the highest-order interaction is strictly positive definite, i.e.\ if there exists a $K_l > 0$.
\end{prop}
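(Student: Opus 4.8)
The plan is to prove the \emph{essential} spectrum of $\hat{H}$ is empty; since $\hat{H}$ is bounded from below (Theorem~\ref{thm:boundedfrombelow}, which applies as $K_l>0$ entails $K_l\geq 0$) and self-adjoint (its Friedrichs extension, cf.\ Section~\ref{sec:Hamiltonians}), this means that $\sigma(\hat{H})$ consists of isolated eigenvalues of finite multiplicity diverging to $+\infty$, and in particular has no accumulation point. To show $\sigma_{\mathrm{ess}}(\hat{H})=\emptyset$ I would argue by contradiction using Weyl's singular-sequence criterion: if $\lambda\in\sigma_{\mathrm{ess}}(\hat{H})\subseteq\Reals$, I will show $\lambda\geq K_l(n+1)^{\nmax}$ for every sufficiently large integer $n$, which is absurd.

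So suppose $\lambda\in\sigma_{\mathrm{ess}}(\hat{H})$ and pick an orthonormal sequence $(\psi_j)\subseteq\domain(\hat{H})=\domain(\hat{N}^{\nmax})$ (the equality is Theorem~\ref{thm:eqDomHNn}) with $\norm{(\hat{H}-\lambda)\psi_j}\to 0$; then $\brakket{\psi_j}{\hat{H}}{\psi_j}\to\lambda$ and $\psi_j\weakly 0$. Fix $n$ large enough that the sandwich estimate established just before Theorem~\ref{thm:eqDomHNn} gives $\hat{H}(1-\hat{P}_{\leq n})\geq K_l\hat{N}^{\nmax}(1-\hat{P}_{\leq n})$, and split $\psi_j=\phi_j+\chi_j$ with $\phi_j\isDefinedAs\hat{P}_{\leq n}\psi_j$ and $\chi_j\isDefinedAs(1-\hat{P}_{\leq n})\psi_j$. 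Because every particle-number sector is finite-dimensional, $\hat{P}_{\leq n}$ has finite rank, and so has $\hat{H}\hat{P}_{\leq n}$ (it is bounded, cf.\ Section~\ref{sec:Hamiltonians}, with range inside the image of the finite-dimensional space $\hat{P}_{\leq n}\Fock_{+}$); a compact operator sends weakly null sequences to norm-null ones, hence $\phi_j\to 0$ and $\hat{H}\phi_j=\hat{H}\hat{P}_{\leq n}\psi_j\to 0$ in norm, while $\norm{\chi_j}\to 1$.

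I would then expand, moving $\hat{H}$ onto $\phi_j$ in the cross term by self-adjointness,
\begin{equation*}
\brakket{\psi_j}{\hat{H}}{\psi_j}=\braket{\phi_j}{\hat{H}\phi_j}+2\operatorname{Re}\braket{\hat{H}\phi_j}{\chi_j}+\brakket{\chi_j}{\hat{H}}{\chi_j}\,.
\end{equation*}
The first two terms are bounded by $\norm{\hat{H}\phi_j}\bigl(\norm{\phi_j}+2\norm{\chi_j}\bigr)\to 0$, while $\chi_j$ lies in the range of $1-\hat{P}_{\leq n}$, so the domination together with $\hat{N}^{\nmax}\geq(n+1)^{\nmax}$ on that range gives $\brakket{\chi_j}{\hat{H}}{\chi_j}\geq K_l\brakket{\chi_j}{\hat{N}^{\nmax}}{\chi_j}\geq K_l(n+1)^{\nmax}\norm{\chi_j}^2$. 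Letting $j\to\infty$ yields $\lambda\geq K_l(n+1)^{\nmax}$; since this holds for all large $n$ and $K_l>0$, $\nmax\geq 1$, we reach the contradiction $\lambda=+\infty$, so $\sigma_{\mathrm{ess}}(\hat{H})=\emptyset$.

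The only real obstacle is the bookkeeping around the low-particle part: one must resist estimating $\hat{H}\chi_j$ — which is large, as $\chi_j$ carries arbitrarily high particle number — and instead exploit that $\hat{H}\hat{P}_{\leq n}$ is finite rank, hence compact, and that the cross term can be shifted onto $\phi_j$. An equivalent but more computational route is to first upgrade the sandwich estimate to a genuine form inequality $\hat{H}+M\geq K_l\hat{N}^{\nmax}$ on $\domain(\hat{N}^{\nmax})$ (absorbing the finite-dimensional block $\hat{P}_{\leq n}\hat{H}\hat{P}_{\leq n}$ and the finite-range number-non-conserving coupling between the two sectors) and then apply the min--max principle, $\mu_k(\hat{H})\geq K_l\mu_k(\hat{N}^{\nmax})-M\to\infty$, using that $\hat{N}^{\nmax}$ itself has empty essential spectrum since its eigenvalues $k^{\nmax}$ have finite multiplicity $\dim\oneH^k_{+}$ and diverge.
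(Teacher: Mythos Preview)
Your proof is correct and considerably more rigorous than the paper's. The paper offers only a two-line heuristic: since $K_l>0$, the energy in the $n$-particle sector grows like $K_l n^{\nmax}$, and this sequence has no convergent subsequence. That sketch implicitly treats $\hat{H}$ as block-diagonal in particle number and identifies ``energy difference'' with the minimal eigenvalue per sector; it does not address number non-conserving terms, finite multiplicity, or why the global spectrum inherits the sector-wise growth.

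Your route via Weyl's criterion and the compactness of $\hat{H}\hat{P}_{\leq n}$ is a genuinely different argument. What it buys you: it handles number non-conserving Hamiltonians cleanly (the cross term is pushed onto $\hat{H}\phi_j$ and killed by compactness), it yields the stronger conclusion $\sigma_{\mathrm{ess}}(\hat{H})=\emptyset$ (so also no infinite-multiplicity eigenvalues), and it makes explicit where finiteness of the one-particle basis enters (namely, $\hat{P}_{\leq n}$ has finite rank). Your closing remark about upgrading the sandwich estimate to a global form bound $\hat{H}+M\geq K_l\hat{N}^{\nmax}$ and invoking min--max is the cleanest way to phrase this and is worth keeping, since the paper's operator inequality $\hat{H}(1-\hat{P}_{\leq n})\geq K_l\hat{N}^{\nmax}(1-\hat{P}_{\leq n})$ is, as written, between non-self-adjoint operators when $\hat{H}$ does not conserve particle number; your reading of it as a quadratic-form statement on the range of $1-\hat{P}_{\leq n}$ is the correct interpretation.
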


\begin{proof}
As the highest-order interaction is strictly positive definite, we have $K_l > 0$ for large enough $n$. This implies that the energy difference for $n \to \infty$ behaves asymptotically as $\sim K_ln^{\nmax}$ which does not converge and has no converging subsequence.
\end{proof}

That a strictly positive definite highest-order interaction is sufficient to have a bounded partition function is formulated in the following theorem.

\begin{thm}\label{thm:finiteZ}
The partition function, $Z[v]$ is finite if the Hamiltonian $\hat{H}_v$ has a maximum order of interaction $0 < \nmax < \infty$ which is strictly positive definite, i.e.\ if there exists a $K_l >0$.
\end{thm}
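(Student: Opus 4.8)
The plan is to reduce the finiteness of $Z[v] = \Trace\{\e^{-\beta\hat{H}_v}\}$ to the elementary fact that the polynomially growing degeneracy of the particle-number sectors of $\Fock_+$ is crushed by a (super-)exponential decay coming from the strictly positive highest-order interaction. First I would record what the hypothesis $K_l>0$ already buys us through the results just proved, now applied to $\hat{H}_v$ itself: by Theorem~\ref{thm:eqDomHNn} it is self-adjoint with $\domain(\hat{H}_v)=\domain(\hat{N}^{\nmax})$, by Theorem~\ref{thm:boundedfrombelow} it is bounded from below, and by the preceding proposition its spectrum has no accumulation point. Hence $\operatorname{spec}(\hat{H}_v)$ consists of eigenvalues of finite multiplicity, bounded below and tending to $+\infty$, which is exactly the setting in which the estimates below are legitimate.

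Next I would turn the operator splitting introduced just before Theorem~\ref{thm:eqDomHNn} into a single global lower bound. Fix $n$ large enough that $K_l\hat{N}^{\nmax}(1-\hat{P}_{\leq n}) \leq \hat{H}_v(1-\hat{P}_{\leq n})$ with $K_l>0$, and let $E_n$ be the lowest eigenvalue of the finite-dimensional matrix $\hat{H}_v\hat{P}_{\leq n}$, so $E_n>-\infty$. Then $\hat{H}_v \geq E_n\hat{P}_{\leq n} + K_l\hat{N}^{\nmax}(1-\hat{P}_{\leq n})$, and since $K_l\hat{N}^{\nmax}\hat{P}_{\leq n} \leq K_l n^{\nmax}\hat{P}_{\leq n}$, setting $d \isDefinedAs \max(0,\,K_l n^{\nmax}-E_n) \geq 0$ gives the quadratic-form inequality $\hat{H}_v \geq K_l\hat{N}^{\nmax} - d$ on the common form domain $\domain(\hat{N}^{\nmax})$.

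Then I would invoke monotonicity of $\Trace\{\e^{-\beta(\,\cdot\,)}\}$ under the form order. Because $\hat{H}_v$ and $K_l\hat{N}^{\nmax}-d$ both have discrete spectra bounded below with matching form domains, the min--max principle yields $\lambda_k(K_l\hat{N}^{\nmax}-d) \leq \lambda_k(\hat{H}_v)$ for the ordered eigenvalues, and since $x \mapsto \e^{-\beta x}$ is decreasing this gives
\begin{equation*}
Z[v] = \Trace\{\e^{-\beta\hat{H}_v}\}
\leq \Trace\bigl\{\e^{-\beta(K_l\hat{N}^{\nmax}-d)}\bigr\}
= \e^{\beta d}\sum_{m=0}^{\infty}\binom{m+N_b-1}{N_b-1}\e^{-\beta K_l m^{\nmax}} \, .
\end{equation*}
The bosonic $m$-particle sector $\oneH^m_+$ has dimension $\binom{m+N_b-1}{N_b-1}$, a polynomial in $m$ of degree $N_b-1$; with $\beta K_l>0$ and $\nmax \geq 1$ the exponential factor decays faster than this polynomial grows, so the series converges and $Z[v]<\infty$.

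The step I expect to need the most care is the domain bookkeeping in the operator inequality together with the transfer of the eigenvalue ordering to unbounded operators: one has to be explicit that $\hat{H}_v \geq K_l\hat{N}^{\nmax}-d$ is asserted on the correct common form domain and that the min--max comparison applies, which is precisely why the discreteness and boundedness-below collected in the first paragraph are invoked. Everything after that is a one-line geometric-type series estimate.
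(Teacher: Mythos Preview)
Your proof is correct and follows essentially the same route as the paper: bound $\hat{H}_v$ from below by $K_l\hat{N}^{\nmax}$ up to a finite constant, then control $\Trace\{\e^{-\beta K_l\hat{N}^{\nmax}}\}$ using the polynomial dimension $\binom{m+N_b-1}{N_b-1}$ of the $m$-particle sector against the (super-)exponential decay. The only notable difference is that you are more explicit about the functional-analytic bookkeeping (form domains, min--max comparison) behind the passage from the quadratic-form inequality to the trace inequality, whereas the paper treats the non-interacting case $\nmax=1$ separately and absorbs the low-$n$ correction into an unspecified constant $C$; your uniform treatment with the constant $d$ is slightly cleaner.
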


\begin{proof}
First consider a non-interacting Hamiltonian, $\hat{H}_v = \sum_{ij}v_{ij}\crea{a}_i\anni{a}_j$, so $\nmax = 1$. Without loss of generality, we can assume that we work in the one-particle basis which diagonalises $v$ and can identify $K_l$ with the lowest eigenvalue of $v$. Now we can put the following bound on the bosonic partition function
\begin{equation*}
Z[v] = \Trace\bigl\{\e^{-\beta\hat{H}_v}\bigr\}
\leq \Trace\bigl\{\e^{-\beta K_l\hat{N}}\bigr\}
= \sum_{n=0}^{\infty}\binom{n+N_b-1}{n}\e^{-\beta K_l n} = \frac{1}{(1 - \e^{-\beta K_l})^{N_b}} \, ,
\end{equation*}
where the second equality follows from counting the number of states in the $n$-particle sector and
last equality follows from working out the $N_b$-th order derivative of the geometric series. Hence, if the potential is strictly positive definite, $K_l > 0$, the bosonic partition function is finite for all temperatures.

This result for the non-interacting case also applies to non-interacting Hamiltonians with additional non-conserving terms. The source term only adds a shift to the creation and annihilation operators and leaves the spectrum invariant (see the first part of Appendix~\ref{ap:non-interatingSolution}). In contrast, the pairing field affects the positive definiteness of the interaction (see the last part of Appendix~\ref{ap:non-interatingSolution}) and care should be taken not to spoil the positive definiteness of the highest-order interaction. One could use the third inequality in Table~\ref{tab:inequalities} to put some sufficient bounds on the variation in the potential $v$, though they can typically be weakened.

Now consider a Hamiltonian with some proper interaction, i.e.\ $\nmax > 1$. In that case the partition function can be bounded as
\begin{align*}
Z[v] &\leq \Trace\bigl\{\e^{-\beta K_l\hat{N}^{\nmax}}\bigr\}
= C + \sum_{n=0}^{\infty}\binom{n+N_b-1}{n}\e^{-\beta K_l n^{\nmax}} \\
&\leq C + \sum_{n=0}^{\infty}\binom{n+N_b-1}{n}\e^{-\beta K_l n}
= C + \frac{1}{(1 - \e^{-\beta K_l})^{N_b}} \, ,
\end{align*}
where $C$ is some finite positive constant and $K_l$ is the constant defined after the proof of Lemma~\ref{lem:domHNn}. Hence, the partition function is finite if $K_l > 0$, i.e.\ if the highest-order interaction is strictly positive definite. Note that $K_l$ does not depend on the one-body potential $v$, so if the partition function is finite for an interacting system ($\nmax > 1$), it is finite for any one-body potential $v$.

Again, this result also applies to general Hamiltonians that mix the number of particles. We only need to take care that the non-conserving terms of order $2\nmax$ do not spoil the positive definiteness of the highest-order interaction.
\end{proof}

Theorem~\ref{thm:finiteZ} is the second important result of this section. As we typically model a physical system with some finite order of positive interaction between the particles, we have $Z[v] < \infty$ without much difficulty. Thus $\FiniteNpot_{+}$ is either the full space $\HermitanMat(N_b)$ for the interacting situation, or in the non-interacting case as discussed in Sec.~\ref{sec:potentials}, we have $\FiniteNpot_{+}^{\text{nonint}} = \set*{v \in \HermitanMat(N_b) }{h^{(1)} + v > 0} $.

We can proceed somewhat further along these lines to show that also most expectation values will be finite.

\begin{thm}\label{thm:finiteNk}
The expectation value of any finite power of the number operator is finite, i.e.\ $\av{N^k}_v = \Trace\{\hat{N}^k\e^{-\beta\hat{H}_v}\} / Z[v] < \infty$ for $k < \infty$, if the Hamiltonian $\hat{H}_v$ has a maximum order of the interaction which is strictly positive definite, i.e.\ if there exists a $K_l > 0$.
\end{thm}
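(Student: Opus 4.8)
Theorem~\ref{thm:finiteZ} already gives $Z[v] < \infty$, and in fact $Z[v]$ is finite at \emph{every} inverse temperature, the constant $K_l$ occurring there being temperature-independent. So it suffices to bound the numerator $\Trace\{\hat{N}^k\,\e^{-\beta\hat{H}_v}\}$. Under the hypothesis $\hat{H}_v$ is self-adjoint, bounded from below (Theorem~\ref{thm:boundedfrombelow}) and, by the proposition preceding Theorem~\ref{thm:finiteZ}, has a purely discrete spectrum $\{E_I\}$ without accumulation point; it therefore possesses a complete orthonormal eigenbasis $\{\Psi_I\}$, with $E_I \geq \mathcal{E}_0$ for $\mathcal{E}_0$ the bottom of the spectrum. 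Evaluating the trace in this basis, the task becomes to show
\begin{equation*}
\Trace\{\hat{N}^k\,\e^{-\beta\hat{H}_v}\} = \sum_I \e^{-\beta E_I}\,\brakket{\Psi_I}{\hat{N}^k}{\Psi_I} < \infty ,
\end{equation*}
a series of non-negative terms whose coefficients we must first see are finite.

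I would bound each coefficient by a polynomial in $E_I$. As in the estimate preceding Theorem~\ref{thm:boundedfrombelow}, taking $n$ large makes the $\nmax$-th order term dominant and yields the operator inequality $\hat{H}_v \geq K_l\,\hat{N}^{\nmax} - M$ on $\domain(\hat{N}^{\nmax})$, with $K_l > 0$ and $M < \infty$ (the finite-dimensional $\hat{P}_{\leq n}$-part absorbed into $M$); recall also that $\domain(\hat{H}_v) = \domain(\hat{N}^{\nmax})$ by Theorem~\ref{thm:eqDomHNn}, so $\Psi_I \in \domain(\hat{N}^{\nmax})$. For $k \leq \nmax$ one has $\hat{N}^k \leq \hat{N}^{\nmax}$ as functions of $\hat{N}$ (since $m^k \leq m^{\nmax}$ for integers $m \geq 0$), whence
\begin{equation*}
\brakket{\Psi_I}{\hat{N}^k}{\Psi_I} \leq \brakket{\Psi_I}{\hat{N}^{\nmax}}{\Psi_I} \leq K_l^{-1}\bigl(E_I + M\bigr);
\end{equation*}
in particular this covers the case $k = 1$ needed elsewhere to ensure that $\gamma_v$ has finite entries. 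For larger $k$ one repeats the argument with the powers $\hat{H}_v^{\,j}$: an eigenstate lies in $\domain(\hat{H}_v^{\,j})$ for every $j$, and the leading $2j\nmax$-order part of $\hat{H}_v^{\,j}$ inherits strict positive definiteness from that of $\hat{H}_v$, so the same reasoning delivers $\brakket{\Psi_I}{\hat{N}^k}{\Psi_I} \leq P_k(E_I)$ for a polynomial $P_k$ with non-negative coefficients.

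It then remains to sum the series. On $[\mathcal{E}_0, \infty)$ we have $P_k(E) \leq C_\delta\,\e^{\delta (E - \mathcal{E}_0)}$ for any $\delta > 0$; picking $0 < \delta < \beta$,
\begin{equation*}
\Trace\{\hat{N}^k\,\e^{-\beta\hat{H}_v}\} \leq C_\delta\,\e^{-\delta\mathcal{E}_0}\sum_I \e^{-(\beta-\delta)E_I} = C_\delta\,\e^{-\delta\mathcal{E}_0}\,\Trace\bigl\{\e^{-(\beta-\delta)\hat{H}_v}\bigr\} ,
\end{equation*}
which is finite because $Z[v]$ is finite at the inverse temperature $\beta - \delta$ (Theorem~\ref{thm:finiteZ}). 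Dividing by $Z[v]$ gives $\av{N^k}_v < \infty$.

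The main obstacle is the polynomial bound for $k > \nmax$: it is conceptually routine but requires checking that the powers $\hat{H}_v^{\,j}$ retain a strictly positive definite leading interaction, so that $\hat{H}_v^{\,j} \geq K_l^{(j)}\hat{N}^{\,j\nmax} - M^{(j)}$. A tempting shortcut --- ``exponentiating'' $\hat{H}_v \geq K_l\hat{N}^{\nmax} - M$ into an operator inequality $\e^{-\beta\hat{H}_v} \leq \e^{\beta M}\e^{-\beta K_l\hat{N}^{\nmax}}$, which would immediately give the explicit bound $\Trace\{\hat{N}^k\e^{-\beta\hat{H}_v}\} \leq \e^{\beta M}\sum_{m}\binom{m + N_b - 1}{m}\,m^k\,\e^{-\beta K_l m^{\nmax}} < \infty$ --- is \emph{not} valid, since $x \mapsto \e^{-x}$ is not operator monotone; the eigenbasis route above is what circumvents this. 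For the case $k = 1$ on which the rest of the paper relies, no bootstrapping is needed and the argument is immediate.
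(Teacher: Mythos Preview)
Your route is genuinely different from the paper's, and the comparison is instructive. The paper does precisely the ``shortcut'' you flag: it writes
\[
Z[v]\,\av{\hat{N}^k}_v \;\leq\; \Trace\bigl\{\hat{N}^k\,\e^{-\beta K_l\hat{N}^{\nmax}}\bigr\}
\;=\; \sum_{n\geq 0}\binom{n+N_b-1}{n}\,n^k\,\e^{-\beta K_l n^{\nmax}}
\]
(up to a finite constant $C_k$ from the low-$n$ part) and evaluates the series explicitly by differentiating the geometric series. Your objection that $x\mapsto\e^{-x}$ is not operator monotone is correct in general, but in the number-conserving case $[\hat{H}_v,\hat{N}]=0$, so $\hat{H}_v$ and $\hat{N}^{\nmax}$ are simultaneously diagonalizable and the inequality $\hat{H}_v \geq K_l\hat{N}^{\nmax}$ on the high-$n$ sector \emph{does} exponentiate term by term; inserting the commuting factor $\hat{N}^k$ is then harmless. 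So the paper's argument is sound in the commuting case and handles all $k$ at once with an explicit bound. Where your worry bites is only for genuinely number-non-conserving Hamiltonians, and there the paper's inequality indeed needs more care.

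Your eigenbasis argument is the more robust one in that regime: for $k\leq\nmax$ it is clean and complete, and the trick of absorbing the polynomial into a slightly smaller inverse temperature $\beta-\delta$ is a nice way to close the estimate without explicit series manipulation. The weak spot is exactly where you put it: for $k>\nmax$ you need $\brakket{\Psi_I}{\hat{N}^k}{\Psi_I}\leq P_k(E_I)$, and deducing this from strict positivity of the leading part of $\hat{H}_v^{\,j}$ is not automatic --- normal-ordering $\hat{W}^j$ and checking that the top coefficient tensor stays positive definite is doable but not one line. An alternative that avoids this is to note that in the number-conserving case your criticism evaporates (so the paper's direct bound applies), while in the non-conserving case one can split $\hat{H}_v$ into its conserving and non-conserving parts and control the latter via the inequalities in Table~\ref{tab:inequalities}, reducing again to the commuting situation. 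In short: your proof is correct and more cautious for $k\leq\nmax$; the paper's is shorter, uniform in $k$, and fully justified once one observes the commutation with $\hat{N}$ in the conserving case.
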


\begin{proof}
Basically we need to repeat the previous proof. So first for the non-interacting case, we have
\begin{align*}
Z[v]\,\av{\hat{N}^k}_v &= \Trace\bigl\{\hat{N}^k\e^{-\beta\hat{H}_v}\bigr\}
\leq \Trace\bigl\{\hat{N}^k\e^{-\beta K_l\hat{N}}\bigr\}
= \sum_{n=0}^{\infty}\binom{n+N_b-1}{n}n^k\e^{-\beta K_l n} \\
&= \du_{-\beta K_l}^k\frac{1}{(1 - \e^{-\beta K_l})^{N_b}}
= \frac{\bigl(N_b\e^{-\beta K_l}\bigr)_k}{(1 - \e^{-\beta K_l})^{N_b+k}} < \infty \, ,
\end{align*}
where $(x)_k = \Gamma(x + k)/\Gamma(x)$ denotes the Pochhammer symbol. Hence, if a non-interacting Hamiltonian is strictly positive definite, $K_l > 0$, also $\av{\hat{N}^k}_v < \infty$ for any finite $k$. For an interacting system ($\nmax > 1$) we have analogous to the previous proof
\begin{equation*}
Z[v]\,\av{\hat{N}^k}_v \leq \Trace\bigl\{\hat{N}^k\e^{-\beta K_l\hat{N}^{\nmax}}\bigr\}
\leq C_k + \frac{\bigl(N_b\e^{-\beta K_l}\bigr)_k}{(1 - \e^{-\beta K_l})^{N_b+k}} < \infty \, ,
\end{equation*}
where $C_k$ is some finite positive constant and $K_l > 0$ is again the constant defined after the proof of Lemma~\ref{lem:domHNn}.
\end{proof}

Theorem~\ref{thm:finiteNk} effectively means that any reasonable expectation value is finite. In particular, the 1RDM is finite, since $\gamma_{ij} \leq \Trace\{\gamma\} = \av[\big]{\hat{N}} < \infty$, so any $\gamma[v] \in \NoneMat$.
The same argument applies to any higher order reduced density-matrix
\begin{equation}
\Gamma^{(n)}_{i_1\dotsc i_n,j_n\dotsc j_1} \leq \Trace\bigl\{\Gamma^{(n)}\bigr\}
= \av[\big]{\hat{N}\bigl(\hat{N} - 1\bigr)\dotsm\bigl(\hat{N} - n + 1\bigr)}
\leq \av[\big]{\hat{N}^n} < \infty \, .
\end{equation}
Additionally, due to the bounds on the energy used in the proof of lemma~\ref{lem:domHNn} and the fact that $\Omega[\hat{\rho}_v] = -\beta^{-1}\ln\bigl(Z[v]\bigr) < \infty$, we immediately have the following corollary.

\begin{cor}
The entropy $S[v]$ and energy $E[v]$ are finite, if the Hamiltonian has a maximum order of the interaction which is strictly positive definite.
\end{cor}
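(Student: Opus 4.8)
The plan is to feed the two ingredients named in the sentence preceding the corollary into the thermodynamic decomposition $\Omega[\hat{\rho}_v] = E[v] - \beta^{-1}S[v]$. The first ingredient is that $\Omega[v] = -\beta^{-1}\ln(Z[v])$ is a finite real number: by Theorem~\ref{thm:finiteZ} we have $Z[v] < \infty$ under the hypothesis that the maximum-order interaction $0 < \nmax < \infty$ is strictly positive definite, and $Z[v] = \Trace\{\e^{-\beta\hat{H}_v}\} > 0$ trivially since $\e^{-\beta\hat{H}_v}$ is positive definite. The second ingredient is the operator sandwich bound extracted in the proof of Lemma~\ref{lem:domHNn}. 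Once $E[v]$ is shown to be finite, the entropy follows immediately as $S[v] = \beta\bigl(E[v] - \Omega[v]\bigr)$.

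First I would establish finiteness of $E[v] = \Trace\{\hat{\rho}_v\hat{H}_v\}$. Recall from the proof of Lemma~\ref{lem:domHNn} that, because $\nmax < \infty$ and the Hamiltonian has only finitely many parameters, there are constants $M^{(0)},\dotsc,M^{(\nmax)}$ with $\abs{M^{(i)}} < \infty$ such that $\hat{H}_v \leq M^{(0)} + M^{(1)}\hat{N} + \dotsb + M^{(\nmax)}\hat{N}^{\nmax}$ on $\domain(\hat{N}^{\nmax})$, and an analogous lower bound obtained by repeating the argument for $-\hat{H}_v$. Diagonalising $\hat{\rho}_v$ in the eigenbasis $\{\ket{\Psi_I}\}$ of $\hat{H}_v$ with Boltzmann weights $w_I = \e^{-\beta E_I}/Z[v] > 0$, I would apply this inequality to each $\brakket{\Psi_I}{\hat{H}_v}{\Psi_I}$ and sum against the weights to get $\abs{E[v]} \leq \sum_{i=0}^{\nmax}\abs{M^{(i)}}\av{\hat{N}^i}_v$, which is finite by Theorem~\ref{thm:finiteNk}. (The lower bound $E[v] > -\infty$ is in any case already guaranteed, since $\hat{H}_v$ is bounded from below by Theorem~\ref{thm:boundedfrombelow} and the subsequent proposition.)

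Finally I would conclude: since $\Omega[v]$ and $E[v]$ are both finite, $S[v] = \beta\bigl(E[v] - \Omega[v]\bigr)$ is a finite real number, and together with non-negativity of the von Neumann entropy this yields $0 \leq S[v] < \infty$. The only point needing a little care — the ``main obstacle'', such as it is — is justifying that the series $\sum_I w_I E_I$ defining $E[v]$ converges absolutely rather than merely formally. This is precisely what the polynomial sandwich bound plus $\av{\hat{N}^i}_v < \infty$ deliver, using that $\sum_I w_I \brakket{\Psi_I}{\hat{N}^i}{\Psi_I} = \av{\hat{N}^i}_v$ holds even when the $\ket{\Psi_I}$ are not eigenstates of $\hat{N}$ (as happens once number-non-conserving terms are present).
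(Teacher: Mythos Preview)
Your proposal is correct and follows exactly the route the paper sketches in the sentence preceding the corollary: bound $\hat{H}_v$ above and below by a polynomial in $\hat{N}$ via Lemma~\ref{lem:domHNn}, invoke Theorem~\ref{thm:finiteNk} to make $E[v]$ finite, and then recover $S[v]$ from the identity $S[v] = \beta\bigl(E[v] - \Omega[v]\bigr)$ using $Z[v] < \infty$. The paper leaves these steps implicit; you have simply written them out, including the absolute-convergence remark, which is a welcome clarification rather than a deviation.
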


\subsection{Existence of minimum in the bosonic universal function}
\label{sec:bosonicfunctional}

Now we will turn our attention to the question whether the infimum is attained in the bosonic universal function~\eqref{def:universalFunction}. In the fermionic case we used the extreme value theorem~\ref{thm:extremeValue}, but it is not applicable for two reasons.
\begin{enumerate*}[label=\arabic*)]
\item As we have already seen, the bosonic grand potential is not continuous.
\item The set of bosonic $\hat{\rho}$ is not compact in the trace norm, as the unit ball in an infinite-dimensional space is not compact in the usual norm, i.e.\ the trace norm in our case.
\end{enumerate*}

To resolve these issues, we follow the same strategy as used by Lieb to show the existence of the minimum in the universal functional in DFT \citep{Lieb1983} and repeated by others \citep{Eschrig1996, Eschrig2003, Leeuwen2003, Lammert2006a, Lammert2006b, Lammert2010} and in the first attempt for a rigorous finite temperature 1RDM functional theory by Baldsiefen et al. \citep{PhD-Baldsiefen2012, BaldsiefenCangiGross2015}.
We first focus on the latter problem, i.e.\ that the space $\NdensMat_+$ is not compact and neither \( \set{\hat{\rho} \in \NdensMat_+}{\hat{\rho} \to \gamma } \). To resolve this issue, we will introduce a weaker norm under which the unit ball will be compact. For this we first need to properly introduce the notion of the dual space, as was already briefly exemplified in Sec.~\ref{sec:convex}.

\begin{defn}[Dual space]
\label{def:duals}
The space of all continuous linear functionals $f \colon X \to \Complex$ is called the dual space of $X$ and denoted as $X^*$. These functionals typically denoted with a bracket in quantum mechanics as $f(x) = \braket{f}{x}$.
\end{defn}

\begin{defn}[Weak-* convergence]
We say that a sequence $f_n \in X^*$ converges in the weak-* topology if for all $x \in X$, $f_n(x) \to f(x)$. We will denote this as $f_n \weakstarly f$
\end{defn}

Weak-* convergence is useful for our purposes, since we know that the space of trace-class operators is the dual of the space of compact operators, $\CompactOperators^* = \TraceClass$, so the \mbox{weak-*} topology is well defined for the trace-class operators (and density-matrix operators). A trace-class operator $T \in \TraceClass$ is identified with a continuous linear functional on the compact operators as $\Trace\{T\,\cdot\}$.
Additionally, the weak-* topology is weak enough to make the unit ball compact as stated by the following famous theorem.

\begin{thm}[Banach--Alaoglu]\label{thm:BanachAlaoglu}
Let $X$ be a normed space and $X^*$ its dual. The closed unit ball in $X^*$ is compact with respect to the weak-* topology.
\end{thm}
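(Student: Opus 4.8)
The plan is to prove the general Banach--Alaoglu statement by the classical embedding into a product of discs, and then to note how it specialises to the separable predual that actually occurs here. Write $B^* \isDefinedAs \set{f \in X^*}{\norm{f} \leq 1}$ for the closed unit ball. The starting observation is that every $f \in B^*$ obeys $\abs{f(x)} \leq \norm{x}$ for all $x \in X$, so that $f$ may be identified with the tuple $\bigl(f(x)\bigr)_{x\in X}$ living in the product $P \isDefinedAs \prod_{x\in X}\closedBall_{\norm{x}}(0)$, where each factor $\closedBall_{\norm{x}}(0) \subset \Complex$ is a compact disc. By Tychonoff's theorem (which I would invoke but not prove) $P$ is compact in the product topology, and I define the evaluation map $\Phi \colon B^* \to P$, $\Phi(f) \isDefinedAs \bigl(f(x)\bigr)_{x\in X}$, which is clearly injective.

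The next step is to identify the weak-* topology on $B^*$ with the subspace topology that $\Phi(B^*)$ inherits from $P$. By the definition of weak-* convergence, a net in $X^*$ converges weak-* precisely when all of its coordinate evaluations $f \mapsto f(x)$ converge; this is exactly the topology of pointwise convergence, which is the product topology restricted to the image. Hence $\Phi$ is a homeomorphism of $B^*$ (with the weak-* topology) onto $\Phi(B^*) \subseteq P$, and it suffices to show $\Phi(B^*)$ is compact.

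The crux is that $\Phi(B^*)$ is a \emph{closed} subset of the compact space $P$. A point $\omega = (\omega_x)_{x\in X} \in P$ lies in $\Phi(B^*)$ if and only if the assignment $x \mapsto \omega_x$ is linear: the bound $\abs{\omega_x} \leq \norm{x}$ is built into $P$, and linearity together with that bound automatically yields continuity with operator norm at most $1$. For each fixed $x,y \in X$ and scalars $a,b$, the set $\set{\omega \in P}{\omega_{ax+by} = a\omega_x + b\omega_y}$ is the preimage of $\{0\}$ under the map $\omega \mapsto \omega_{ax+by} - a\omega_x - b\omega_y$, which is continuous because the coordinate projections are; hence this set is closed. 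Intersecting over all such $x,y,a,b$ shows $\Phi(B^*)$ is closed, hence compact, hence $B^*$ is weak-* compact. The main obstacle here is purely the bookkeeping: correctly matching the weak-* topology with the product topology and verifying the closedness of the image; everything else is carried by Tychonoff.

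Finally, in the setting of the theorem one takes $X = \CompactOperators$, so that $X^* = \TraceClass$ and $B^*$ is the trace-norm unit ball in $\TraceClass_{\pm}$ containing $\NdensMat_{\pm}$. Since the (bosonic) Fock space $\Fock_+$ is separable, $\CompactOperators$ is a separable Banach space, and therefore the weak-* topology restricted to the bounded set $B^*$ is metrizable; the abstract compactness just established then coincides with sequential compactness, which is the form used later. One can also get this last point directly: pick a countable dense subset $\{x_1,x_2,\dots\}$ of $\CompactOperators$, and from any sequence in $B^*$ extract by a diagonal argument a subsequence along which every $f_n(x_k)$ converges; the diagonal limits define, by density and the uniform bound $\norm{f_n}\leq 1$, a functional in $B^*$ to which the subsequence converges weak-*.
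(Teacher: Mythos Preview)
Your proof is correct and is in fact the classical Tychonoff-embedding argument for Banach--Alaoglu. However, the paper does not prove this theorem at all: it is quoted as a standard result from functional analysis and used as a black box, with no proof sketch offered. So there is nothing to compare on the level of argument.

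That said, your additional paragraph is genuinely useful in this context and goes beyond what the paper makes explicit. The paper applies Banach--Alaoglu to $\TraceClass = \CompactOperators^*$ and then immediately works with \emph{sequences} $\hat{\rho}_n \weakstarly \hat{\rho}$ (see Theorem~\ref{thm:FinfToMinBosons}), tacitly using that weak-* compactness of the unit ball is sequential. Your observation that $\CompactOperators$ over the separable Fock space $\Fock_+$ is a separable Banach space, so that the weak-* topology on norm-bounded sets is metrizable and compactness coincides with sequential compactness, fills a small gap that the paper leaves implicit. The diagonal-subsequence alternative you sketch is also correct and would make this point self-contained without invoking metrizability.
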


Now we have a new topology such that the closed unit ball is compact, so that every sequence has a convergent subsequence in the closed unit ball (see definition~\ref{def:compactSet}). We then need to turn our attention to the lack of continuity. It turns out that it is sufficient to use a weaker property: lower semi-continuity.

\begin{defn}[Lower/Upper semi-continuity]
Consider a topological space $X$ and a function $f : X \to \Reals \cup \{-\infty,+\infty\}$. A function $f$ is called lower semi-continuous at $x_0$ if for every $\epsilon > 0$ there exists a neighborhood $U$ of $x_0$ such that $f(x) \geq f(x_0) - \epsilon$ for all $x \in U$.

Upper semi-continuity at $f(x_0)$ is simply lower semi-continuity of $-f(x_0)$.

If a function $f$ is lower (upper) semi-continuous at every point in its domain, $f$ is called lower (upper) semi-continuous.
\end{defn}

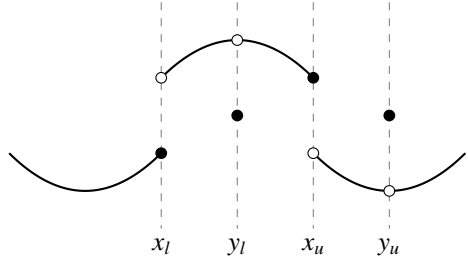
\begin{figure}\centering
\begin{tikzpicture}
  \draw[thick, smooth, domain=-3:-1] plot (\x,{0.5*(\x + 1)*(\x + 3)});
  \draw[thick, smooth, domain=-1:1] plot (\x,{-0.5*(\x*\x - 1) + 1});
  \draw[thick, smooth, domain=1:3] plot (\x,{0.5*(\x - 1)*(\x - 3)});
  \draw[dashed, gray] (-1,2) -- (-1,-1) node[black, anchor=north] {$x_l$};
  \filldraw (-1,0) circle (2pt);
  \filldraw[fill=white] (-1,1) circle (2pt);
  \draw[dashed, gray] (0,2) -- (0,-1) node[black, anchor=north] {$y_l$};
  \filldraw (0,0.5) circle (2pt);
  \filldraw[fill=white] (0,1.5) circle (2pt);
  \draw[dashed, gray] (1,2) -- (1,-1) node[black, anchor=north] {$x_u$};
  \filldraw (1,1) circle (2pt);
  \filldraw[fill=white] (1,0) circle (2pt);
  \draw[dashed, gray] (2,2) -- (2,-1) node[black, anchor=north] {$y_u$};
  \filldraw (2,0.5) circle (2pt);
  \filldraw[fill=white] (2,-0.5) circle (2pt);
\end{tikzpicture}
\caption{Plot of a function which is lower semicontinuous at $x_l$ and $y_l$. It is upper semi-continuous at $x_u$ and $y_u$. The solid dots indicate where the function attains it value and the open dots where it does not.}
\label{fig:semicontinuity}
\end{figure}

The concepts of lower and upper semi-continuity are illustrated in Fig.~\ref{fig:semicontinuity}.
Obviously, a function is continuous if and only if it is lower and upper semi-continuous. An other convenient property is the following.

\begin{prop}\label{prop:supCombi}
The supremum over a collection of lower semi-continuous functions, $f(x) = \sup_if_i(x)$, is also lower semi-continuous.
\end{prop}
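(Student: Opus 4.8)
The plan is to work directly from the definition of lower semi-continuity at an arbitrary point. Let $f(x) = \sup_i f_i(x)$ where each $f_i$ is lower semi-continuous, and fix a point $x_0$ together with an $\epsilon > 0$. I want to produce a neighbourhood $U$ of $x_0$ on which $f(x) \geq f(x_0) - \epsilon$. The natural first step is to observe that by definition of the supremum there exists an index $j$ with $f_j(x_0) > f(x_0) - \epsilon/2$ (or $f_j(x_0)$ arbitrarily large if $f(x_0) = +\infty$, which I would treat as a separate, easy case). This reduces the problem from controlling the whole family to controlling a single member $f_j$ near $x_0$.

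Next I would invoke lower semi-continuity of that single $f_j$: there is a neighbourhood $U$ of $x_0$ such that $f_j(x) \geq f_j(x_0) - \epsilon/2$ for all $x \in U$. Chaining the two inequalities gives, for every $x \in U$,
\begin{equation*}
f(x) = \sup_i f_i(x) \geq f_j(x) \geq f_j(x_0) - \frac{\epsilon}{2} > f(x_0) - \epsilon \, ,
\end{equation*}
where the first inequality is just the defining property of the supremum. This is exactly the condition for $f$ to be lower semi-continuous at $x_0$, and since $x_0$ was arbitrary, $f$ is lower semi-continuous.

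The case $f(x_0) = +\infty$ should be handled first or in parallel: here "lower semi-continuity at $x_0$" requires that for every $M$ there is a neighbourhood on which $f \geq M$, and the same argument applies by picking $j$ with $f_j(x_0) > M$ and then a neighbourhood where $f_j \geq M$. I do not expect any real obstacle here — the statement is a soft topological fact and the only mild subtlety is keeping the bookkeeping of the $\pm\infty$ values clean and remembering that the index set may be infinite (which is fine, since we only ever select a single index $j$ from it). The key idea, worth stating explicitly, is that lower semi-continuity is preserved by suprema precisely because a supremum only needs \emph{one} function to be large, whereas infima are not preserved because they need \emph{all} functions to be large simultaneously.
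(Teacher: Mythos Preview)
Your proof is correct and follows essentially the same approach as the paper, which simply records the chain of inequalities $f(x) = \sup_i f_i(x) \geq \sup_i f_i(x_0) - \epsilon = f(x_0) - \epsilon$ in one line. Your version makes explicit the selection of a single index $j$ with $f_j(x_0) > f(x_0) - \epsilon/2$ and the separate handling of $f(x_0) = +\infty$, which is arguably cleaner bookkeeping than the paper's compressed one-liner, but the underlying idea is identical.
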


\begin{proof}
By definition of the supremum we have in a neighborhood around $x_0$
\begin{equation*}
f(x) = \sup_if_i(x) \geq \sup_if_i(x_0) - \epsilon = f(x_0) - \epsilon \, . \qedhere
\end{equation*}
\end{proof}

The relevance of lower (upper) semi-continuity in this context is that the extreme value theorem is readily adapted to these weaker forms of continuity.

\begin{thm}\label{thm:extremeValueSemi}
Let $f \colon X \to \Reals$ be a lower (upper) semi-continuous function and $M \subseteq X$ a compact set. Then $f$ must attain a minimum (maximum) at least once.
\end{thm}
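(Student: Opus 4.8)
The plan is to mimic the proof of the extreme value theorem (Theorem~\ref{thm:extremeValue}), observing that full continuity is never actually used there: only the one-sided control provided by semi-continuity is needed. I would prove the statement for a lower semi-continuous $f$ and the minimum, and then deduce the upper semi-continuous/maximum case by applying the result to $-f$, which is lower semi-continuous and whose minimiser is a maximiser of $f$.

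First I would show that a lower semi-continuous $f$ is bounded from below on the compact set $M$, by \emph{reductio ad absurdum}. If not, there is a sequence $\{x_n\} \subseteq M$ with $f(x_n) < -n$. By compactness — in the sequential sense of definition~\ref{def:compactSet}, as used in the proof of Theorem~\ref{thm:extremeValue} — this has a subsequence $x_{n_l} \to x \in M$. Lower semi-continuity at $x$ provides, taking $\epsilon = 1$, a neighbourhood $U$ of $x$ with $f(y) \geq f(x) - 1$ for all $y \in U$; since $x_{n_l} \in U$ for $l$ large enough, $f(x_{n_l}) \geq f(x) - 1$ eventually, contradicting $f(x_{n_l}) \to -\infty$. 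Hence $m \isDefinedAs \inf_{x \in M} f(x) > -\infty$.

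Next I would show that this infimum is attained. Pick a minimising sequence $\{x_n\} \subseteq M$ with $f(x_n) \to m$, and extract by compactness a convergent subsequence $x_{n_l} \to x^* \in M$. Fix $\epsilon > 0$; lower semi-continuity at $x^*$ gives a neighbourhood $U$ with $f(y) \geq f(x^*) - \epsilon$ for all $y \in U$, and since $x_{n_l} \in U$ for large $l$ we get $f(x_{n_l}) \geq f(x^*) - \epsilon$. Letting $l \to \infty$ yields $m \geq f(x^*) - \epsilon$, and since $\epsilon$ was arbitrary, $m \geq f(x^*)$. Combined with the trivial bound $f(x^*) \geq m$ from the definition of the infimum, this gives $f(x^*) = m$, so the minimum is attained at $x^*$. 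Applying this conclusion to $-f$ then yields the upper semi-continuous statement.

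The argument is essentially routine and I do not expect a serious obstacle; the only point requiring care is that semi-continuity controls $f$ from one side only, so the minimising-sequence step must be set up so that the useful inequality $m \geq f(x^*)$ — rather than its reverse — is the one that survives passage to the limit, after which it is married to the immediate bound $f(x^*) \geq m$.
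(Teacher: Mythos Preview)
Your proposal is correct and follows essentially the same approach as the paper, which simply remarks that the proof of Theorem~\ref{thm:extremeValue} goes through unchanged because only one-sided (semi-)continuity was used for the relevant bound. Your write-up is in fact more explicit than the paper's, spelling out carefully both the boundedness-from-below step and the attainment step via a minimising sequence.
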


\begin{proof}
As we only used lower (upper) semi-continuity in the existence of a lower (upper) bound in the proof of the extreme value theorem, it is basically the same.
\end{proof}

The next step is to demonstrate lower semi-continuity of the relevant functionals. As we would like to use compactness of the unit ball, we should show lower semi-continuity with respect to the weak-* topology. To differentiate this from lower semi-continuity with respect to the usual norm, we will call this weak-* lower semi-continuity.

\begin{thm}\label{thm:bosSlsc}
The entropy is weak-* lower semi-continuous.
\end{thm}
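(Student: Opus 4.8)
The plan is to represent the entropy as a supremum of weak-* continuous functionals and then invoke Proposition~\ref{prop:supCombi}. The key observation is that for $x \in [0,1]$ one has the variational formula $-x\ln x = \inf_{t > 0}\bigl(x\,t - x\ln t + \text{(something)}\bigr)$, or more usefully, writing $s(x) = -x\ln x$, that $s$ is concave so it is the infimum of its affine majorants; dually, since we want a supremum, I would instead use that $-s(x) = x\ln x$ is convex and hence a supremum of affine minorants, but the cleanest route is the following. For a density-matrix operator $\hat\rho = \sum_k w_k \ket{\Psi_k}\bra{\Psi_k}$ with eigenvalues $w_k \in [0,1]$, we have $S[\hat\rho] = \sum_k s(w_k)$ with $s(x) = -x\ln x \geq 0$ on $[0,1]$. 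The truncated sums $S_N[\hat\rho] \isDefinedAs \sum_{k=1}^N s(w_k)$ increase to $S[\hat\rho]$, so $S[\hat\rho] = \sup_N S_N[\hat\rho]$, and it suffices to show each $S_N$ (or a suitable finite-rank variant) is weak-* lower semi-continuous and then apply Proposition~\ref{prop:supCombi}.

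First I would make precise which finite approximations to use. A robust choice, following Lieb, is to write, for each natural number $N$,
\begin{equation*}
S_N[\hat\rho] \isDefinedAs \sup\set[\big]{\textstyle\sum_{k=1}^N s\bigl(\brakket{\phi_k}{\hat\rho}{\phi_k}\bigr)}{\{\phi_k\}_{k=1}^N \text{ orthonormal in } \Fock_+} \, ,
\end{equation*}
and to show (i) that $S_N[\hat\rho] \nearrow S[\hat\rho]$ as $N \to \infty$ for every $\hat\rho \in \NdensMat_+$, and (ii) that $\hat\rho \mapsto \sum_{k=1}^N s(\brakket{\phi_k}{\hat\rho}{\phi_k})$ is weak-* continuous for each fixed orthonormal system $\{\phi_k\}_{k=1}^N$. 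For (ii), note that $\hat\rho \mapsto \brakket{\phi_k}{\hat\rho}{\phi_k} = \Trace\{\hat\rho\,\ket{\phi_k}\bra{\phi_k}\}$ is exactly the pairing of the trace-class operator $\hat\rho$ with the rank-one (hence compact) operator $\ket{\phi_k}\bra{\phi_k}$, so it is weak-* continuous by the very definition of the weak-* topology on $\TraceClass_+ = \CompactOperators^*$; composing with the continuous function $s$ on $[0,1]$ and summing finitely many such terms preserves continuity. Then $S_N$, being a supremum of weak-* continuous functions, is weak-* lower semi-continuous by Proposition~\ref{prop:supCombi}, and $S = \sup_N S_N$ is weak-* lower semi-continuous by the same proposition applied once more.

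The main obstacle is step (i): showing that these finite approximants actually converge up to the true entropy, i.e.\ that nothing is lost by restricting to $N$-dimensional "test" subspaces. The essential point is that $s \geq 0$ on $[0,1]$, so all terms are nonnegative and the only question is whether the supremum over orthonormal $N$-systems recovers $\sum_k s(w_k)$ in the limit; choosing $\{\phi_k\}$ to be the top $N$ eigenvectors of $\hat\rho$ already gives $\sum_{k=1}^N s(w_k)$, which increases to $S[\hat\rho]$ by monotone convergence, and the reverse inequality $S_N[\hat\rho] \leq S[\hat\rho]$ follows because replacing the eigenbasis by any other orthonormal system can only decrease the sum — this is a standard consequence of concavity of $s$ together with Schur-type majorization (the vector $(\brakket{\phi_k}{\hat\rho}{\phi_k})_k$ is majorized by the eigenvalue vector of $\hat\rho$). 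Assembling the majorization argument carefully is the one genuinely non-trivial ingredient; everything else is bookkeeping with the definitions of weak-* convergence and Proposition~\ref{prop:supCombi}. Finally I would remark that the same argument shows $S$ is \emph{not} weak-* upper semi-continuous, consistent with Theorem~\ref{thm:domainSbosonic}, since the individual truncations undercount and the limit can jump up.
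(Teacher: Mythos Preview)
Your overall strategy—write $S$ as a supremum of weak-* continuous functionals and invoke Proposition~\ref{prop:supCombi}—matches the paper's. But step~(i) contains a genuine error.

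You claim $S_N[\hat\rho] \leq S[\hat\rho]$ via Schur majorization and concavity of $s$. The inequality runs the other way: Schur--Horn says the diagonal vector $\bigl(\brakket{\phi_k}{\hat\rho}{\phi_k}\bigr)_k$ is majorized by the eigenvalue vector $(w_k)_k$, and for a \emph{concave} $s$ this yields $\sum_k s(d_k) \geq \sum_k s(w_k)$, not $\leq$. A direct counterexample: let $\hat\rho = \ket{\Psi}\bra{\Psi}$ be pure, so $S[\hat\rho]=0$. Choosing any orthonormal $\phi_1,\dotsc,\phi_N$ with $\abs{\braket{\phi_k}{\Psi}}^2 = 1/N$ gives $\sum_{k=1}^N s(1/N) = \ln N$, hence $S_N[\hat\rho] \geq \ln N \to \infty$. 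Thus $\sup_N S_N \neq S$, and your approximants are the wrong ones.

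The paper avoids this by swapping the order of $s$ and the projection: it uses
\begin{equation*}
S[\hat\rho] \;=\; \sup_{\hat P}\,\Trace\bigl\{\hat P\,s(\hat\rho)\bigr\}
\;=\; \sup_{\hat P}\sum_k \brakket{\phi_k}{s(\hat\rho)}{\phi_k}\,,
\end{equation*}
with $\hat P$ ranging over finite-rank orthogonal projections. Since $s(\hat\rho) \geq 0$ this is trivially $\leq S[\hat\rho]$ (no majorization needed), and projecting onto the first $N$ eigenvectors recovers the partial eigenvalue sum, so equality holds in the supremum. Jensen's inequality gives $\brakket{\phi_k}{s(\hat\rho)}{\phi_k} \leq s\bigl(\brakket{\phi_k}{\hat\rho}{\phi_k}\bigr)$, which is precisely the gap between the paper's truncation and yours—and it is yours that overshoots. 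What remains is to argue that $\hat\rho \mapsto \Trace\bigl\{\hat P\,s(\hat\rho)\bigr\}$ is weak-* lower semi-continuous for each fixed finite-rank $\hat P$; your observation that $\hat\rho \mapsto s\bigl(\brakket{\phi_k}{\hat\rho}{\phi_k}\bigr)$ is weak-* continuous is correct but, as shown, attached to the wrong functional.
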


\begin{proof}\citep{Wehrl1978}
To show this, we will use the fact that any finite-rank operator is compact, so in particular every finite-dimensional projection operator $\hat{P}$. Therefore, weak-* convergence of $\hat{\rho}_n$ to $\hat{\rho}$ means that \( \Trace\bigl\{\hat{P}\hat{\rho}_n\bigr\} \to \Trace\bigl\{\hat{P}\hat{\rho}\bigr\} \) for any $\hat{P}$. Since the function $s(x) = -x\ln(x)$ is continuous, we have $\Trace\bigl\{\hat{P}\bigl(s(\hat{\rho}_n) - s(\hat{\rho})\bigr)\bigr\} \to 0$. Further, $\Trace\{\hat{P}\hat{A}\} \leq \Trace\{\hat{A}\}$ for $\hat{A} \geq 0$, so $ \Trace\{\hat{A}\} = \sup_{\hat{P}}\Trace\{\hat{P}\hat{A}\}$. We therefore have that
\begin{equation*}
S[\hat{\rho}] = \sup_{\hat{P}}\Trace\{\hat{P}s(\hat{\rho})\}
\end{equation*}
is lower semi-continuous by proposition~\ref{prop:supCombi}.
\end{proof}

\begin{thm}
The energy is weak-* lower semi-continuous.
\end{thm}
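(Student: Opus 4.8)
The plan is to imitate the proof of the entropy case (Theorem~\ref{thm:bosSlsc}): express the energy as a supremum of \emph{weak-*} continuous functionals and then invoke Proposition~\ref{prop:supCombi}. The obstruction is that $\hat{H}_v$ is an unbounded operator, hence not compact, so $\hat{\rho} \mapsto \Trace\{\hat{\rho}\,\hat{H}_v\}$ is not weak-* continuous and the limit cannot simply be passed through. We therefore approximate $\hat{H}_v$ from below by finite-rank (hence compact) operators for which weak-* continuity is automatic.

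First I would record that for $v \in \FiniteNpot$ the Hamiltonian $\hat{H}_v$ is bounded from below, say $\hat{H}_v \geq \lambda_0 > -\infty$: in the bosonic interacting case this is Theorem~\ref{thm:boundedfrombelow}, and it also follows directly from $Z[v] < \infty$, which would be violated if the spectrum of $\hat{H}_v$ reached down to $-\infty$. The same finiteness of $Z[v]$ forces the spectrum to be discrete with eigenvalues $E_I \nearrow \infty$; let $\{\ket{\Psi_I}\}$ be a corresponding orthonormal eigenbasis of $\Fock_+$. Since adding the constant $-\lambda_0$ shifts the energy only by $\lambda_0\Trace\{\hat{\rho}\} = \lambda_0$ on $\NdensMat_+$, it suffices to treat the positive operator $\hat{G} \isDefinedAs \hat{H}_v - \lambda_0 \geq 0$. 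For $M > 0$ define the spectral truncation
\begin{equation*}
\hat{G}^{(M)} \isDefinedAs \sum_{E_I \leq M + \lambda_0}(E_I - \lambda_0)\,\ket{\Psi_I}\bra{\Psi_I} \, ,
\end{equation*}
which is a finite sum of rank-one operators (finitely many $E_I$ lie below $M + \lambda_0$ because $E_I \nearrow \infty$), hence finite rank and in particular compact. Moreover $\{\hat{G}^{(M)}\}_M$ is monotone increasing, since $\hat{G}^{(M_2)} - \hat{G}^{(M_1)}$ is a sum of non-negative rank-one operators for $M_1 < M_2$, and $\hat{G}^{(M)} \to \hat{G}$ strongly on $\domain(\hat{H}_v)$.

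For any $\hat{\rho} \in \NdensMat_+$ one has $\Trace\{\hat{\rho}\,\hat{G}^{(M)}\} = \sum_{E_I \leq M + \lambda_0}(E_I - \lambda_0)\brakket{\Psi_I}{\hat{\rho}}{\Psi_I}$, a partial sum of the non-negative series $\sum_I (E_I - \lambda_0)\brakket{\Psi_I}{\hat{\rho}}{\Psi_I}$, so by monotone convergence $\Trace\{\hat{\rho}\,\hat{G}\} = \sup_M \Trace\{\hat{\rho}\,\hat{G}^{(M)}\}$ (with the value $+\infty$ permitted). Each $\hat{\rho} \mapsto \Trace\{\hat{\rho}\,\hat{G}^{(M)}\}$ is weak-* continuous because $\hat{G}^{(M)}$ is compact and $\TraceClass_+ = \CompactOperators_+^*$, hence weak-* lower semi-continuous; by Proposition~\ref{prop:supCombi} the supremum $\hat{\rho} \mapsto \Trace\{\hat{\rho}\,\hat{G}\}$ is weak-* lower semi-continuous, and therefore so is $E_v[\hat{\rho}] = \Trace\{\hat{\rho}\,\hat{G}\} + \lambda_0$. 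Concretely, for $\hat{\rho}_n \weakstarly \hat{\rho}$ we get $\liminf_n E_v[\hat{\rho}_n] \geq \Trace\{\hat{\rho}\,\hat{G}^{(M)}\} + \lambda_0$ for every $M$, and taking the supremum over $M$ yields $\liminf_n E_v[\hat{\rho}_n] \geq E_v[\hat{\rho}]$.

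The main obstacle, as flagged above, is precisely the unboundedness of $\hat{H}_v$: the argument hinges on using the \emph{spectral} truncations $\hat{G}^{(M)}$ rather than the naive particle-number truncations $\hat{P}_{\leq n}\hat{G}\hat{P}_{\leq n}$, since only the former form a monotone increasing family whose supremum recovers $\hat{G}$; and on the fact that $Z[v] < \infty$ makes the spectrum of $\hat{H}_v$ discrete and divergent, which is what renders each $\hat{G}^{(M)}$ finite-rank and hence compact. Once the energy and the entropy are both known to be weak-* lower semi-continuous, combining with the weak-* compactness furnished by Banach--Alaoglu (Theorem~\ref{thm:BanachAlaoglu}) and the semi-continuous extreme value theorem (Theorem~\ref{thm:extremeValueSemi}) will give the existence of the minimiser in the bosonic grand potential, which is the goal of this subsection.
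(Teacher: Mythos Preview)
Your proof is correct and is essentially the paper's own argument spelled out in full: the paper merely writes ``the proof is the same as for the entropy,'' meaning one writes the (shifted, non-negative) energy as a supremum over finite-rank spectral cut-offs and invokes Proposition~\ref{prop:supCombi}, exactly as you do.

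One caution about your closing paragraph, however: you suggest that once both the energy and the entropy are weak-* lower semi-continuous one can ``combine'' them to obtain the minimiser of the grand potential. This does not work directly, because $\Omega_v[\hat{\rho}] = E_v[\hat{\rho}] - \beta^{-1}S[\hat{\rho}]$ involves \emph{minus} the entropy, which is weak-* \emph{upper} semi-continuous, and the sum of a lower and an upper semi-continuous functional need not be semi-continuous in either sense. The paper flags precisely this obstruction right after the present theorem and circumvents it by passing through the \emph{relative} entropy $S[\hat{\rho}\,|\,\hat{\rho}_v]$, whose weak-* lower semi-continuity (Theorem~\ref{thm:relSlsc}) then yields Corollary~\ref{cor:OmegaWeakStarlsc}.
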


\begin{proof}
The proof is the same as for the entropy.
\end{proof}

The grand potential is a linear combination of a weak-* lower semi-continuous functional (the energy) and a upper semi-continuous functional (minus the entropy), so we can not say anything directly. Again, the relative entropy~\eqref{def:relEntropy} comes to our aid. There is an alternative expression for the relative entropy, which avoids the product of non-commuting operators~\cite[Lemma 4]{Lindblad1973}
\begin{equation}\label{eq:altDefRelS}
S[\hat{\rho} | \hat{\sigma}]
= \sup_{\mathclap{0 < \lambda < 1}} \lambda^{-1}S_{\lambda}[\hat{\rho} | \hat{\sigma}] \, ,
\end{equation}
where $S_{\lambda}[\hat{\rho} | \hat{\sigma}] \isDefinedAs S[\lambda\hat{\rho} + (1 - \lambda)\hat{\sigma}] - \lambda S[\hat{\rho}] - (1-\lambda)S[\hat{\sigma}]$.
The proof for the equality in~\eqref{eq:altDefRelS} has been deferred to Appendix~\ref{ap:ProofRelSasLim}.
With this alternative expression, we can readily establish the following theorem.

\begin{thm}\label{thm:relSlsc}
The relative entropy is weak-* lower semi-continuous.
\end{thm}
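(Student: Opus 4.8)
The plan is to combine the variational identity~\eqref{eq:altDefRelS} with Proposition~\ref{prop:supCombi}: if each functional $(\hat\rho,\hat\sigma)\mapsto\lambda^{-1}S_\lambda[\hat\rho|\hat\sigma]$ is weak-* lower semi-continuous, then so is their supremum over $\lambda\in(0,1)$, which by~\eqref{eq:altDefRelS} is exactly $S[\hat\rho|\hat\sigma]$. Expanding $S_\lambda$ naively as $S[\lambda\hat\rho+(1-\lambda)\hat\sigma]-\lambda S[\hat\rho]-(1-\lambda)S[\hat\sigma]$ is not sufficient: by Theorem~\ref{thm:bosSlsc} the first term is weak-* lower semi-continuous, but the remaining two enter with the ``wrong'' sign and are only weak-* \emph{upper} semi-continuous. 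The point is therefore to exhibit $S_\lambda$ itself as a supremum of weak-* continuous functionals.

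Such a representation follows from the \emph{operator} concavity of $s(x)\isDefinedAs-x\ln x$ on $[0,\infty)$ (equivalently, operator convexity of $x\ln x$). For $\hat\rho,\hat\sigma\in\NdensMat_+$ and $\lambda\in(0,1)$ put $\hat\tau\isDefinedAs\lambda\hat\rho+(1-\lambda)\hat\sigma$ and
\[
\hat X_\lambda(\hat\rho,\hat\sigma)\isDefinedAs s(\hat\tau)-\lambda\,s(\hat\rho)-(1-\lambda)\,s(\hat\sigma)\,.
\]
Operator concavity gives the operator inequality $\hat X_\lambda\geq0$, and taking the trace reproduces $\Trace\{\hat X_\lambda\}=S_\lambda[\hat\rho|\hat\sigma]$. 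Since $\hat X_\lambda$ is positive, $\Trace\{\hat P\hat X_\lambda\}$ is monotone in the finite-rank orthogonal projection $\hat P$ and $\sup_{\hat P}\Trace\{\hat P\hat X_\lambda\}=\Trace\{\hat X_\lambda\}$, exactly as used in the proof of Theorem~\ref{thm:bosSlsc}. Hence
\[
S_\lambda[\hat\rho|\hat\sigma]=\sup_{\hat P}\Bigl(\Trace\{\hat P\,s(\hat\tau)\}-\lambda\Trace\{\hat P\,s(\hat\rho)\}-(1-\lambda)\Trace\{\hat P\,s(\hat\sigma)\}\Bigr)\,,
\]
the supremum over finite-dimensional projections $\hat P$.

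It remains to check that, for fixed $\hat P$ and $\lambda$, each of the three terms in the bracket is weak-* continuous in $(\hat\rho,\hat\sigma)$. The assignment $(\hat\rho,\hat\sigma)\mapsto\hat\tau$ is affine and weak-* continuous, and $\hat\tau$ is again a density-matrix operator, so it suffices to know that $\hat\rho\mapsto\Trace\{\hat P\,s(\hat\rho)\}$ is weak-* continuous on $\NdensMat_+$ — and this is precisely the step already invoked in the proof of Theorem~\ref{thm:bosSlsc}: weak-* convergence $\hat\rho_n\weakstarly\hat\rho$ tested against the compact operator $\hat P$, together with continuity and boundedness of $s$ on $[0,1]$, yields $\Trace\{\hat P(s(\hat\rho_n)-s(\hat\rho))\}\to0$. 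Therefore $(\hat\rho,\hat\sigma)\mapsto\lambda^{-1}\Trace\{\hat P\hat X_\lambda(\hat\rho,\hat\sigma)\}$ is weak-* continuous for every pair $(\lambda,\hat P)$, and by~\eqref{eq:altDefRelS}
\[
S[\hat\rho|\hat\sigma]=\sup_{0<\lambda<1}\ \sup_{\hat P}\ \lambda^{-1}\Trace\{\hat P\,\hat X_\lambda(\hat\rho,\hat\sigma)\}
\]
displays the relative entropy as a supremum of weak-* continuous functionals. Proposition~\ref{prop:supCombi} then gives that $S[\hat\rho|\hat\sigma]$ is weak-* lower semi-continuous, jointly in $(\hat\rho,\hat\sigma)$ and in particular in $\hat\rho$ for a fixed $\hat\sigma=\hat\rho_v$, which is the form needed afterwards.

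The main obstacle is structural rather than computational: it is crucial that $-x\ln x$ is operator concave, not merely that $\hat\rho\mapsto\Trace\{s(\hat\rho)\}$ is concave, because only the operator positivity of $\hat X_\lambda$ turns $S_\lambda$ into an honest supremum over projections and thereby removes the ``wrong-sign'' entropy terms. The sole genuinely analytic input, the weak-* continuity of $\hat\rho\mapsto\Trace\{\hat P\,s(\hat\rho)\}$, is imported verbatim from the proof of Theorem~\ref{thm:bosSlsc}.
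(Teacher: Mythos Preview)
Your proof is correct and follows essentially the same route as the paper: use the variational formula~\eqref{eq:altDefRelS}, write $S_\lambda$ as a supremum over finite-rank projections, and invoke Proposition~\ref{prop:supCombi}, exactly as in the proof of Theorem~\ref{thm:bosSlsc} with the extra supremum over $\lambda$. You have made explicit the key structural point the paper's two-sentence proof leaves implicit---that one needs \emph{operator} concavity of $-x\ln x$ (not merely scalar concavity) to ensure $\hat X_\lambda\geq 0$ and hence to represent $S_\lambda$ itself as a supremum over projections, thereby handling the ``wrong-sign'' entropy terms.
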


\begin{proof}
Using the alternative expression for the relative entropy~\eqref{eq:altDefRelS}, we can repeat the proof for weak-* lower semi-continuity of the entropy (theorem~\ref{thm:bosSlsc}). The only change is that the supremum is now also taken over $\lambda$.
\end{proof}

\begin{cor}\label{cor:OmegaWeakStarlsc}
The grand potential $\Omega_v[\hat{\rho}]$ is weak-* lower semi-continuous if and only if $Z[v] < \infty$.
\end{cor}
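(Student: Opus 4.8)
\emph{Strategy.} The natural route is to trade the grand potential for the relative entropy with respect to the Gibbs state, for which weak-* lower semi-continuity is already in hand (Theorem~\ref{thm:relSlsc}). For the direction $Z[v] < \infty \Rightarrow$ weak-* lsc: when $Z[v]<\infty$ the Gibbs state $\hat{\rho}_v = \e^{-\beta\hat{H}_v}/Z[v] \in \NdensMat_+$ exists and is strictly positive (Sec.~\ref{sec:grandpotential}), so $\ker(\hat{\rho}_v) = \{0\} \subseteq \ker(\hat{\rho})$ for every $\hat{\rho} \in \NdensMat_+$, and $\Omega_v[\hat{\rho}] \geq \Omega[v]$ is bounded below by~\eqref{eq:OmegaInequality}. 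The identity recorded just before Theorem~\ref{thm:bosonicOmegaMinExist},
\[
S[\hat{\rho} \mid \hat{\rho}_v] = \beta\bigl(\Omega_v[\hat{\rho}] - \Omega_v[\hat{\rho}_v]\bigr) = \beta\bigl(\Omega_v[\hat{\rho}] - \Omega[v]\bigr),
\]
then yields $\Omega_v[\hat{\rho}] = \beta^{-1}S[\hat{\rho}\mid\hat{\rho}_v] + \Omega[v]$ on all of $\NdensMat_+$ (both sides being simultaneously $+\infty$ when $\Omega_v[\hat{\rho}]=\infty$). Since $\Omega[v] = -\beta^{-1}\ln Z[v]$ is a finite constant and $\hat{\rho}\mapsto S[\hat{\rho}\mid\hat{\rho}_v]$ is weak-* lower semi-continuous by Theorem~\ref{thm:relSlsc}, so is $\Omega_v$.

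For the converse, $Z[v] = \infty \Rightarrow \Omega_v$ not weak-* lsc, I would use the minimising sequence already built in the proof of Theorem~\ref{thm:bosonicOmegaMinExist}: the normalised Gibbs states $\hat{\rho}_n$ of the finite-rank truncations $\hat{P}_{\leq n}\hat{H}_v\hat{P}_{\leq n}$ satisfy $\Omega_v[\hat{\rho}_n] = -\beta^{-1}\ln Z_n \to -\infty$ with $Z_n \nearrow Z[v] = \infty$, so $\Omega_v$ is unbounded from below on $\NdensMat_+$. For any fixed finite-rank projection $\hat{P}$ one has $\Trace\{\hat{P}\hat{\rho}_n\}\to 0$, because the numerator stays bounded while $Z_n\to\infty$ — the diverging weight of the truncated Gibbs operators escapes to arbitrarily high particle number; since finite-rank operators are norm-dense in $\CompactOperators$ and $\norm{\hat{\rho}_n}_1 = 1$, it follows that $\hat{\rho}_n \weakstarly 0$. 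Now regard $\Omega_v$ on the weak-* compact superset $\mathcal{K} = \{\hat{\rho}\geq 0 : \Trace\{\hat{\rho}\}\leq 1\} \supseteq \NdensMat_+$ (weak-* closed in the unit ball of $\TraceClass = \CompactOperators^*$, hence compact by Banach--Alaoglu, Theorem~\ref{thm:BanachAlaoglu}), extended by the same formula $\Omega_v[\hat{\rho}] = E_v[\hat{\rho}] - \beta^{-1}S[\hat{\rho}]$ (so $\Omega_v[0]=0$). Weak-* lower semi-continuity would then, via the extreme value theorem for semi-continuous functions (Theorem~\ref{thm:extremeValueSemi}), force a finite minimum on $\mathcal{K}$; alternatively it would directly give the absurdity $0 = \Omega_v[0] \leq \liminf_n\Omega_v[\hat{\rho}_n] = -\infty$. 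Hence $\Omega_v$ cannot be weak-* lsc when $Z[v] = \infty$.

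\emph{Main obstacle.} The forward implication is essentially immediate once Theorem~\ref{thm:relSlsc} is available; the delicate point is the converse, and there the normalisation mismatch: the only natural sequence along which $\Omega_v$ diverges to $-\infty$ loses its mass in the weak-* limit, so its limit $0$ lies outside $\NdensMat_+$. This is why the failure of semi-continuity has to be read on the weak-* closure $\mathcal{K}$ of $\NdensMat_+$ (equivalently, in terms of the extreme value theorem), and why a little care with $\Trace\{\hat{P}\hat{\rho}_n\}$ is needed to certify $\hat{\rho}_n \weakstarly 0$ for a number non-conserving $\hat{H}_v$; in the non-interacting, unbounded-from-below case this is transparent, since the low-lying eigenstates of the truncations occupy ever higher particle-number sectors and are asymptotically orthogonal to any fixed finite-dimensional subspace, while the general case follows because $Z_n \to \infty$ forces the weight to escape to infinite particle number.
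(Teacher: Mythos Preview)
Your forward direction ($Z[v]<\infty \Rightarrow$ weak-* lsc) is exactly the argument the paper has in mind: the corollary is placed immediately after Theorem~\ref{thm:relSlsc} precisely so that the identity $\Omega_v[\hat{\rho}] = \beta^{-1}S[\hat{\rho}\mid\hat{\rho}_v] + \Omega[v]$ can be invoked, and you do this correctly. This is all the paper actually uses downstream (in Theorem~\ref{thm:FinfToMinBosons}).

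For the converse you go well beyond the paper, which offers no argument at all. Your construction is sensible, but there is a genuine gap: you show that the \emph{extension} of $\Omega_v$ to $\mathcal{K}=\{\hat\rho\geq 0:\Trace\hat\rho\leq 1\}$ fails to be weak-* lsc at $0$, not that $\Omega_v$ fails to be lsc on $\NdensMat_+$ itself. With the paper's pointwise definition of lower semi-continuity (checked only at points of the domain), a sequence whose weak-* limit leaves $\NdensMat_+$ does not witness failure of lsc on $\NdensMat_+$; and you yourself note that the natural minimising sequence loses mass and lands at $0\notin\NdensMat_+$. So as stated the argument does not close. What it \emph{does} establish is that $\Omega_v$ cannot be extended to a weak-* lsc function on the weak-* closure of $\NdensMat_+$ in the unit ball --- which is in fact the property one needs for the Banach--Alaoglu/extreme-value machinery of Theorem~\ref{thm:FinfToMinBosons}, and is arguably what the corollary should be read as asserting. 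If you want the converse on $\NdensMat_+$ proper, you would need to produce a sequence $\hat\rho_n\weakstarly\hat\rho_\infty\in\NdensMat_+$ with $\liminf\Omega_v[\hat\rho_n]<\Omega_v[\hat\rho_\infty]$, which is a different (and harder) task; alternatively, make the intended domain explicit. A secondary point: the claim $\Trace\{\hat P\hat\rho_n\}\to 0$ is clean when $\hat H_v$ is number-conserving (so that the numerator stabilises for large $n$), but in the number non-conserving case $\hat P_{\leq n}e^{-\beta\hat H_v}$ need not even be self-adjoint, and the ``bounded numerator'' heuristic requires more care than the sketch provides.
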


Now we are almost done. Since the grand potential is weak-* lower semi-continuous and the infimum in $F[\gamma] = \inf_{\hat{\rho} \to \gamma}\Omega_0[\hat{\rho}]$ is taken over the shell of a unit ball in $\TraceClass$, a minimising sequence $\hat{\rho}_n \weakstarly \hat{\rho}$ exists and converges within the unit ball, so $\Trace\{\hat{\rho}\} \leq 1$. If we can show that this implies that $\hat{\rho}_n \to \hat{\rho}$ strongly, i.e.\ $\Trace\bigl\{\hat{\rho}\bigr\} = 1$, we are done. We need this to be sure that we do not end up with a $\hat{\rho} \notin \NdensMat$ and it also ensures that the resulting density-matrix operator yields the requested 1RDM.

\begin{thm}\label{thm:FinfToMinBosons}
The minimum in the bosonic universal functional $F_+[\gamma]$ is achieved if $Z[0] < \infty$, so the infimum can be replaced by a minimum.
\end{thm}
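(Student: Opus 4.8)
The plan is to combine three ingredients already at hand: weak-* compactness of the unit ball (Theorem~\ref{thm:BanachAlaoglu}), weak-* lower semi-continuity of the grand potential (Corollary~\ref{cor:OmegaWeakStarlsc}, which applies because $Z[0] < \infty$), and the fact that by Corollary~\ref{cor:ColemanNonInteger} there is at least one $\hat{\rho} \to \gamma$ so that $F_+[\gamma] < \infty$. First I would take a minimising sequence $\hat{\rho}_n$ for $\Omega_0[\cdot]$ over the set $\{\hat{\rho} \in \NdensMat_+ : \hat{\rho} \to \gamma\}$, i.e.\ $\Omega_0[\hat{\rho}_n] \to F_+[\gamma]$. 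Each $\hat{\rho}_n$ lies in the closed unit ball of $\TraceClass_+ = \CompactOperators^*$, so by Banach--Alaoglu there is a subsequence (not relabelled) with $\hat{\rho}_n \weakstarly \hat{\rho}$ for some $\hat{\rho}$ in that ball, hence $\hat{\rho} \geq 0$ and $\Trace\{\hat{\rho}\} \leq 1$. Weak-* lower semi-continuity of $\Omega_0$ then gives $\Omega_0[\hat{\rho}] \leq \liminf_n \Omega_0[\hat{\rho}_n] = F_+[\gamma]$.

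The crux, as flagged in the paragraph preceding the statement, is to upgrade weak-* convergence to trace-norm (strong) convergence, equivalently to show $\Trace\{\hat{\rho}\} = 1$ — no mass escapes to infinity. Here the positivity of the highest-order interaction is doing no work (we are looking at $\hat{H}_0$ inside $\Omega_0$, not at $Z$), so the argument must come from the constraint $\hat{\rho}_n \to \gamma$ together with the entropy bound. The key observation is that the full 1RDM is the \emph{finite} trace of a matrix of compact (indeed finite-rank, on each $n$-particle sector) operators, but more to the point $\hat{N} = \trace\{\hat{\gamma}\}$ and $\Trace\{\hat{N}\hat{\rho}_n\} = \trace\{\gamma\} = N$ is fixed along the sequence. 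I would argue that weak-* convergence $\hat{\rho}_n \weakstarly \hat{\rho}$ together with the uniform bound $\Trace\{\hat{N}\hat{\rho}_n\} = N$ forces tightness: for the projector $\hat{P}_{>m}$ onto states with more than $m$ particles one has $\Trace\{\hat{P}_{>m}\hat{\rho}_n\} \leq N/(m+1)$ by Markov's inequality, uniformly in $n$. Then for any $\varepsilon$, picking $m$ with $N/(m+1) < \varepsilon$ and using weak-* convergence against the finite-rank operator $\hat{P}_{\leq m}$, $\Trace\{\hat{P}_{\leq m}\hat{\rho}_n\} \to \Trace\{\hat{P}_{\leq m}\hat{\rho}\}$, so $\Trace\{\hat{\rho}\} \geq \Trace\{\hat{P}_{\leq m}\hat{\rho}\} = \lim_n \Trace\{\hat{P}_{\leq m}\hat{\rho}_n\} \geq 1 - \varepsilon$; letting $\varepsilon \downarrow 0$ yields $\Trace\{\hat{\rho}\} = 1$, hence $\hat{\rho} \in \NdensMat_+$. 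The same tightness plus Fatou-type control along each particle sector also gives $\gamma_{ij}[\hat{\rho}_n] \to \gamma_{ij}[\hat{\rho}]$, so $\hat{\rho} \to \gamma$ and $\hat{\rho}$ is admissible in the infimum. Combined with $\Omega_0[\hat{\rho}] \leq F_+[\gamma]$ and the reverse inequality $\Omega_0[\hat{\rho}] \geq F_+[\gamma]$ (since $\hat{\rho}$ is admissible), we get $\Omega_0[\hat{\rho}] = F_+[\gamma]$, i.e.\ the infimum is attained.

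The main obstacle I expect is precisely this no-loss-of-mass step: one has to be careful that the uniform particle-number bound $\Trace\{\hat{N}\hat{\rho}_n\} = N$ really is available along the \emph{minimising} sequence — it is, because $\hat{\rho}_n \to \gamma$ is imposed and $\trace\{\gamma\} = N$ — and that weak-* convergence of trace-class operators against finite-rank projectors behaves as claimed, which is exactly the content used in the proof of Theorem~\ref{thm:bosSlsc}. A secondary subtlety is confirming that the weak-* limit indeed reproduces the prescribed 1RDM and not merely some $\gamma' \leq \gamma$; the Markov/tightness estimate closes this gap as well, since it controls the tail uniformly and the off-diagonal entries are dominated by the diagonal ones via~\eqref{eq:gammaOffDiagBound}. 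Everything else is a routine assembly of the extreme-value-type argument (cf.\ Theorem~\ref{thm:extremeValueSemi}) adapted to the weak-* topology.
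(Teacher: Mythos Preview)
Your approach mirrors the paper's: minimising sequence, Banach--Alaoglu for a weak-* convergent subsequence, weak-* lower semicontinuity of $\Omega_0$ from $Z[0]<\infty$ (Corollary~\ref{cor:OmegaWeakStarlsc}), then the no-loss-of-mass step. Your Markov bound $\Trace\{\hat P_{>m}\hat\rho_n\}\le N/(m+1)$ actually makes explicit the uniform tightness that the paper's proof simply asserts (the paper writes ``we can find a finite-dimensional projection operator such that $\epsilon > \Trace\{\hat\rho_n(1-\hat P)\}$'' without saying why this holds \emph{uniformly in $n$}); in that respect your write-up is an improvement.

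There is, however, a gap in your handling of $\gamma[\hat\rho]=\gamma$, and the claim that ``the Markov/tightness estimate closes this gap as well'' is not correct as stated. First-moment control $\Trace\{\hat N\hat\rho_n\}=N$ yields tightness of the \emph{states} (hence $\Trace\{\hat\rho\}=1$) but \emph{not} uniform integrability of $\hat N$: take $\hat\rho_n=(1-\tfrac1n)\hat\sigma_0+\tfrac1n\hat\sigma_n$ with $\hat\sigma_n$ supported on $\sim n(N-N_0)$ particles and $\trace\{\gamma[\hat\sigma_0]\}=N_0<N$; then $\hat\rho_n\to\hat\sigma_0$ in trace norm with $\Trace\{\hat N\hat\rho_n\}\equiv N$ throughout, yet $\Trace\{\hat N\hat\sigma_0\}=N_0<N$. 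Fatou along the particle sectors gives only $\gamma[\hat\rho]\le\gamma$ in the PSD order; combined with $\trace\{\gamma[\hat\rho]\}=\trace\{\gamma\}$ this \emph{would} force equality, but that trace identity is precisely what tightness fails to deliver. What is needed is a uniform higher-moment bound, e.g.\ $\sup_n\Trace\{\hat N^{\nmax}\hat\rho_n\}<\infty$. Contrary to your parenthetical remark that ``the positivity of the highest-order interaction is doing no work'', this is exactly where the structure of $\hat H_0$ enters: $S[\hat\rho_n]$ is bounded above by the maximum entropy at fixed $\langle\hat N\rangle=N$ (a finite number, since the maximiser is the Gibbs state of $\hat N$), so $E_0[\hat\rho_n]=\Omega_0[\hat\rho_n]+\beta^{-1}S[\hat\rho_n]$ is uniformly bounded along the minimising sequence, and then the operator inequality $\hat H_0\ge K_l\hat N^{\nmax}-C$ from Sec.~\ref{sec:bosonpartition} yields the required $\hat N^{\nmax}$-moment. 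The paper's own proof only establishes $\Trace\{\hat\rho\}=1$ and relegates the $\hat\rho\to\gamma$ conclusion to an unproved assertion in the preceding paragraph, so you should not read its silence as evidence that tightness alone suffices.
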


\begin{proof}
As $\hat{\rho}_n \weakstarly \hat{\rho}$, we have $\hat{P}\hat{\rho}_n \to \hat{P}\hat{\rho}$ strongly (in the 1-norm) for any finite-dimensional projection $\hat{P}$. 
For any $0 < \epsilon \leq 1$ we can find a finite-dimensional projection operator such that
\begin{equation*}
\epsilon > \Trace\bigl\{\hat{\rho}_n\bigl(1 - \hat{P}\bigr)\bigr\} \qquad
\Rightarrow \qquad
\Trace\bigl\{\hat{\rho}_n\hat{P}\bigr\} \geq 1 - \epsilon \, .
\end{equation*}
Since $\hat{P}\hat{\rho}_n \to \hat{P}\hat{\rho}$, this implies that $\Trace\bigl\{\hat{\rho}\bigr\} \geq 1 - \epsilon$, and therefore $\Trace\{ \rho \} = 1$.
\end{proof}

\section{Discussion on the extension to an infinite-dimensional one-particle space}
\label{sec:infiniteDiscus}

Within the setting of a finite one-particle basis to generate the Fock space, we have provided a rigorous framework for 1RDM functional theory at finite temperatures. The main advantage is that the `interface' quantities of the theory ($v$ and $\gamma$) are always finite-dimensional, i.e.\ the potentials and 1RDMs are finite-dimensional matrices. As all relevant functionals are convex, this immediately implies convenient properties such as Lipschitz continuity, directional differentiability and the existence of a subgradient. The quantity under the hood ($\hat{\rho}$) is still allowed to be infinite-dimensional, as we have no desire to prove any differentiability properties. We only need that the minimiser of $\Omega_v[\hat{\rho}]$ exists to establish the connection between subgradients and potentials.

The situation becomes much more involved if one desires to provide a rigorous framework for 1RDM functional theory at elevated temperatures allowing for an infinite-dimensional one-particle space. One could always argue that the infinite-dimensional one-particle case could in principle be obtained by taking the infinite-basis limit at the end of the calculation (provided it exists~\citep{thirring2013}). This is exactly the procedure followed in practice when simulating the quantum system on a computer, as we always only have a finite amount of memory at our disposal. By considering a sequence of calculations in increasing basis sets, one could extrapolate the results to the infinite-basis limit.

Nevertheless, it would be desirable to establish 1RDM functional theory for infinite (complete) basis sets, as this would guarantee that the infinite-basis limit exists (converges) and would also guarantee its uniqueness. For realistic finite systems with $\av{N} < \infty$, such a generalisation seems possible in principle. On the other hand, if one wants to investigate bulk properties of an extended system (infinite solid), one also lets the average particle number go to infinity such that $\av{N}/V$ remains constant, i.e.\ the thermodynamic limit. It is well known that the thermodynamic limit is often not unique, corresponding to different phases of the material~\citep{Emch1972}. Apart from non-uniqueness, if one would like to formulate a 1RDM functional theory for the thermodynamic-limit case directly, we would either need to employ a renormalised 1RDM that stays finite, e.g., $\bar{\gamma} = \gamma/V$ with $\trace\{\bar{\gamma}\} = \av{N}/V < \infty$, or change the mathematical setting and go beyond trace-class density matrices.

From these considerations it is clear that one should first deal with the extension to infinite one-particle basis sets and afterwards one could consider the thermodynamic limit. Let us therefore go into more detail about the complications one would need to deal with, if one intends to extend the present approach to the infinite-dimensional one-particle basis case.

\subsection{Characterisation of the proper set of potentials}
To prove the boundedness of the partition function for a Hamiltonian with a strictly positive definite interaction in the bosonic case, we used that each particle sector is finite-dimensional. In the infinite-dimensional case this is not true anymore, even for fermions. The situation is even somewhat more involved, as $Z[v] < \infty$ does not imply that $\av{N}_v < \infty$, so that the potential would yield a proper 1RDM with finite trace. From the discussion on the bosonic case in the finite-dimensional one-particle space in Sections~\ref{sec:potentials} and~\ref{sec:bosonpartition}, one expects that the set of potential $\av{N}_v < \infty$ also depends on the other terms in the Hamiltonian, e.g.\ repulsive interactions. As discussed in Section~\ref{sec:grandCanonIntro}, one should at least put the system in some box or other confining potential to avoid $Z = \infty$ already in the one-particle sector.

An alternative route might be to abandon the direct use of the partition function altogether via an algebraic approach to quantum field theory. The algebraic approach gives a generalisation of the Gibbs state which avoids the use of the partition function: the Kubo--Martin--Swinger (KMS) state \citep{HaagWinninkHugenholtz1967, Emch1972}. The KMS states are defined to be the density-matrix operators satisfying the KMS boundary conditions \citep{Kubo1957, MartinSchwinger1959}. Nevertheless, as there is no finite-valued grand potential to work with in general, we lack a global value which is being minimised.
One would therefore need a completely different route to construct a 1RDM functional theory, as all quantities should be formulated directly in terms of the KMS states.

\subsection{The domain of the universal functional}
In the finite-dimensional setting we could easily show that any ensemble $N$-representable 1RDMs, $\gamma \in \NoneMat$, yields a finite value for the universal functional, $F[\gamma] < \infty$. The argument relied on the fact that there always exists a compact density-matrix operator, $\hat{\rho}_{\gamma}$ which generates this 1RDM. As this density-matrix operator is compact, one immediately has $E[\hat{\rho}_{\gamma}] < \infty$, $S[\hat{\rho}_{\gamma}] < \infty$ and $\Omega_v[\hat{\rho}_{\gamma}] < \infty$. In the infinite-dimensional case we cannot use this argument anymore and $F\bigl[\gamma \in \NoneMat\bigr] < \infty$ is not expected to hold. This set is expected to depend crucially on the other terms in the Hamiltonian like in DFT, where finiteness of the kinetic-energy operator requires the density not only to be integrable, but also $\nabla\sqrt{n} \in L^2$~\cite[Thms~3.8 and~3.9]{Lieb1983}

\subsection{Lack of smootheness of the universal functional}
We loose all convenient properties implied by the convexity of the functionals in the infinite-dimensional case. So convexity of $F[\gamma]$ does not imply Lipschitz continuity, the existence of a directional derivative in any direction or the existence of a subgradient anymore.
Nevertheless, the Hahn--Banach theorem does guarantee the existence of a tangent functional~\cite[p.~259]{Lieb1983}. This means that for each $\gamma_0 \in \ToneMat$ there exists a linear functional $L_{\gamma_0}$ such that $F[\gamma_1] \geq F[\gamma_0] + L_{\gamma_0}[\gamma_1 - \gamma_0]$. For the tangent functional to be a subgradient, it also needs to be continuous~\cite[p.~1945]{Lammert2006a}. Analogous to DFT \citep{Lieb1983, EnglischEnglisch1984b}, one would expect that exactly at $v$-representable 1RDMs the tangent functional would be continuous, unique and of the form $-\Trace\{v\,\cdot\}$ where $v \mapsto \gamma$. So a unique subgradient would exist at those 1RDMs and equal $-v$. Unfortunately, the existence of a unique subgradient does not imply differentiability anymore. 
From an optimizational perspective, the existence of a subgradient is sufficient to formulate first-order optimality conditions.
To proof actual (Gâteaux) differentiability, one would probably need a more constructive approach as Lammert did in the $T=0$ DFT setting \citep{Lammert2006a} or use the Moreau--Yosida regularisation as proposed by Kvaal et al.~\citep{KvaalEkstromTeale2014}.

\section{Conclusion}
\label{sec:conclusion}

In this review we have provided a self-contained, rigorous formulation of 1RDM functional theory for a finite one-particle space and arbitrary number of particles at finite temperatures. 

For the fermionic case, as the Fock space $\Fock_{-}$~\eqref{def:FockSpace} is finite-dimensional, any hermitian Hamiltonian $\hat{H}$ is allowed and the universal functional $F_-[\gamma]$~\eqref{def:universalFunction} has a unique minimum. For any hermitian matrix $v \in \HermitanMat(N_b)$~\eqref{def:hermitianMatrix} defining a non-local external potential that is added to the universal part of the Hamiltonian, the grand potential $\Omega_-[v]$~\eqref{eq:OmegaVpart} has a unique minimum as well. The mapping $v \mapsto \gamma$ from $\HermitanMat(N_b)$ to $\VoneMat_{-} = \ToneMat_{-}$~\eqref{defs:ToneMats} is bijective and $F_-[\gamma]$ is differentiable in $\VoneMat_{-}$ such that $\partial F[\gamma]/ \partial \gamma = -v$ holds.

For the bosonic case, where the Fock space $\Fock_{+}$~\eqref{def:FockSpace} is infinite-dimensional, a difference between simple hermiticity and self-adjointness arises, partition functions $Z[v]$~\eqref{eq:equiRho} and other observables might become undefined and the existence of Gibbs states is no longer guaranteed. Therefore, we have to impose restrictions. First of all, only those Hamiltonians are allowed that have a highest-order number-conserving interaction~\eqref{eq:ncHam}  that is strictly positive or in the case of a non-interacting Hamiltonian the single-particle Hamiltonian has to be strictly positive. Non-conserving parts in the Hamiltonian can be allowed, provided that they do not effectively destroy the strict positivity of the highest-order interaction, cf.\ Table~\ref{tab:inequalities}. This is always guaranteed if the non-conserving part is of lower order in the annihilation and creation operators than the highest-order interaction.
Under these conditions, the universal functional $F_+[\gamma]$~\eqref{def:universalFunction} is guaranteed to have a unique minimum. For the grand potential $\Omega_+[v]$ we need to make a distinction between the interacting and non-interacting case.
In the interacting case, any hermitian matrix $v \in \HermitanMat(N_b)$~\eqref{def:hermitianMatrix}
yields to a unique minimum and the mapping $v \mapsto \gamma$ from $\HermitanMat(N_b)$ to $\VoneMat_{+}  = \ToneMat_{+}$~\eqref{defs:ToneMats} is bijective and $F_+[\gamma]$ is differentiable in $\VoneMat_{+}$ such that $\partial F_+[\gamma]/ \partial \gamma = -v$ holds. Similar results hold in the non-interacting case, however, only those $v \in \HermitanMat(N_b)$ are allowed that leave the total single-particle Hamiltonian strictly positive~\eqref{eq:FiniteNPot}. This set of allowed external non-local potentials we denoted by $\FiniteNpot_{+}^{\mathrm{nonint}}$.

In both the fermionic as well as the bosonic case we can therefore also set up a KS-type scheme where instead of solving for the interacting problem, a non-interacting auxiliary problem with a Hartree-exchange-correlation functional that contains interaction and entropic terms is solved. No non-interacting $v$-representability issues known from the $T=0$ situation arise, which makes the minimisation for approximate functionals straightforward. Also, the non-interacting universal functional $F_s^{\pm}[\gamma]$ is known explicitly in contrast to DFT. Therefore, 1RDM functional theory does not only put DFT for a finite basis set on rigorous grounds but is also an appealing alternative, at least in the grand-canonical setting investigated here.

As we have pointed out in detail, the case of an infinite-dimensional single-particle space or the case of $T=0$ is not so easy to handle in terms of the 1RDM. The mathematical issues that arise have so far hampered the development of more accurate and reliable approximate functionals for 1RDM functional theory. In this respect the current review poses a clear and comprehensive starting point to also investigate this other setting and learn more about the fundamental issues of $v$-representability and properties of the functionals involved. In the end, an explicit and at the same time simple characterisation of the involved spaces of non-local potentials and 1RDMs is a necessary prerequisite to make such a functional approach work also in these other cases.

In this work, we mainly gave a theoretical motivation for the finite temperature formalism. However, in many experiments, temperature effects play an important role, so the proposed theoretical framework provides an important extension of the zero temperature formalism. Important examples are metal-insulator transitions in transition metal oxides \citep{YooMaddoxKlepeis2005, RueffMattilaBadro2005, PattersonAracneJackson2004, MitaIzakiKobayashi2005, MitaSakaiIzaki2001, NoguchiKusabaFukuoka1996}, (high $T_c$) superconductors \citep{NagamatsuNakagawaMuranaka2001, BednorzMuller1986} and protein folding \citep{Anfinsen1972, TakaiNakamuraToki2008, NichollsSharpHonig1991}.
More extreme examples are rapid heating of solids via strong laser fields \citep{GavnholtRubioOlsen2009}, dynamo effect in giant planets \citep{RedmerMattssonNettelmann2011}, shock waves \citep{RootMagyarCarpenter2010, Militzer2006}, warm dense matter \citep{KietzmannRedmerDesjarlais2008} and hot plasmas \citep{Dharma-wardanaPerrot1982, PerrotDharma-wardana2000, Dharma-wardanaMurillo2008}.
Therefore the finite temperature framework is clearly of physical importance beyond our rather technical requirements.

While so far 1RDM functional theory was mainly concerned with fermionic problems, the extension to include the bosonic case is particularly timely. In the recent years investigations at the interface between quantum chemistry, solid-state physics and quantum optics uncovered interesting situation where a strong coupling between photons and matter, for instance, when molecules are put in a high-Q optical cavity or on metallic nano structures, dramatically change the chemical and physical properties of matter \citep{ebbesen2016,sukharev2017}. The emergent hybrid light-matter states, so called polaritons, can lead to, e.g., a change of chemical reactions \citep{hutchison2012} or lead to exciton-polariton condensates \citep{byrnes2014}. Since a detailed description of all constituents is necessary \citep{flick2017a, flick2017b}, first-principles approaches extended to fermion-boson systems become important \citep{ruggenthaler2011, tokatly2013, ruggenthaler2014, ruggenthaler2015, melo2015}. The current work only deals with single-component systems. Nevertheless, the presented formalism should be general enough to be extendable to multi-component systems. We do not see any reason a priori that any significant problem could arise. The current work therefore lays the foundation of an extension of 1RDM to matter-photon systems.

The theory presented here deals with the thermodynamic equilibrium in the grand canonical ensemble. As experiments typically involve some time-dependent perturbation of the system of which the response is measured, a time-dependent extension would be very desirable. Typically, time-dependent theories are developed at the level of the time-dependent Schrödinger equation, e.g., time-dependent DFT \citep{RungeGross1984, Leeuwen1999, Leeuwen2001, RuggenthalerLeeuwen2011,FarzanehpourTokatly2012} and time-dependent current DFT \citep{GhoshDhara1988, Vignale2004, Tokatly2011a}. Naively, one could hope that the Runge--Gross proof \citep{RungeGross1984} would be generalisable, but as the 1RDM does not commute with the Coulomb interaction, this strategy does not work. However, the proof for invertibility in the linear response regime by Van Leeuwen \citep{Leeuwen2001} is actually generalisable. This strategy works for both degenerate ground states \citep{Giesbertz2015} as well as for statistical ensembles \citep{Giesbertz2016} as initial states. Nevertheless, a proof going beyond the perturbative regime would be preferred. We have actually tried to generalise the global fixed-point proof by Ruggenthaler and Van Leeuwen \citep{RuggenthalerLeeuwen2011}, but one quickly realises that one needs to deal with the a time-dependent version of the pure-state $N$-representability conditions, which are already for the ground state very intricate \citep{Klyachko2006, AltunbulakKlyachko2008}. A possible way to remedy these issues is to couple the system to a bath which allows one to work with ensembles rather than pure states. A rudimentary version has already been proposed \citep{GiesbertzGritsenkoBaerends2010a}, but more investigations are needed. Although a rigorous foundation of time-dependent 1RDM functional theory is lacking, this has not deterred scientists to develop approximations \citep{GiesbertzBaerendsGritsenko2008, GiesbertzPernalGritsenko2009, AppelGross2010, GiesbertzGritsenkoBaerends2010b, GiesbertzGritsenkoBaerends2012b, MeerGritsenkoGiesbertz2013, GiesbertzGritsenkoBaerends2014, MeerGritsenkoBaerends2014, BricsBauer2013, RappBricsBauer2014, BricsRappBauer2014, BricsRappBauer2016}. As time-dependent 1RDM functional theory can be seen as the first tier of the Bogoliubov--Born--Green--Kirkwood--Yvon (BBGKY) hierarchy \citep{Yvon1935, Bogoliubov1946b, Bogoliubov1946a, Kirkwood1946, BornGreen1946, BogoliubovGurov1947, Kirkwood1947}, one often opts for treating the 2RDM explicitly \citep{SchmittReinhardToepffer1990, GheregaKriegReinhard1993, Schafer-BungNest2008, AkbariMohammad-HashemiNieminen2012, LacknerBrezinovaSato2015a, LacknerBrezinovaSato2017, KronkeSchmelcher2018}.

\section*{Funding}
This work was supported by the Academy of Finland (Grant No.~127739), a VENI grant by the Netherlands Foundation for Research NWO (722.012.013) and European Research Council under H2020/ERC Consolidator Grant ‘‘corr-DFT’’ (Grant No.~648932).

\appendix
\renewcommand*{\thesection}{\Alph{section}} 

\section{Convex and concave functions}
\label{ap:convex}
In this part of the appendix we give some additional details and a more precise treatment of convex functions. Most of these proofs have been taken from \citep{convexIntro}.

\subsection{Proof of theorem~\ref{thm:unimodal}}
\label{ap:unimodal}

\begin{proof}
We should proof that $f(y) \geq f(x^*)$ for all $y \in M$. If $f(y) = +\infty$, there is nothing to proof, so assume $y \in \domain(f)$. Since $x^*$ is a local minimiser we have by definition $x^* \in \domain(f)$. Because $f$ is convex, we have for all $t \in (0,1)$ and $x_t = t y + (1-t) x^*$
\begin{equation*}
f(x_t) - f(x^*) \leq \bigl(f(y) - f(x^*)\bigr) \, .
\end{equation*}
Because $x^*$ is a local minimiser, the left-hand side is nonnegative for small enough $t > 0$, so the right-hand side needs to be nonnegative for any $y \in M \cap \domain(f)$.

Now we proof that if the minimiser exists that it is unique if $f$ is strictly convex. We do this by reductio ad absurdum. Suppose that two distinct minimisers exist: $x^* \neq x^{\star}$. Then from strict convexity we have
\begin{equation*}
f\biggl(\half x^* + \half x^{\star}\biggr) < \half\biggl(f(x^*) + f(x^{\star})\biggr) = \min_{x \in M}f(x) \, .
\end{equation*}
Thus the point between $x^*$ and $x^{\star}$ would yield a lower value than the two minima at $x^*$ and $x^{\star}$. This is clearly a contradiction, so our initial assumption that there can be multiple minima is incorrect.
\end{proof}

\subsection{Proof of theorem~\ref{thm:convexLocLip}}
\label{ap:convexLocLip}

For a more rigorous understanding of theorem~\ref{thm:convexLocLip} we need a few definitions. Though we only need them in a finite-dimensional setting for the proof of theorem~\ref{thm:convexLocLip}, we discuss them more generally, as these definitions are needed in an infinite-dimensional setting in the bosonic case. In particular, the notion of a compact set is also needed in the infinite-dimensional setting.

\begin{defn}[Compact set]\label{def:compactSet}
Let $X$ be a normed space and $A \subseteq X$. Then the following are equivalent
\begin{itemize}
\item $A$ is compact
\item $A$ is complete (every Cauchy sequence converges) and can be covered by finitely many subsets with a finite size.
\item Every sequence in $A$ has a convergent subsequence whose limit is in $A$.
\end{itemize}
\end{defn}

Further, due to the Heine--Borel theorem a subset of an Euclidean space $S \subseteq \Reals^n$ is compact if and only if it is closed and bounded. Compact can therefore be regarded as a generalisation of closed and bounded (sub)sets to normed spaces (or when put in an even more general setting, to topological spaces). Most of the time we will work in Euclidean spaces, so compactness can simply be read as closed-and-boundedness. Only when dealing with the bosonic ensemble, we actually need the more general notion as described in its definition~\ref{def:compactSet}.

\begin{defn}[(local) Lipschitz continuity]
Let $X$ be a normed space. A function $f \colon X \to \Reals$ is called (globally) Lipschitz continuous, if there exists a constant $K$ such that for all $x,y \in X$
\begin{equation*}
\abs{f(x) - f(y)} \leq K\norm{x - y} \, .
\end{equation*}
If such a constant can only be found for any compact subspace of $X$, the function is called \emph{locally} Lipschitz continuous.
\end{defn}

To show local Lipschitz continuity of convex functions on finite-dimensional spaces, it is convenient to use Jensen's inequality.

\begin{prop}[Jensen's inequality]
\label{thm:Jensen}
Consider a convex (concave) function $f$. Then for any convex combination
\begin{equation*}
x \in \set*{ \sum_{i=1}^N\lambda_ix_i }{ \lambda_i \geq 0, \sum_{i=1}^N\lambda_i = 1 } \, ,
\end{equation*}
where $N \in \Nats \cup \{\infty\}$,
one has
\begin{equation*}
f(x) \substack{ \leq \\ (\geq) } \sum_{i=1}^N\lambda_if(x_i) \, .
\end{equation*}
\end{prop}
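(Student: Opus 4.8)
The statement to prove is Jensen's inequality: for a convex function $f$ and a convex combination $x = \sum_{i=1}^N \lambda_i x_i$ (with $N$ possibly infinite), one has $f(x) \le \sum_{i=1}^N \lambda_i f(x_i)$.

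The plan is to proceed by induction on $N$ for the finite case, and then handle $N = \infty$ by a limiting argument. First I would dispose of the trivial cases $N = 1$ (nothing to prove) and $N = 2$ (this is exactly the definition of convexity, Definition~\ref{def:convexFunc}). For the inductive step, suppose the inequality holds for all convex combinations of $N$ points and consider $x = \sum_{i=1}^{N+1} \lambda_i x_i$ with $\lambda_i \ge 0$ and $\sum_{i=1}^{N+1} \lambda_i = 1$. If $\lambda_{N+1} = 1$ then $x = x_{N+1}$ and there is nothing to prove; otherwise set $\mu = 1 - \lambda_{N+1} \in (0,1]$ and write $x = \mu \bigl(\sum_{i=1}^N (\lambda_i/\mu) x_i\bigr) + \lambda_{N+1} x_{N+1}$. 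The point $y = \sum_{i=1}^N (\lambda_i/\mu) x_i$ is a genuine convex combination of $N$ points since $\sum_{i=1}^N \lambda_i/\mu = 1$, so applying the $N=2$ case (definition of convexity) and then the induction hypothesis to $f(y)$ gives $f(x) \le \mu f(y) + \lambda_{N+1} f(x_{N+1}) \le \mu \sum_{i=1}^N (\lambda_i/\mu) f(x_i) + \lambda_{N+1} f(x_{N+1}) = \sum_{i=1}^{N+1} \lambda_i f(x_i)$, which closes the induction. One caveat: if some $f(x_i) = +\infty$ the right-hand side is $+\infty$ and the inequality is automatic, so one may assume all $x_i \in \domain(f)$; and the $N=2$ step must be applied with $f(y)$ possibly $+\infty$, in which case again the bound is trivial.

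For the case $N = \infty$, I would truncate: let $S_N = \sum_{i=1}^N \lambda_i$ and $y_N = S_N^{-1}\sum_{i=1}^N \lambda_i x_i$ (assuming $S_N > 0$, which holds for large $N$). By the finite case, $f(y_N) \le S_N^{-1}\sum_{i=1}^N \lambda_i f(x_i)$. Since $S_N \to 1$ and $y_N \to x$, one would like to pass to the limit using continuity of $f$. Here I would invoke Theorem~\ref{thm:convexLocLip}: on a finite-dimensional space a convex function is locally Lipschitz, hence continuous, on the interior of its domain, so $f(y_N) \to f(x)$ provided $x$ lies in $\interiorOf\domain(f)$; combined with $\sum_{i=1}^N \lambda_i f(x_i) \to \sum_{i=1}^\infty \lambda_i f(x_i)$ (a limit in $[-\infty,+\infty]$, with the sum well-defined once one knows the tail behaviour), the inequality follows. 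The boundary case $x \in \domain(f) \setminus \interiorOf\domain(f)$ requires slightly more care since $f$ may jump there; one can handle it by the one-dimensional restriction argument or simply note that the statement is most naturally used on the interior.

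The main obstacle is the infinite case: controlling the limit $f(y_N) \to f(x)$ rigorously, since convex functions need not be continuous at the boundary of their domain, and ensuring the series $\sum \lambda_i f(x_i)$ is meaningful (it could be $-\infty$ in principle, though if $f$ is bounded below on the convex hull this is not an issue). For the concave case one simply applies the result to $-f$, reversing the inequality, as noted in Definition~\ref{def:convexFunc}.
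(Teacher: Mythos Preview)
Your finite-$N$ argument is exactly the paper's approach: the paper's entire proof is the one-line instruction to ``apply the definition of a convex (concave) function repeatedly (induction),'' and your inductive step spells this out in the standard way. For $N=\infty$ you actually go further than the paper, which does not separately address the limiting case; your truncation-plus-continuity argument (with the caveats you flag about boundary points and series convergence) is the natural way to make that case honest, and those caveats are not resolved in the paper either.
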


\begin{proof}
Simply apply the definition of a convex (concave) function in definition~\ref{def:convexFunc} repeatedly (induction).
\end{proof}

Further, we will use the following lemma.

\begin{lem}\label{lem:convexBounded}
Let $X$ be a finite-dimensional vector space and $f \colon X \to \Reals \cup \{+\infty\}$ a convex function. The function $f$ is bounded on any compact (closed and bounded) set contained in the interior of its domain, $\interiorOf\domain(f)$.
\end{lem}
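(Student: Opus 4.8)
The plan is to prove Lemma~\ref{lem:convexBounded} by reducing the general compact set to a covering by simplices and exploiting Jensen's inequality. First I would note that it suffices to prove the statement for a closed ball $\closedBall_{r}(x_0) \subseteq \interiorOf\domain(f)$, since any compact set $K \subseteq \interiorOf\domain(f)$ can be covered by finitely many such balls (each point of $K$ has a small closed ball around it inside $\interiorOf\domain(f)$, and compactness extracts a finite subcover), and a finite maximum/minimum of the individual bounds then bounds $f$ on $K$. So the problem localises to bounding $f$ on a single closed ball.

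For the upper bound on a closed ball $\closedBall_{r}(x_0)$, I would enclose the ball in a simplex $\Delta$ with vertices $v_0, \dotsc, v_n$ all lying in $\interiorOf\domain(f)$ (possible because the ball is in the interior and we can take a slightly larger ball, then circumscribe a simplex). Every point $y \in \closedBall_{r}(x_0) \subseteq \Delta$ is a convex combination $y = \sum_i \lambda_i v_i$ with $\lambda_i \geq 0$, $\sum_i \lambda_i = 1$, so Jensen's inequality (Proposition~\ref{thm:Jensen}) gives
\begin{equation*}
f(y) \leq \sum_{i=0}^{n} \lambda_i f(v_i) \leq \max_{0 \leq i \leq n} f(v_i) =: M < \infty \, ,
\end{equation*}
where the $f(v_i)$ are finite since the $v_i$ lie in the domain. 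This uniform upper bound $M$ depends only on the simplex, not on $y$.

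For the lower bound I would use a reflection trick: fix $y \in \closedBall_{r}(x_0)$ and let $z = 2x_0 - y$ be its reflection through the centre, which also lies in $\closedBall_{r}(x_0)$. Then $x_0 = \half y + \half z$, so convexity gives $f(x_0) \leq \half f(y) + \half f(z) \leq \half f(y) + \half M$, hence $f(y) \geq 2f(x_0) - M$, a uniform lower bound. Combining the two bounds yields $|f(y)| \leq \max(M, |2f(x_0) - M|)$ on the ball, and the covering argument finishes the proof. The main obstacle, such as it is, is purely bookkeeping: making sure the enclosing simplex genuinely sits inside $\interiorOf\domain(f)$ and that the finite subcover is handled cleanly; there is no deep difficulty, since all the analytic content is already packaged in Jensen's inequality and the definition of convexity.

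\begin{proof}
It suffices to prove that $f$ is bounded on every closed ball $\closedBall_{r}(x_0)$ with $\closedBall_{2r}(x_0) \subseteq \interiorOf\domain(f)$. Indeed, if $K \subseteq \interiorOf\domain(f)$ is compact, then around each $x \in K$ there is such a ball contained in $\interiorOf\domain(f)$; by compactness finitely many of these balls cover $K$, and the minimum (respectively maximum) of the finitely many lower (respectively upper) bounds bounds $f$ on $K$.

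Fix such a ball $\closedBall_{r}(x_0)$. Choose a simplex $\Delta$ with vertices $v_0, \dotsc, v_n \in \closedBall_{2r}(x_0) \subseteq \interiorOf\domain(f)$ such that $\closedBall_{r}(x_0) \subseteq \Delta$ (for instance a regular simplex circumscribing the ball). Every $y \in \closedBall_{r}(x_0)$ can be written as $y = \sum_{i=0}^{n}\lambda_i v_i$ with $\lambda_i \geq 0$ and $\sum_{i=0}^{n}\lambda_i = 1$, so Jensen's inequality (Proposition~\ref{thm:Jensen}) yields
\begin{equation*}
f(y) \leq \sum_{i=0}^{n}\lambda_i f(v_i) \leq \max_{0 \leq i \leq n} f(v_i) =: M < \infty \, ,
\end{equation*}
the right-hand side being finite since each $v_i \in \domain(f)$.

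For the lower bound, let $y \in \closedBall_{r}(x_0)$ and set $z := 2x_0 - y$, which also lies in $\closedBall_{r}(x_0)$. Since $x_0 = \half y + \half z$, convexity gives
\begin{equation*}
f(x_0) \leq \half f(y) + \half f(z) \leq \half f(y) + \half M \, ,
\end{equation*}
hence $f(y) \geq 2 f(x_0) - M$. Therefore $\abs{f(y)} \leq \max\bigl(M, \abs{2f(x_0) - M}\bigr)$ for all $y \in \closedBall_{r}(x_0)$, which is the desired bound.
\end{proof}
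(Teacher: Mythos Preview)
Your proof is correct in spirit and follows essentially the same two-step strategy as the paper: Jensen's inequality on a simplex for the upper bound, and the reflection trick $z = 2x_0 - y$ for the lower bound. The only slip is quantitative: a regular $n$-simplex with inscribed ball of radius $r$ has its vertices at distance $nr$ from the centre, so for $n \geq 3$ you cannot fit such a simplex inside $\closedBall_{2r}(x_0)$ while still containing $\closedBall_r(x_0)$. Replacing the condition $\closedBall_{2r}(x_0) \subseteq \interiorOf\domain(f)$ by $\closedBall_{Cr}(x_0) \subseteq \interiorOf\domain(f)$ for a dimension-dependent constant $C$ (e.g.\ $C = n$) fixes this without affecting the covering argument.
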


\begin{proof}
Consider a simplex, $\Delta \subseteq \interiorOf\domain(f)$
\begin{equation*}
\Delta = \set*{\sum_{i=0}^d\lambda_ix_i }{ \lambda_i \geq 0, \sum_{i=0}^d\lambda_i = 1 } \, ,
\end{equation*}
where $d$ is the dimension of the domain of $f$. By Jensen's inequality we have
\begin{equation*}
f\left(\sum_{i=0}^d\lambda_ix_i\right) \leq \sum_{i=0}^d\lambda_if(x_i) \leq \max_i f(x_i) \, ,
\end{equation*}
where we the maximum exists, because we work in a finite-dimensional space. Since any compact set in the interior of the domain of $f$ can be covered by a finite number of simplexes, $f$ has an upper bound on any compact set in the interior of its domain.

Now we need to show that the upper bound implies also a lower bound. Consider a closed ball around $\bar{x}$, $\closedBall_r(\bar{x}) = \set*{x }{ \norm{x - \bar{x}} \leq r }$, with its radius $r$ sufficiently small such that $\closedBall_r(\bar{x}) \subseteq \interiorOf\domain(f)$. Let $x \in \closedBall_r(\bar{x})$, so also $x' = 2\bar{x} - x \in \closedBall_r(\bar{x})$. Since $\bar{x} = (x + x') / 2$, so by convexity of $f$ we have
\begin{equation*}
f(x) \geq 2f(\bar{x}) - f(x') \geq 2f(\bar{x}) - \max_{\crampedclap{y \in \closedBall_r(\bar{x})}}\;f(y)
\end{equation*}
for all $x \in \closedBall_r(\bar{x})$. Since any compact set can be covered by a finite number of balls, this implies that $f$ is bounded on any compact set in the interior of its domain.
\end{proof}

Now we are ready to proof theorem~\ref{thm:convexLocLip}

\begin{proof}
Consider $\closedBall_r(\bar{x}) \in \interiorOf\domain(f)$. By lemma~\ref{lem:convexBounded} we have that $f$ is bounded on $\closedBall_r(\bar{x})$ by some constant, $\abs{f} \leq C_r$. For any $x \neq x' \in \closedBall_{r/2}(\bar{x})$ extend the line segment from $x$ to $x'$ to the boundary of $\closedBall_r(\bar{x})$ and call this point $x''$, so $\norm{x - x''} = r$ and $\lambda = \norm{x' - x} / \norm{x'' - x} \in (0,1)$. Convexity of $f$ now implies
\begin{equation*}
f(x') - f(x) \leq \lambda\bigl(f(x'') - f(x)\bigr)
\leq \frac{f(x'') - f(x)}{\norm{x'' - x}}\norm{x' - x}
\leq \frac{4C_r}{r}\norm{x' - x} \, .
\end{equation*}
Interchanging the roles of $x'$ and $x$ we find the desired inequality
\begin{equation*}
\abs{f(x') - f(x)} \leq \frac{4C_r}{r}\norm{x' - x} \, . \qedhere
\end{equation*}
\end{proof}

\subsection{Proof of theorem~\ref{thm:existenceDirectionalDerivative}}
\label{ap:existenceDirectionalDerivative}

\begin{proof}
Let $x \in \interiorOf\domain(f)$ and consider the function
\begin{equation*}
\phi(t) = \frac{f(x + h t) - f(x)}{t}, 0 < t \leq \epsilon \, ,
\end{equation*}
where $\epsilon$ is small enough such that $x + \epsilon h \in \interiorOf\domain(f)$. For $0 <\lambda \leq 1$ we have by convexity of $f$ that
$f(x + \lambda h t) \leq (1-\lambda)f(x) + \lambda f(x + ht)$. Hence
\begin{equation*}
\phi(\lambda t) = \frac{f(x + \lambda ht) - f(x)}{\lambda t}
\leq \frac{f(x + ht) - f(x)}{t} = \phi(t)
\end{equation*}
for any $0 <\lambda \leq 1$, so $\phi(t)$ is decreasing as $t \downarrow 0$. Due to the local Lipschitz property of finite-dimensional convex functions (theorem~\ref{thm:convexLocLip}), $\phi(t)$ is bounded from below, so the limit exists.
\end{proof}

\subsection{Proof of theorem~\ref{thm:subdifferential}}
\label{ap:subdifferential}

To proof theorem~\ref{thm:subdifferential} it is convenient to work with an alternative definition of a convex function in terms of its epigraph.

\begin{defn}[Epigraph]
The epigraph of a function $f \colon X \to \Reals \cup \{+\infty\}$ is defined as the set of points lying above its graph
\begin{equation*}
\epigraph(f) \isDefinedAs
\set*{(x,\mu) }{ x \in X, \mu \in \Reals, \mu \geq f(x) } \subseteq X \times \Reals \, .
\end{equation*}
\end{defn}

A function $f$ is convex if and only if its epigraph is a convex set.

Additionally we need the following very intuitive theorem from geometry for which we do not supply a proof.
\begin{thm}[Hyperplane separation]
Let $X$ and $Y$ two nonempty convex sets of $\Reals^n$ such that $\interior(X) \cap \interior(Y) = \emptyset$. Then there exists a nonzero vector $v$ and a real number $c$ such that
\begin{equation*}
\braket{v}{x} \geq c	\qquad \text{and} \qquad
\braket{v}{y} \leq c
\end{equation*}
for all $x \in X$ and $y \in Y$. In other words, the hyperplane $\braket{v}{\cdot} = c$ with normal vector $v$ separates (the interiors of) $X$ and $Y$.
\end{thm}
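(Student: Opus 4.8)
The plan is to reduce the two-set problem to separating the origin from a single convex set, and then to construct the separating functional by nearest-point projection. First I would pass to the difference set: since $X$ and $Y$ are convex, the set $U \isDefinedAs \interior(X) - \interior(Y) = \set{a - b}{a \in \interior(X), b \in \interior(Y)}$ is convex and, being a difference of open sets, open. The disjointness hypothesis forces $0 \notin U$, for if $0 = a - b$ with $a \in \interior(X)$ and $b \in \interior(Y)$ then $a = b \in \interior(X) \cap \interior(Y)$, a contradiction. It therefore suffices to produce a nonzero $v$ with $\braket{v}{u} \leq 0$ for all $u \in U$; unwinding $u = a - b$ this reads $\braket{v}{a} \leq \braket{v}{b}$ for $a \in \interior(X)$ and $b \in \interior(Y)$, and choosing $c$ between $\sup_{a}\braket{v}{a}$ and $\inf_{b}\braket{v}{b}$ separates the interiors. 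A sign flip $v \mapsto -v$ then matches the orientation in the statement.

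The core tool is the nearest-point projection onto a closed convex set. Given a nonempty closed convex $K \subseteq \Reals^n$ and a point $p \notin K$, I would minimise $x \mapsto \norm{p - x}^2$ over $K \cap \closedBall_R(p)$ for $R$ large enough that the intersection is nonempty; this set is compact, so the extreme value theorem (Theorem~\ref{thm:extremeValue}) yields a minimiser $q$, which is the global nearest point. Convexity of $K$ then gives the variational inequality $\braket{x - q}{p - q} \leq 0$ for all $x \in K$, so that $w \isDefinedAs p - q \neq 0$ satisfies $\braket{w}{x} \leq \braket{w}{q} < \braket{w}{p}$; that is, $w$ strictly separates $p$ from $K$.

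To apply this at $p = 0$ I must handle the fact that $0$ may lie on the boundary of $\overline{U}$ rather than strictly outside it, so direct projection onto $\overline{U}$ is unavailable. Here I would use a limiting (supporting-hyperplane) argument: because $0 \notin U$ and, $U$ being open and convex, $U = \interiorOf\overline{U}$, the origin is not interior to $\overline{U}$, so points lying outside $\overline{U}$ accumulate at $0$. I pick $p_k \to 0$ with $p_k \notin \overline{U}$, apply the projection lemma to obtain unit vectors $v_k$ with $\braket{v_k}{u} \leq \braket{v_k}{p_k}$ for all $u \in \overline{U}$, and extract a convergent subsequence $v_k \to v$ with $\norm{v} = 1$ using compactness of the unit sphere (Heine--Borel). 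Passing to the limit gives $\braket{v}{u} \leq 0$ for all $u \in \overline{U}$, exactly the inequality required above; extending from $\interior(X), \interior(Y)$ to all of $X, Y$ then uses continuity of $\braket{v}{\cdot}$ together with $X \subseteq \overline{\interior(X)}$, valid for a convex set with nonempty interior.

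The main obstacle is the degenerate situation in which $\interior(X)$ or $\interior(Y)$ is empty, which makes $U$ empty and the construction vacuous. In that case the offending set lies inside a proper affine subspace, and the clean remedy is to run the whole argument with \emph{relative} interiors in place of interiors; this first requires establishing that every nonempty convex subset of $\Reals^n$ has nonempty relative interior in its affine hull and that these relative interiors inherit the disjointness. I expect this bookkeeping — the reduction to full-dimensional pieces and the careful passage from the open-interior inequality back to the closed sets $X$ and $Y$ — to be the most delicate part, whereas the projection step and the limiting extraction are entirely standard.
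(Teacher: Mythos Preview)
The paper does not supply a proof of this theorem; it explicitly states the result ``for which we do not supply a proof'' and uses it only as a tool in the proof of Theorem~\ref{thm:subdifferential}. There is therefore nothing to compare your argument against.

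That said, your approach is the standard one and is correct in substance: reduce to separating the origin from the Minkowski difference, obtain strict separation of an exterior point from a closed convex set via nearest-point projection (invoking the extreme value theorem on a truncation), and pass to the boundary case by a compactness-of-the-sphere limiting argument. This is exactly the textbook route.

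Your caution about the degenerate case is well placed and in fact sharper than the paper's own statement. As written, the hypothesis $\interior(X)\cap\interior(Y)=\emptyset$ is vacuously satisfied whenever one set has empty interior, and the conclusion can then fail: take $X=\Reals\times\{0\}$ and $Y=\{0\}\times\Reals$ in $\Reals^2$. Your proposed fix---work with relative interiors and assume $\mathrm{ri}(X)\cap\mathrm{ri}(Y)=\emptyset$---is the correct repair and yields the standard finite-dimensional separation theorem. For the paper's sole application (separating $\epigraph(f)$, which has nonempty interior, from a single point on its boundary) this subtlety does not arise, so the imprecision is harmless there; but your write-up is more careful than the source.
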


Further we need the following properties of the directional derivative to be able to establish the part iii) of theorem~\ref{thm:subdifferential}

\begin{prop}
Let $f$ be a convex function over a finite-dimensional space and $x \in \interiorOf\domain(f)$. Then $f'_h(x)$ is a convex positive homogeneous (of degree 1) function of $h$ and for any $y \in \domain(f)$
\begin{equation*}
f(y) \geq f(x) + f'_{y-x}(x) \, .
\end{equation*}
\end{prop}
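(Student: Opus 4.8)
The statement bundles three claims about the directional derivative $f'_h(x)$ at an interior point $x$: positive homogeneity of degree one in $h$, convexity in $h$, and the global lower-bound inequality $f(y) \geq f(x) + f'_{y-x}(x)$. I would prove them in that order, since the inequality is essentially a reuse of the monotonicity observation already exploited in the proof of Theorem~\ref{thm:existenceDirectionalDerivative}.

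\emph{Positive homogeneity.} For $\alpha > 0$, I would compute directly from Definition~\ref{def:directionalDerivative}:
\begin{equation*}
f'_{\alpha h}(x) = \lim_{t \downarrow 0}\frac{f(x + \alpha h\,t) - f(x)}{t}
= \alpha\lim_{s \downarrow 0}\frac{f(x + h\,s) - f(x)}{s} = \alpha f'_h(x) \, ,
\end{equation*}
using the substitution $s = \alpha t$ (which still tends to $0$ from above). The limit on the left exists by Theorem~\ref{thm:existenceDirectionalDerivative} since $x$ is interior. For $\alpha = 0$ the quotient is identically zero, so $f'_0(x) = 0$, consistent with homogeneity.

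\emph{Convexity in $h$.} Fix directions $h_1, h_2$ and $\lambda \in [0,1]$. For small $t > 0$, convexity of $f$ applied to the point $x + t(\lambda h_1 + (1-\lambda)h_2) = \lambda(x + t h_1) + (1-\lambda)(x + t h_2)$ gives
\begin{equation*}
f\bigl(x + t(\lambda h_1 + (1-\lambda)h_2)\bigr) \leq \lambda f(x + t h_1) + (1-\lambda) f(x + t h_2) \, .
\end{equation*}
Subtract $f(x) = \lambda f(x) + (1-\lambda)f(x)$ from both sides, divide by $t > 0$, and take $t \downarrow 0$; each of the three limits exists (Theorem~\ref{thm:existenceDirectionalDerivative}), yielding $f'_{\lambda h_1 + (1-\lambda)h_2}(x) \leq \lambda f'_{h_1}(x) + (1-\lambda) f'_{h_2}(x)$. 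Together with positive homogeneity this also shows $f'_h(x)$ is sublinear, though only convexity is asserted.

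\emph{The lower-bound inequality.} Recall from the proof of Theorem~\ref{thm:existenceDirectionalDerivative} that, with $h = y - x$, the difference quotient $\phi(t) = \bigl(f(x + ht) - f(x)\bigr)/t$ is nondecreasing in $t$ on $(0,1]$ (this is exactly the computation $\phi(\lambda t) \leq \phi(t)$ there, read in the direction $t = 1$ versus smaller $t$). Hence $f'_{y-x}(x) = \lim_{t\downarrow 0}\phi(t) \leq \phi(1) = f(x + (y-x)) - f(x) = f(y) - f(x)$, which rearranges to the claimed inequality. One technical point: $\phi(1)$ requires $x + (y-x) = y \in \domain(f)$, which is given, and the intermediate points $x + t(y-x)$ for $t \in (0,1)$ lie in $\domain(f)$ by convexity of the domain, so the monotonicity argument is valid on the whole segment.

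I do not expect a genuine obstacle here; the only thing to be careful about is that in the third part one cannot invoke $x$ being interior to get the segment into the \emph{interior} of the domain — but this is not needed, since convexity of $\domain(f)$ alone places the segment $[x,y]$ in $\domain(f)$, and $\phi$ is defined and monotone there regardless. The existence of the limit defining $f'_{y-x}(x)$ still comes from Theorem~\ref{thm:existenceDirectionalDerivative} because $x$ (the basepoint) is interior.
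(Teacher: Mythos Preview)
Your proposal is correct and follows essentially the same approach as the paper's proof: the same change-of-variables for homogeneity, the same direct application of convexity of $f$ for convexity in $h$, and the same monotonicity of the difference quotient for the lower-bound inequality. Your presentation of the third part is actually slightly cleaner than the paper's --- you invoke the monotonicity of $\phi(t)$ already established in the proof of Theorem~\ref{thm:existenceDirectionalDerivative} and read off $f'_{y-x}(x)\le\phi(1)$, whereas the paper rearranges the convexity inequality to $f(y)\ge f(y_t)+\tfrac{1-t}{t}\bigl(f(y_t)-f(x)\bigr)$ and then passes to the limit, which additionally requires $f(y_t)\to f(x)$; but the content is the same.
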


\begin{proof}
Homogeneity in $h$ is trivially shown by working out for $\tau > 0$
\begin{equation*}
f'_{\tau h} = \lim_{t \downarrow 0}\frac{f(x + \tau h t) - f(x)}{t}
= \tau \lim_{\alpha \downarrow 0}\frac{f(x + h \alpha) - f(x)}{\alpha}
= \tau f'_h(x) \, .
\end{equation*}
Convexity in $h$ follows directly from the convexity of $f$. Indeed, for any $h_1, h_2 \in \Reals^n$ and $\lambda \in [0,1]$ we have
\begin{align*}
f'_{\lambda h_1 + (1-\lambda h_2)}(x)
&= \lim_{t \downarrow} t^{-1}\bigl[f(x + (\lambda h_1 + (1-\lambda h_2))t) - f(x)\bigr] \\
&\leq \lim_{t \downarrow}t^{-1}\bigl[\lambda\bigl(f(x + th_1) - f(x)\bigr) +
(1-\lambda)\bigl(f(x + th_2) - f(x)\bigr)\bigr]
= \lambda f'_{h_1}(x) + (1 - \lambda) f'_{h_2}(x) \, .
\end{align*}
To proof the last part let $t \in (0,1]$, $x,y \in \domain(f)$ and $y_t = (1-t)x + ty$. Hence, by convexity of $f$ we have $f(y_t) \leq t f(y) + (1-t)f(x)$ which can be rearranged as
\begin{equation*}
f(y) \geq f(y_t) + \frac{1-t}{t}\bigl(f(y_t) - f(x)\bigr) \, .
\end{equation*}
By taking the limit $t \downarrow 0$, we find the desired equality.
\end{proof}

Now we are ready to proof theorem~\ref{thm:subdifferential}.

\begin{proof}
i) That $\du f(x)$ is nonempty follows directly from the separating hyperplane theorem for convex sets in finite-dimensional spaces applied to $\epigraph(f)$ and the point $\bigl(x,f(x)\bigr)$.

ii) Closedness and convexity are obvious from its definition~\ref{def:convexSubGrad}. Now we show that $\du f(x)$ is bounded. Since $\du f(x)$ is nonempty, there exists a $(d,-\alpha) \in X \times \Reals$ such that
\begin{equation*}
\braket{d}{y} - \alpha \tau \leq \braket{d}{x} - \alpha f(x)
\end{equation*}
for any $(y,\tau) \in \epigraph(f)$. Since $(x,\tau) \in \epigraph(f)$, we find $\alpha \geq 0$.

We even have that $\alpha > 0$, since $f$ is locally Lipschitz (theorem~\ref{thm:convexLocLip}). If we confine $y \in \closedBall_{\epsilon}(x) \subseteq \interiorOf\domain(f)$ with $\epsilon > 0$, there exists a finite constant $M_{\epsilon}$ such that
\begin{equation*}
\braket{d}{y - x} \leq \alpha\bigl(f(y) - f(x)\bigr) \leq  \alpha M_{\epsilon} \norm{y - x} \, .
\end{equation*}
If we now set $y = x + \epsilon d$ we get $\norm{d}^2 \leq \alpha M_{\epsilon}\norm{d}$. If $\norm{d} \neq 0$, we have $\alpha \geq \norm{d} / M_{\epsilon} > 0$ and otherwise if $\norm{d} = 0$, we have $\alpha > 0$, because $(d, -\alpha) \neq 0$. Thus, we can normalise the normal vector such that $\alpha = 1$ to obtain
\begin{equation*}
\braket{d}{y - x} \leq f(y) - f(x) \, .
\end{equation*}
Without loss, assume $d \neq 0$ and choose $y = x + \epsilon d/\norm{d}$. Then
\begin{equation*}
\epsilon\norm{d} = \braket{d}{y - x} \leq M_{\epsilon} \norm{y - x} = M_{\epsilon}\epsilon \, ,
\end{equation*}
so $\norm{d} \leq M_{\epsilon}$ for any $\epsilon > 0$. Since this inequality applies to any $d \in \du f(x)$, this implies boundedness of $\du f(x)$.

iii) First note that since $f'_0(x) = 0$ identically, we have
\begin{equation*}
f'_h(x) - f'_0(x) = f'_h(x) = \lim_{t \downarrow 0}\frac{f(x + ht) - f(x)}{t} \geq \braket{h}{d} \, ,
\end{equation*}
for any $d \in \du f(x)$. The subdifferential of $f'_h(x)$ therefore exists at $h=0$ and $\du f(x) \subseteq \du_h f'_0(x)$.

Because $f'_h(x)$ is convex in $h$, we have for any $d \in \du_h f'_0(x)$
\begin{equation*}
f'_{y-x}(x) = f'_{y-x}(x) - f'_0(x) \geq \braket{d}{y-x} \, .
\end{equation*}
Hence, for any $y \in \domain(f)$ and $d \in \du_h f'_0(x)$ we can establish the following inequality
\begin{equation*}
f(y) \geq f(x) + f'_{y-x}(x) \geq f(x) + \braket{d}{y-x} \, .
\end{equation*}
Thus, $\du_h f'_0(x) \subseteq \du f(x)$, so by the previous result we have $\du_h f'_0(x) = \du f(x)$.

Let now $d_h \in \du_h f'_h(x)$, so for any $v \in X^*$ and $\tau > 0$
\begin{equation*}
\tau f'_v(x) = f'_{\tau v}(x) \geq f'_h(x) + \braket{d_h}{\tau v - h} \, .
\end{equation*}
Then for $\tau \to \infty$ we find $f'_v(x) \geq \braket{d_h}{v}$, so $d_h \in \du_hf'_0(x) = \du f(x)$. Taking the limit $\tau \to 0$ we obtain $0 \geq f'_h(x) - \braket{d_h}{h}$, so $\braket{d_h}{h} = f'_h(x)$. Hence the directional derivative is attained as the maximum over the subdifferential as stated in part iii) of theorem~\ref{thm:subdifferential}.

iv) First suppose that $\du f(x)$ only contains one element. By part iii) we have $f'_h(x) = \braket{d}{h}$, which is linear in $h$. Hence $f$ is differentiable at $x$ and $\nabla f(x) = d$.

To show the converse, if $d \in \du f(x)$, then by definition
\begin{equation*}
f(y) - f(x) \geq \braket{d}{y - x} \, .
\end{equation*}
Now set $y = x + t h$ with $t > 0$ and divide both sides of the inequality by $t$. Taking the limit $t \downarrow 0$ we obtain
\begin{equation*}
\braket{\nabla f}{h} \geq \braket{d}{h} \, .
\end{equation*}
Since this inequality should be valid for all $h$, we find $d = \nabla f(x)$.
\end{proof}

\section{Non-interacting systems}
\label{ap:nonInteracting}

The partition function of a non-interacting fermionic system is readily calculated by expressing the determinants in the eigenbasis of $\brakket{i}{\hat{h} }{ j}$
\begin{equation}
Z^-_s = \sum_{\crampedclap{n_1,\dotsc,n_{N_b} = 0}}^1
\brakket{n_1,\dotsc,n_{N_b}}{\prod_{i=1}^{N_b}\e^{-\beta\epsilon_i\hat{n}_i}}{n_1,\dotsc,n_{N_b}} 
= \prod_{i=1}^{N_b}\bigl(1 + \e^{-\beta \epsilon_i}\bigr) \, .
\end{equation}
For a bosonic system we get the following result
\begin{equation}
Z^+_s = \sum_{\crampedclap{n_1,\dotsc,n_{N_b} = 0}}^{\infty}
\brakket{n_1,\dotsc,n_{N_b}}{\prod_{i=1}^{N_b}\e^{-\beta\epsilon_i\hat{n}_i}}{n_1,\dotsc,n_{N_b}}
= \prod_{i=1}^{N_b}\sum_{n=0}^{\infty}\bigl(\e^{-\beta \epsilon_i}\bigr)^n 
= \prod_{i=1}^{N_b}\frac{1}{1 - \e^{-\beta\epsilon_i}} \, .
\end{equation}
The grand potential can be worked out as
\begin{equation}
\Omega_s^{\pm} = \pm\frac{1}{\beta}\sum_{i=1}^{N_b}\ln\bigl(1 \mp \e^{-\beta\epsilon_i}\bigr) \, .
\end{equation}
The occupation numbers are readily found as the 1RDM is diagonal in this basis
\begin{equation}
n^{\pm}_i = \frac{\du\Omega_s}{\du\epsilon_i}
= \frac{\e^{-\beta\epsilon_i}}{1 \mp \epsilon^{-\beta\epsilon_i}}
= \frac{1}{\epsilon^{\beta\epsilon_i} \mp 1} \, ,
\end{equation}
which can be inverted to yield the NO energies as functions of the occupation numbers
\begin{equation}
\epsilon^{\pm}_i = \frac{1}{\beta}\ln\biggl(\frac{1 \pm n_i}{n_i}\biggr) \, .
\end{equation}
We can insert this expression back into the grand potential to obtain it as a function of the occupation numbers
\begin{equation}
\Omega^{\pm}_s = \mp\frac{1}{\beta}\sum_{i=1}^{N_b}\ln(1 \pm n_i) \, .
\end{equation}
The energy is can be calculated as
\begin{equation}
E^{\pm}_s = \sum_{i=1}^{N_b}n_i\epsilon_i
= \frac{1}{\beta}\sum_{i=1}^{N_b}n_i\ln\biggl(\frac{1 \pm n_i}{n_i}\biggr) \, .
\end{equation}
The entropy is readily obtained by subtracting the grand potential from the energy
\begin{equation}
S^{\pm}_s = \beta\bigl(E^{\pm}_s - \Omega^{\pm}_s\bigr)
= \sum_{i=1}^{N_b}\bigl[(n_i \pm 1)\ln(1 \pm n_i) - n_i\ln(n_i)\bigr] \, .
\end{equation}
As we have now also the energy and entropy explicitly, the non-interacting universal function is readily constructed to be
\begin{equation}
F^{\pm}_s 
= E^{\pm}_{s,0} - \frac{1}{\beta}S_s^{\pm}
= \sum_{i=1}^{N_b}\bigl[n_i\bigl(\epsilon^{s,0}_i + \frac{1}{\beta}\ln(n_i)\bigr) -
\frac{1}{\beta}(n_i \pm 1)\ln(1 \pm n_i)\bigr] \, , 
\end{equation}
where $\epsilon^{s,0}_i$ are the eigenvalues of the reference one-body hamiltonian $h^{(1)}_{s,0}$. Since now the trace of the 1RDM can be given in terms of the natural occupation numbers $n_i$ (see Sec.~\ref{thm:Coleman}) we find the expressions presented in Sec.~\ref{sec:general}.

\section{Solving a general non-interacting system}
\label{ap:non-interatingSolution}
In this appendix we solve a general non-interacting system including both a source term and a pairing field. Hence, the Hamiltonian under consideration is of the general form
\begin{equation}
\hat{H} = \sum_{ij}\omega_{ij}\crea{a}_i\anni{a}_j +
\sum_i\bigl(h_i^*\crea{a}_i + h_i^{\vphantom{*}}\anni{a}_i\bigr) +
\sum_{ij}\bigl(D^{\dagger}_{ij}\crea{a}_i\crea{a}_j +
D^{\vphantom{\dagger}}_{ij}\anni{a}_i\anni{a}_j\bigr) \, ,
\end{equation}
where $D^T = \pm D$ for bosons (upper sign) and fermions (lower sign).
The first step is to transform the source term away. This step is identical for the both the bosonic and the fermionic case. This is readily done by adding a constant to the annihilation and creation operators
\begin{equation}
\anni{b}_i = \anni{a}_i + \tilde{h}_i^* \qquad
\Rightarrow \qquad
\crea{b}_i = \crea{a}_i + \tilde{h}_i^{\vphantom{*}} \, .
\end{equation}
The vector $\tilde{h}$ should be chosen such that the source term disappears, so
\begin{equation}
\hat{H} + C_h = \sum_{ij}\omega_{ij}\crea{b}_i\anni{b}_j +
\sum_{ij}\bigl(D^{\dagger}_{ij}\crea{b}_i\crea{b}_j +
D^{\vphantom{\dagger}}_{ij}\anni{b}_i\anni{b}_j\bigr) \, .
\end{equation}
One readily finds that the vector $\tilde{h}$ needs to satisfy the following linear equation
\begin{equation}
\sum_j\Bigl(\omega_{ij}\tilde{h}^{\vphantom{*}}_j + \bigl(D \pm D^T\bigr)_{ij}\tilde{h}^*_j\Bigr)
= h_i \, .
\end{equation}
This system is guaranteed to be solvable as the effective matrices will be normal (symmetric for the real part of $\tilde{h}$ and anti-symmetric for the imaginary part) and the assumed positivity of the spectrum, as we want the system to have a ground state.
The corresponding constant shift in the Hamiltonian will be
\begin{equation}
C_h
= \tilde{h}^{\dagger}\omega\tilde{h} + \tilde{h}^{\dagger}D\tilde{h}^* + \tilde{h}D^{\dagger}\tilde{h}
= \thalf\bigl(\tilde{h}^{\dagger}h + h^{\dagger}\tilde{h}\bigr) \, .
\end{equation}
To transform the pairing field away, we need a general Bogoliubov transform \citep{Bogoliubov1947, Valatin1958, Bogoljubov1958}.
The Bogoliubov transform is a generalisation of a unitary transformation between the one-particle states to linear combinations of creation and annihilation operators
\begin{subequations}
\begin{align}
\anni{c}_k &= \sum_{r=1}^{N_b}\bigl(U_{kr}\anni{b}_r + V_{kr}\crea{b}_r\bigr) \, , \\
\crea{c}_k &= \sum_{r=1}^{N_b}\bigl(U^*_{kr}\crea{b}_r + V^*_{kr}\anni{b}_r\bigr) \, .
\end{align}
\end{subequations}
The transformation between the annihilation and creation operators can be written in a more compact manner by collecting the creation and annihilation operators in a column vector. This allows us write the Bogoliubov transformation as
\begin{equation}
\begin{pmatrix} \anni{c} \\ \crea{c} \end{pmatrix}
= \begin{pmatrix} U &V \\ V^* & U^* \end{pmatrix}
\begin{pmatrix} \anni{b} \\ \crea{b} \end{pmatrix} \, ,
\end{equation}
where $U$ and $V$ are $N_b \times N_b$ matrices.
By working out the (anti-)commutation relations for bosons (fermions), one finds that the Bogoliubov transformation needs to satisfy
\begin{equation}\label{eq:unitaryCond}
\begin{pmatrix} U &V \\ V^* & U^* \end{pmatrix}
\begin{pmatrix} \unitMat &0 \\ 0 & \mp\unitMat \end{pmatrix}
\begin{pmatrix} U^{\dagger} &V^T \\ V^{\dagger} & U^T \end{pmatrix}
= \begin{pmatrix} \unitMat &0 \\ 0 & \mp\unitMat \end{pmatrix} \, ,
\end{equation}
where $\unitMat$ is the unit matrix and the upper (lower) sign refers to bosons (fermions) respectively. Thus we find that the Bogoliubov transform is unitary for fermions with respect to the Euclidian metric, so corresponds to an element of the definite unitary group: U$(2N_b)$. On the other hand, the bosonic transformation is unitary with respect to an indefinite metric and corresponds to an element of the indefinite unitary group: U$(N_b,N_b)$.

With the help of the commutation relation \([\anni{a}_k, \crea{a}_l]_{\mp} = \delta_{kl}\), the Hamiltonian can be rewritten in the following form
\begin{equation}
\hat{H} + C_h = \begin{pmatrix} \crea{b} & \anni{b} \end{pmatrix}
\begin{pmatrix} \omega/2 & D^{\dagger} \\ D & \pm\omega/2\end{pmatrix}
\begin{pmatrix} \anni{b} \\ \crea{b} \end{pmatrix} \mp \half\Trace\{\omega\} \, .
\end{equation}
Inserting the unit matrix on both sides of the matrix and using~\eqref{eq:unitaryCond}, we find 
\begin{equation}
\hat{H} + C_h = \begin{pmatrix} \crea{c} & \anni{c} \end{pmatrix}
\begin{pmatrix} U &\mp V \\ \mp V^* & U^* \end{pmatrix}
\begin{pmatrix} \omega/2 & D^{\dagger} \\ D & \pm\omega/2\end{pmatrix}
\begin{pmatrix} U^{\dagger} & \mp V^T \\ \mp V^{\dagger} & U^T\end{pmatrix}
\begin{pmatrix} \anni{c} \\ \crea{c} \end{pmatrix} \mp \half\Trace\{\omega\} \, .
\end{equation}
So to bring the Hamiltonian to diagonal form, we simply need to diagonalise it with respect to the appropriate metric
\begin{equation}\label{eq:eigenEQ}
\begin{pmatrix} \omega/2 & D^{\dagger} \\ D & \pm\omega/2\end{pmatrix}
\begin{pmatrix} U^{\dagger} & \mp V^T \\ \mp V^{\dagger} & U^T\end{pmatrix}
= \begin{pmatrix} \unitMat &0 \\ 0 & \mp\unitMat \end{pmatrix}
\begin{pmatrix} U^{\dagger} & \mp V^T \\ \mp V^{\dagger} & U^T\end{pmatrix}
\begin{pmatrix}\mathcal{E}/2 & 0 \\ 0 & \tilde{\mathcal{E}}/2 \end{pmatrix} \, .
\end{equation}
where $\mathcal{E}$ and $\tilde{\mathcal{E}}$ are diagonal.

Now let us derive some properties of the eigenvalues and eigenvectors. For the $k$-th eigenvector, we can work out the eigenvalue equation as
\begin{equation}
\begin{pmatrix} \omega/2 & D^{\dagger} \\ D & \pm\omega/2\end{pmatrix}
\begin{pmatrix} u_k \\ v_k\end{pmatrix}
= \epsilon_k\begin{pmatrix} u_k \\ \mp v_k\end{pmatrix}
\end{equation}
Since the matrix is hermitian, we have
\begin{equation}
\epsilon_l\begin{pmatrix} u^{\dagger}_k & v^{\dagger}_k\end{pmatrix}
\begin{pmatrix} u_l \\ \mp v_l\end{pmatrix}
= \begin{pmatrix} u^{\dagger}_k & v^{\dagger}_k\end{pmatrix}
\begin{pmatrix} \omega/2 & D^{\dagger} \\ D & \pm\omega/2\end{pmatrix}
\begin{pmatrix} u_l \\ v_l\end{pmatrix}
= \epsilon_k^*\begin{pmatrix} u^{\dagger}_k & \mp v^{\dagger}_k\end{pmatrix}
\begin{pmatrix} u_l \\ v_l\end{pmatrix} \,.
\end{equation}
This expression can be rearranged to yield
\begin{equation}\label{eq:hermEigvalCond}
(\epsilon_k^* - \epsilon_l)\braket{k}{l}_{\mp} = 0 \, ,
\end{equation}
where $\braket{k}{l}_{\mp}$ denotes the inner product between the two eigenvectors with respect to the indefinite metric for bosons ($-$) and with respect to the usual Euclidean metric for fermions ($+$). Now let us first consider the case $k=l$. In the fermionic case we have a proper metric, so $\braket{k}{k}_{+} = 0$ only for the zero vector, which is no eigenvector. Hence, we find that the eigenvalues need to be real in the fermionic case.

In the bosonic case, however, we have an indefinite metric, so $\braket{k}{k}_{-} = 0$ is also possible for a non-zero vector. so we need to distinguish two cases
\begin{subequations}
\begin{align}
\braket{k}{k}_{-} \neq 0	\qquad&\Rightarrow\qquad	\epsilon_k \in \Reals \, , \\
\braket{k}{k}_{-} = 0		\qquad&\Leftarrow\qquad		\epsilon_k \notin \Reals \, .
\end{align}
\end{subequations}
Now let us consider the case $k \neq l$. In the fermionic case condition~\eqref{eq:hermEigvalCond} implies that non-degenerate eigenvectors are orthogonal, as the eigenvalues are real. As only degenerate eigenvectors may be non-orthogonal, we can always orthogonalise them, as any linear combination degenerate eigenvectors is again an eigenvector.

The situation is again more complicated in the bosonic situation. For the eigenvectors with real eigenvalues and finite norm we get the same result as in the fermionic case: non-degenerate eigenvectors are orthogonal and degenerate eigenvectors can be chosen to be orthogonal.

As the eigenvectors are related in pairs, one expects the eigenvalues $\mathcal{E}/2$ and $\tilde{\mathcal{E}}/2$ to be related. This is indeed the case.
To establish this relationship, will assume $\omega$ to be a real diagonal matrix. If it is not diagonal, it can always be brought to diagonal form by a simply unitary transformation and its eigenvalues will be real, as the matrix is hermitian.
Now we work out the eigenvalue equation for the first set of eigenvectors
\begin{subequations}
\begin{equation}\label{eq:eigVal1}
\begin{split}
\thalf\omega\,U^{\dagger} - D^*V^{\dagger} &= U^{\dagger}\mathcal{E}/2 \, , \\
D\,U^{\dagger} - \thalf\omega\,V^{\dagger} &= V^{\dagger}\mathcal{E}/2 \, .
\end{split}
\end{equation}
By taking the complex conjugate of the second set of eigenvectors, we find
\begin{equation}
\begin{split}
\thalf\omega\,U^{\dagger} - D^* V^{\dagger} &= -U^{\dagger}\tilde{\mathcal{E}}^*/2 \, , \\
D\,U^{\dagger} - \thalf\omega\,V^{\dagger} &= - V^{\dagger}\tilde{\mathcal{E}}^*/2 \, ,
\end{split}
\end{equation}
\end{subequations}
where we used that $\omega$ can be assumed to be real (and diagonal).
Hence, we find that $\tilde{\mathcal{E}} = -\mathcal{E}^*$.

After solving the (generalised) eigenvalue equation~\eqref{eq:eigenEQ}, by using the commutation relations again, the Hamiltonian can be written as
\begin{equation}
\hat{H} + C_h
= \begin{pmatrix} \crea{c} & \anni{c} \end{pmatrix}
\begin{pmatrix} \mathcal{E}/2 & 0 \\ 0 & \pm\mathcal{E}^*/2 \end{pmatrix}
\begin{pmatrix} \anni{c} \\ \crea{c} \end{pmatrix} \mp \half\Trace\{\omega\}
= \crea{c}\,\Re\mathcal{E} \,\anni{c} \pm \half\Trace\{\mathcal{E}^* - \omega\} \, .
\end{equation}
As the spectrum of $\hat{H}$ should be real, we see that complex $\mathcal{E}$ is not permissible in the bosonic case. It simply means that the Hamiltonian under consideration is not self-adjoint. Using the inequalities in Table~\ref{tab:inequalities}, we can put some sufficient inequalities on the matrix elements $\omega$ and $D$ for $\mathcal{E}$ to be real.

It would be desirable to only calculate one set of the eigenvectors, so we need to cut the dimension of the eigenvalue problem down by a factor two.
If the matrix $D$ only contains real entries, this is readily achieved by adding and subtracting the equations in~\eqref{eq:eigVal1}, which yields
\begin{equation}
(\omega/2 \pm D)\bigl(U^{\dagger} \mp V^{\dagger}\bigr)
= \bigl(U^{\dagger} \pm V^{\dagger}\bigr)\mathcal{E}/2 \, .
\end{equation}
We can now eliminate the even or odd combination by multiplying from the right by $\mathcal{E}/2$, and substituting for the unwanted combination, which yields
\begin{equation}
(\omega/2 \pm D)(\omega/2 \mp D)\bigl(U^{\dagger} \pm V^{\dagger}\bigr)
= \bigl(U^{\dagger} \pm V^{\dagger}\bigr)\mathcal{E}^2/4 \, .
\end{equation}
In the case that the pairing matrix $D$ has complex entries, we can always find a unitary matrix to make it real. As the matrix $D$ is symmetric for bosons and anti-symmetric for fermions, we need to show this for both cases separately. Let us first consider the bosonic case.

\begin{thm}(bosonic case)\label{thm:bosonDiag}
Given a symmetric matrix $D \in \Complex^n \times \Complex^n$, it can be brought to diagonal form by the transformation $U^TCU$, where $U$ is a unitary matrix which diagonalises $C^{\dagger}C$. The diagonal entries can be chosen to be the square root of the eigenvalues of $C^{\dagger}C$.
\end{thm}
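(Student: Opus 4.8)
The plan is to deduce the statement from the Autonne--Takagi factorization of a complex symmetric matrix and then to identify the Takagi unitary with a unitary diagonalising $D^{\dagger}D$. Note first that $D = D^{T}$ forces $D^{\dagger} = \bar{D}$, so $D^{\dagger}D = \bar{D}D$ is Hermitian and positive semidefinite. Suppose for a moment we already know that there exist a unitary $V$ and a real diagonal $\Sigma = \diag(\sigma_{1},\dots,\sigma_{n})$ with every $\sigma_{i}\geq 0$ such that $D = V\Sigma V^{T}$. Putting $U \isDefinedAs \bar{V}$ and using the identity $V^{T} = \bar{V}^{-1} = \bar{V}^{\dagger}$, valid for any unitary $V$, a one-line computation gives $D^{\dagger}D = \bar{V}\Sigma^{2}V^{T} = U\Sigma^{2}U^{\dagger}$ and $U^{T}DU = V^{\dagger}D\bar{V} = \Sigma$. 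Thus $U$ diagonalises $D^{\dagger}D$, whose eigenvalues are the $\sigma_{i}^{2}$, while $U^{T}DU$ is diagonal with entries $\sigma_{i}=\sqrt{\sigma_{i}^{2}}$ --- which is exactly the assertion.

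It remains to prove the factorization, which I would do by induction on $n$. If $D=0$ there is nothing to show, so assume $D\neq 0$; then $D^{\dagger}D\neq 0$ has a positive eigenvalue $\sigma^{2}$ with unit eigenvector $x$. Set $z \isDefinedAs Dx$, so that $\norm{z}^{2} = x^{\dagger}D^{\dagger}Dx = \sigma^{2}$; conjugating $\bar{D}Dx = \sigma^{2}x$ gives $D\bar{z} = D\bar{D}\bar{x} = \sigma^{2}\bar{x}$. The key step is to produce a unit \emph{Takagi vector} $u_{1}$ with $Du_{1} = \sigma\,\bar{u}_{1}$. One checks directly that $v \isDefinedAs x + \bar{z}/\sigma$ satisfies $Dv = z + \sigma\bar{x} = \sigma\bar{v}$; should this $v$ vanish (which forces $\bar{z}=-\sigma x$), the vector $v' \isDefinedAs x - \bar{z}/\sigma = 2x\neq 0$ satisfies $Dv' = -\sigma\bar{v}'$ instead, and in either case an appropriate scalar phase $u_{1} = e^{i\theta}v/\norm{v}$ (resp.\ $e^{i\theta}v'/\norm{v'}$) turns the proportionality constant into a nonnegative real. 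Then extend $u_{1}$ to a unitary matrix $P$ with $u_{1}$ as its first column. Because $Du_{1}=\sigma\bar{u}_{1}$ together with $D=D^{T}$ yields $u_{1}^{T}D = \sigma\,u_{1}^{\dagger}$, the first row and column of the (again symmetric) matrix $P^{T}DP$ vanish apart from the $(1,1)$ entry, which equals $\sigma\,u_{1}^{\dagger}u_{1}=\sigma$. Hence $P^{T}DP$ is block-diagonal with blocks $\sigma$ and a complex symmetric matrix $D'$ of size $n-1$; applying the induction hypothesis to $D'$ and composing the unitaries gives $D = V\Sigma V^{T}$.

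The step I expect to require the most care is the construction and phasing of the Takagi vector, together with the check that the deflation by $P$ preserves symmetry: one must handle the degenerate cases (the vanishing of $v$ above, and the situation where one starts from a zero eigenvalue of $D^{\dagger}D$, where $u_{1}=x$ and $\sigma=0$ do the job immediately) and make sure the phase $\theta$ can always be chosen so that the diagonal entry comes out nonnegative. Everything else --- shuttling between $\bar{(\cdot)}$, $(\cdot)^{T}$ and $(\cdot)^{\dagger}$, and bookkeeping the product of the successive unitaries --- is routine. I would also point out that an arbitrary unitary diagonalising $D^{\dagger}D$ need not realise $U^{T}DU=\Sigma$: one retains the freedom to adjust phases within eigenspaces, which is precisely what the wording ``can be chosen'' in the statement refers to.
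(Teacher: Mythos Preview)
Your argument is correct, but it proceeds along a genuinely different route from the paper. The paper \emph{starts} from an arbitrary unitary $U$ diagonalising $C^{\dagger}C$, forms $\tilde C=U^{T}CU$, and shows directly that $(a_i-a_j)\tilde C_{ij}=0$, so $\tilde C$ is automatically block-diagonal along the eigenspaces of $C^{\dagger}C$; each degenerate block $B$ then satisfies $B^{\dagger}B=a\unitMat$, hence is normal, so its commuting real and imaginary parts can be simultaneously diagonalised by a real orthogonal matrix, after which residual phases are rotated away. You instead \emph{construct} the Autonne--Takagi factorisation $D=V\Sigma V^{T}$ by deflation on a Takagi vector, and only afterwards observe that $U=\bar V$ happens to diagonalise $D^{\dagger}D$. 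The paper's route makes the link to the spectrum of $C^{\dagger}C$ transparent from the outset and isolates exactly where the remaining freedom lives (within degenerate eigenspaces), which meshes well with the subsequent antisymmetric (fermionic) case treated by the same block argument; your route is a clean, self-contained proof of Takagi that avoids any separate degenerate-block analysis. Your closing caveat --- that a generic unitary diagonalising $D^{\dagger}D$ need not work and the phrase ``can be chosen'' is essential --- is exactly right and is in fact implicit in the paper's need to adjust phases and orthogonally rotate within degenerate blocks.
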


\begin{proof}
The matrix product $C^{\dagger}C$ is obviously hermitian and also positive semidefinite. Therefore, it is diagonalizable by a unitary matrix $U$ and has $a_i \in \Reals_+$ as its eigenvalues (spectral theorem)
\begin{equation*}
a_i\delta_{ij} = \bigl(U^{\dagger}C^{\dagger}CU\bigr)_{ij}
= \bigl(U^{\dagger}C^{\dagger}U^*\,U^TCU\bigr)_{ij}
= \bigl(\tilde{C}^{\dagger}\tilde{C}\bigr)_{ij} = \sum_k\tilde{C}^*_{ik}\tilde{C}_{jk} \, ,
\end{equation*}
where $\tilde{C} = U^TCU = \tilde{C}^T$.
Now multiplying by $\tilde{C}^*_{jl}$ and summing over $j$ we find
\begin{equation*}
a_i\tilde{C}^*_{il} = \sum_ja_i\delta_{ij}\tilde{C}^*_{jl}
= \sum_{kj}\tilde{C}^*_{ik}\tilde{C}_{jk}\tilde{C}^*_{jl}
= \sum_k\tilde{C}^*_{ik}a_k\delta_{kl}
= \tilde{C}^*_{il}a_l \, ,
\end{equation*}
which can be rearranged to
\begin{equation*}
(a_i - a_j)\tilde{C}_{ij} = 0 \qquad \text{for all $i,j$.}
\end{equation*}
So if $C^{\dagger}C$ only has non-degenerate eigenvalues, we find that $\tilde{C}$ needs to be diagonal with diagonal entries $\sqrt{a_i}\e^{\im\phi_i}$, where $\phi_i$ is complex phase factor which is undetermined in the diagonalization of $C^{\dagger}C$. So we can choose $\phi_i = 0$ to make the matrix $\tilde{C}$ real and positive semidefinite.

In the case some of the eigenvalues of $C^{\dagger}C$ are degenerate, $\tilde{C}$ is only block diagonal. So we need to show that we can each of these blocks can be brought to diagonal form by $Q^T\tilde{C}Q$. Let $B=B^T$ denote one of these degenerate blocks. Such a degenerate block has the special property that $B^{\dagger}B = a \unitMat$, where $\unitMat$ denotes the unit matrix. This implies that $B$ is a normal matrix so $B^{\dagger}B = BB^{\dagger}$. Splitting the real and imaginary parts as $B = R + \im I$, we can work this out as
\begin{equation*}
0 = B^{\dagger}B - BB^{\dagger}
= (R - \im I)(R + \im I) - (R + \im I)(R - \im I)
= 2 \im(RI - IR) = 2\im[R,I] \, .
\end{equation*}
As the real and imaginary parts of $B$ commute they can be brought to diagonal form by the same orthogonal transformation $Q$. Hence, also $B$ will be brought to diagonal form by the same orthogonal transformation $Q$
\begin{equation*}
Q^TBQ = \sqrt{a}\diag\bigl(\e^{\im\phi_i}\bigr) \, .
\end{equation*}
The phase factors can be transformed away by the remaining freedom, i.e.\ $Q \to Q \diag\bigl(\e^{-\im\phi_i/2}\bigr)$.
\end{proof}

We see that Theorem~\ref{thm:bosonDiag} even shows that $D$ can be brought to a diagonal and real form, so the Hamiltonian can be assumed to be of the following simple form
\begin{equation}
\hat{H} + C_h = \sum_{ij}\omega_{ij}\crea{b}_i\anni{b}_j +
\sum_{i}d_{ii}\bigl(\crea{b}_i\crea{b}_i + \anni{b}_i\anni{b}_i\bigr) \, ,
\end{equation}
where $d_i \in \Reals_+$ are the eigenvalues of $D$. It is therefore tempting to perform the Bogoliubov transform for each one-particle state
\begin{equation}
\anni{c}_i = \cosh(\theta_i)\anni{b}_i + \sinh(\theta_i)\crea{b}_i \, ,
\end{equation}
where $2\theta_i = \arctanh(2d_i/\omega_{ii})$. Unfortunately, the resulting cross-terms give rise to a new pairing field, so this method does not work. Now let us prove a similar theorem for the fermionic case.

\begin{thm}(fermionic case)\label{thm:fermionDiag}
Given an anti-symmetric matrix $D \in \Complex^n \times \Complex^n$, it can be brought to $2\times2$ block-diagonal form by the transformation $U^TCU$, where $U$ is a unitary matrix which diagonalises $C^{\dagger}C$. The off-diagonal entries can be chosen to be the square root of the eigenvalues of $C^{\dagger}C$. If the dimension is odd, a 0 column and row should be added to the $2\times2$ block-diagonal form.
\end{thm}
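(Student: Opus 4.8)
The plan is to mirror the proof of Theorem~\ref{thm:bosonDiag} and then deal with the extra $2\times2$ structure forced by anti-symmetry (writing $C$ for the given matrix $D$, as in the transformation $U^{T}CU$). First I would form $C^{\dagger}C$, which is Hermitian and positive semidefinite, so it is diagonalised by a unitary $U$ with eigenvalues $a_{i}\in\Reals_{+}$; set $\tilde{C}\isDefinedAs U^{T}CU$. From $C^{T}=-C$ one gets $\tilde{C}^{T}=-\tilde{C}$, and the identity $U^{\dagger}C^{\dagger}CU=(U^{\dagger}C^{\dagger}U^{*})(U^{T}CU)=\tilde{C}^{\dagger}\tilde{C}$ still holds, so $\tilde{C}^{\dagger}\tilde{C}=\diag(a_{i})$; anti-symmetry moreover gives $\tilde{C}\tilde{C}^{\dagger}=\diag(a_{i})$ as well (one checks $\overline{\tilde{C}\tilde{C}^{\dagger}}=-\overline{\tilde{C}}\tilde{C}=\tilde{C}^{\dagger}\tilde{C}$, which is real). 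Repeating the index manipulation of Theorem~\ref{thm:bosonDiag} — the two sign flips from $\tilde{C}_{ki}=-\tilde{C}_{ik}$ cancel in every step — yields $(a_{i}-a_{j})\tilde{C}_{ij}=0$, so $\tilde{C}$ is block-diagonal with one block per distinct eigenvalue of $C^{\dagger}C$. On the block belonging to $a=0$, $\tilde{C}^{\dagger}\tilde{C}=0$ forces $\tilde{C}=0$ there, which supplies the zero rows and columns; since a skew-symmetric matrix of odd size satisfies $\det C=(-1)^{n}\det C=0$, such a block is necessarily present when $n$ is odd.

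It then remains to bring each block $B$ with eigenvalue $a>0$ into the canonical form $\left(\begin{smallmatrix}0 & \sqrt{a}\\ -\sqrt{a} & 0\end{smallmatrix}\right)$. On that subspace $B$ is anti-symmetric with $B^{\dagger}B=BB^{\dagger}=a\unitMat$, hence normal. Writing $B=R+\im I$ with $R,I$ real, both $R$ and $I$ are real skew-symmetric, and expanding $[B,B^{\dagger}]=0$ (using $B^{\dagger}=-\overline{B}=-R+\im I$) gives $2\im[R,I]=0$, i.e.\ $R$ and $I$ commute. A commuting pair of real skew-symmetric matrices can be brought simultaneously by a single real orthogonal $Q$ to block-diagonal form with $2\times2$ skew blocks (the possible $1\times1$ zero blocks cannot occur here, as $B$ is invertible on this subspace). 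Thus $Q^{T}BQ$ is block-diagonal with blocks $\left(\begin{smallmatrix}0 & b_{k}\\ -b_{k} & 0\end{smallmatrix}\right)$, $b_{k}\in\Complex$; evaluating $B^{\dagger}B=a\unitMat$ on such a block gives $|b_{k}|^{2}=a$, so $b_{k}=\sqrt{a}\,\e^{\im\phi_{k}}$, and the residual phases are removed by replacing $Q$ with $Q\diag(\e^{-\im\phi_{k}/2})$ (the two factors $Q^{T}$ and $Q$ multiply the off-diagonal entry by $\e^{-\im\phi_{k}}$), leaving $\sqrt{a}$ in the off-diagonal slot.

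Finally I would assemble the block-diagonal unitary $V$ from these $Q$'s and phase diagonals, set $W\isDefinedAs UV$, and check that $W$ still diagonalises $C^{\dagger}C$: each block of $V$ acts within a single eigenspace where $U^{\dagger}C^{\dagger}CU$ equals $a\unitMat$, and $V^{\dagger}(a\unitMat)V=a\unitMat$, so $W^{\dagger}C^{\dagger}CW=\diag(a_{i})$ unchanged. Meanwhile $W^{T}CW=V^{T}\tilde{C}V$ is exactly the claimed $2\times2$ block-diagonal matrix with off-diagonal entries $\pm\sqrt{a_{k}}$, padded by a zero row and column for each zero eigenvalue of $C^{\dagger}C$ (an odd number of them when $n$ is odd). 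Relabelling $W$ as $U$ gives the statement.

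The only step that is not bookkeeping is the simultaneous real orthogonal reduction of the commuting pair $(R,I)$ to $2\times2$ skew-block form; I would either invoke the classical canonical form for a commuting family of real normal matrices, or note that $\im R$ and $\im I$ are commuting Hermitian matrices, diagonalise them simultaneously by a unitary, and then pair up complex-conjugate eigenvectors to descend to a real orthogonal frame. This is standard, but it is the one place where anti-symmetry genuinely changes the argument relative to the symmetric (Takagi) case, where no such pairing is needed.
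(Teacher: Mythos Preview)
Your proof is correct and follows essentially the same route as the paper's: diagonalise $C^{\dagger}C$ by a unitary $U$, deduce $(a_i-a_j)\tilde{C}_{ij}=0$ so that $\tilde{C}=U^{T}CU$ is block-diagonal by eigenvalue, and on each $a>0$ block use normality ($[R,I]=0$ from $B^{\dagger}B=BB^{\dagger}$) to bring the commuting real skew pair $(R,I)$ to simultaneous $2\times2$ skew-block form by a real orthogonal $Q$, then absorb the residual phases. The paper organises the argument slightly differently---it first singles out the exactly two-fold degenerate case (where the phase freedom in $U$ alone already suffices) and only invokes the $[R,I]=0$ argument for higher degeneracies, and it records the parity observation that nonzero $a_i$ must have even multiplicity---but the substance is the same. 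Your extra bookkeeping (explicitly showing $\tilde{C}\tilde{C}^{\dagger}=\diag(a_i)$ from anti-symmetry, and verifying at the end that $W=UV$ still diagonalises $C^{\dagger}C$ because $V$ acts within eigenspaces) is sound and arguably makes the argument tighter than the paper's sketch.
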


\begin{proof}
The proof is basically the same as the proof of Theorem~\ref{thm:bosonDiag}, though $\tilde{C} = U^TCU = -\tilde{C}^T$, as $C$ is now anti-symmetric and we also have
\begin{equation*}
a_i\tilde{C}^*_{il} = \tilde{C}^*_{il} a_l \, .
\end{equation*}
However, as $\tilde{C}$ is anti-symmetric, $\tilde{C}_{ii} = 0$, so it cannot be diagonal. Hence we need at least a two-fold degeneracy in all the eigenvalues of $C^{\dagger}C$. For the moment, assume that we only have a two-fold degeneracy. The off-diagonal elements of $\tilde{C}$ are $\sqrt{a_i}\e^{\im\phi_i}$, where $\phi_i$ is complex phase factor which is undetermined in the diagonalization of $C^{\dagger}C$. So we can choose $\phi_i = 0$ to make the matrix $\tilde{C}$ real.

In the case of higher order degeneracies, we can use the same argument as in the symmetric case. Let $B = -B^T$ denote one of these degenerate blocks. Again such a block is normal, so the real and imaginary parts commute, so can be brought to block diagonal form by the same orthogonal transformation. Further, the eigenvalues of anti-symmetric matrices come in pairs, so the degeneracy of the eigenvalues $a_i > 0$ can only be even. If the degeneracy would be odd, $B$ would have at least one zero eigenvalue, which would correspond to $a_i = 0$. So only the block corresponding to $a_i = 0$ can have odd dimensionality.
\end{proof}

\section{Additional proofs for the bosonic case}

\subsection{Proof of Klein's inequality (Thm.~\ref{thm:Klein})}
\label{ap:KleinProof}
\begin{proof}
Given a bounded hermitian operator $\hat{B}$ and a convex (concave) function $f \colon \Reals \to \Reals$, we have by Jensen's inequality (Proposition~\ref{thm:Jensen})
\begin{equation*}
\brakket{\phi}{f(\hat{B})}{\phi}
= \sum_i\braket{\phi}{\psi_i}f(b_i)\braket{\psi_i}{\phi}
\;\substack{\geq \\ (\leq)}\; f\biggl(\sum_i\abs{\braket{\phi}{\psi_i}}^2b_i\biggr)
= f\bigl(\brakket{\phi}{\hat{B}}{\phi}\bigr) \, .
\end{equation*}
Because convex (concave) function always has a subgradient (see Definition~\ref{def:convexSubGrad} and Theorem~\ref{thm:nonEmptySubDif}), we have
\begin{equation*}
f(y) - f(x) \;\substack{\geq \\ (\leq)}\; (y - x)f'(x) \, ,
\end{equation*}
where $f'(x) \in \du f(x)$. So for all eigenvectors $\phi_i$ of the operator $\hat{A}$ with eigenvalues $\alpha_i$, we have
\begin{align*}
\Trace\bigl\{f(\hat{B}) - f(\hat{A})\bigr\}
&= \sum_i\bigl(\brakket{\phi_i}{f(\hat{B})}{\phi_i} - \brakket{\phi_i}{f(\hat{A})}{\phi_i}\bigr) \\
&\substack{\geq \\ (\leq)}\;
\sum_i\bigl(f(\brakket{\phi_i}{\hat{B}}{\phi_i} - \brakket{\phi_i}{f(\alpha_i)}{\phi_i}\bigr) \\
&\substack{\geq \\ (\leq)}\; \sum_i(\brakket{\phi_i}{\hat{B}}{\phi_i} - \alpha_i)f'(\alpha_i)
= \Trace\bigl\{(\hat{B} - \hat{A})f'(\hat{A})\bigr\} \, . \qedhere
\end{align*}
\end{proof}

\subsection{The relative entropy as a limit (Eq.~\eqref{eq:altDefRelS})}
\label{ap:ProofRelSasLim}
Here we will present the proof that the relative entropy can be expressed as a commutator as in Ref.~\cite[Lemma 4]{Lindblad1973}. A very brief sketch can also be found in~\cite[Eq.~(3.8)]{Wehrl1978}. We will change the notation slightly and aim to show that
\begin{equation}\label{eq:altLimDefRelS}
\lim_{\mathclap{\lambda \to 0}}\lambda^{-1}S_{\lambda}[\hat{\rho}_1 | \hat{\rho}_0]
= S[\hat{\rho}_1 | \hat{\rho}_0] \, .
\end{equation}

\begin{proof}
First we rewrite $S_{\lambda}$ as
\begin{equation*}
S_{\lambda}[\hat{\rho}_1 | \hat{\rho}_0]
= \lambda S_{\lambda}[\hat{\rho}_1 | \hat{\rho}_{\lambda}] +
(1-\lambda)S_{\lambda}[\hat{\rho}_0 | \hat{\rho}_{\lambda}] \, ,
\end{equation*}
where $\hat{\rho}_{\lambda} = \lambda\hat{\rho}_1 + (1-\lambda)\hat{\rho}_0$.

First we note that
\begin{equation*}
\lim_{\mathclap{\lambda \to 0}}S[\hat{\rho}_1 | \hat{\rho}_{\lambda}]
= S[\hat{\rho}_1 | \hat{\rho}_0] \, .
\end{equation*}
This can be seen by considering the partial sums
\begin{equation*}
g_n(\lambda) = \sum_{i=1}^n\bigl(w_i\ln(w_i) - w_i\brakket{\Psi_i}{\ln(\hat{\rho}_{\lambda})}{\Psi_i} +
\brakket{\Psi_i}{\hat{\rho}_{\lambda}}{\Psi_i} - w_i\bigr) \, ,
\end{equation*}
where $w_i$ and $\ket{\Psi_i}$ are the eigenvalues and eigenstates respectively of $\hat{\rho}_1$.
The functions $g_n(\lambda)$ are continuous in $\lambda = 0$: \(g_n(0) = \lim_{\lambda \to 0}g_n(\lambda)\), because \(\ker(\hat{\rho}_1) \subseteq \ker(\hat{\rho}_0)\). As $\ln(x)$ is concave, the functions $g_n(\lambda)$ are convex in $\lambda$ and $\{g_n(\lambda)\}$ form a monotonic non-decreasing sequence, \(g_n(\lambda) \to g(\lambda) = S[\hat{\rho}_1 | \hat{\rho}_{\lambda}]\), since each term is non-negative due the Klein's inequality, cf.~\eqref{eq:Klein}. Hence, $\lim_{\lambda \to 0}g(\lambda) = g(0)$ is unique.
Using the same arguments, we also have
\begin{equation*}
\lim_{\mathclap{\lambda \to 0}}S[\hat{\rho}_0 | \hat{\rho}_{\lambda}]
= S[\hat{\rho}_0 | \hat{\rho}_0] = 0 \, .
\end{equation*}
From convexity of $x\ln(x)$ it follows that $S_{\lambda}[\hat{\rho}_1 | \hat{\rho}_0]$ is concave in $\lambda$, so $\lambda^{-1}S_{\lambda}[\hat{\rho}_1 | \hat{\rho}_0]$ is monotonically increasing when $\lambda \to 0$ (see Sec.~\ref{ap:existenceDirectionalDerivative}), so \(\lim_{\lambda \to 0}\lambda^{-1}S_{\lambda}[\hat{\rho}_1 | \hat{\rho}_0]\) is uniquely defined. This implies that also \(\lim_{\lambda \to 0}(\lambda^{-1} - 1)S[\hat{\rho}_0 | \hat{\rho}_{\lambda}] \geq 0\) exists and obviously
\begin{equation}\label{eq:lowerBoundRelEntropy}
\lim_{\mathclap{\lambda \to 0}}\lambda^{-1}S_{\lambda}[\hat{\rho}_1 | \hat{\rho}_0]
\geq S[\hat{\rho}_1 | \hat{\rho}_0] \, .
\end{equation}
If $S[\hat{\rho}_1 | \hat{\rho}_0] = \infty$, Eq.~\eqref{eq:altLimDefRelS} is correct. If not, we can write
\begin{equation*}
S_{\lambda}[\hat{\rho}_1 | \hat{\rho}_0]
= \lambda S[\hat{\rho}_1 | \hat{\rho}_0] - S[\hat{\rho}_{\lambda} | \hat{\rho}_0] \, ,
\end{equation*}
so we find
\begin{equation*}
\lim_{\mathclap{\lambda \to 0}}\lambda^{-1}S_{\lambda}[\hat{\rho}_1 | \hat{\rho}_0]
= S[\hat{\rho}_1 | \hat{\rho}_0] - 
\lim_{\mathclap{\lambda \to 0}}\lambda^{-1}S[\hat{\rho}_{\lambda} | \hat{\rho}_0]
\leq S[\hat{\rho}_1 | \hat{\rho}_0] \, ,
\end{equation*}
so combined with its lower bound~\eqref{eq:lowerBoundRelEntropy} we find the required equality.
As we have shown that $\lambda^{-1}S_{\lambda}[\hat{\rho}_1 | \hat{\rho}_0]$ is monotonically increasing for $\lambda \to 0$,we can replace the limit in~\eqref{eq:altLimDefRelS} by the supremum in~\eqref{eq:altDefRelS}.
\end{proof}

\addcontentsline{toc}{section}{References}
\section*{References}
\bibliographystyle{elsarticle-harv}
\bibliography{bible}

\end{document}